\title{Functional Lower Bounds in Algebraic Proofs: \\
Symmetry, Lifting, and Barriers\footnotemark[1]}
\author{
Tuomas Hakoniemi\footnotemark[2]
\smallskip
\\
\small University of Helsinki 
\and 
Nutan Limaye%
\footnotemark[3]
\smallskip
\\
\small IT University of Copenhagen 
\and
Iddo Tzameret\footnotemark[5]
\\ 
 \small Imperial College London
}
\date{}
\renewcommand{\thefootnote}{\fnsymbol{footnote}}
\begin{document}


\clearpage\maketitle\thispagestyle{empty}

\maketitle
 
\begin{abstract}
Strong algebraic proof systems such as IPS (Ideal Proof System; 
Grochow-Pitassi~\cite{GP18}) offer a general model for 
deriving polynomials in an ideal and refuting unsatisfiable 
propositional formulas, subsuming most standard propositional 
proof systems. A major approach for lower bounding the size of 
IPS refutations is the Functional Lower Bound Method (Forbes, 
Shpilka, Tzameret and Wigderson~\cite{FSTW21}), which reduces 
the hardness of refuting a polynomial equation $f(\vx)=0$ with 
no Boolean solutions to the hardness of computing the function 
$\nicefrac{1}{f(\vx)}$ over the Boolean cube with an algebraic circuit. 
Using symmetry, we provide a general way to obtain many new hard 
instances against fragments of IPS via the functional lower 
bound method. This includes hardness over finite fields and  hard 
instances different from Subset Sum variants, both of 
which were unknown before, and stronger constant-depth lower bounds.
Conversely, 
we expose the limitation of this method by showing it 
\mbox{cannot} lead to proof complexity lower bounds for any 
hard \textit{Boolean} instance (e.g., CNFs) for any
sufficiently strong proof systems. Specifically, we show the 
following: 

\begin{description}[leftmargin=6pt]
\item \textbf{Nullstellensatz degree lower bounds using 
    symmetry}: Extending \cite{FSTW21} we show that every unsatisfiable symmetric polynomial with $n$ variables 
    requires degree $>n$ refutations
    (over sufficiently large characteristic).
    Using symmetry again, by characterising the
    $\nicefrac{n}{2}$-homogeneous slice appearing in refutations, we give examples of unsatisfiable \emph{invariant} polynomials of degree $\nicefrac{n}{2}$ that require degree $\ge n$ refutations.   

\item \textbf{Lifting to size lower bounds}: Lifting our Nullstellensatz  
    degree bounds to  IPS-size lower bounds, we obtain exponential lower 
    bounds for any  poly-logarithmic degree symmetric instance 
    against IPS refutations
    written as oblivious read-once algebraic programs (roABP-IPS).   
    For invariant polynomials, we show lower bounds against roABP-IPS
    and refutations written as multilinear formulas in the \emph{placeholder} IPS regime
    (studied by Andrews-Forbes \cite{AF22}), where the hard instances
    do not necessarily have small roABPs themselves, including over
    \emph{positive characteristic} fields. This provides the first explicit example of a hard instance against 
    IPS fragments over finite fields. 

    By an adaptation of the work of Amireddy, Garg, Kayal, Saha and Thankey~\cite{AGK0T23}, we extend and strengthen the constant-depth IPS lower bounds obtained recently in Govindasamy, Hakoniemi and Tzameret~\cite{GHT22} which held only for multilinear proofs, to \emph{$\poly(\log \log n)$ individual degree} proofs. 
    This is a natural and stronger constant depth proof system than in \cite{GHT22}, which we show admits small refutations for standard hard instances like the pigeonhole principle and Tseitin formulas.  

\item \textbf{Barriers for  Boolean instances}: 
    While lower bounds against strong propositional proof
    systems were the original motivation for studying algebraic     
    proof systems in the 1990s (Beame \textit{et al.}~\cite{BeameIKPP96} and Buss \textit{et al.}~\cite{BussIKPRS96}),  
    we show that the functional lower bound method \textit{alone} cannot establish any size lower bound for \textit{Boolean} instances for any sufficiently strong proof systems, and in particular, cannot lead to lower bounds against \ACZ$[p]$-Frege and \TCZ-Frege.  
\end{description}
\end{abstract}

\footnotetext[1]{A preliminary abbreviated version of this work appears in STOC 2024. The preliminary version included a quantitative strengthening of the lower bound against multilinear constant-depth IPS refutations from \cite{GHT22}. The current version extends and strengthens the lower bound in  \cite{GHT22} to work against $O(\log \log n)$ individual degree constant-depth IPS refutations.}
\footnotetext[2]{Part of this work was done at Simons Institute for the Theory of Computing. This work was partly funded by Helsinki Institute of Information Technology (HIIT) and by the European Research Council (ERC) under the European Union's Horizon 2020 research and innovation programme (grant agreement No.~101002742).}
\footnotetext[3]{Part of this
work was done at Simons Institute for the Theory of Computing, UC Berkeley. Part of this project has received funding from the Independent Research Fund Denmark (grant agreement No. 10.46540/3103-00116B).
Part of this work was supported by Basic Algorithms Research Copenhagen (BARC) which is funded by VILLUM Foundation Grant 16582.}
\footnotetext[5]{Department of Computing. Part of this
work was done at Simons Institute for the Theory of Computing, UC Berkeley. Part of this project has received funding from the European Research Council (ERC) under the European Union's Horizon 2020 research and innovation programme (grant agreement No.~101002742). Email: \text{iddo.tzameret@gmail.com}}
\tableofcontents 
 
\renewcommand*{\thefootnote}{\arabic{footnote}}
\newpage
\section{Introduction}


This work studies lower bounds against strong propositional proof systems. We focus   on strong algebraic proof systems that extend the 
Nullstellensatz system, and explore the capabilities and limitations of the \emph{functional lower bound method} which is arguably the most successful technique to date for achieving such lower bounds. 
Our goal is to expand the range of applications of this technique to
yield new hard instances, as well as to provide new variants of this 
technique showing how to get lower bounds in new settings like finite 
fields, hard instances qualitatively different from previously known ones, and 
improved constant depth lower bounds. Finally, we ask whether 
this technique alone can lead to the solution of long-standing open problems 
in proof complexity, showing essentially it cannot reach this goal.

\subsection{Proof Complexity and Strong Algebraic Proof Systems}
\label{sec:first-section-in-intro}

%
%
%
%
%
%
%

Algebraic proof systems model efficient derivation within polynomial ideals. Starting from a set of polynomials over a field (the set of ``axioms''), one uses addition and multiplication to form new polynomials in the ideal generated by the axioms. When the axioms are unsatisfiable, namely do not have a common root over the field, by Hilbert's Nullstellensatz one can derive the polynomial $1$. This way, one has in fact \emph{refuted}  the set of axioms, since a common root of the axioms must nullify every polynomial in their ideal. 

Due to its natural setup, algebraic proof systems attracted a lot of attention in complexity and specifically propositional proof complexity, in which one studies the efficiency with which  different proof systems prove propositional tautologies or refute unsatisfiable propositional formulas. For that purpose, one can consider algebraic proof systems as propositional proof systems by adding as a  default to the set of axioms \emph{Boolean axioms} like $x^2_i-x_i$ for all variables $x_i$. These Boolean axioms force any common root of the set of axioms to be Boolean in itself. 

The starting point of algebraic proof systems is the work of Beame, Impagliazzo, Kraj\'{i}\v{c}ek, Pitassi and Pudl\'ak  
\cite{BeameIKPP96} (cf.~Buss, Impagliazzo, 
Kraj{\'{\i}}{\v{c}}ek, Pudl{\'{a}}k and Razborov~\cite{BussIKPRS96}), which was motivated by the long-standing open problem of $\ACZ[p]$-Frege lower bounds. The $\ACZ[p]$-Frege proof system operates with constant-depth propositional formulas together with counting modulo $p$ gates, for $p$ a prime. Initial results on algebraic proof systems established connections between these proof systems and algebraic proof systems.

The algebraic proof system introduced by \cite{BeameIKPP96}
is the \emph{Nullstellensatz} system. In Nullstellensatz a refutation witnessing the unsatisfiability of a set of axioms given as polynomial equations  $\{f_i(\vx)=0\}_i$ over a field, is a polynomial combination of the axioms that equals 1 as a formal polynomial, namely:  
\begin{equation}
\label{eq:NS-first}
\sum_i g_i(\vx)\cd f_i(\vx) = 1\,,
\end{equation}
for some polynomials $\{g_i(\vx)\}_i$. 
The degree of this  Nullstellensatz refutation is the maximal degree of $g_i(\vx)\cd f_i(\vx)$. The size of this refutation is the sparsity, that is the total number of monomials in all the polynomials $g_i(\vx)\cd f_i(\vx)$.
The sparsity measure is what makes these proof systems weak (e.g., even a simple polynomial like $(x_1-1)\cdots(x_n-1)=0$ accounts for an exponential size because the number of monomials in it is $2^n$).

Although the initial motivation behind the introduction of 
the Nullstellensatz system was to achieve progress on $\ACZ[p]$-Frege lower bounds, it seems that Nullstellensatz lower bounds techniques that focus on the number of monomials or degree are not enough to reach lower bounds for relatively strong proof systems (including  $\ACZ[p]$-Frege). For that purpose, one can consider stronger algebraic proof systems for which size is measured by \emph{algebraic complexity}, namely by the minimal size of an \emph{algebraic circuit} computing the polynomials $g_i(\vx)$ in \Cref{eq:NS-first}. That way, one can hope to employ ideas from algebraic circuit complexity, in a similar way that proof techniques from Boolean circuit complexity like random restrictions and switching lemmas serve to study  \ACZ-Frege lower bounds \cite{Ajt88,PBI93,KPW95}. 

The idea of considering algebraic proof systems operating with algebraic circuits was investigated initially by Pitassi \cite{Pit97,Pit98}. And subsequently was studied in Grigoriev and Hirsch \cite{GH03}, Raz and Tzameret \cite{RT06,RT07,Tza11-I&C}, and finally in the introduction of the Ideal Proof System (IPS) by Grochow and Pitassi \cite{GP18} which loosely speaking is the  \emph{Nullstellensatz proof system where the polynomials  $g_i(\vx)$ are written as algebraic circuits}; indeed, Forbes, Shpilka, Tzameret and Wigderson~\cite{FSTW21} showed that IPS is equivalent to Nullstellensatz in which the polynomials $g_i$ in \Cref{eq:NS-first} are written as algebraic circuits. In other words, an IPS refutation of the set of axioms $\{f_i(\vx)=0\}_i$ can be defined similarly to \Cref{eq:NS-first} (here we display explicitly the Boolean axioms): 
\begin{equation}
\label{eq:intro:NS-second}
\sum_i g_i(\vx)\cd f_i(\vx) + \sum_j h_j(\vx)\cd (x_j^2-x_j)
= 1\,,
\end{equation}
for some polynomials $\{g_i(\vx)\}_i$, where we think of the polynomials $g_i, h_j$ written as algebraic circuits (instead of e.g., counting the number of monomials they have towards the size of the refutation). Thus, the size of the IPS refutation in \Cref{eq:intro:NS-second} is $\sum_i \textsf{size}(g_i(\vx))+\sum_j \textsf{size}(h_j(\vx))$, where $\textsf{size}(g)$ stands for the (minimal) size of an algebraic circuit computing the polynomial $g$. 

It turns out that IPS is very strong, and simulates most known concrete propositional proof systems (such as Frege, Extended Frege, and more); see \cite{GP18,PT16}.
It is thus natural to consider proof systems that sit  between the weak Nullstellensatz on the one end and the strong IPS on the other end. This is done roughly by writing the polynomials $g_i, h_j$ in \Cref{eq:intro:NS-second} as restricted kinds of algebraic circuits, such as constant-depth circuits \cite{GP18,AF22,GHT22}, noncommutative formulas \cite{LTW18}, algebraic branching programs \cite{FSTW21,Kno17} and multilinear formulas \cite{FSTW21}, to give some examples.

When considering algebraic circuit classes weaker than general algebraic circuits, one has to be a bit careful with the definition of IPS. For technical reasons the formalization in \Cref{eq:intro:NS-second} does not capture the precise definition of IPS restricted to the relevant circuit class, rather the fragment which is denoted by $\cC$-\lIPS\ (``\begin{small}LIN\end{small}'' here stands for the linearity of the axioms $f_i$ and the Boolean axioms; that is, they appear with power 1). In this work, we focus on $\cC$-\lIPS\ and a similar stronger variant denoted $\cC$-\lbIPS. 
\emph{Henceforth, throughout the introduction, refutations in the system $\cC$-\lIPS\ are  defined  as in \Cref{eq:intro:NS-second} where the polynomials $g_i, h_j$ are written as circuits in the circuit class $\cC$.} 

\noindent\textbf{Technical comment}: All our lower bounds are   proved by lower bounding the algebraic circuit size of the $g_i$'s  in \Cref{eq:intro:NS-second}, namely the products of the axioms $f_i$, and \emph{not} the products of the Boolean axioms (that is, we ignore the circuit size of the $h_i$'s).
For this reason, our  lower bounds are slightly stronger than lower bounds  on $\cC$-\lIPS, rather they are lower bounds on the system denoted  $\cC$-\lbIPS\ (see \Cref{def:IPS} for a precise formulation).
\medskip 

When considering the $\cC$-\lIPS\ refutation in \Cref{eq:intro:NS-second}, we can see that its size as defined above does \emph{not} depend on the size of the axioms $f_i$ and $\baxioms$. Therefore, it is possible to think of lower bounds on $\cC$-\lIPS\ refutations for axioms $f_i$ that are large (e.g., super-polynomial in the number of variables) or outside the circuit class $\cC$: although the axioms do not have small representation (or are even non-explicit) their $\cC$-\lIPS\ refutations may be small, and we want to rule out this possibility.
We call this regime of lower bounds a \emph{placeholder IPS lower bound}. Here, ``placeholder'' stands for the fact that in the refutation \Cref{eq:intro:NS-second} we can replace the axioms $f_i$ and \baxioms\ by variables (which take the role of axiom placeholders), measure the size of the resulting refutation, and then replace the axioms in their place again to get \Cref{eq:intro:NS-second}.
Forbes \emph{et al.}~\cite{FSTW21} considered placeholder IPS
lower bounds using the approach of hard multiples, and studied
its relation to the randomness versus hardness paradigm. 

While it is reasonable to assume that we can establish 
placeholder IPS lower bounds using some (possibly non-explicit) 
polynomial equations that require large circuit size, 
it is interesting to consider \emph{explicit} such hard
instances in the IPS placeholder regime. Such explicit hard instances
in the form of determinant identities were recently established by
 by Andrews and Forbes \cite{AF22} in the placeholder IPS regime.


\para{Different IPS Lower Bound Methods.}
The first to obtain proof  complexity lower bounds by reductions to algebraic circuit complexity lower bounds were
\cite{FSTW21}.
They introduced two methods: the \emph{functional lower bound} method and the \emph{lower bounds for multiples} method. 

\textbf{Functional lower bounds}: At the moment, the functional lower bound method yields stronger lower bounds than the latter one. It produced several concrete proof complexity lower bounds for variants of subset-sum instances against fragments of IPS. These fragments include IPS refutations written as read once (oblivious) algebraic branching programs (roABPs), depth-3 powering formulas (introduced as ``diagonal depth-3 circuits'' in Saxena~\cite{Saxena08}), and multilinear formulas.  
Alekseev, Hirsch, Grigoriev, and Tzameret \cite{AGHT20} used a method very similar to the functional lower bound method to establish a conditional lower bound on (general) IPS (leading to \cite{Ale21}). And \cite{GHT22} used this method together with Limaye, Srinivasan, and Tavenas constant-depth algebraic circuit lower bounds \cite{LST21} to establish multilinear constant depth IPS lower bounds. 

\textbf{Lower bound for multiples}: On the other hand, the lower bound for multiples method was used in \cite{FSTW21,AF22} to 
establish lower bounds against the weaker model of \emph{placeholder} $\cC$-IPS proofs, in which the hard instances do not necessarily have small circuits themselves in $\cC$.
\smallskip

It is worth mentioning three other (fairly strong, IPS-style) algebraic proof lower bounds approaches in the literature, as follows. 
\smallskip

\textbf{Meta-complexity}: Another method for IPS lower bounds is the \emph{meta-complexity} technique introduced by Santhanam and Tzameret \cite{ST21} who proved an IPS size lower bound on a self-referential statement. However, this lower bound is currently only conditional.

\textbf{Noncommutative approach}: Li, Tzameret and Wang \cite{LTW18} (following \cite{Tza11-I&C}) considered IPS refutations over noncommutative formulas. Using this, they showed that size lower bounds against Frege proofs can be reduced to the task of proving that the rank of certain families of matrices is high. This however, does not yet yield any concrete lower bound since it is unclear at the moment how to characterise  precisely families of matrices that correspond to noncommutative IPS proofs (namely, given an unsatisfiable  CNF formula we would like to characterise and identify certain properties of matrices that correspond to noncommutative IPS refutations of the CNF formula; and using those properties establish rank lower bounds).

{\textbf{PC with extension variables}:} Finally, Impagliazzo, Mauli and Pitassi \cite{IMP23} (following Sokolov \cite{Sok20}; see improvements in \cite{DMM23}) proved a polynomial calculus with extension variable lower bounds for a CNF formula, over a finite field. Although this is a lower bound against the number of monomials appearing in refutations, due to the (restricted amount of) extension variables, this can be considered as a proof system between depth-2 IPS (i.e., Nullstellensatz) to depth-3 IPS---denoted informally as ``depth 2.5-IPS''. This lower bound is over finite fields. Apparently this method cannot go beyond this ``depth 2.5-IPS'' lower bounds.
\medskip 

It is important to notice also that all unconditional IPS  size lower bounds (apart from those that are lower bounds against number of monomials only, and hence stated in the setting of Nullstellensatz or Polynomial Calculus, including the result of \cite{IMP23}), are all for hard instances that are \emph{non-Boolean} (e.g., not CNF formulas; namely, polynomials that do not take on 0-1 values over the Boolean cube). We discuss this matter in \Cref{sec:intro:barriers}.

\subsection{The Functional Lower Bound Method}

We start by recalling the functional lower bound method
which is a reduction from algebraic circuit lower bounds to proof complexity lower bounds introduced  in \cite{FSTW21}, which holds some resemblance to the well-known feasible interpolation lower bound technique in proof complexity 
\cite{BPR97,Kra97-Interpolation}.

We denote by \baxioms\ the set of Boolean axioms $\{x_i^2-x_i :\; i\in[|\vx|]\}$.\medskip

\fbox{
\begin{minipage}[c]{0.95\textwidth}
\begin{theorem}[Functional Lower Bound Method;~Lemma 5.2  in~\cite{FSTW21}]
\label{def:single-functional-lower-bound-method}
Let $\cC\subseteq\pring$ be a circuit class closed under 
(partial) field-element  assignments
(which stands for the class of ``polynomials with 
small circuits"). Let $f(\vx)\in \cC$ be a polynomial, where the collection of polynomials $f(\vx)$ 
and $\baxioms$ is unsatisfiable (i.e., does not have a
common root). A \demph{functional lower bound against
$\cC$-{\rm \lbIPS} for $f(\vx)$ and \baxioms} is a lower
bound  argument using the following circuit lower bound
 for $\frac{1}{f(\vx)}$: Suppose that $g\not\in\cC$
for all $g\in\pring$ with 
\begin{equation}\label{eq:intro:functional-lower-bound-mthd}
g(\vx)=\frac{1}{f(\vx)}, \quad \forall\vx\in\bits^n\,.
\end{equation}
Then, $f(\vx)$ and $\baxioms$
do not have $\mathcal{C}$-\lbIPS\ refutations. 
Moreover, if $\cC$ is a set of multilinear polynomials,
then, $f(\vx)$ and $\baxioms$ do not have  $\mathcal{C}$-\IPS\
 refutations.
\end{theorem}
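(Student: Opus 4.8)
The plan is to prove the contrapositive: assuming that $f$ and $\baxioms$ \emph{do} admit a $\cC$-\lbIPS\ refutation, I will read off from it a polynomial in $\cC$ that computes $\frac{1}{f(\vx)}$ at every point of $\bits^n$, contradicting the circuit lower bound in \Cref{eq:intro:functional-lower-bound-mthd}. First I would unwind the definition: a $\cC$-\lbIPS\ refutation of $f$ and $\baxioms$ is a formal polynomial identity
\[
 g(\vx)\cdot f(\vx)+\sum_{j} h_j(\vx)\cdot(x_j^2-x_j)=1,
\]
with $g\in\cC$ and the $h_j\in\pring$ arbitrary. Evaluating this identity at an arbitrary point $\mathbf a\in\bits^n$ kills every Boolean axiom and leaves $g(\mathbf a)\cdot f(\mathbf a)=1$; since $\{f\}\cup\baxioms$ is unsatisfiable we have $f(\mathbf a)\neq 0$ for all $\mathbf a\in\bits^n$, so $g(\mathbf a)=\frac{1}{f(\mathbf a)}$. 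Thus $g\in\cC$ agrees with $\frac1f$ on the entire Boolean cube --- exactly what was assumed impossible. (In this case one does not even need closure of $\cC$ under assignments; the required $g$ is handed to us directly.)

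For the ``moreover'' clause, suppose additionally that $\cC$ consists of multilinear polynomials and that $f$ and $\baxioms$ have a $\cC$-\IPS\ refutation, i.e.\ a circuit $C(\vx,y,z_1,\dots,z_n)\in\cC$ --- with $y$ a placeholder for $f(\vx)$ and each $z_j$ a placeholder for $x_j^2-x_j$ --- satisfying $C(\vx,0,\dots,0)=0$ and $C(\vx,f(\vx),x_1^2-x_1,\dots,x_n^2-x_n)=1$ as formal identities. Being multilinear, $C$ is in particular \emph{linear} in the axiom placeholder $y$, so I would write $C=y\cdot C_1(\vx,\mathbf z)+C_0(\vx,\mathbf z)$; then $C(\vx,0,\dots,0)=0$ forces $C_0(\vx,0,\dots,0)=0$. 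Now set $g(\vx):=C(\vx,1,0,\dots,0)$, which is obtained from $C$ by a partial field-element assignment, hence $g\in\cC$, and by the decomposition $g(\vx)=C_1(\vx,0,\dots,0)$. Restricting the substituted identity to the Boolean cube, each $z_j$ takes the value $x_j^2-x_j=0$, so it collapses to $f(\mathbf a)\cdot C_1(\mathbf a,0,\dots,0)+C_0(\mathbf a,0,\dots,0)=f(\mathbf a)\cdot g(\mathbf a)=1$ for every $\mathbf a\in\bits^n$; as before $f(\mathbf a)\neq 0$, so $g(\mathbf a)=\frac1{f(\mathbf a)}$ for all $\mathbf a\in\bits^n$ with $g\in\cC$ --- a contradiction.

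I expect no serious obstacle here; the points that need care are purely ones of bookkeeping. The linear case is immediate because $f$ appears multiplied by a single cofactor $g$. The entire content of the ``moreover'' is that multilinearity of the refutation is precisely what lets one linearise in the $f$-placeholder and extract an \emph{honest} cofactor $g$ that still lies in $\cC$ --- this is where closure of $\cC$ under partial assignments is used --- whereas a general (non-multilinear) $\cC$-\IPS\ certificate may carry large powers of $y$ and admits no such extraction, which is exactly why the multilinear hypothesis appears. The other mild subtlety is the legitimacy of zeroing out the Boolean-axiom placeholders when passing to $\bits^n$: this is valid because $x_j^2-x_j$ vanishes identically on the cube, so the fully substituted polynomial and its ``placeholders-set-to-$0$'' specialization represent the same function on $\bits^n$.
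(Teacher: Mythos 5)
Your proof is correct and matches the paper's own sketch for the $\cC$-\lbIPS\ case: the cofactor $g$ multiplying $f$ in the refutation survives evaluation on the Boolean cube and must therefore compute $1/f$ there, contradicting the hypothesized circuit lower bound. For the ``moreover'' clause, which the paper only cites to Lemma~5.2 of FSTW21 without reproducing a proof, your linearization of the multilinear IPS certificate in the axiom placeholder $y$ and extraction of $g:=C(\vx,1,\vnz)$ via closure under partial field-element assignments is precisely the standard argument.
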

\end{minipage}
}
\medskip 

The idea behind this theorem is  simple. Let 
$$
g(\vx)\cd f(\vx) + \sum_i h_i(\vx)\cd (x_i^2-x_i) = 1 
$$
be a $\cC$-\lbIPS\ refutation of $f(\vx)$ 
and $\baxioms$, for some $g, h_i$'s. Then, since the Boolean axioms nullify over the Boolean cube, \Cref{eq:intro:functional-lower-bound-mthd} holds for $g$ (note that $1/f$ is defined over the Boolean cube because $f$ is unsatisfiable, meaning it does not have a 0-1 root). Hence, since $g\not\in\cC$
for all $g$ for which  \Cref{eq:intro:functional-lower-bound-mthd} holds, we get the  $\mathcal{C}$-\lbIPS\ lower bound. 

The method is called ``functional" lower bound, since the algebraic circuit lower bounds are functional, namely apply to the \emph{family of all polynomials} that compute the function $1/f$ over the Boolean cube (in contrast with the usual ``syntactic'' view of algebraic lower bounds which hold for a specific single polynomial defined as a vector of monomials).
We refer the reader to the work of Forbes, Kumar, and Saptharishi \cite{FKS16} which investigated functional lower bounds in algebraic circuit complexity.

\subsection{Our Results and Techniques}

\paragraph{Summary and Organisation}

\begin{itemize}
\item Past results on algebraic proofs established degree lower bounds for (fully) symmetric instances of  \emph{degree only one}, namely subset sum instances of the form $\sum_i x_i -\beta$ (cf.~\cite{IPS99,FSTW21}). By making the use of symmetry explicit in these lower bounds, in \Cref{sec:symmetric-deg-lower-bounds}
we generalise these Nullstellensatz degree lower bounds to symmetric instances of any degree.

\item  We further show that full symmetry is not necessary to obtain degree lower bounds. In particular, in \Cref{sec:invariant-axioms-deg-lower-bounds} we 
show degree lower bounds against Nullstellensatz refutations for (``partially symmetric'') vector invariant polynomials. Since these instances are different from subset sum variants, they are unsatisfiable over both positive and zero characteristics (unlike subset sum, which is unsatisfiable only over large or zero characteristics); and moreover, since we analyse precisely the $n/2$-degree slice of Nullstellensatz refutations of these instances, we are able to demonstrate degree lower bounds for these instances over every field, including over finite fields.  

\item We show how to lift our new degree lower bounds to \emph{size} lower bounds against different fragments of IPS in \Cref{Size Bounds for General Symmetric Hard Instance} and \Cref{sec:Invariant-size-lower-bounds}.
We use a more involved lifting on subset sum degree bounds to establish new size lower bounds against $O(\log\log n)$ individual degree IPS refutations of constant-depth in \Cref{sec:indiv-degree-size-lower-bounds}

\item Finally, in \Cref{sec:barriers} we show that the Functional Lower Bound approach alone cannot lead to lower bounds against Frege proof systems such as $\ACZ[p]$-Frege and $\TCZ$-Frege. 
\end{itemize}

\para{Notation.} Let  $\N':=\N\cup\{0\}$. Given $n$ variables $\vx:=\{x_1,\ldots,x_n\} $ and a vector 
$\valpha\in \N'^n$, we denote by $\vx^{\valpha}$ the monomial $\prod_{i=1}^n x_i^{\alpha_i}$. Using this notation we have 
$\deg(\vx^{\valpha}) =\sum_{i\in[n]}{\alpha_i} $, denoting the \textit{total degree} of $\vx^{\valpha}$. Given a filed \F, a polynomial in $\F[\vx]$ is a linear combination of monomials. The \emph{degree} of a polynomial (also called \emph{the total degree}) is the maximal total degree of its (nonzero) monomials. The \emph{individual degree} of a variable in a polynomial is the maximal power of the 
variable across all monomials in the polynomial.
The individual degree of a polynomial is the maximal individual degree of a variable across all variables in the polynomial.

Given a polynomial $f(\vx)\in\F[\vx]$ we say that 
\demph{$f(\vx)$ is symmetric} if $f(\vx)$ is invariant 
(i.e., stays the same) under permutation of all the  variables:
\begin{equation*}
f(\vx) = f(x_{\sigma(1)},\dots,x_{\sigma(n)}),
\end{equation*}
for every $\sigma\in S_n$, where $S_n$ is the group of permutations over $n$ element. 
The \demph{elementary symmetric polynomial of degree 
$0\le d\le n $} over the $n$ 
variables \vx\ is defined as 
$ \ednx:=\sum_{\valpha\in{n\choose d}}{\vx^{\valpha}}\;,$
where for the sake of convenience we set 
$\el_{0,n}(\vx):=1$.  

%
%

\subsubsection{Nullstellensatz Degree Lower Bounds}\label{sec:intro:NS-degree}
We show two new Nullstellensatz degree lower bounds, which are later lifted to IPS size lower bounds.

\para{Symmetric Instances.}

First we establish hardness for  any symmetric polynomial. 
\begin{corollary*}[see Cor.~\ref{cor:NS-full-symmetric-deg-lower bound}; Single unsatisfiable symmetric polynomials require high degree refutations]
\label{cor:intro:NS-full-symmetric-deg-lower bound}
Assume that $n \ge 1$ and  $1 \le d \le n$, \F\ 
is a field of characteristic 
strictly greater than $\max(2^n,n^d)$, and 
$f(\vx)$ is a symmetric polynomial of degree  $d$ such that $f(\vx)-\beta$ has no 0-1 solution, for $\beta \in \mathbb{F}$. Suppose that $g$ is a multilinear polynomial such that 
\[
g(\xbar) \cdot (f(\vx) - \beta) = 1 \mod{\xbar^2 - \xbar}.
\]
Then, the degree of $g(\vx)$ is at least $n-d+1$.
Accordingly, the degree of every Nullstellensatz refutation
of $f(\vx) - \beta$ is at least $n+1$. 
\end{corollary*}

Note indeed that the statement of the corollary gives   a Nullstellensatz degree lower bound because a  Nullstellensatz refutation of $f(\vx)-\beta$ looks like $g(\vx)\cd(f(\vx)-\beta) + \sum_i h_i\cd(x_i^2-x_i) = 1 $.
Recall also that taking a polynomial modulo the Boolean axioms \baxioms, is the same as multilinearizing the polynomial.

The following claim is the main technical observation  behind the degree lower bound and is a generalisation of the \cite{FSTW21}
lower bounds for the case of linear symmetric polynomials,
i.e., when $d=1$. Intuitively, the idea is that the
multilinearization of the product $\left(\ednx-\beta\right) \cd \eknx $ contains nonzero monomials of degree $\ge 1$ given that $d+k\le n$, hence it cannot equal the polynomial $1$. 
(Note that when $d+k>n$ we cannot make sure that this product when we multiply out terms,
does not yield cancellations of monomials 
resulting potentially in the 1 polynomial.)

\begin{claim*}[see Claim~\ref{cla:action-of-Sn-x-e-d}; Multilinearizing the product of elementary symmetric polynomials yields high degree]
Let $n \ge 1$ and $1 \le d \le n$. If $k$ is such that $k\le n-d$, then
$$
    \ednx \cd\eknx = 
    2^{d+k}\cd\elpx{d+k}{n}+ [\text{\small degree $\le d+k-1$ terms}] \mod \vx^2-\vx
                \;.
$$
\end{claim*}
Using this claim and the fact that every symmetric polynomial 
can be written as a polynomial in the elementary symmetric polynomials, we conclude the Nullstellensatz lower bounds for symmetric polynomials.

\para{Vector Invariant Polynomials.}
We provide degree lower bounds for instances that are different from subset sum variant (this means formally that they are not a substitution instance of $\sum_{i=1}^n x_i$, for $n=\omega(1)$). Later we show how to lift those to size lower bounds. These instances are interesting because they are different from previous IPS hard instances which were all based on the subset sum, and they hold  over finite fields as well. 

 Our hard instances are inspired by  ideas from  \emph{invariant theory} and specifically vector invariants. 
Roughly speaking, an invariant polynomial in the variables \vx\ is a polynomial that stays the same when each variable $x_i$ is replaced by the $i$th element of the vector $A\vx$, for $A$ a matrix taken from a matrix group. We provide some general information from invariant theory in \Cref{sec:invariant-axioms-deg-lower-bounds}.
We consider a class of invariant polynomials known as \emph{vector invariants}, which is  a well-studied class of invariant polynomials \cite{Richman,CampbellHughes,DerksenKemper} 


Let $\xbar := \{x_1,x_2,\ldots, x_{2n}\}$ and $\ybar := \{y_1, y_2, \ldots, y_{2n}\}$ 
be commuting variables over a field $\mathbb{F}$ of characteristic greater than 3 (for size lower bounds we would need char$(\F)\ge 5$). 
Let 
$$\widetilde{Q}(\xbar, \ybar) := \left(\prod_{i \in [2n],~i: \text{ odd}} (x_iy_{i+1} - y_ix_{i+1})\right).
$$
The hard instance is defined as 
\begin{equation}\label{eq:intro:Qxy}
Q(\xbar, \ybar) := \widetilde{Q}(\xbar, \ybar) - \beta
\end{equation}
where $\beta \in \mathbb{F}$ and $\beta \notin \{-1,0,1\}$. 

We use several interesting properties of this polynomial to prove the lower bound (see \Cref{fac:prop-Q}). Specifically, the polynomial is invariant under the following action: for every odd $i\in[2n]$ 
(it is sufficient for the present work to think of
actions, denoted $\hookrightarrow$, as substitutions
of variables by polynomials)
\begin{equation}\label{eq:433:30}
x_{i}\hookrightarrow x_{i}~~~~~~~~~~~~ 
x_{i + 1} \hookrightarrow x_{i+1}~~~~~~~~~~
y_{i}\hookrightarrow x_i + y_i~~~~~~~~~  
y_{i+1}\hookrightarrow x_{i+1} + y_{i+1}\,.
\end{equation}

 For $i \in [2n]$ and $i$ odd, let $a_i := (x_i y_{i+1} - y_ix_{i+1})$. Then, $a_i$ is the determinant of the matrix $M_i = \left(\begin{array}{cc}
             x_i & x_{i+1} \\
             y_i & y_{i+1}
        \end{array}\right).$ Moreover, $a_i \in \{-1,0,1\}$ over the Boolean cube. Note that $\widetilde{Q}(\vx,\vy)-\beta =0$ is unsatisfiable when $\beta \notin \{-1, 0, 1\}$.

\begin{theorem*}[see Thm.~\ref{thm:ns-degreeLB-Q}; Nullstellensatz degree lower bounds for invariant instances]
Let $\mathbb{F}$ be any field of characteristic at least  $3$ and  let $\beta \notin \{-1, 0, 1\}$. Then, $Q(\xbar, \ybar)=0$, $\{x_i^2-x_i=0\}_i$, and $\{y_i^2-y_i=0\}_i$ are unsatisfiable and  any polynomial $f(\xbar, \ybar)$ such that $f(\xbar, \ybar) = 1/Q(\xbar, \ybar)$, for $\xbar \in \{0,1\}^{2n}$ and $\ybar \in \{0,1\}^{2n}$, has degree  at least $2n$.  
\end{theorem*}

Note indeed that the statement of the theorem gives a Nullstellensatz degree lower bound because a  Nullstellensatz refutation of $\widetilde{Q}(\vx,\vy)-\beta$ looks like $f(\vx,\vz)\cd(\widetilde{Q}(\vx,\vz)-\beta) + \sum_i h_i\cd(x_i^2-x_i)
+ \sum_ih_i'\cd(y_i^2-y_i) = 1 $, meaning that $f(\vx,\vz) = 1/Q(\vx,\vz)$ over the Boolean cube. 

We show in \Cref{sec:roabp-sizeLB-Q} and \Cref{sec:invrnt-any-ordr} that the invariant theoretic properties specified in \Cref{eq:433:30}  facilitate a complete understanding of the coefficient space pertaining to  the homogeneous degree $2n$-slice of any refutation of $Q(\vx,\vy)$.  


\subsubsection{Lifting Degree to Size Lower Bounds}

The idea of ``lifting'' usually refers to the notion of taking a hard instance against a specific computational (or proof) model and altering the instance to make it hard even for a \emph{stronger} computational (or proof) model 
(see the recent survey by de Rezende, 
G{\"{o}}{\"{o}}s and Robere~\cite{RGR22} for a discussion  
on the use of lifting in proof complexity and references therein).
Standard lifting usually proceeds by taking a substitution instance of the original hard instance. In this case the  substitution is defined according to a ``gadget'', which is a specific function or polynomial, applied separately on each of the variables. For example, $f(z_1,\dots,z_n)\mapsto f(x_1y_1,\dots, x_ny_n)$, in which the gadget is defined as $z_i\mapsto x_i y_i$, for all $i\in[n]$. 

The idea to use lifting  to turn a hard instance against Nullstellensatz \emph{degree} to a hard instance against $\cC$-IPS \emph{size} was introduced in \cite{FSTW21}. The gadgets used in \cite{FSTW21} were quite simple. Here we show that our new hard instances against Nullstellensatz degree can be lifted with similar gadgets to our IPS fragments of interest. In \cite{GHT22} the gadget  is more involved, and we show how to carry it over to the new setting of Amireddy \textit{et al}.~\cite{AGK0T23} to gain lower bounds against a stronger fragment of constant depth IPS refutations. 

\para{Using Lifting against IPS.} Let us consider how lifting is used to yield size lower bounds. Recall that in the functional lower bound method, we reduced the task of lower bounding the size of a $\cC$-IPS refutation of $f(\vx)=0$ into the task of lower bounding the size of an algebraic circuit from the class $\cC$ that computes the function $g(\vx)=\nicefrac{1}{f(\vx)}$ over the Boolean cube. 
To establish an algebraic circuit lower bound we usually need to lower bound the rank of a  certain matrix denoted $\coeffs{\vu|\vv}(g)$ corresponding to the coefficient matrix of the polynomial $g$ under a partition of the variables $\vx=(\vu,\vv)$. In such a coefficient matrix, the $(M,N)$ entry is the coefficient in $g$ of the monomial $M\cd N$, with $M$ a monomial in the $\vu$-variables and $N$ a monomial in the $\vv$-variables.  

Note however that in our case $g$ is not a polynomial, rather a \emph{family} of many polynomials all of which compute the \emph{function} $\nicefrac{1}{f(\vx)}$
over the Boolean cube. 
To prove such a lower bound on a family of polynomials,  \cite{FSTW21} used an alternative rank argument: the \emph{evaluation dimension} (as suggested by Saptharishi \cite{Saptharishi12}; cf.~\cite{FKS16}), which for us will be defined as the dimension of the following space of polynomials under partial assignments $\{g(\vu,\valpha):\; \valpha\in\bits^{|\vv|}\}$. 
For the most part, our arguments  also use evaluation dimension (except for the vector invariant polynomials in which our analysis is tighter). It is not hard to show (\Cref{res:evals_eq-coeffs}) that  evaluation dimension is a lower bound on  the rank of $\coeffs{\vu|\vv}(g)$. 
We are thus left with the task of lower bounding ~$\dim \{g(\vu,\valpha):\; \valpha\in\bits^{|\vv|}\}$. 
This is where we need lifting. Specifically,  we need to maintain two main properties:
\vspace{-5pt} 

\begin{enumerate}
\item
 We need to \emph{substitute} the original variables \vx\ of our polynomial $g$ (and accordingly $f$) with  gadgets, resulting in a new polynomial equipped with a natural partition of variables $\vu,\vv$ that would provide high dimension to $\{g(\vu,\valpha):\; \valpha\in\bits^{|\vv|}\}$.\vspace{-4pt}  

\item 
$g(\vu,\valpha)$, for $\valpha\in\bits^{|\vv|}$,  reduces to our  \emph{original} instance $g(\vx)$ (possibly at a smaller input length). 
\label{it:intro:lifting-idea}

\end{enumerate}

The gist behind the two properties is this: to lower bound $\dim \{g(\vu,\valpha):\; \valpha\in\bits^{|\vv|}\}$ we use 
\Cref{it:intro:lifting-idea}. This allows us to use our original Nullstellensatz refutation degree lower bound on each element of the set $\{g(\vu,\valpha):\; \valpha\in\bits^{|\vv|}\}$. Using such a degree lower bound on $g(\vu,\valpha)$ for distinct assignments  $\valpha\in\bits^{|\vv|}$ we can ``isolate'' enough distinct leading monomials (per some global fixed monomial ordering). By a known result, the number of distinct leading monomials in a space of polynomials is a lower bound on its dimension. Hence, we conclude the evaluation dimension lower bound (and the circuit lower bound).


\para{Symmetric Instances under Lifting.}


We show lower bounds against $\cC$-\lbIPS\ of any symmetric polynomial under lifting, where $\cC$ is the class of read once (oblivious) algebraic branching programs (roABPs). Accordingly, we denote this system by \roAlbIPS\ (see \Cref{def:roABP} for the definition of roABP).
The lifting is defined by replacing the elementary symmetric polynomials with elementary symmetric polynomials with a bigger number of variables and then applying the gadget to each variable.

More precisely, given a symmetric polynomial $f(\vq)$ with $n$ variables $q_1,\dots, q_n$, by the fundamental theorem of symmetric polynomials (\Cref{prop:fund-thm-sym-polynomials}), it can be written as a polynomial in the elementary symmetric polynomials: 
$f(\vq):= h(y_1/\el_{1,n}(\vq),\dots,y_n/\el_{n,n}(\vq))$ for some polynomial $h(\vy)$ (where $h(y/g(\vq))$ denotes the polynomial $h$ in which the variable $y$ is substituted by the polynomial $g(\vq)$).
Consider the polynomial
$
f'(\vw):=h(y_1/\el_{1,m}(\vw),\dots,y_n/\el_{n,m}(\vw))
$
for $m= {2n\choose 2}$ and $\vw = \{w_{i,j}\}_{i<j\in[2n]}$.
We now apply a similar gadget to \cite{FSTW21},
defined by the mapping 
$$
w_{i,j}\mapsto z_{i,j}x_ix_j\,,
$$
which substitutes the $m$ variable $w_{i,j}$ by $m+2n$ variables $\{z_{i,j}\}_{i<j\in[2n]}$, $x_1,\dots,x_{2n}$:
\begin{equation}
\label{eq:intro:1588:15}
f^\star(\vz,\vx):=h(y_1/(\el_{1,m}(\vw))_{w_{i,j}\mapsto z_{i,j}x_ix_j},\dots,y_n/(\el_{n,m}(\vw))_{w_{i,j}\mapsto z_{i,j}x_ix_j})\,,
\end{equation}
where $(\el_{j,m}(\vw))_{w_{i,j}\mapsto z_{i,j}x_ix_j}$ means that we apply the lifting $w_{i,j}\mapsto z_{i,j}x_ix_j$ to the \vw\ variables. 

Note that when we use lifting to change the variables in
an instance, we also \emph{add the Boolean axioms for each of the
new variables}.

\begin{corollary*}[Symmetric instances are hard for roABP-\lbIPS; see~Cor.~\ref{res:lbs-fn:lbs-ips:vary-order}]
Let $n\ge 1$, $m={n \choose 2}$, and $\F$ be a field with $\chara(\F)>\max(2^{4n+2m},n^d)$.  
Let $f\in\F[\vq]$ be a symmetric polynomial with $n$ variables
of degree $d=O(\log n)$, and $f^\star(\vz,\vx)$  be as in \Cref{eq:intro:1588:15}. 
Let $\beta\in\F$ be such that $f^\star(\vz,\vx)-\beta=0$ and $f(\vq)-\beta=0$ are each unsatisfiable over Boolean values. 
Then, any \roAlbIPS\ refutation (in any variable order) of
 $f^\star(\vz,\vx)-\beta=0$ (together with the Boolean axioms to the \vx- and \vz-variables) requires $2^{\Omega(n)}$-size. 
\end{corollary*}

The new idea we employ in this result is showing how to maintain \Cref{it:intro:lifting-idea} of the lifting scheme above, even in the \emph{absence} \emph{of} \emph{maximal degree lower bounds} for the original hard polynomial $g(\vx)$ in that scheme. While in \cite[see remark after Proposition 5.8]{FSTW21} a maximal $n$ lower bound on $g(\vx)$ was thought to be necessary  to the dimension lower bound, we show that we can lower 
bound $\dim\{g(\vu,\valpha):\; \valpha\in\bits^{|\vv|}\}$
with only an $\Omega(n)$ Nullstellensatz degree lower bound for $g(\vx)$ ($n$ is the number of variables in \vx). Recall that general symmetric polynomials have only $\Omega(n)$ Nullstellensatz degree lower bounds according to \Cref{cor:NS-full-symmetric-deg-lower bound}
(for low enough symmetric polynomials). (This, on the other hand, will mean that we do not get size lower bounds for formula multilinear IPS refutations (as in \cite{FSTW21}), since the results of Raz-Yehudayoff (\Cref{thm:full-rank-lb}) require full rank $2^n$ lower bounds, while the dimension lower bound we get is only $2^{\Omega(n)}$). 

%
%

\para{Invariant Instances under Lifting.}

We show how to lift the vector invariant polynomials hard against Nullstellensatz degree $Q(\xbar, \ybar)$ from \Cref{sec:intro:NS-degree} (\Cref{eq:intro:Qxy}), to hard instances against IPS refutation-size, where refutations are written as roABPs and multilinear formulas, respectively.

Let $\ubar = \{u_1, u_2, \ldots, u_{4n}\}$, let $m = {{4n}\choose{4}}$, and $\zbar = \{z_1, z_2, \ldots, z_m\}$. The hard instance  $P(\ubar, \zbar) \in \mathbb{F}[\ubar, \zbar]$ is defined by:\[P(\ubar, \zbar) := \left(\prod_{i<j<k<\ell \in [4n] } 1 - z_{i,j,k,\ell} + z_{i,j,k,\ell} (u_iu_\ell - u_ju_k)\right) - \beta\,.
\]

We show the following \roAlbIPS\ and multilinear-formula-\lbIPS\ lower bounds. Here, multilinear-formula-\lbIPS\  denotes $\cC$-IPS where $\cC$ is the class of multilinear formulas, and we note that multilinear-formula-\lbIPS\   is equivalent to full multilinear-formulas-IPS (\cite{FSTW21}).

\begin{theorem*}[see Thm.~\ref{thm:any-order}]
Let $\mathbb{F}$ be a field of characteristic $\geq 5$ and let $P(\ubar, \zbar)$ be as defined above. Then $P(\ubar, \zbar), \{u_i^2-u_i\}_i, \{z_i^2-z_i\}_i$ is unsatisfiable as long as $\beta \notin \{-1, 0, 1\}$. 
%
%
And any \roAlbIPS\  refutation of $P(\ubar, \zbar), \{u_i^2-u_i\}_i, \{z_i^2-z_i\}_i$ requires $\exp(\Omega(n))$ size. Moreover, any multilinear-formula-IPS refutation requires $n^{\Omega(\log n)}$ size and any product-depth-$\Delta$ multilinear-formula-IPS requires size $n^{\Omega((n/\log n)^{1/\Delta}/\Delta^2)}$. 
\end{theorem*}

This is the first IPS fragment lower bound over finite fields, solving an open problem in \cite{PT16,GHT22}. \nutan{Also mention functional lower bounds for multilinear circuits here?} Previous IPS fragment lower bounds were all using variants of the subset sum $\sum_{i=1}^n x_i-\beta =0$ as hard instances. They crucially used the fact that the field has large characteristic so that the instance first is indeed unsatisfiable (note that over characteristic at most $n$ it is satisfiable over Boolean values), and second the degree lower bound can be carried through (see the proof of \Cref{lem:sym-poly-ref-degree-high}, paragraph before \Cref{cla:action-of-Sn-x-e-d} for an explanation).


Note that these lower bounds are in the placeholder IPS regime because the hard instances themselves do not have small roABPs and multilinear formulas.
\iddolater{Do we have these lower bound for the instances themselves or do we simply don't know?}


\paragraph{Proof Overview.} The starting point for this lower bound proof is the Nullstellensatz degree lower bound for $Q(\vx,\vy)$. The degree lower bound establishes that $f(\vx,\vy)$, namely the polynomial that agrees with $1/Q(\vx,\vy)$ over the Boolean cube, has a degree at least $2n$ (as stated in \Cref{thm:ns-degreeLB-Q}; see \Cref{sec:intro:NS-degree}). Here, we further refine this statement. 
We completely characterise the homogeneous degree-$2n$ slice of the polynomial $f(\vx,\vy)$. Specifically, we show that any monomial of degree $2n$ in $f(\vx,\vy)$ has coefficient either $\frac{1}{\beta(1-\beta)}$ or $\frac{1}{\beta(1+\beta)}$. This allows us to give a lower bound on the coefficient space of the polynomial $f(\vx, \vy)$ under any order in which $\xbar < \ybar$.  Here, we use the invariant properties of the polynomial $Q(\xbar, \ybar)$ crucially. We observe that if $Q(\xbar, \ybar)$ is invariant under a certain action, then so is the refutation (i.e., the polynomial $f(\xbar, \ybar)$).

To obtain a lower bound in any order (not just when $\xbar < \ybar$), we build further on the above. For this, we use a lifted version of $Q(\xbar, \ybar)$ (this is a different lifting than for the symmetric instances size lower bounds), namely the polynomial $P(\ubar, \zbar)$ mentioned above. Let $g(\ubar, \zbar)$ be the polynomial that agrees with $1/P(\ubar, \zbar)$ over the Boolean cube. We interpret $g(\ubar, \zbar)$ over $\mathbb{F}[\zbar][\ubar]$. And observe that for every partition of the variables $\ubar$ into equal parts, say $\vbar, \wbar$,  there exists a  $0$-$1$ assignment to the $\zbar$ variables, such that it recovers an instance of $f(\vbar, \wbar)$. 
This corresponds to \Cref{it:intro:lifting-idea} in the conditions
used in the lifting scheme above.
Accordingly, using this condition allows us to prove lower 
bounds on the coefficient dimensions for any partition of variables. \medskip

Interestingly, the fact that our lower bound works for \emph{all orders} of $\ubar$ variables, allows us to prove a functional lower bound for multilinear \emph{formulas}. 
Formally, we get the following corollary as a byproduct of the above theorem.
\begin{corollary}[New functional \emph{formula} lower bound]
    \label{cor:functionLB-multi}
    Let $P(\ubar, \zbar)$ be as defined above. Let $g(\ubar, \zbar)$ be a polynomial that agrees with $1/P(\ubar, \zbar)$ over the Boolean cube. Then, any multilinear formula that agrees with $g(\ubar, \zbar)$ over the Boolean cube must have size $\exp(\Omega(n))$. 
\end{corollary}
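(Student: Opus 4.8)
The plan is to derive Corollary~\ref{cor:functionLB-multi} as a direct consequence of the multilinear-formula-\lbIPS\ lower bound in the preceding theorem, by running the functional lower bound method of \Cref{def:single-functional-lower-bound-method} ``in reverse''. The key observation is that the hard instance $P(\ubar,\zbar)$ together with the Boolean axioms is unsatisfiable (this is part of the theorem statement, for $\beta\notin\{-1,0,1\}$), so $1/P(\ubar,\zbar)$ is well-defined over the Boolean cube, and any polynomial $g(\ubar,\zbar)$ that agrees with $1/P$ on the cube, when multiplied by $P$, satisfies
\[
g(\ubar,\zbar)\cd P(\ubar,\zbar)=1 \mod{\ubar^2-\ubar,\ \zbar^2-\zbar}\,.
\]
Spelling this out, $g\cd P - 1$ vanishes on the Boolean cube, hence lies in the ideal generated by $\{u_i^2-u_i\}_i\cup\{z_i^2-z_i\}_i$; writing the corresponding cofactors as $h_i,h_i'$ gives
\[
g(\ubar,\zbar)\cd P(\ubar,\zbar)+\sum_i h_i\cd(u_i^2-u_i)+\sum_i h_i'\cd(z_i^2-z_i)=1\,,
\]
which is exactly a multilinear-formula-\lbIPS\ refutation of $P(\ubar,\zbar)$ \emph{provided $g$ is computable by a small multilinear formula} (recall multilinear-formula-\lbIPS\ ignores the cost of the $h_i,h_i'$, and that it is equivalent to full multilinear-formula-IPS by~\cite{FSTW21}).

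First I would verify the well-definedness and the ``mod Boolean axioms'' identity above; this is routine since $P$ is a constant ($-\beta$ plus a product of quantities lying in $\{-1,0,1\}$) shifted so as never to vanish on the cube. Next, I would suppose for contradiction that there is a multilinear formula $\Phi$ of size $s = \exp(o(n))$ computing a polynomial $g$ that agrees with $1/P$ over the Boolean cube. Then $\Phi$ furnishes the polynomial $g$ in the displayed equation, so we obtain a multilinear-formula-\lbIPS\ refutation of $P(\ubar,\zbar),\{u_i^2-u_i\}_i,\{z_i^2-z_i\}_i$ whose size (as measured by $\cC$-\lbIPS, i.e.\ the size of the cofactor of $P$) is at most $s = \exp(o(n))$. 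This contradicts the theorem's lower bound of $\exp(\Omega(n))$ on any multilinear-formula-\lbIPS\ (equivalently multilinear-formula-IPS) refutation of this instance. Hence every such multilinear formula has size $\exp(\Omega(n))$, which is the claim.

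I do not expect a genuine obstacle here: the corollary is essentially a restatement of the theorem after observing that a small circuit for $1/P$ over the cube would be a small certificate in the $\cC$-\lbIPS\ sense. The only point requiring a little care is bookkeeping about \emph{which} size measure the theorem controls --- the $\cC$-\lbIPS\ measure counts only the circuit/formula size of the cofactor of the genuine axiom $P$ and not the cofactors of the Boolean axioms --- so the implication goes through precisely because the hypothesized multilinear formula $\Phi$ is exactly that cofactor; the Boolean-axiom cofactors $h_i,h_i'$ may be arbitrary and need not be multilinear or small. (Alternatively, one can invoke \Cref{def:single-functional-lower-bound-method} directly with $\cC$ the class of multilinear formulas and $f=P$: it says that if every $g$ agreeing with $1/P$ on the cube lies outside $\cC$, then $P$ has no $\cC$-\lbIPS\ refutation; contrapositively, since $P$ \emph{does} have no small multilinear-formula-\lbIPS\ refutation is not quite the statement we want, so the direct contrapositive packaging above --- small formula for $1/P$ yields a small refutation --- is the cleanest route, and is the one I would write out.)
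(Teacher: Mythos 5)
Your structural approach is the natural and intended one: run the functional lower bound method ``in reverse'', arguing that a small multilinear formula $\Phi$ computing the function $1/P$ on the Boolean cube would furnish the cofactor of $P$ in a multilinear-formula-\lbIPS\ refutation, contradicting the lower bound of Theorem~\ref{thm:any-order}. The bookkeeping you give about the Boolean-axiom cofactors $h_i,h_i'$ being free (since \lbIPS\ only charges for the cofactor of the genuine axiom) and about the equivalence of multilinear-formula-\lbIPS\ with full multilinear-formula-IPS is also correct.

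The gap is quantitative. You invoke ``the theorem's lower bound of $\exp(\Omega(n))$ on any multilinear-formula-\lbIPS\ refutation,'' but Theorem~\ref{thm:any-order} states the $\exp(\Omega(n))$ bound only for roABP-\lbIPS\ refutations; the multilinear-formula-IPS lower bound it asserts is $n^{\Omega(\log n)}$ (and $n^{\Omega((n/\log n)^{1/\Delta}/\Delta^2)}$ for product-depth $\Delta$). Those bounds come from applying the Raz--Yehudayoff machinery (Theorem~\ref{thm:full-rank-lb}) to the coefficient-dimension bound $\dim_{\F[\zbar]}\coeffs{\vbar|\wbar} g_{\zbar}\ge 2^n$ of Lemma~\ref{lem:any-order}; that technique is inherently capped at a quasipolynomial bound for unbounded-depth multilinear formulas, regardless of the rank estimate. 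Consequently, your contradiction argument, as carried out via Theorem~\ref{thm:any-order}, yields only an $n^{\Omega(\log n)}$ lower bound on multilinear formulas computing $1/P$ on the cube --- not the $\exp(\Omega(n))$ claimed in Corollary~\ref{cor:functionLB-multi}. The exponential rate would come out of this route only by restricting to very small product-depth (essentially $\Delta = 1$). To defend the corollary as literally stated you would need a genuinely exponential multilinear-formula lower bound from a $2^{n}$ coefficient-dimension bound on a $4n$-variable polynomial, which the Raz--Yehudayoff technique does not supply; either the bound in the corollary should be $n^{\Omega(\log n)}$ for multilinear formulas, or the model intended was roABPs, for which $\exp(\Omega(n))$ does follow from the theorem.
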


Previously, functional lower bounds were known for the roABP model due to~\cite{FKS16}. In their case, the hard polynomial was in \textsf{VNP}.  Above, our hard polynomial is $g(\ubar, \zbar)$. We suspect that the same functional lower bound as stated in \Cref{cor:functionLB-multi} can also be proved for $P(\ubar, \zbar)$. If this is true, then we  get a polynomial computable by a product-depth-$2$ circuits for which we have an exponential functional lower bound. This is likely to be of independent interest. 






\para{Extended Constant-Depth Lower Bounds.}

Using Limaye, Srinivasan, and Tavenas  \cite{LST21} constant-depth circuit lower bounds, Govindasamy, Hakoniemi, and Tzameret \cite{GHT22} established constant-depth IPS lower bounds against a lifted subset sum. The lifting in \cite{GHT22} was rather involved to fit the ``lopsided'' rank measure introduced by \cite{LST21}. In~\cite{GHT22}, the lower bound was not tight and it applied only to IPS refutations computable
by \emph{multilinear} polynomials. 


We show how to use recent progress on constant-depth lower bounds by Amireddy, Garg, Kayal, Saha, and Thankey~\cite{AGK0T23} to achieve tighter lower bounds for constant-depth IPS refutations, 
while also extending \cite{GHT22} lower bounds to constant-depth IPS refutations computing polynomials of $O(\log\log n)$-individual degrees.

\begin{theorem*}[Constant-depth IPS lower bounds; 
See Thm.~\ref{thm:constant-depth-lower-bounds}]
Let $n,\Delta$ and $\delta$ be positive integers, and assume 
that $\F$ is a field with $\mathrm{char}(\F) = 0$, and $\beta\in\F$.\footnote{The lower bound also 
holds for fields of large enough characteristic that depends
on the number of variables.}
Let $g$ be a polynomial of 
individual degree at most $\delta$ such that it agrees with 
    \[
    \frac{1}{\sum_{i,j,k,\ell\in[n]}z_{ijk\ell}x_ix_jx_kx_\ell - \beta}\quad\text{ over Boolean values.}
    \]
    Then, any circuit of product-depth at most $\Delta$ computing $g$ has size at least
    \[
    n^{\scalebox{1.23}{$\Omega$}\left(\frac{(\log n)^{2^{1 - 2\Delta}}}{\delta^2\cdot\Delta} \right)}.
    \]
\end{theorem*}

Note that by the functional lower bound scheme above, this result gives immediately a constant-depth individual degree-$\delta$ IPS size lower bounds for 
$\sum_{i,j,k,\ell\in[n]}z_{ijk\ell}x_ix_jx_kx_\ell - \beta$ (whenever
$\beta\in\F$ makes this polynomial unsatisfiable, namely, nonzero, over the Boolean cube).


Note that while \cite{GHT22} established a similar size lower 
bound for the \emph{unique multilinear} polynomial computing
$\nicefrac{1}{\sum_{i,j,k,\ell\in[n]}z_{ijk\ell}x_ix_jx_kx_\ell
 - \beta}$ over the Boolean cube, 
Theorem~\ref{thm:constant-depth-lower-bounds}
establish this lower bound for the \emph{family} of polynomials 
whose individual degree is $\delta$,
computing this polynomial over the Boolean cube.

This  solves the  question raised in \cite{AGK0T23}
about the ability to use their framework to obtain functional lower bounds for low-depth algebraic formulas in a direct manner without hardness-escalation via set-multilinear formulas.%
\footnote{\cite{AGK0T23} reads:~``\textit{Furthermore, it is conceivable that a direct argument can also be used to obtain functional lower bounds for low-depth formulas which might be useful in proof complexity.''}} Specifically, in~\cite{LST21}, the constant-depth circuit lower bound is first proved for a restricted setting of \emph{set-multilinear} constant-depth circuits. This is then \emph{escalated} to first obtain homogeneous constant-depth circuit lower bounds and in turn this is escalated further to obtain general constant-depth circuit lower bounds. On the other hand,~\cite{AGK0T23} show a lower bound for the homogeneous constant-depth circuits directly, thereby bypassing the need for proving set-multilinear lower bounds. A natural question arose from this development: can the techniques used in~\cite{GHT22} be extended using the ideas in~\cite{AGK0T23} to get stronger proof complexity lower bounds for constant-depth IPS proofs? We answer this question affirmatively.


We briefly discuss the proof of this lower bound and what is new in it. \cite{AGK0T23} showed how to circumvent the use of set-multilinear polynomials in \cite{LST21}. 
%
In doing so it also puts leaner requirements for the hard-instances, when the work of  \cite{LST21} required their hard instances to be set-multilinear.
For \cite{GHT22}, this meant that in order to use constant-depth algebraic circuit lower bounds they needed to show that any constant-depth IPS proof embeds in some sense a hard set-multilinear 
polynomial. 
%
To embed set-multilinear polynomials in any proof, \cite{GHT22} needed to stick to IPS refutations that compute multilinear polynomials only. Otherwise, if the IPS proofs themselves are not multilinear their set-multilinear projection (namely, the set-multilinear polynomials they ``embed'') could be zero, regardless of their functional behaviour over the Boolean cube (one can use multilinearization on the proofs, but there is no guarantee that this operation  increases the evaluation dimension, meaning that a lower bound on multilinear proofs is achieved that may not hold for non-multilinear proofs).

In the circuit lower bound,~\cite{AGK0T23} use a measure, introduced in \cite{GKS20}, called \textit{Affine Projections of Partials} (\textsf{APP})\nutan{Add ref.}. They analyse  this measure and prove that lower bounds for this measure are enough to prove a superpolynomial lower bound for constant-depth homogeneous circuits. We benefit from their analysis to be able to extend the lower bounds from~\cite{GHT22}. Specifically,  we 
are able to lower bound the $\APP$ measure for arbitrary IPS refutations (perhaps even highly non-multilinear) of our hard instance. From these lower bounds on the measure we are able to infer constant-depth lower bounds for refutations of bounded individual degree.

At the technical level, this gives two kinds of improvements. First, we are able to analyse more general class of proofs and prove lower bounds for them. And second, we are able to modify the framework of \cite{AGK0T23} in order to use it for functional lower bounds.

\subsubsection{Barriers for Boolean Instances Lower bounds}\label{sec:intro:barriers}

Lower bounds against $\ACZ[p]$-Frege proofs stand as one of the most elusive lower bound questions in proof complexity, open for more than three decades, while still considered within reach using current techniques (especially, due to the known $\ACZ[p]$ circuit lower bounds).
In light of the simulation of $\ACZ[p]$-Frege by constant-depth IPS refutations over $\F_p$ shown in \cite{GP18}, it is promising to think of using algebraic circuit lower bounds in the framework of IPS to solve this open problem. We show that at least when it comes to the functional lower bound method (as defined precisely in general in \Cref{def:general-functional-lower-bound-method}), this goal is impossible to achieve.

When we attempt to prove a lower bound against a propositional proof system operating with Boolean formulas (such as $\ACZ[p]$-Frege) by way of algebraic proofs lower bounds, we need to focus on hard instances against the algebraic proof systems that are nevertheless Boolean. 
We say that an instance consisting of a set of polynomials $\{f_i(\vx)=0\}_i$, for $f_i(\vx)\in\F[\vx]$, is \emph{Boolean} whenever $f_i(\vx)\in\bits$ for $\vx\in\bits^{|\vx|}$.
For example, a CNF written as a set of (polynomials representing) clauses is a Boolean instance. Similarly, the standard arithmetization of propositional formulas is Boolean instances. Note that up to this point, we discussed only \emph{non}-Boolean hard instances. For example, the subset sum $\sum_i x_i - \beta$ is highly non-Boolean due to its image under \bits-assignments being $\{-\beta,1-\beta,\dots,n-\beta\}$. The vector invariant polynomial instances are also non-Boolean because their image under Boolean assignments is $\{-1,0,1\}$. Similarly, previous hard instances against fragments of IPS are all non-Boolean.

\begin{theorem*}[Main barrier; see Thm.~\ref{thm:barrier}]
The functional lower bound method cannot establish
lower bounds for any \textit{\emph{Boolean}} instance against sufficiently strong proof 
systems. 
In particular, it cannot establish any lower bounds against \ACZ$[p]$-Frege, \TCZ-Frege (and constant-depth
\lbIPS\ when the hard instances are Boolean).
\end{theorem*}

 
Here, a \emph{sufficiently strong proof system}  (see \Cref{def:suf-strong-ps} for the precise definition) means a proof system that basically has the AND introduction rule,  in the sense that from $\phi_1,\dots,\phi_n$ one can efficiently derive $\bigwedge_i \phi_i$. Most reasonably strong proof systems such as \ACZ$[p]$-Frege and \TCZ-Frege clearly have this property.

To understand this result, first, we need to understand how to potentially use the functional lower bound method for Boolean instances. For the sake of simplicity, let us discuss the case where the proof system we try to prove lower bounds against is  $\cC$-$\lbIPS$ for some algebraic circuit class $\cC$ (in \Cref{sec:barriers} we discuss the general case of any proof system including strictly propositional ones like \ACZ$[p]$-Frege).

Let $\cF:=\{f_i(\vx)=0\}_i$ be a collection of \emph{Boolean} polynomial equations 
in $\cC$, in the above sense, and suppose 
we wish to establish a lower bound for
$\cF$ against  $\cC$-$\lbIPS$ using the functional lower bound method. For this purpose, we prove a lower bound for $f(\vx)=0$ against
$\cC$-\lbIPS, using the functional lower bound method. We then need to show that there is a short $\cC$-{\rm \lbIPS}-proof of $\cF$ from $f(\vx)=0$ (and $\baxioms$) (it is possible that $\cF$ is equal to $\{f(\vx)=0\}$).
Then, we can conclude there is  no 
$\cC$-{\rm \lbIPS}-refutations of $\cF$ 
(otherwise, starting from $f(\vx)=0$, we can efficiently derive $\cF$ and refute $f(\vx)=0$ in contradiction to the assumption that $f(\vx)=0$ is hard for $\cC$-{\rm \lbIPS}).

The idea behind the barrier is as follows: if $\cC$-{\rm \lbIPS} is sufficiently strong we can efficiently derive the arithmetization of $\bigwedge_i f_i(\vx)$ in $\cC$-{\rm \lbIPS} (for example, it can   be written as $1-\prod_i(1-f_i(\vx)) $).  
The negation \(\prod_i(1-f_i(\vx))\) of this single polynomial is a tautology, namely it is always zero over the Boolean cube, and hence is in the ideal generated by the Boolean axioms $\{x_j^2-x_j\}_j$. Thus, $\cF$ can be refuted using only the Boolean axioms (though not necessarily efficiently). From this, it is not hard to show that there is no functional lower bound against the  function 
$g(\vx)=\frac{1}{f(\vx)}, \quad \forall\vx\in\bits^n$
(as in \Cref{eq:intro:functional-lower-bound-mthd} in the Functional Lower Bound Method \Cref{def:single-functional-lower-bound-method}). 

It is interesting to note that recently Grochow~\cite{Gro23} showed that even a low-depth IPS fragment constitutes a sufficiently strong proof system in our sense.

We also note that the barrier is not sensitive to a specific arithmetization scheme, as long as it  translates Boolean formulas to a polynomial that computes the same Boolean function over the Boolean cube (where possibly 1 is flipped with 0; see \Cref{sec:barriers}).







\section{Preliminaries}

\subsection{Notation}
\label{sec:notation}

For a natural number $n\in\N$, $[n]:=\{1,\dots,n\}$. 
We assume that $0\in\N$.
We call $\bits^n$ the \emph{Boolean cube}. 

Let \G\ be a ring (we usually work with sufficiently large fields denoted \F\ or fields of zero characteristic, and this is specified when important). Denote by $\G[\vx]$ the ring of (commutative) polynomials with coefficients from $ \G $ and variables $\vx:=\{x_1,x_2,\,\dots\,\}$. A \emph{polynomial} is a formal linear combination of monomials, whereas a \emph{monomial} is a product of variables. Two polynomials are \emph{identical} if all their monomials have the same coefficients. The (total) degree of a monomial is the sum of all the powers of variables in it. 
The (total) \emph{degree} of a polynomial is the maximal total degree of a monomial in it. The degree of an \emph{individual} variable in a monomial is its power. The \emph{individual degree} of a monomial is the maximal individual degree of its variables. The 
individual degree of a polynomial $f$, denoted $\ideg f$, 
is the maximal individual degree of its monomials. For a polynomial $f$ in $\G[\vx,\vy]$ with $\vx,\vy$ being pairwise disjoint sets of variables, the \emph{individual $\vy$-degree} of $f$ is the maximal individual degree of a $\vy$-variable only in $f$. 

Given $n$ variables \vx\ and a vector 
$\valpha\in \N^n$ we denote by $\vx^{\valpha}$ the monomial
$\prod_{i=1}^n x_i^{\alpha_i}$. Using this notation we have 
$\deg(\vx^{\valpha}) =\sum_{i\in[n]}{\alpha_i} $, 
denoting the \textit{total degree} of $\vx^{\valpha}$.
Denote by $\coeff{\vx^\va\vy^\vb}(f)$ the coefficient 
of $\vx^\va\vy^\vb$ in $f$.

We say that a polynomial is \emph{homogeneous} whenever every monomial in it has the same (total) degree. For a polynomial $f(\xbar)$, the \emph{degree-$d$ homogeneous slice of $f(\xbar)$} (degree-$d$ slice, for short) is a polynomial defined by the degree $d$ monomials of $f(\xbar)$. We say that a polynomial is \emph{multilinear} whenever the individual degrees of each of its variables are at most 1. 

For two polynomials $g(\vy),h(\vx)$ We denote by $g(y_i/h(\vx))$ the \textit{substitution} in $g(\vy)$ of the variable $y_i \in \vy$ by the polynomial $h(\vx)$. 

\subsection{Algebraic Circuits}

Algebraic circuits and formulas over the ring \G\ compute polynomials in $\G[\vx]$ via addition and multiplication gates, starting from the input variables and constants from the ring. 
More precisely, an \emph{algebraic circuit} $C$ is a finite directed acyclic graph (DAG) with \textit{input nodes} (i.e., nodes of in-degree zero) and a single \textit{output
 node} (i.e.,  a node of out-degree zero).  Edges are labelled by ring \G\ elements.  Input nodes are labelled
 with variables or scalars from the underlying ring. In this work (since we work with constant-depth circuits)
 all other nodes have unbounded \emph{fan-in} (that is, unbounded in-degree) and are labelled by either an addition
 gate $+$ or a product gate $\times$.
Every node in an algebraic circuit $C$ \emph{computes}
 a polynomial in $\G[\vx]$ as follows: an input node computes %
%
the variable or scalar that labels it. A $+$ gate
computes the linear combination of all the polynomials computed by its incoming nodes, where the coefficients of the linear combination are determined by the corresponding incoming edge labels. A $\times$ gate computes the product of all the polynomials computed by its incoming nodes (so edge labels in this case are not needed). The polynomial
 computed by a node $u$ in an algebraic circuit $C$ is denoted $\widehat u$. Given a circuit $C$, we denote by
 $\widehat C$ the polynomial computed by $C$, that is, the polynomial computed by the output node of $C$.  The \emph{\textbf{size}} of a circuit $C$ is the number of
 nodes in it, denoted $|C|$, and the \emph{\textbf{depth}} of a circuit is the length of the longest directed path 
in it (from an input node to the output node). 
The \textbf{\emph{product-depth }}of the circuit is the maximal number of product gates in a directed path from
an input node to the output node.
For excellent treatises on algebraic circuits and their
 complexity see Shpilka and Yehudayoff \cite{SY10} as
  well as Saptharishi \cite{Sap17-survey}.


Let $\overline{X} = \langle X_1,\ldots,X_d\rangle$ be a sequence of pairwise disjoint sets of variables, called \emph{a variable-partition}. We call a monomial $m$ in the variables $\bigcup_{i\in [d]}X_i$  \emph{set-multilinear} over the variable-partition $\overline{X}$ if it contains exactly one variable from each of the sets $X_i$, i.e. if there are $x_i\in X_i$  for all $i\in [d]$ such that $m = \prod_{i\in [d]}x_i$. A polynomial $f$ is set-multilinear over $\overline{X}$ if it is a linear combination of set-multilinear monomials over $\overline{X}$. For a sequence $\overline{X}$  of sets of variables, we denote by $\F_{\sml}[\overline{X}]$ the space of all polynomials that are set-multilinear over $\overline{X}$.

We say that an algebraic circuit $C$ is set-multilinear over $\overline{X}$ if $C$ computes a polynomial that is set-multilinear over $\overline{X}$, and each internal node of $C$  computes a polynomial that is set-multilinear over some sub-sequence of $\overline{X}$.

\subsection{Symmetric Polynomials}
We denote by $S_n$ the \emph{permutation group} over $n$ elements. Concretely, an element $\sigma\in S_n$ can be identified with a (permutation)
function $\sigma:[n]\to[n]$. Given $n$
variables \vx\ we denote by $\permutx$ the application of 
$\sigma\in S_n$ to the variables; namely, their renaming 
according to $\sigma$. In this way, $f(\permutx)\in\F[\vx]$
is the result of renaming in $f\in\F[\vx]$ all the variables
\vx\ according to $\sigma$.

\begin{definition}[Symmetric polynomial]\label{def:sym-poly} 
Given a polynomial $f(\vx)\in\F[\vx]$ we say that 
\demph{$f(\vx)$ is symmetric} if $f(\vx)$ is invariant 
(i.e., stays the same) under permutation of all the  variables:
\begin{equation}
f(\vx) = f(\permutx),
\end{equation}
for every $\sigma\in S_n$.
\end{definition}
We will also consider polynomials that are not fully symmetric, in the sense that they stay the same, i.e. are invariant, under a specific subgroup $G$, which is not necessarily $S_n$. 
In this case we call this polynomial \demph{invariant under }$G$.

\begin{definition}[Elementary symmetric polynomial \ednx] 
The \demph{elementary symmetric polynomial of degree 
$0\le d\le n $} over the $n$ 
variables \vx\ is defined as follows:
\[
 \ednx:=\sum_{\valpha\in{n\choose d}}{\vx^{\valpha}}\;,
\]
where for the sake of convenience we set 
$\el_{0,n}(\vx):=1$. \end{definition}

Note that \ednx\ is multilinear, and that there is only a \emph{single} homogeneous multilinear 
symmetric polynomial over $n$ variables \vx, up to scalar multiplication:
\begin{fact}\label{fact:80:84}
If $f\in\F[\vx]$ is a symmetric multilinear and homogeneous 
polynomial of degree $d$, then $f(\vx) = \lambda\cd\ednx$, for 
some $\lambda\in\F$.
\end{fact}

\Cref{fact:80:84} is immediate, 
because otherwise there was a pair of distinct 
multilinear monomials of total-degree $d$,  
$\vx^\alpha\neq \vx^{\alpha'}$, with 
$\valpha\neq\valpha'\in\{0,\dots,n\}^d$ and with 
different respective coefficients $\lambda\neq\lambda'\in\F$. 
Then, there is a permutation $\sigma\in S_n$ of the 
$n$ variables \vx\ such that $f(\permutx)$ contains $\lambda\vx^{\valpha'}$ instead of 
$\lambda'\vx^{\valpha'}$, in contradiction to the symmetry of $f$.
   
Recall that for two polynomials $g(\vy),h(\vx)$ We denote by $g(y_i/h(\vx))$ the \textbf{substitution} in $g(\vy)$ of the variable $y_i \in \vy$ by the polynomial $h(\vx)$. 

\begin{proposition}[The fundamental  theorem of symmetric polynomials]
\label{prop:fund-thm-sym-polynomials}
Every symmetric polynomial $f\in\F[\vx]$ with $n$ variables can be written as a \emph{polynomial} in the elementary symmetric polynomials. 
That is, there is a $g(\vy)\in\F[\vy]$ such that 
\begin{equation}
f(\vx) = g(y_1/\el_{1,n}(\vx),\dots,y_n/\el_{n,n}(\vx)).
\end{equation}
%

Moreover, if $f(\vx)$ is multilinear then $g(\vy)$ is linear, that is, $f(\vx)$
can be written as a \emph{linear} combination of elementary
symmetric polynomials:
\begin{equation}
f(\vx) = \sum_{i=0}^n \lambda_i \el_{i,n}(\vx), \text{~~~~with $\lambda_i\in\F$}.
\end{equation}
\end{proposition}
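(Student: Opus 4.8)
The statement to prove is the fundamental theorem of symmetric polynomials (Proposition~\ref{prop:fund-thm-sym-polynomials}), including the refinement that multilinear symmetric polynomials are \emph{linear} combinations of elementary symmetric polynomials.

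The plan is to prove the first part by induction on the degree together with a secondary induction that orders monomials, which is the classical argument. Concretely, I would fix the lexicographic monomial order on $\F[\vx]$ (with $x_1 > x_2 > \dots > x_n$) and show that every symmetric polynomial $f$ can be reduced to a polynomial in the $\el_{i,n}(\vx)$'s. First I would observe that if $\vx^{\valpha}$ with $\valpha = (\alpha_1,\dots,\alpha_n)$ is the leading monomial of a nonzero symmetric $f$, then symmetry forces $\alpha_1 \ge \alpha_2 \ge \dots \ge \alpha_n$ (otherwise a transposition of two adjacent exponents would produce a lexicographically larger monomial in $f$, contradicting that $\vx^{\valpha}$ is leading). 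Then I would form the ``obvious'' candidate
\[
p := \lambda \cdot \el_{1,n}(\vx)^{\alpha_1 - \alpha_2}\,\el_{2,n}(\vx)^{\alpha_2 - \alpha_3}\cdots \el_{n,n}(\vx)^{\alpha_n},
\]
where $\lambda$ is the coefficient of $\vx^{\valpha}$ in $f$, and check that $p$ is symmetric with the same leading monomial $\lambda \vx^{\valpha}$ (the leading monomial of $\el_{k,n}$ is $x_1 x_2 \cdots x_k$, and leading monomials multiply). Hence $f - p$ is symmetric with a strictly smaller leading monomial (or is zero). Since the lexicographic order is a well-order on monomials of bounded degree, and $f-p$ has degree at most $\deg f$ (as each $\el_{k,n}$ has degree $k$ and the total weight $\sum_k k(\alpha_k - \alpha_{k+1}) = \sum_k \alpha_k = \deg(\vx^{\valpha}) \le \deg f$), iterating this reduction terminates, expressing $f$ as a polynomial in the elementary symmetric polynomials. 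Collecting the substitutions $y_k \mapsto \el_{k,n}(\vx)$ gives the desired $g(\vy)$.

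For the ``moreover'' part, suppose $f$ is multilinear and symmetric. Decompose $f = \sum_{d=0}^n f_d$ into homogeneous slices; each $f_d$ is again symmetric (permutations preserve total degree) and multilinear. By Fact~\ref{fact:80:84}, $f_d = \lambda_d \el_{d,n}(\vx)$ for some $\lambda_d \in \F$, so $f = \sum_{d=0}^n \lambda_d \el_{d,n}(\vx)$, which is the claimed linear combination. This part is essentially immediate given Fact~\ref{fact:80:84}, so it requires no real work.

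The main obstacle, such as it is, lies entirely in the first part: one must be careful that the reduction process actually terminates. The key point is that the lexicographic order, restricted to the finite set of monomials of degree at most $\deg f$ in $n$ variables, is a well-order, so there is no infinite strictly descending chain of leading monomials; combined with the degree bound on $f - p$, this guarantees the induction bottoms out (at the constant term). A secondary point worth stating carefully is the claim that the leading monomial of a product of the $\el_{k,n}$'s is the product of their leading monomials --- this follows because the coefficient of the product of leading monomials in the product polynomial is the product of the (nonzero) leading coefficients, with no cancellation possible since lexicographic order is multiplicative. Once these two bookkeeping facts are in place, the argument is routine.
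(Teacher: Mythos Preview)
Your proposal is correct and essentially matches the paper's approach. For the first part the paper simply cites the textbook argument in Cox--Little--O'Shea, which is precisely the lexicographic leading-monomial reduction you spell out; for the second part the paper peels off the top-degree homogeneous slice and applies Fact~\ref{fact:80:84} by induction on the degree, which is the same idea as your all-at-once decomposition $f = \sum_d f_d$ into homogeneous (symmetric, multilinear) slices.
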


\begin{proof}
For a proof of the first part of \Cref{prop:fund-thm-sym-polynomials} see \cite[Chap.~7, Theorem 3]{CoxLittleOShea15}. For a proof of the second part, proceed by induction on the degree $d$ of $f$ as follows: write $f(\vx)=A + B$ where $B$ is the sum of all monomials of total degree $<d$ in $f$. 
Then, $A$ is the (homogeneous polynomial) which consists of the sum of all (multilinear) monomials of degree precisely $d$
in $f$. Note that $A$ must be symmetric, since under any 
permutation of variables $\sigma$, monomials in $A(\permutx)$
remain of degree $d$ and monomials in $B(\permutx)$ remain of 
degree at most $d-1$. 
Thus, $A$ is a symmetric, homogeneous and multilinear polynomial of degree $d$, which by \Cref{fact:80:84} means that $A=\lambda_d\cd \edn(\vx)$ for some $\lambda_d\in\F$
(and where $n=|\vx|$).
\end{proof}

\subsection{Algebraic Proof Systems}

For a survey about algebraic proof systems and their relations to algebraic complexity see the survey \cite{PT16}.
Grochow and Pitassi~\cite{GP18}  suggested the following algebraic proof system which is essentially a Nullstellensatz proof system \cite{BeameIKPP96} written as an algebraic circuit.

\begin{definition}[Nullstellensatz refutations] \label{def:NS}
Let $f_1(\vx),\ldots,f_m(\vx),p(\vx)$ be a collection of polynomials in $\F[x_1,\ldots,x_n]$ over the field \F. A \demph{Nullstellensatz refutation  of the axioms $\{f_j(\vx)=0\}_{j\in [m]}$}, showing that the set of axioms do not have a solution from the Boolean cube is a sequence of polynomials $\{g_i(\vx)\}_{j\in[m]}$,
such that (the equality in what follows stands for a formal polynomial identity): 
\[
    \sum_{i\in[m]}g_i(\vx)\cd f(\vx) + \sum_{i\in[n]}h_i(\vx)\cd(x_i^2-x_i) = 1\,.
\]
The \textbf{degree} of the refutation is $\max\{\deg(g_i(\vx)\cd f_i(\vx)):\; i\in[m]\}$.\footnote{It can be shown that $\max\{\deg(g_i(\vx)\cd f_i(\vx)):\; i\in[m]\} \ge \max\{\deg(h_i(\vx))+2:\; i\in[n]\}$, hence there is no need to count the degrees of the $h_i$'s in the size.} 
\end{definition}

Notice that the definition above adds the
 equations $\{x_i^2-x_i=0\}_{i=1}^n$, called the  
\demph{Boolean axioms} denoted  $\vx^2-\vx$, to the system $\{f_j(\vx)=0\}_{j=1}^m$. This allows  to refute systems of equations that have no solution over $\bits^n$ (though they may be solvable over $\F$ in general).

A proof in the  Ideal Proof System is given as  a \emph{single} polynomial. We provide below the \emph{Boolean} version of  IPS (which includes the Boolean axioms), namely the version that establishes the unsatisfiability over 0-1 of a set of polynomial equations.  In what follows we follow the notation in \cite{FSTW21}:

\begin{definition}[Ideal Proof System (IPS),
Grochow-Pitassi~\cite{GP18}]\label{def:IPS} Let $f_1(\vx),\ldots,f_m(\vx),p(\vx)$ be a collection of polynomials in $\F[x_1,\ldots,x_n]$ over the field \F. An \demph{IPS proof of $p(\vx)=0$ from axioms $\{f_j(\vx)=0\}_{j\in [m]}$}, showing that $p(\vx)=0$ is semantically  implied from the assumptions $\{f_j(\vx)=0\}_{j\in [m]}$ over $0$-$1$ assignments, is an algebraic circuit $C(\vx,\vy,\vz)\in\F[\vx,y_1,\ldots,y_m,z_1,\ldots,z_n]$ such that (the equalities in what follows stand for  formal polynomial identities\footnote{That is, $C(\vx,\vnz,\vnz)$ computes the zero polynomial and $C(\vx,f_1(\vx),\ldots,f_m(\vx),x_1^2-x_1,\ldots,x_n^2-x_n)$ computes the polynomial $p(\vx)$.}; recall the notation $\widehat C$ for the \emph{polynomial} computed by circuit $C$):
        \begin{enumerate}
                \item $\widehat C(\vx,\vnz,\vnz) = 0$;
                \item $\widehat C(\vx,f_1(\vx),\ldots,f_m(\vx),x_1^2-x_1,\ldots,x_n^2-x_n)=p(\vx)$.
        \end{enumerate}
        The \demph{size of the IPS proof} is the size of the circuit $C$. An \IPS\ proof  $C(\vx,\vy,\vz)$ of  $1=0$ from $\{f_j(\vx)=0\}_{j\in[m]}$ is called an \demph{IPS refutation} of $\{f_j(\vx)=0\}_{j\in[m]}$ (note that in this case  it must hold that  $\{f_j(\vx)=0\}_{j\in [m]}$ have no common solutions in $\bits^n$).
        If $\widehat C$ is of individual degree $\le 1$ in each $y_j$ and $z_i$, then this is a \demph{linear} IPS refutation (called \emph{Hilbert} IPS by Grochow-Pitassi~\cite{GP18}), which we will abbreviate as \lIPS. If $\widehat C$ is of individual degree $\le 1$ only in the $y_j$'s then we say this is an \lbIPS\ refutation (following \cite{FSTW21}). If  $\widehat C(\vx,\vy,\vnz)$ is of individual degree $\le k$ in each $x_j$ and $y_i$ variables, while $\widehat C(\vx,\vnz,\vz)$ is not necessarily bounded in its individual degree, then this is called an 
\demph{individual degree-$k$ \lbIPS\ refutation}. 
        
If $C$ is of depth at most $d$, then this is  called a \emph{depth-$d$ \IPS\ refutation}, and further called a \emph{depth-$d$ \lIPS\ refutation} if $\widehat C$ is linear in $\vy,\vz$, and a depth-$d$ \lbIPS\ refutation if $\widehat C$ is linear in $\vy$, and \emph{depth-$d$ multilinear \lbIPS} refutation if $\widehat C(\vx,\vy,\vnz)$ is linear in $\vx,\vy$. 
\end{definition}

The variables $\vy,\vz$ are  called the \emph{placeholder} \emph{variables} since they are used as placeholders for the axioms. Also,
note that the first equality in the definition of IPS means that the polynomial computed by $C$ is in the ideal
generated by $\overline y,\overline z$, which in turn, following the second equality, means that $C$ witnesses
the fact that $1$ is in the ideal generated
 by $f_1(\vx),\ldots,f_m(\vx),x_1^2-x_1,\ldots,x_n^2-x_n$ (the existence of this witness, for unsatisfiable set
of polynomials, stems from the Nullstellensatz \cite{BeameIKPP96}).


\subsubsection{Oblivious Algebraic Branching Programs}

Algebraic branching programs (ABPs) is a model whose strength lies between that of algebraic circuits and algebraic formulas. 
(We use notation from \cite{FSTW21}.)

\begin{definition}[Nisan~\cite{Nisan91}; ABP]\label{def:roABP}
        An \emph{algebraic branching program (ABP) with unrestricted weights} of \emph{depth} $D$ and \emph{width} $\le r$, on the variables $x_1,\ldots,x_n$, is a directed acyclic graph such that:
        \begin{itemize}
                \item The vertices are partitioned into $D+1$ layers $V_0,\ldots,V_D$, so that $V_0=\{s\}$ ($s$ is the source node), and $V_D=\{t\}$ ($t$ is the sink node). Further, each edge goes from $V_{i-1}$ to $V_{i}$ for some $0< i\le D$.
                \item $\max|V_i|\le r$.
                \item Each edge $e$ is weighted with a polynomial $f_e\in\F[\vx]$.
        \end{itemize}
The \demph{(individual) degree} $d$ of the ABP is the maximum (individual) degree of the edge polynomials $f_e$. The \demph{size} of the ABP is the product $n\cdot r\cdot d\cdot D$.
Each $s$-$t$ path is said to compute the polynomial which is the product of the labels of its edges, and the algebraic branching program itself computes the sum over all $s$-$t$ paths of such polynomials.

The following are restricted ABP variants:
        \begin{itemize}
                \item An algebraic branching program is said to be \demph{oblivious} if for every layer $\ell$, all the edge labels in that layer are \emph{univariate} polynomials in a \emph{single} variable $x_{i_\ell}$.
                \item An oblivious branching program is said to be a \demph{read-once} oblivious ABP (\textup{\textbf{roABP}}) if each $x_i$ appears in the edge label of exactly one layer, so that $D=n$. That is, each $x_i$ appears in the edge labels in  exactly one layer. The layers thus define a \demph{variable order}, which will be assumed to be  $x_1<\cdots<x_n$ unless otherwise specified.


\end{itemize}
\end{definition}

The class of roABPs is essentially equivalent to 
non-commutative ABPs (\cite{ForbesShpilka13b}), a model at least
as strong as non-commutative formulas.  The study of non-commutative ABPs was initiated by  Nisan~\cite{Nisan91}, who proved lower bounds for non-commutative ABPs (and thus also for roABPs, in any order).  In terms of proof complexity, Tzameret~\cite{Tza11-I&C} studied a proof system with lines given by roABPs, and  Li, Tzameret 
and Wang~\cite{LTW18} showed that IPS over non-commutative formulas is quasipolynomially  equivalent in power to the Frege proof system. Since non-commutative ABPs and roABPs are essentially equivalent,
this last result motivates proving lower bounds for roABP-IPS as a way of attacking lower bounds for the Frege proof system.


\subsection{Coefficient Dimension and roABPs}\label{sec:coeff-dim}
In this section, we define the \emph{coefficient dimension} measure and recall basic properties. Full proofs of these claims can be found for example in the thesis of Forbes~\cite{Forbes14}.
Again, we use notation from \cite{FSTW21}.

We first define the \emph{coefficient matrix} of a polynomial.
This matrix is formed from a polynomial $f\in\F[\vx,\vy]$ by arranging its coefficients into a matrix.  That is, the coefficient matrix has rows indexed by monomials $\vx^\va$ in $\vx$, columns indexed by monomials $\vy^\vb$ in $\vy$, and the $(\vx^{\va},\vy^{\vb})$-entry is the coefficient of $\vx^\va\vy^\vb$ in the polynomial $f$.  We now define this matrix, recalling that $\coeff{\vx^\va\vy^\vb}(f)$ is the coefficient of $\vx^\va\vy^\vb$ in $f$ (see \Cref{sec:notation}).

\begin{definition}[Coefficient matrix]
\label{def:coeff-matrix}
        Consider $f\in\F[\vx,\vy]$.  Define the \demph{coefficient
        matrix of $f$} as the scalar matrix
        \[
                (C_f)_{\va,\vb}
                \eqdef
                \coeff{\vx^\va\vy^\vb}(f)
                \;,
        \]
        where coefficients are taken in $\F[\vx,\vy]$, 
        for $\sum_{j=1}^{|\va|}{|a_j|}, \sum_{j=1}^{|\vb|}{|b_j|}\le
        \deg f$.
\end{definition}

We now give the related definition of \emph{coefficient dimension}, which looks at the dimension of the row- and column-spaces of the coefficient matrix. Recall that $\coeff{\vx|\vy^\vb}(f)$ extracts the coefficient of $\vy^\vb$ in $f$, where $f$ is treated as a polynomial in $\F[\vx][\vy]$.

\begin{definition}[Coefficient space]
\label{defn:coefficient-space}
        Let $\coeffs{\vx|\vy}:\F[\vx,\vy]\to\subsets{\F[\vx]}$ be         the \demph{space of $\F[\vx][\vy]$ coefficients}, defined
        by
        \[
                \coeffs{\vx|\vy}(f)
                \eqdef
                \left\{
                        \coeff{\vx|\vy^\vb}(f)
                \right\}_{\vb\in\N^n}
                \;,
        \]
        where coefficients of $f$ are taken in $\F[\vx][\vy]$.
        Similarly, define $\coeffs{\vy|\vx}:\F[\vx,\vy]\to\subsets{\F[\vy]}$         by taking coefficients in $\F[\vy][\vx]$.
\end{definition}

The following basic lemma shows that the rank of the coefficient matrix equals the coefficient dimension, which follows from simple linear algebra.

\begin{lemma}[Coefficient matrix rank equals dimension 
of polynomial space; Nisan~\cite{Nisan91}]
\label{res:y-dim_eq-x-dim}
        Consider $f\in\F[\vx,\vy]$.  Then, the rank of the coefficient matrix $C_f$ (\Cref{def:coeff-matrix}) obeys
        \[
                \rank C_f
                =
                \dim\coeffs{\vx|\vy}(f)=\dim\coeffs{\vy|\vx}(f)
                \;.
                \qedhere
        \]
\end{lemma}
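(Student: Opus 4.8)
The plan is to exhibit explicit linear isomorphisms between the column space (respectively the row space) of the scalar matrix $C_f$ and the polynomial space $\coeffs{\vx|\vy}(f)$ (respectively $\coeffs{\vy|\vx}(f)$), and then to invoke the standard fact that for a matrix over a field the row rank equals the column rank. Since everything is about finite-dimensional vector spaces, no serious machinery is needed.

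First I would fix the finite index sets. Because every monomial occurring in $f$ has both $\vx$-degree and $\vy$-degree at most $\deg f$, the matrix $C_f$ has rows indexed by the finite set $I$ of monomials $\vx^\va$ with $\sum_j|a_j|\le\deg f$ and columns indexed by the finite set $J$ of monomials $\vy^\vb$ with $\sum_j|b_j|\le\deg f$. Write $c_\vb:=(\coeff{\vx^\va\vy^\vb}(f))_{\va\in I}\in\F^{I}$ for the $\vb$-th column of $C_f$. Consider the linear map $\Phi\colon\F^{I}\to\F[\vx]$ sending the standard basis vector $e_\va$ to the monomial $\vx^\va$. It is injective since distinct monomials are linearly independent in $\F[\vx]$, and by construction $\Phi(c_\vb)=\sum_{\va\in I}\coeff{\vx^\va\vy^\vb}(f)\,\vx^\va=\coeff{\vx|\vy^\vb}(f)$, viewing $f\in\F[\vx][\vy]$. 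Hence $\Phi$ restricts to a linear isomorphism from the column space $\operatorname{span}\{c_\vb:\vb\in J\}$ of $C_f$ onto $\operatorname{span}\{\coeff{\vx|\vy^\vb}(f):\vb\in J\}$, and the latter span equals the span of $\coeffs{\vx|\vy}(f)$ since any monomial $\vy^\vb$ with $\sum_j|b_j|>\deg f$ contributes the zero coefficient. Therefore $\rank C_f=\dim(\text{column space of }C_f)=\dim\coeffs{\vx|\vy}(f)$.

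Next I would run the symmetric argument on the rows of $C_f$: the $\va$-th row is $(\coeff{\vx^\va\vy^\vb}(f))_{\vb\in J}$, and under the analogous injective map $\F^{J}\to\F[\vy]$, $e_\vb\mapsto\vy^\vb$, it is sent to $\coeff{\vy|\vx^\va}(f)$ with $f$ viewed in $\F[\vy][\vx]$. The identical reasoning gives $\rank C_f=\dim(\text{row space of }C_f)=\dim\coeffs{\vy|\vx}(f)$. Combining with the previous paragraph and the equality of row rank and column rank of $C_f$ yields $\rank C_f=\dim\coeffs{\vx|\vy}(f)=\dim\coeffs{\vy|\vx}(f)$, which is the claim.

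There is essentially no hard step here; the only two points needing a line of care are (i) verifying that truncating the index sets $I,J$ at degree $\deg f$ discards only zero coefficients, so that the spans defining $\coeffs{\vx|\vy}(f)$ and $\coeffs{\vy|\vx}(f)$ are genuinely finite-dimensional and are matched exactly by the columns and rows of $C_f$, and (ii) recording that the monomial-indexing maps are injective (distinct monomials are linearly independent), so that they transport dimensions faithfully. Everything else is the textbook identity $\mathrm{rowrank}(C_f)=\mathrm{colrank}(C_f)$.
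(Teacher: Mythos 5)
Your proof is correct and is exactly the ``simple linear algebra'' argument that the paper alludes to when it cites Nisan without spelling out the details: the columns (resp.\ rows) of $C_f$ are the coefficient vectors of the polynomials $\coeff{\vx|\vy^\vb}(f)$ (resp.\ $\coeff{\vy|\vx^\va}(f)$) in the monomial basis, the monomial-indexing map is an injection so it preserves dimension, and row rank equals column rank does the rest. The two side-points you flag (truncating the index sets at degree $\deg f$ only drops zero coefficients, and injectivity of the indexing map) are precisely the ones worth making explicit.
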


Therefore, the ordering of the partition ($(\vx,\vy)$ versus $(\vy,\vx)$) does not influence  the resulting dimension. 

%

We now state a convenient normal form for roABPs (see for example Forbes~\cite[Corollary 4.4.2]{Forbes14}).

\begin{lemma}[Characterising roABP as a matrix product]
\label{res:roABP-normal-form}
A polynomial $f\in\F[x_1,\ldots,x_n]$ is computed by width-$r$ roABP
iff there exist $n$ matrices 
$A_i(x_{i})\in\F[x_{i}]^{r\times r}$, for $i\in[n]$, 
each of (individual) degree $\le \deg f$ such that
 $f=(\prod_{i=1}^n A_i(x_{i}))_{1,1}$. 
\end{lemma}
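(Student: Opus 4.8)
The plan is to prove the two directions separately by translating between the combinatorial definition of an roABP and the transfer-matrix (transition-matrix) description, with essentially all of the work being bookkeeping except for one small degree-truncation argument.

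For the ``only if'' direction, suppose $f\in\F[x_1,\dots,x_n]$ is computed by a width-$r$ roABP with layers $V_0=\{s\},V_1,\dots,V_{n-1},V_n=\{t\}$, where, by the assumed variable order $x_1<\dots<x_n$, every edge from $V_{i-1}$ to $V_i$ is labelled by a univariate polynomial in $x_i$. I would first normalise the edge labels by truncating each edge polynomial to (individual) degree $\le\deg f$ in its variable. This does not change the computed polynomial: the coefficient of a monomial $x_1^{a_1}\cdots x_n^{a_n}$ in the polynomial computed by the roABP is $\sum_p\prod_{i=1}^n c_{p,i}(a_i)$, where $p$ ranges over $s$-$t$ paths and $c_{p,i}(a_i)$ is the coefficient of $x_i^{a_i}$ in the $i$-th edge polynomial of $p$ (using here that each path uses exactly one edge in each layer and that this edge is univariate in $x_i$); if every $a_i\le\deg f$ the truncation leaves all these coefficients unchanged, while if some $a_i>\deg f$ the monomial has total degree $>\deg f$, so its coefficient in $f$ is already $0$ and the truncated program also outputs coefficient $0$ for it. Hence truncation leaves $f$ fixed. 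Now pad each layer with isolated vertices so that $|V_i|=r$ for all $i$ (possible since the width is $\le r$), index the vertices of each layer by $[r]$ with $s$ and $t$ getting index $1$, and define $A_i(x_i)\in\F[x_i]^{r\times r}$ by letting $(A_i(x_i))_{u,v}$ be the label of the edge from $u\in V_{i-1}$ to $v\in V_i$, or $0$ if there is no such edge. Each $A_i$ then has individual degree $\le\deg f$, and expanding the product $\prod_{i=1}^n A_i(x_i)$ shows that its $(1,1)$ entry equals $\sum_p\prod_i(\text{edge labels along }p)=f$.

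For the ``if'' direction, given matrices $A_1(x_1),\dots,A_n(x_n)\in\F[x_i]^{r\times r}$ of individual degree $\le\deg f$ with $f=(\prod_{i=1}^n A_i(x_i))_{1,1}$, I would build the roABP directly: take $V_0=\{s\}$, $V_n=\{t\}$, and $V_i=\{v^{(i)}_1,\dots,v^{(i)}_r\}$ for $0<i<n$; put an edge from $s$ to $v^{(1)}_v$ labelled $(A_1(x_1))_{1,v}$, an edge from $v^{(i-1)}_u$ to $v^{(i)}_v$ labelled $(A_i(x_i))_{u,v}$ for $2\le i\le n-1$, and an edge from $v^{(n-1)}_u$ to $t$ labelled $(A_n(x_n))_{u,1}$. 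This is an oblivious ABP of depth $n$ and width $\max(1,r)=r$ in which layer $i$ is labelled by univariate polynomials in $x_i$ and each variable appears in exactly one layer, i.e.\ an roABP in the order $x_1<\dots<x_n$; summing products of edge labels over all $s$-$t$ paths gives $(\prod_i A_i(x_i))_{1,1}=f$, and the hypothesis on the individual degrees of the $A_i$ gives the required degree bound on the edge polynomials.

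The routine parts are the index bookkeeping and the expansion of a product of layered transition matrices into a sum over source-to-sink paths, which is the standard transfer-matrix identity. The only step that needs a genuine argument rather than a definition chase is the degree-truncation in the forward direction: one must check that discarding the parts of the edge labels of individual degree above $\deg f$ is harmless, the point being that such parts can only contribute to monomials of total degree above $\deg f$, which have zero coefficient in $f$. This is also precisely where the (at first sight surprisingly strong) bound ``individual degree $\le\deg f$'' comes from, rather than merely the individual degree of the edge labels of the given roABP.
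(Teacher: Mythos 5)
The paper itself does not prove this lemma: it is stated with a pointer to Corollary 4.4.2 of Forbes's thesis, so there is no in-paper argument to compare against. Judged on its own, your proof is correct and is the standard transfer-matrix/path-sum argument: pad layers to size $r$, read the edge labels into matrices $A_i(x_i)$, and expand $\bigl(\prod_{i=1}^n A_i(x_i)\bigr)_{1,1}$ as a sum over $s$-$t$ paths of products of edge labels (and conversely build the layered graph from the matrices). The one non-routine point---that one may take the individual degrees of the $A_i$ to be at most $\deg f$, rather than merely the individual degree of the originally given roABP---is exactly where you insert the truncation step, and your justification is sound: truncating an edge label $g_{p,i}(x_i)$ to degree at most $\deg f$ only changes coefficients of monomials $x_1^{a_1}\cdots x_n^{a_n}$ having some $a_i>\deg f$, and any such monomial has total degree exceeding $\deg f$, hence coefficient zero in $f$ both in the original and in the truncated program. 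I would only add the pedantic caveat that for $n=1$ the middle layers are empty so the constructed program has width $1$, not $\max(1,r)$; but since width-$r$ means width at most $r$, this does not affect the statement.
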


Using this normal form we can characterise  roABP-width as follows.

\begin{lemma}[roABP-width equals dimension of coefficient space]
\label{res:roABP-width_eq_dim-coeffs}
        Let $f\in\F[x_1,\ldots,x_n]$ be a polynomial. If $f$ is computed
        by a width-$r$ roABP then $r \ge \max_i\dim\coeffs{\vx_{\le
        i}|\vx_{>i}}(f)$.
        Conversely, $f$ is computable by a width-$\left(\max_i\dim\coeffs{\vx_{\le
        i}|\vx_{>i}}(f)\right)$ roABP.
\end{lemma}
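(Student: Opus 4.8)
The plan is to prove the two inequalities separately; both are pure linear algebra once we invoke the matrix‑product normal form for roABPs (\Cref{res:roABP-normal-form}) and the rank‑equals‑dimension correspondence for coefficient matrices (\Cref{res:y-dim_eq-x-dim}). Throughout I would assume $f\neq 0$ (the case $f=0$ being trivial) and fix the variable order $x_1<\cdots<x_n$, so that the $\ell$‑th layer of any roABP reads $x_\ell$.

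For the lower bound, suppose $f$ is computed by a width‑$r$ roABP. By \Cref{res:roABP-normal-form} write $f=\bigl(\prod_{j=1}^n A_j(x_j)\bigr)_{1,1}$ with $A_j(x_j)\in\F[x_j]^{r\times r}$. Fix $1\le i\le n-1$ (for $i\in\{0,n\}$ the claimed dimension is at most $1\le r$) and split the product, setting $P(\vx_{\le i}):=\prod_{j=1}^i A_j(x_j)$ and $Q(\vx_{>i}):=\prod_{j=i+1}^n A_j(x_j)$, so that
\[
f=(PQ)_{1,1}=\sum_{k=1}^{r}P_{1,k}(\vx_{\le i})\cdot Q_{k,1}(\vx_{>i}).
\]
Under the variable partition $(\vx_{\le i},\vx_{>i})$ the coefficient matrix of each summand $P_{1,k}\cdot Q_{k,1}$ is the outer product of the coefficient vector of $P_{1,k}$ with that of $Q_{k,1}$, hence has rank at most $1$; by sub‑additivity of matrix rank, the coefficient matrix of $f$ has rank at most $r$, and \Cref{res:y-dim_eq-x-dim} then gives $\dim\coeffs{\vx_{\le i}|\vx_{>i}}(f)\le r$. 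Maximising over $i$ proves the first claim.

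For the converse I would build the roABP directly from the coefficient spaces. Put $V_\ell:=\coeffs{\vx_{\le\ell}|\vx_{>\ell}}(f)\subseteq\F[\vx_{\le\ell}]$ and $r_\ell:=\dim V_\ell$ for $0\le\ell\le n$; note $V_0=\F$ and $V_n=\mathrm{span}\{f\}$, so $r_0=r_n=1$. The structural step is the inclusion $V_\ell\subseteq V_{\ell-1}\otimes\F[x_\ell]$: if $v=\coeff{\vx_{>\ell}^{\vb}}(f)\in V_\ell$ then, for each $t$, $\coeff{x_\ell^{t}}(v)=\coeff{x_\ell^{t}\vx_{>\ell}^{\vb}}(f)$ is the coefficient in $f$ of a monomial over $\vx_{>\ell-1}$ and hence lies in $V_{\ell-1}$, so $v$ is an $\F[x_\ell]$‑combination of elements of $V_{\ell-1}$ with univariate coefficients of degree at most $\deg f$. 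Now choose a basis $B_\ell=\{b^{(\ell)}_1,\dots,b^{(\ell)}_{r_\ell}\}$ of each $V_\ell$ with $B_0=\{1\}$ and $B_n=\{f\}$, and using the inclusion write $b^{(\ell)}_k=\sum_{j=1}^{r_{\ell-1}}\alpha^{(\ell)}_{j,k}(x_\ell)\cdot b^{(\ell-1)}_j$ with $\alpha^{(\ell)}_{j,k}\in\F[x_\ell]$ of degree at most $\deg f$. Form the oblivious read‑once ABP with layers $0,\dots,n$ (layer $\ell$ reading $x_\ell$), with $r_\ell$ nodes in layer $\ell$, and with the edge from the $j$‑th node of layer $\ell-1$ to the $k$‑th node of layer $\ell$ labelled $\alpha^{(\ell)}_{j,k}(x_\ell)$; an induction on $\ell$ shows the $k$‑th node of layer $\ell$ computes $b^{(\ell)}_k$, so the sink computes $b^{(n)}_1=f$. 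This roABP has width $\max_\ell r_\ell=\max_\ell\dim\coeffs{\vx_{\le\ell}|\vx_{>\ell}}(f)$, as required.

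The only real obstacle is the telescoping inclusion $\coeffs{\vx_{\le\ell}|\vx_{>\ell}}(f)\subseteq\coeffs{\vx_{\le\ell-1}|\vx_{>\ell-1}}(f)\otimes\F[x_\ell]$ together with the degree control on the univariate edge labels: this is precisely what makes the bases of consecutive layers compatible and yields a genuine roABP in the given order. Everything else — the inductive verification that each node computes its intended basis polynomial, and the bookkeeping for the degenerate layers $\ell\in\{0,n\}$ and for $f=0$ — is routine.
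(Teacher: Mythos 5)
Your proof is correct and follows the standard argument (the one the paper defers to Forbes' thesis for, via the matrix-product normal form it sets up immediately beforehand): the lower bound is the rank sub-additivity of the layer split $f=\sum_k P_{1,k}Q_{k,1}$, and the upper bound constructs the roABP from a compatible chain of bases of the coefficient spaces $V_\ell$ using the telescoping inclusion $V_\ell\subseteq V_{\ell-1}\cdot\F[x_\ell]$. Both directions, including the degree bound on the edge labels and the $f=0$ / boundary-layer cases, are handled correctly.
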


We use the following  closure properties of roABPs, taken from \cite{FSTW21}.

\begin{fact}\label{fact:roABP:closure}
        If $f,g\in\F[\vx]$ are computable by width-$r$ and width-$s$ \emph{roABPs} respectively, then
        \begin{itemize}
                \item $f+g$ is computable by a width-$(r+s)$ roABP.
                \item $f\cdot g$ is computable by a width-$(rs)$ roABP.
        \end{itemize}

        \noindent Further, roABPs are also closed under the following operations.
        \begin{itemize}
                \item If $f(\vx,\vy)\in\F[\vx,\vy]$ is computable by a width-$r$ roABP in some variable order then the partial substitution $f(\vx,\vaa)$, for $\vaa\in\F^{|\vy|}$, is computable by a width-$r$ roABP in the induced order on $\vx$, where the degree of this roABP is bounded by the degree of the roABP for $f$.
                \item If $f(z_1,\ldots,z_n)$ is computable by a width-$r$ roABP in variable order $z_1<\cdots<z_n$, then $f(x_1y_1,\ldots,x_ny_n)$ is computable by a $\poly(r,\ideg f)$-width roABP in variable order $x_1<y_1<\cdots<x_n<y_n$.
        \end{itemize}
\end{fact}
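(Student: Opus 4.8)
The plan is to reduce every item of the Fact to a manipulation of the matrix-product normal form for roABPs from \Cref{res:roABP-normal-form}. Fix such normal forms: write $f=\bigl(\prod_{i=1}^{n}A_i(x_i)\bigr)_{1,1}$ with $A_i(x_i)\in\F[x_i]^{r\times r}$ univariate of bounded degree in $x_i$, and $g=\bigl(\prod_{i=1}^{n}B_i(x_i)\bigr)_{1,1}$ with $B_i(x_i)\in\F[x_i]^{s\times s}$, both in the same variable order. Each closure statement then follows from one elementary operation on these products, after a little padding to keep the transition matrices square and the read-once structure intact.

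For $f+g$: in the ``middle'' layers $2\le i\le n-1$ take the $(r+s)\times(r+s)$ block-diagonal matrix $C_i(x_i)$ with diagonal blocks $A_i(x_i)$ and $B_i(x_i)$, and at layers $1$ and $n$ glue the two programs by identifying the two source nodes into a single source and the two sink nodes into a single sink. A direct check gives $\bigl(\prod_i C_i(x_i)\bigr)_{1,1}=f+g$, so $f+g$ has a width-$(r+s)$ roABP in the same order. For $f\cdot g$: apply the mixed-product property of the Kronecker product, $(M\otimes N)(M'\otimes N')=(MM')\otimes(NN')$ together with $(M\otimes N)_{(1,1),(1,1)}=M_{1,1}N_{1,1}$, to obtain
\[
f\cdot g=\Bigl(\prod_{i=1}^{n}A_i(x_i)\Bigr)_{1,1}\cdot\Bigl(\prod_{i=1}^{n}B_i(x_i)\Bigr)_{1,1}=\Bigl(\prod_{i=1}^{n}\bigl(A_i(x_i)\otimes B_i(x_i)\bigr)\Bigr)_{1,1},
\]
and each $A_i(x_i)\otimes B_i(x_i)$ is an $rs\times rs$ matrix whose entries are univariate polynomials in $x_i$, so this is a width-$rs$ roABP in the same variable order.

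For the partial substitution, extend the normal form to two variable blocks, $f(\vx,\vy)=\bigl(\prod_{\ell}M_\ell(v_\ell)\bigr)_{1,1}$ with the product running over all variables $v_\ell\in\vx\cup\vy$ in the roABP's order. Substituting $\vy\mapsto\vaa$ replaces each $M_\ell(v_\ell)$ with $v_\ell\in\vy$ by a constant matrix in $\F^{r\times r}$; contracting each maximal run of these constant matrices into one of the two neighbouring $\vx$-layer matrices leaves a width-$r$ roABP in the induced order on $\vx$, and since left- or right-multiplying a matrix of univariate polynomials in $x_i$ by a constant matrix does not raise its degree in $x_i$, the degree of the new roABP is bounded by that of the old one. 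For the gadget $z_i\mapsto x_iy_i$, write $A_i(z_i)=\sum_{e=0}^{d}A_i^{(e)}z_i^{e}$ with $d:=\ideg f$ and $A_i^{(e)}\in\F^{r\times r}$; then $A_i(x_iy_i)=\sum_{e=0}^{d}A_i^{(e)}x_i^{e}y_i^{e}$ factors as $X_i(x_i)\,Y_i(y_i)$, where $X_i(x_i)$ is the $r\times r(d+1)$ block row $[\,A_i^{(0)}\mid A_i^{(1)}x_i\mid\cdots\mid A_i^{(d)}x_i^{d}\,]$ and $Y_i(y_i)$ is the $r(d+1)\times r$ block column whose stacked blocks are $I,\,Iy_i,\,\ldots,\,Iy_i^{d}$. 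Padding both to square $r(d+1)\times r(d+1)$ matrices with zeros and preserving the $(1,1)$ convention, $f(x_1y_1,\ldots,x_ny_n)=\bigl(\prod_{i=1}^{n}X_i(x_i)Y_i(y_i)\bigr)_{1,1}$ is an roABP of width $r(d+1)=\poly(r,\ideg f)$ and individual degree at most $d$ in the order $x_1<y_1<\cdots<x_n<y_n$.

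The routine but slightly fiddly point --- what I would regard as the main obstacle --- is forcing each of these constructions to fit \Cref{def:roABP} literally: the transition matrices must be square of the single claimed width, each variable must occur in exactly one layer, and in the partial-substitution case one must verify that absorbing the constant $\vy$-layers into adjacent $\vx$-layers genuinely does not inflate the per-variable degree (it does not, since multiplication by a scalar matrix preserves the entrywise $x_i$-degree). None of this requires a new idea once \Cref{res:roABP-normal-form} is in hand: each of the four items is one of block-diagonal sum, Kronecker product, constant-matrix absorption, or the single-variable-pair factorization $A_i(x_iy_i)=X_i(x_i)Y_i(y_i)$.
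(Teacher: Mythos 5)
The paper does not prove this Fact; it is stated as ``taken from \cite{FSTW21}'' and used as a black box, so there is no in-paper proof to compare against. Your argument via the matrix-product normal form of \Cref{res:roABP-normal-form} --- block-diagonal direct sum with source/sink gluing for $f+g$, Kronecker product for $f\cdot g$, absorption of constant $\vy$-layers into adjacent $\vx$-layers for partial substitution, and the factorization $A_i(x_iy_i)=X_i(x_i)Y_i(y_i)$ for the $z_i\mapsto x_iy_i$ gadget --- is correct and is the standard proof, matching \cite{FSTW21} in substance. The one phrase worth tightening is that a purely block-diagonal $\prod_i C_i$ has $(1,1)$-entry $f$, not $f+g$; the gluing you describe amounts to replacing $C_1$ by the matrix whose first row is the concatenation of the first rows of $A_1$ and $B_1$ (and similarly $C_n$ with a concatenated first column), which is exactly what makes the $(1,1)$-entry of the product equal $f+g$.
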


\subsection{Evaluation Dimension}\label{sec:eval-dim}

While coefficient dimension measures the size of a polynomial $f(\vx,\vy)$ by taking all coefficients in $\vy$, \emph{evaluation dimension} is a somewhat relaxed complexity measure due to Saptharishi~\cite{Saptharishi12} that measures the size by taking all possible evaluations in $\vy$ over the field.  This measure will be important for our applications as one can restrict such evaluations to the Boolean cube and obtain circuit lower bounds against a \emph{family} of polynomials that  compute $f(\vx,\vy)$ as a \emph{function} on the Boolean cube. 

\begin{definition}[Evaluation dimension; Saptharishi~\cite{Saptharishi12}]\label{defn:evaluation-space}
        Let $S\subseteq \F$. Let $\evals{\vx|\vy,S}:\F[\vx,\vy]\to\subsets{\F[\vx]}$ be the \demph{space of $\F[\vx][\vy]$ evaluations over $S$}, defined by
        \[
                \evals{\vx|\vy,S}(f(\vx,\vy))
                \eqdef
                \left\{
                        f(\vx,\vbb)
                \right\}_{\vbb\in S^{|\vy|}}
                \;.
        \]
        Define $\evals{\vx|\vy}:\F[\vx,\vy]\to\subsets{\F[\vx]}$ to be $\evals{\vx|\vy,S}$ when $S=\F$.
Similarly, define $\evals{\vy|\vx,S}:\F[\vx,\vy]\to\subsets{\F[\vy]}$ by replacing $\vx$ with all possible evaluations $\vaa\in S^{|\vx|}$, and likewise define $\evals{\vy|\vx}:\F[\vx,\vy]\to\subsets{\F[\vy]}$.
\end{definition}

The equivalence between evaluation dimension and coefficient dimension was shown by Forbes-Shpilka~\cite{ForbesShpilka13b} by appealing to interpolation.

\begin{lemma}[Evaluation dimension lower bounds dimension of coefficient space; Forbes-Shpilka~\cite{ForbesShpilka13b}]\label{res:evals_eq-coeffs}
        Let $f\in\F[\vx,\vy]$.  For any $S\subseteq\F$ we have that $\evals{\vx|\vy,S}(f)\subseteq\spn \coeffs{\vx|\vy}(f)$ so that $\dim \evals{\vx|\vy,S}(f)\le \dim \coeffs{\vx|\vy}(f)$. In particular, if $|S|>\ideg f$ then $\dim\evals{\vx|\vy,S}(f)=\dim\coeffs{\vx|\vy}(f)$.
\end{lemma}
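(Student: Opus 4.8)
The plan is to prove both assertions by writing $f$ in terms of its $\vy$-coefficients and then invoking (multivariate) polynomial interpolation. Write
$f(\vx,\vy) = \sum_{\vb} g_\vb(\vx)\,\vy^\vb$, where the sum ranges over exponent vectors $\vb\in\N^{|\vy|}$ with each $b_j\le\ideg f$ and $g_\vb := \coeff{\vx|\vy^\vb}(f)\in\F[\vx]$, so that by definition $\coeffs{\vx|\vy}(f) = \{g_\vb\}_\vb$.

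For the first containment, fix any $\vbb = (\beta_1,\dots,\beta_m)\in S^{|\vy|}$ (with $m = |\vy|$). Substituting $\vy = \vbb$ gives $f(\vx,\vbb) = \sum_{\vb}\vbb^\vb\,g_\vb(\vx)$, which is an $\F$-linear combination of elements of $\coeffs{\vx|\vy}(f)$. Hence $f(\vx,\vbb)\in\spn\coeffs{\vx|\vy}(f)$ for every $\vbb$, i.e.\ $\evals{\vx|\vy,S}(f)\subseteq\spn\coeffs{\vx|\vy}(f)$, and taking dimensions yields $\dim\evals{\vx|\vy,S}(f)\le\dim\coeffs{\vx|\vy}(f)$, which already gives the general statement.

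For the reverse direction under the hypothesis $|S|>\ideg f$, it suffices to show each $g_\vb$ lies in $\spn\evals{\vx|\vy,S}(f)$. Choose a subset $S'\subseteq S$ with $|S'| = \ideg f + 1$, and consider the square matrix $V$ with rows indexed by $\vbb\in(S')^m$, columns indexed by $\vb\in\{0,\dots,\ideg f\}^m$, and entries $V_{\vbb,\vb} = \vbb^\vb = \prod_{j=1}^m\beta_j^{b_j}$. This $V$ is the $m$-fold Kronecker product of the $|S'|\times|S'|$ Vandermonde matrix on the distinct nodes $S'$; since each Vandermonde factor is invertible over $\F$, so is $V$. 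The identities $f(\vx,\vbb) = \sum_\vb V_{\vbb,\vb}\,g_\vb(\vx)$, written in matrix form over the free $\F[\vx]$-module indexed by $(S')^m$ and inverted using $V^{-1}$, express each $g_\vb(\vx)$ as an $\F$-linear combination of the evaluations $\{f(\vx,\vbb)\}_{\vbb\in(S')^m}\subseteq\evals{\vx|\vy,S}(f)$. Hence $\coeffs{\vx|\vy}(f)\subseteq\spn\evals{\vx|\vy,S}(f)$, the two spans coincide, and $\dim\evals{\vx|\vy,S}(f) = \dim\coeffs{\vx|\vy}(f)$.

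There is no genuine obstacle here: the argument is elementary linear algebra. The only point requiring a little care is the multivariate interpolation step, namely checking that the coefficient-extraction map is inverted by the nonsingular tensored Vandermonde matrix $V$ and that restricting the evaluation points to a size-$(\ideg f+1)$ subset $S'$ already suffices (one need not use all of $S$); it is also worth noting explicitly that this works verbatim with coefficients in the polynomial ring $\F[\vx]$ rather than a field, since only invertibility of the scalar matrix $V$ over $\F$ is used.
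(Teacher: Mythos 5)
Your proof is correct, and it follows the same route the paper gestures at (the paper does not include a proof, but attributes the result to Forbes--Shpilka ``by appealing to interpolation,'' which is exactly your tensored-Vandermonde argument). Both directions are handled cleanly, and you are right to flag the two delicate points: that a size-$(\ideg f + 1)$ subset of $S$ per coordinate suffices, and that the interpolation matrix has entries in $\F$ so its invertibility over $\F$ transfers harmlessly to the free $\F[\vx]$-module of evaluations.
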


Note that evaluation dimension and coefficient dimension are equivalent when the field is large enough (and $|S|$ is bigger than the individual degree of the polynomial). 
However, when restricting our attention to inputs from the Boolean cube this equivalence no longer holds, but evaluation dimension will be still useful as it \emph{lower bounds} coefficient dimension.

\subsection{Multilinear Polynomials and Multilinear Formulas}

We now turn to multilinear polynomials and classes that respect multilinearity such as multilinear formulas. We first state some well-known facts about multilinear polynomials (taken from \cite{FSTW21}).

\begin{fact}\label{fact:multilinearization}
        For any two multilinear polynomials $f,g\in\F[x_1,\ldots,x_n]$, $f=g$ as polynomials iff they agree on the Boolean cube $\bits^n$.  That is, $f=g$ iff  $f|_{\bits^n}=g|_{\bits^n}$.

        Further, there is a \demph{multilinearization} map $\ml:\F[\vx]\to\F[\vx]$ such that for any $f,g\in\F[\vx]$,
        \begin{enumerate}
                \item $\ml(f)$ is multilinear.
                \item $f$ and $\ml(f)$ agree on the Boolean cube, that is, $f|_{\bits^n}=\ml(f)|_{\bits^n}$.
                \item $\deg \ml(f)\le \deg f$.
                \item $\ml(fg)=\ml(\ml(f)\ml(g))$, and if $f$ and $g$ are defined on disjoint sets of variables then $\ml(fg)= \ml(f)\ml(g)$.
                \item $\ml$ is linear, so that for any $\alpha,\beta\in \F$, $\ml(\alpha f+\beta g)=\alpha \ml(f)+\beta\ml(g)$.
                \item $\ml(x_1^{a_1}\cdots x_n^{a_n})=\prod_i x_i^{\min\{a_i,1\}}$.
                \item If $f$ is the sum of at most $s$ monomials ($s$-sparse) then so is $\ml(f)$.
        \end{enumerate}
        Also, if $\hat{f}$ is a function $\bits^n\to\F$ that only depends on the coordinates in $S\subseteq[n]$, then the unique multilinear polynomial $f$ agreeing with $\hat{f}$ on $\bits^n$ is a polynomial only in $\{x_i\}_{i\in S}$.

\end{fact}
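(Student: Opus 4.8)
The plan is to derive the whole fact from one concrete description of the multilinearization map together with the linear independence of multilinear monomials over the cube. First I would \emph{define} $\ml$ on monomials by $\ml(x_1^{a_1}\cdots x_n^{a_n}):=\prod_{i=1}^n x_i^{\min\{a_i,1\}}$ and extend it $\F$-linearly to $\F[\vx]$. With this definition Property~6 holds by fiat, and Properties~1, 3, 5, 7 are then immediate: $\ml$ sends each monomial to a multilinear monomial of no larger total degree, it is linear by construction, and since it maps monomials to monomials it can only merge or preserve (never enlarge) the set of monomials appearing, so it preserves $s$-sparsity. Property~2 reduces to the elementary identity $b^{k}=b$ for every $b\in\bits$ and $k\ge 1$: applying this coordinate by coordinate shows that each monomial and its $\ml$-image agree on $\bits^n$, and linearity extends this to all of $\F[\vx]$.

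The only property that is not purely mechanical is Property~4. Here I would observe that $\ml$ is precisely the composition of the quotient map $\F[\vx]\to\F[\vx]/(x_i^2-x_i:i\in[n])$ with the choice of the multilinear coset representative; equivalently, $\ml(f)$ is the remainder of $f$ on division by $\{x_i^2-x_i\}_i$. Indeed $x_i^k-x_i\in(x_i^2-x_i)$ for all $k\ge 1$, so reducing each exponent to $\min\{\cdot,1\}$ does not change the coset, and multilinear monomials form a basis of the quotient, so the representative is unique. Since passing to a quotient ring is a ring homomorphism, $fg$, $\ml(f)\,g$, $f\,\ml(g)$ and $\ml(f)\ml(g)$ all lie in one coset, hence have the same $\ml$-image, giving $\ml(fg)=\ml(\ml(f)\ml(g))$. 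For the refinement, if $f$ and $g$ involve disjoint variable sets then $\ml(f)\ml(g)$ is already multilinear (no variable can occur with degree $\ge 2$), so $\ml$ acts on it as the identity and $\ml(fg)=\ml(f)\ml(g)$.

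It remains to handle the uniqueness statements. For the first sentence the ``only if'' direction is trivial; for ``if'', let $h:=f-g$, a multilinear polynomial vanishing on $\bits^n$, and prove $h=0$ by induction on the number of variables $n$. Writing $h=x_n\,h_1(x_1,\dots,x_{n-1})+h_0(x_1,\dots,x_{n-1})$ with $h_0,h_1$ multilinear, the substitution $x_n=0$ makes $h_0$ vanish on $\bits^{n-1}$, hence $h_0=0$ by the inductive hypothesis, and then $x_n=1$ makes $h_1$ vanish on $\bits^{n-1}$, hence $h_1=0$; the base case $n=0$ is a constant vanishing at a point. (Equivalently: the $2^n$ multilinear monomials are linearly independent as functions on $\bits^n$.) Finally, for the last assertion I would write the interpolant explicitly:
\[
f(\vx):=\sum_{a\in\bits^n}\hat f(a)\prod_{i:\,a_i=1}x_i\prod_{i:\,a_i=0}(1-x_i),
\]
which is multilinear and agrees with $\hat f$ on $\bits^n$ since each product is the indicator of the point $a$; uniqueness is what was just proved. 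If $\hat f$ depends only on the coordinates in $S$, then the same formula with the inner products restricted to $i\in S$ produces a multilinear polynomial lying in $\F[\{x_i\}_{i\in S}]$ that still agrees with $\hat f$ on all of $\bits^n$, so by uniqueness it equals $f$. I do not anticipate any genuine obstacle here: the statement is bookkeeping in the monomial basis, and the one spot that deserves care is Property~4, which is cleanest to argue via the ring-homomorphism property of reduction modulo $(x_i^2-x_i)$ rather than by a monomial-by-monomial computation.
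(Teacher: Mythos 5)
The paper gives no proof of this Fact; it is quoted verbatim from \cite{FSTW21}, so there is no in-paper argument to compare against. Your proof is correct and well organized. Taking Property~6 as the \emph{definition} of $\ml$ and extending linearly makes Properties~1, 3, 5, 7 transparent, and identifying $\ml$ with the multilinear normal form modulo the ideal $(x_i^2-x_i : i\in[n])$ is the cleanest way to get Property~4: once you know the quotient map is a ring homomorphism and that multilinear monomials form a basis of the quotient, $\ml(fg)=\ml(\ml(f)\ml(g))$ drops out, and the disjoint-variables refinement follows because $\ml$ fixes multilinear polynomials. The variable-peeling induction for uniqueness and the explicit indicator-basis interpolant are both standard and sound.

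One small point worth tightening: in the last paragraph, ``the same formula with the inner products restricted to $i\in S$'' should also restrict the \emph{outer sum} to $a\in\bits^{S}$ (evaluating $\hat f$ at any fixed extension of $a$ to $[n]$). If you keep the sum over all $a\in\bits^{n}$ while only restricting the products, each $S$-restricted point is hit $2^{n-|S|}$ times, so the result is $2^{n-|S|}$ times the intended interpolant; over a field of characteristic $2$ this would actually vanish rather than merely miscount. Your intent is clearly the correct restricted sum, but as written the phrasing is ambiguous and the naive reading fails.
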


\para{Multilinear Formulas.}
We shall consider the model of  multilinear formulas.

\begin{definition}[Multilinear formula]\label{def-ml-fmla}
        An algebraic formula is a \demph{multilinear formula} if the polynomial computed by \emph{each} gate of the formula is multilinear (as a formal polynomial, that is, as an element of $\,\mathbb{F}[x_1,\ldots,x_n]$). The \demph{product depth} is the maximum number of multiplication gates on any input-to-output path in the formula.
\end{definition}

Raz~\cite{Raz09,Raz06} gave lower bounds for multilinear formulas using the above notion of coefficient dimension, and 
Raz-Yehudayoff~\cite{ry08,RazYehudayoff09} gave simplifications and extensions to constant-depth multilinear formulas.

\begin{theorem}[Raz-Yehudayoff~\cite{Raz09,ry08,RazYehudayoff09}]\label{thm:full-rank-lb}
        Let $f\in\F[x_1,\ldots,x_{2n},\vz]$ be a multilinear polynomial in the set of variables $\vx$ and auxiliary variables $\vz$.  Let $f_\vz$ denote the polynomial $f$ in the ring $\F[\vz][\vx]$.  Suppose that for any partition $\vx=(\vu,\vv)$ with $|\vu|=|\vv|=n$ that
        \[
                \dim_{\F(\vz)} \coeffs{\vu|\vv} f_\vz \ge 2^n
                \;.
        \]
        Then $f$ requires $\ge n^{\Omega(\log n)}$-size to be computed as a multilinear formula, and for $d=o(\nicefrac{\log n}{\log\log n})$, $f$ requires $n^{\Omega((\nicefrac{n}{\log n})^{\nicefrac{1}{d}}/d^2)}$-size to be computed as a multilinear formula of product-depth-$d$.
\end{theorem}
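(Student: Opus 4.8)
The plan is to prove \Cref{thm:full-rank-lb} by the coefficient-matrix (partial-derivative) rank method of Raz and Raz--Yehudayoff, built on the measure from \Cref{sec:coeff-dim}. Throughout I would work over the function field $\F(\vz)$, treating the auxiliary variables $\vz$ as scalars and every polynomial as an element of $\F(\vz)[\vx]$; note a multilinear formula for $f$ is in particular multilinear in $\vx$, and at each of its product gates the two children are supported on \emph{disjoint} $\vx$-variable sets (else the product would not be formally multilinear). Fix an equipartition $\vx=(\vu,\vv)$ and, for $p$ multilinear in $\vx$, set $\mu(p):=\log_2\dim_{\F(\vz)}\coeffs{\vu|\vv}(p)$, the log-rank of the coefficient matrix of $p$ (\Cref{res:y-dim_eq-x-dim}). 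Two properties of $\mu$ do all the work. \textbf{Sub-additivity}: the coefficient matrix of a sum is the sum of the coefficient matrices, so $\mu(p_1+\cdots+p_N)\le\max_j\mu(p_j)+\log_2 N$. \textbf{Tensor-multiplicativity with an imbalance bonus}: if $p=h_1\cdots h_k$ with the $h_i$ multilinear on pairwise disjoint variable sets $S_1,\dots,S_k$, then the coefficient matrix of $p$ is the Kronecker product of those of the $h_i$, so $\mu(p)=\sum_i\mu(h_i)$; and since $h_i$ involves only the variables of $S_i$, its coefficient matrix has at most $2^{|S_i\cap\vu|}$ rows and $2^{|S_i\cap\vv|}$ columns, whence $\mu(h_i)\le\min(|S_i\cap\vu|,|S_i\cap\vv|)=\tfrac12(|S_i|-b_i)$ with $b_i:=\big||S_i\cap\vu|-|S_i\cap\vv|\big|$ the \emph{imbalance} of $S_i$. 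Summing and using $\sum_i|S_i|\le 2n$ gives $\mu(p)\le n-\tfrac12\sum_i b_i$: a product whose factors all sit on variable sets imbalanced under $(\vu,\vv)$ is strictly rank-deficient. Since the hypothesis is exactly $\mu(f)=n$ for \emph{every} equipartition, it suffices to exhibit a \emph{single} equipartition under which $f$ is rank-deficient — the one genuinely cheap step, and what the variables $\vz$ buy.

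First I would invoke Raz's structural decomposition of multilinear formulas: a multilinear formula (\Cref{def-ml-fmla}) of size $s$ and product-depth $d$ computing $f$ can be rewritten as a sum $f=\sum_{j=1}^{N}g_j$ of \emph{log-products} — each $g_j$ a product of multilinear polynomials on pairwise disjoint variable sets — where the number of summands $N$, the number $k$ of factors per summand whose variable set is "medium" (size at least a threshold $\tau$), and $\tau$ itself are controlled as functions of $s$ and $d$. In the general-formula regime the available "central-path length" is $\Omega(\log s)$, which drives $k$ and $\tau$; at product-depth $d$ only $d$ product gates are available along a path, forcing a different trade-off with $k\approx d$ and $\tau\approx (n/\log n)^{1/d}$. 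Proving such a lemma is the combinatorial heart: for each leaf one follows a \emph{central path} from the output gate downward, at every product gate descending into the child owning the majority of the currently live variables; the live variable set shrinks by a bounded factor at each product gate met, so a central path passes through many product gates whose sibling is "balanced", expressing $\hat\Phi$ along the path as a product of those sibling polynomials times the leaf contribution, while a charging argument against the balanced gates bounds $N$; maximal runs of sub-$\tau$ factors are merged into one medium factor so they cause no loss.

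With the decomposition fixed, the lower bound is executed by a random equipartition plus anti-concentration. Choose $(\vu,\vv)$ uniformly among equipartitions; by a Chernoff/Littlewood--Offord estimate a fixed variable set of size $t$ has imbalance $\Omega(\sqrt t)$ except with probability $2^{-\Omega(\sqrt t)}$, so a union bound over the at most $Nk$ medium factors across all summands still leaves positive probability that \emph{every} medium factor of \emph{every} $g_j$ is imbalanced by $\Omega(\sqrt\tau)$ — this is where one needs $\tau$ large relative to $\log N$, or $k$ to compensate. On such a partition, the imbalance bonus gives $\mu(g_j)\le n-\Omega(k\sqrt\tau)$ for all $j$, and sub-additivity gives $\mu(f)\le n-\Omega(k\sqrt\tau)+\log_2 N$; since $\mu(f)=n$, this is a contradiction once the structural lemma's parameters are tuned so that the per-summand deficiency dominates the $\log_2 N$ loss, which, unwinding the two regimes, forces $s\ge n^{\Omega(\log n)}$ for general multilinear formulas and $s\ge n^{\Omega((n/\log n)^{1/d}/d^2)}$ at product-depth $d$. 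The main obstacle is precisely the structural decomposition lemma together with this final balancing: getting a multilinear formula into suitably few log-products with enough medium factors of large enough minimum size that the per-summand rank deficiency beats the $\log N$ loss with room to spare, and making the arithmetic come out to both claimed bounds; by comparison, the rank-measure properties and the random-partition/anti-concentration step are routine.
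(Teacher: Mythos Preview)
The paper does not give its own proof of this theorem; it is stated in the preliminaries as a known result and attributed to Raz and Raz--Yehudayoff. Your sketch is the standard argument from those references (log-product decomposition of multilinear formulas, sub-additivity and tensor-multiplicativity of the coefficient-matrix rank over $\F(\vz)$, random equipartition with Chernoff-type anti-concentration), so there is nothing to compare against in this paper.
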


\subsection{Monomial Orders}\label{sec:mon-ord}

We recall here the definition and properties of a
\emph{monomial order}, following Cox, Little and 
O'Shea~\cite{CoxLittleOShea15}.
We  abuse notation and
associate a monomial $\vx^\va$ with its exponent vector
$\va$, so that we can extend this order to the 
exponent vectors. Note that in this definition ``$1$''
is a monomial, and that scalar multiples of monomials
such as $2x$ are not considered monomials. We now define
a monomial order, which will be total order on monomials
with certain natural properties.

\begin{definition}
A \demph{monomial ordering} is a total order $\prec$ on
the monomials in $\F[\vx]$ such that
        \begin{itemize}
                \item For all $\va\in\N^n\setminus\{\vnz\}$, $1\prec \vx^{\va}$.
                \item For all $\va,\vb,\vc\in\N^n$, $\vx^\va\prec\vx^\vb$ implies $\vx^{\va+\vc}\prec\vx^{\vb+\vc}$.
        \end{itemize}

        For nonzero $f\in\F[\vx]$, the \demph{leading monomial of $f$ (with respect to a monomial order $\prec$)}, denoted $\LM(f)$, is the largest monomial in $\supp(f)\eqdef\{\vx^\va:\coeff{\vx^\va}(f)\ne 0\}$ with respect to the monomial order $\prec$. The \demph{trailing monomial of $f$}, denoted $\TM(f)$, is defined analogously to be the smallest monomial in $\supp(f)$. The zero polynomial has neither leading nor trailing monomial.

        For nonzero $f\in\F[\vx]$, the \demph{leading (resp.\ trailing) coefficient of $f$}, denoted $\LC(f)$ (resp.\ $\TC(f)$), is $\coeff{\vx^\va}(f)$ where $\vx^\va=\LM(f)$ (resp.\ $\vx^\va=\TM(f)$).
\end{definition}

In contrast to \cite{FSTW21}, we will also use the existence of monomial orderings that \emph{respect degree} in the sense that if $\deg(M)>\deg(N)$ for two monomials $M, N$, then $M\succ N$. 

The following is a simple lemma about  leading or trailing monomials (or coefficients) being  homomorphic with respect to multiplication.

\begin{lemma}[\cite{FSTW21}]
\label{res:hom_LM-TM_mult}
Let $f,g\in\F[\vx]$ be nonzero polynomials. Then the leading monomial and trailing monomials and coefficients are homomorphic with respect to multiplication, that is, $\LM(fg)=\LM(f)\LM(g)$ and $\TM(fg)=\TM(f)\TM(g)$, as well as $\LC(fg)=\LC(f)\LC(g)$ and $\TC(fg)=\TC(f)\TC(g)$.
\end{lemma}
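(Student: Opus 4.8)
The final statement to prove is Lemma~\ref{res:hom_LM-TM_mult}: leading/trailing monomials and coefficients are homomorphic under multiplication, i.e. $\LM(fg)=\LM(f)\LM(g)$, $\TM(fg)=\TM(f)\TM(g)$, $\LC(fg)=\LC(f)\LC(g)$, $\TC(fg)=\TC(f)\TC(g)$. Here is my proof plan.

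\medskip

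\textbf{Plan.} The proof is a direct computation using only the two axioms of a monomial order: that $1\prec \vx^\va$ for every nonzero exponent $\va$, and that $\prec$ is translation-invariant ($\vx^\va\prec\vx^\vb\implies\vx^{\va+\vc}\prec\vx^{\vb+\vc}$). I will prove the statement for $\LM$ and $\LC$; the statement for $\TM$ and $\TC$ follows by the identical argument with all inequalities reversed (or, formally, by noting that the reverse of a monomial order is again a total order satisfying the analogous minimality properties needed, or simply by repeating the symmetric steps).

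\medskip

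\textbf{Key steps.} First I would record the elementary consequence of the axioms that I will use repeatedly: for any monomials $M,N$ with $M\preceq N$ and any monomial $P$, translation-invariance gives $MP\preceq NP$ (applying the axiom to the strict case, and noting equality is preserved trivially). In particular, multiplying a monomial by any nontrivial monomial can only move it up (or keep it fixed): $M\preceq MN$ since $1\preceq N$ implies $M=1\cdot M\preceq N\cdot M$. Next, write $f=\sum_{\va}c_\va\vx^\va$ and $g=\sum_{\vb}d_\vb\vx^\vb$, let $\vx^{\va^\star}=\LM(f)$ and $\vx^{\vb^\star}=\LM(g)$, so $c_{\va^\star}=\LC(f)\neq 0$, $d_{\vb^\star}=\LC(g)\neq 0$. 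Every monomial appearing in the expansion of $fg$ has the form $\vx^{\va+\vb}$ with $c_\va\neq 0$, $d_\vb\neq 0$, hence $\vx^\va\preceq\vx^{\va^\star}$ and $\vx^\vb\preceq\vx^{\vb^\star}$; applying the observation above twice, $\vx^{\va+\vb}=\vx^\va\vx^\vb\preceq\vx^{\va^\star}\vx^\vb\preceq\vx^{\va^\star}\vx^{\vb^\star}=\vx^{\va^\star+\vb^\star}$. So no monomial in $\supp(fg)$ exceeds $\vx^{\va^\star+\vb^\star}$. Finally I must show that $\vx^{\va^\star+\vb^\star}$ actually survives in $fg$, i.e. its coefficient is $\LC(f)\LC(g)\neq 0$ and in particular nonzero (this uses that $\F$ is a field, or at least an integral domain, so the product of two nonzero scalars is nonzero). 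The coefficient of $\vx^{\va^\star+\vb^\star}$ in $fg$ is $\sum_{\va+\vb=\va^\star+\vb^\star}c_\va d_\vb$. I claim the only term in this sum with $c_\va d_\vb\neq 0$ is $(\va,\vb)=(\va^\star,\vb^\star)$: if $c_\va d_\vb\neq 0$ then $\vx^\va\preceq\vx^{\va^\star}$ and $\vx^\vb\preceq\vx^{\vb^\star}$, and if either inequality were strict the chain computation above would give $\vx^{\va+\vb}\prec\vx^{\va^\star+\vb^\star}$, contradicting $\va+\vb=\va^\star+\vb^\star$; so both are equalities, forcing $\va=\va^\star$ and $\vb=\vb^\star$ (a monomial determines its exponent vector). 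Hence the coefficient is exactly $c_{\va^\star}d_{\vb^\star}=\LC(f)\LC(g)\neq 0$, which simultaneously shows $\vx^{\va^\star+\vb^\star}\in\supp(fg)$ (so it is the leading monomial, giving $\LM(fg)=\LM(f)\LM(g)$) and that $\LC(fg)=\LC(f)\LC(g)$. The trailing-monomial statements are proved by the mirror-image argument: every monomial in $\supp(fg)$ is $\succeq \TM(f)\TM(g)$, and the coefficient there is exactly $\TC(f)\TC(g)\neq 0$.

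\medskip

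\textbf{Main obstacle.} There is no serious obstacle here — this is a standard fact about monomial orders and the proof is essentially bookkeeping. The one point that deserves a moment's care is the uniqueness argument showing that $\vx^{\va^\star+\vb^\star}$ is hit by exactly one cross-term in the product, which is precisely where both the strictness half of translation-invariance and the integral-domain property of $\F$ are used; stated too quickly it can look like a claim that needs the order to be, say, a well-order or compatible with degree, whereas in fact only the two defining axioms of a monomial order are needed. I would make sure that step is spelled out carefully and flag explicitly that everything is symmetric under $\prec \leftrightarrow \succ$ to get the trailing-monomial half for free.
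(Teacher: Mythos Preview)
Your proof is correct and is the standard argument for this well-known fact about monomial orders. The paper does not actually give a proof of this lemma; it is stated with a citation to \cite{FSTW21} and used as a black box, so there is no ``paper's own proof'' to compare against.
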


We shall use the well-known fact that for any set of polynomials the dimension of their span in $\F[\vx]$ is equal to the number of \emph{distinct} leading or trailing monomials in their span.

\begin{lemma}\label{res:dim-eq-num-TM-spn}
        Let $S\subseteq\F[\vx]$ be a set of polynomials. Then $\dim \spn S=\nLM{\spn S}=\nTM{\spn S}$.  In particular, $\dim \spn S\ge \nLM{S},\nTM{S}$.
\end{lemma}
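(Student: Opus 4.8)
The plan is to fix a monomial order $\prec$ and reduce the whole statement to the single identity $\dim\spn S=\nLM{\spn S}$: the trailing-monomial identity $\dim\spn S=\nTM{\spn S}$ follows by the symmetric argument (replace ``largest'' by ``smallest'' and $\LM,\LC$ by $\TM,\TC$ throughout), and the ``in particular'' clause is immediate since $S\subseteq\spn S$ gives $\nLM{S}\le\nLM{\spn S}$ and $\nTM{S}\le\nTM{\spn S}$. Writing $V:=\spn S$ and $L:=\{\LM(f):f\in V\setminus\{0\}\}$, so that $\nLM{V}=|L|$, I would then prove the two inequalities $|L|\le\dim V$ and $\dim V\le|L|$ separately. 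This is essentially the classical fact from Gr\"obner basis theory that passing to leading monomials preserves dimension, specialized to an arbitrary finite-dimensional subspace of $\F[\vx]$.

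For $|L|\le\dim V$: for each $m\in L$ choose some $f_m\in V$ with $\LM(f_m)=m$, and show $\{f_m\}_{m\in L}$ is linearly independent. Given a vanishing combination $\sum_{m\in T}c_mf_m=0$ with $T\subseteq L$ finite and all $c_m\ne 0$, set $m^\star:=\max_\prec T$; by the monomial-order axioms every monomial of each $f_m$ is $\preceq m$, so $m^\star$ can appear in $\supp(f_m)$ only when $m=m^\star$, whence the coefficient of $m^\star$ in the combination equals $c_{m^\star}\LC(f_{m^\star})\ne 0$, contradicting that the sum is the zero polynomial. For $\dim V\le|L|$: take any linearly independent $f_1,\dots,f_d\in V$ and note that every $\F$-combination of them has support inside the finite set $U:=\bigcup_i\supp(f_i)$. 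Whenever $i\ne j$ with $\LM(f_i)=\LM(f_j)$, replace $f_j$ by $f_j-\tfrac{\LC(f_j)}{\LC(f_i)}f_i$; this is an elementary operation, so the family stays linearly independent (in particular $f_j$ stays nonzero), and it strictly lowers $\LM(f_j)$ with respect to $\prec$. To see the process halts I would use the potential $\Phi:=\sum_j\operatorname{rk}_\prec(\LM(f_j))\in\N$, where $\operatorname{rk}_\prec(m)$ is the position of $m$ in the finite totally ordered set $(U,\prec)$: each step strictly decreases $\Phi$. At termination the $\LM(f_j)$ are $d$ distinct elements of $L$, so $d\le|L|$; taking $d=\dim V$ completes the inequality, and combining the two gives $\dim V=|L|=\nLM{V}$.

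The only step requiring genuine care is the termination of the leading-monomial elimination (and, symmetrically, that the analogous reduction to distinct trailing monomials also halts — again because all polynomials encountered have support in the finite set $U$, so only finitely many monomials can ever occur as leading or trailing monomials). Everything else is the standard ``triangularity'' bookkeeping, so I expect no real obstacle beyond phrasing this finiteness/well-foundedness argument cleanly.
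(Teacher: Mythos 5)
Your proof is correct. The paper states this lemma without proof, introducing it only as a "well-known fact" (it is inherited from \cite{FSTW21}), so there is no paper argument to compare against; your triangularization argument — choosing representatives per leading monomial for the inequality $\nLM{\spn S}\le\dim\spn S$, and Gaussian elimination on a finite independent family with a rank-based potential to ensure termination for the reverse inequality — is the standard proof of this classical fact and is complete.
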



\section{Degree Lower Bounds} 


\subsection{Symmetric Instances}
\label{sec:symmetric-deg-lower-bounds}

In this section we show that all symmetric unsatisfiable instances are hard for Nullstellensatz degree, as well as some vector invariant polynomial instances. 



Let $\xbar$ denote the set $\{x_1,  \ldots, x_n\}$.

\begin{fact}\label{fac:unsatisfiable-symmetric-poly-is-beta}
Let $f(\vx)\in\F[\vx]$ be symmetric and unsatisfiable over 0-1 assignments (i.e., $f(\vx)=0$ has no 0-1 solutions). Then $f(\vx)$ is of the form $t(\vx)-\beta$ with $0\neq\beta\in\F$ and $t(\vx)$ a symmetric polynomial. 
\end{fact}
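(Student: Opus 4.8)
The plan is to read off $\beta$ and $t$ from the value of $f$ at the all-zeros assignment. Set $\beta := -f(\vnz)$, where $\vnz = (0,\dots,0)\in\bits^n$, and define $t(\vx) := f(\vx) + \beta$, so that trivially $f(\vx) = t(\vx) - \beta$. It then remains only to check the two asserted properties of this decomposition: that $\beta \neq 0$, and that $t(\vx)$ is symmetric.

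For $\beta \neq 0$: since $\vnz \in \bits^n$ and, by hypothesis, $f(\vx) = 0$ has no $0$-$1$ solution, we must have $f(\vnz) \neq 0$, and hence $\beta = -f(\vnz) \neq 0$. For the symmetry of $t$: the quantity $f(\vnz)$ is a scalar, unchanged under any renaming of the variables, so for every $\sigma\in S_n$ we get $t(\permutx) = f(\permutx) - f(\vnz) = f(\vx) - f(\vnz) = t(\vx)$, using that $f$ is symmetric; thus $t$ is symmetric. (Note also that $t(\vnz) = 0$, i.e.\ $t$ has zero constant term, which is the normalized form in which this decomposition is used in the degree lower bounds that follow.)

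There is no genuine obstacle here: the statement is essentially the observation that a symmetric polynomial with nonzero value at the origin can be normalized by pulling its constant part out as the scalar $-\beta$. The only points requiring (minimal) care are that the hypothesis ``unsatisfiable over $0$-$1$ assignments'' is used precisely to guarantee $f(\vnz)\neq 0$, and that adding a constant preserves symmetry.
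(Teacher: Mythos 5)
Your proof is correct and takes essentially the same route as the paper: read off $\beta$ from the value of $f$ at $\vnz$, use unsatisfiability over $\bits^n$ to conclude $f(\vnz)\neq 0$ and hence $\beta\neq 0$, and note that adding a scalar preserves symmetry. The paper writes $\beta = f(\vnz)$ rather than $-f(\vnz)$ (an apparent sign slip); your choice is the one that actually normalizes $t$ to have zero constant term, consistent with how the fact is used downstream.
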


\newcommand{\vzero}{\ensuremath{\overline 0}}

\begin{proof}
If $f(\vx)$ is symmetric and does not contain a constant 
term, i.e., $f(\vzero)=0$, then $f(\vx)=0$ is 
satisfiable. Thus, $f(\vx)=t(\vx)-\beta$, where $\beta = 
f(\vzero)$. The polynomial $t(\vx)$ is symmetric because 
$f(\vx)$ is (since permuting the variables of $f(\vx)$ per \Cref{def:sym-poly} of symmetric polynomials does not change the constant term $f(\vzero)$).
\end{proof}

\begin{lemma}\label{lem:sym-poly-ref-degree-high}
Let $n \ge 1$ and  $1 \le d \le n$, and let \F\ be a field of characteristic strictly greater than $\max{(2^{n},n^d)}$.
%
Let $\beta \in \mathbb{F}\setminus \{0,1,\ldots,
n^d\}$ and $f$ be a \emph{multilinear} polynomial such that 
\begin{equation}\label{eq:198}
f(\xbar) \cdot (\ednx -\beta) = 1 \mod{\xbar^2 - \xbar}.
\end{equation}
Then,  $n-d < \deg(f) \le n$. 
\end{lemma}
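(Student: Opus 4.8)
The upper bound $\deg(f)\le n$ is immediate, since $f$ is multilinear in the $n$ variables $\xbar$; so the whole content is the lower bound $\deg(f)\ge n-d+1$. The plan is to first promote $f$ to a \emph{symmetric} polynomial, then run a homogeneous-degree count against the identity \Cref{eq:198}, using \Cref{cla:action-of-Sn-x-e-d} to pin down the top-degree term.

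\emph{Step 1 (nonvanishing, symmetry, normal form).} Over $\bits^n$ the polynomial $\el_{d,n}(\xbar)$ takes only the integer values $\binom{j}{d}$ with $0\le j\le n$, hence values in $\{0,1,\dots,\binom{n}{d}\}\subseteq\{0,1,\dots,n^d\}$; as $\chara(\F)>n^d$ these are distinct in $\F$, and by the hypothesis $\beta\notin\{0,1,\dots,n^d\}$ none of them equals $\beta$. Thus $\el_{d,n}(\xbar)-\beta$ is nowhere zero on $\bits^n$, so $a\mapsto 1/(\el_{d,n}(a)-\beta)$ is a well-defined function on the cube, and it is $S_n$-invariant because $\el_{d,n}$ is symmetric. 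By \Cref{eq:198}, $f$ agrees with this function on $\bits^n$, so by \Cref{fact:multilinearization} $f$ is its \emph{unique} multilinear representative; applying any variable permutation $\sigma\in S_n$ to $f$ yields another multilinear polynomial computing the same (invariant) function, whence $f(x_{\sigma(1)},\dots,x_{\sigma(n)})=f(\xbar)$, i.e. $f$ is symmetric. Moreover $f$ is non-constant, since the function $1/(\el_{d,n}-\beta)$ is non-constant on the cube (already $\el_{d,n}-\beta$ takes the two distinct values $-\beta$ and $1-\beta$ there, and $t\mapsto 1/t$ is injective), so $D:=\deg(f)\ge 1$. By the multilinear part of \Cref{prop:fund-thm-sym-polynomials} we may write $f=\sum_{k=0}^{n}\lambda_k\,\el_{k,n}(\xbar)$ with $\lambda_k\in\F$, and since the $\el_{k,n}$ are homogeneous of pairwise distinct degrees we have $D=\max\{k:\lambda_k\ne 0\}$; in particular $\lambda_D\ne 0$.

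\emph{Step 2 (degree count).} Suppose for contradiction that $D\le n-d$. Multilinearizing the left-hand side of \Cref{eq:198} and using linearity of $\ml$ together with the fact that each $\el_{k,n}$ is already multilinear,
\[
\ml\bigl(f\cdot(\el_{d,n}-\beta)\bigr)=\sum_{k=0}^{D}\lambda_k\,\ml(\el_{k,n}\el_{d,n})-\beta\sum_{k=0}^{D}\lambda_k\,\el_{k,n}.
\]
Now $\ml(\el_{k,n}\el_{d,n})$ has degree $\le k+d$, so for $k<D$ it has degree $\le d+D-1$, and the second sum has degree $\le D\le d+D-1$ (using $d\ge 1$). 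For $k=D$, \Cref{cla:action-of-Sn-x-e-d} (applicable since $1\le D\le n-d$) gives $\ml(\el_{D,n}\el_{d,n})=2^{d+D}\el_{d+D,n}+(\text{degree}\le d+D-1)$, and $\el_{d+D,n}$ is a nonzero polynomial of degree exactly $d+D$ because $d+D\le n$. Hence the homogeneous degree-$(d+D)$ slice of the left-hand side of \Cref{eq:198} equals $\lambda_D\,2^{d+D}\,\el_{d+D,n}$, whereas that of the right-hand side, the polynomial $1$, is $0$ since $d+D\ge 1$. Therefore $\lambda_D\,2^{d+D}\,\el_{d+D,n}=0$ as a polynomial; since $2^{d+D}\le 2^n<\chara(\F)$ is nonzero in $\F$ and $\el_{d+D,n}\ne 0$, we conclude $\lambda_D=0$, contradicting $\lambda_D\ne 0$. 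Thus $D\ge n-d+1$, which together with $\deg(f)\le n$ proves the lemma.

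The one genuinely nontrivial ingredient is \Cref{cla:action-of-Sn-x-e-d}: one must check that multilinearizing $\el_{d,n}\el_{k,n}$ really does leave a nonzero degree-$(d+k)$ component (with coefficient $2^{d+k}$ — this is exactly where the large-characteristic hypothesis is used) rather than collapsing through cancellation. Granting that claim, everything else above is bookkeeping: identifying the range of $\el_{d,n}$ on the cube, the uniqueness-of-multilinear-representative argument that forces $f$ to be symmetric, and the comparison of homogeneous slices of the two sides of \Cref{eq:198}.
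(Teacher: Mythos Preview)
Your proof is correct and follows essentially the same approach as the paper: show $f$ is symmetric, expand it in elementary symmetric polynomials, and use \Cref{cla:action-of-Sn-x-e-d} to identify the top-degree slice of the multilinearized product, obtaining a contradiction with the constant polynomial $1$. Two small points where your write-up is actually cleaner than the paper's: you establish symmetry of $f$ via uniqueness of the multilinear representative (avoiding the division by $n!$ in the paper's averaging argument), and you explicitly rule out $D=0$ before invoking the claim, which sidesteps the degenerate $k=0$ case of \Cref{cla:action-of-Sn-x-e-d}.
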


\begin{proof}
Note that \(\ednx -\beta=0\) is unsatisfiable whenever 
$\beta \in \mathbb{F}\setminus \{0,1,\ldots, n^d\}$ and the characteristic of \F\ is greater than $n^d$. This is because $\ednx$ contains $n^d$ distinct monomials with the coefficient 1, which evaluates to either 0 or 1 under Boolean assignments. Moreover, the characteristic of \F\ needs to be greater than $2^n$ so that for any nonzero $\gamma_\ell\in\F$ we have $\gamma_\ell\cd 2^n\neq 0$ in \F\ (which is needed in \Cref{eq:378:31}; see the ensuing explanation there).  

\uline{$\le n$:} This is clear as $f$ is multilinear.

\uline{$> n-d$:} Begin by observing that $\beta \in \mathbb{F}\setminus \{0,1,\ldots, n^d\}$ implies that $\ednx-\beta$ is never zero on the Boolean cube $\bits^n$, so that the by \Cref{eq:198} for $\vx\in\bits^n$ the expression
        \begin{align*}
                f(\vx)= \frac{1}{\ednx -\beta}
                \;,
        \end{align*}
        is well defined.
        Now observe that this implies that \emph{$f$ is a symmetric polynomial}. To see this, let us define $g(\xbar)$ to be the symmetrizing polynomial for $f(\xbar)$, i.e., \[g(x_1, x_2, \ldots, x_n) = \frac{1}{n!}\cdot \sum_{\sigma \in S_n} f(\sigma(x_1), \sigma(x_2), \ldots, \sigma(x_n)).\]
        Then, we see $\frac{1}{n!}\cdot \sum_{\sigma \in S_n} f(\sigma(x_1), \sigma(x_2), \ldots, \sigma(x_n)) =  \frac{1}{n!} \cdot \sum_{\sigma \in S_n} \frac{1}{\ensuremath{\mathbf{e}}_{d,n}(\sigma(\xbar)) - \beta}$. As $\ednx$ is symmetric, we see that $g(\xbar) = f(\xbar)$, which means $f(\xbar)$ is symmetric.%
\footnote{See \cite[Proposition 5.3]{FSTW21} for a spelling out of this argument. Another way to show that $f(\vx)$ is symmetric is this: a multilinear polynomial is uniquely determined by the values it gets over the Boolean cube. Here we know that the function over the Boolean cube is symmetric (as a function; namely,  stays the same under permutation of variables. In yet another words, the function depends only on the number of 1s in its input). This means that the polynomial itself is symmetric (because it is multilinear).}


The following claim is the main technical observation of the lower bound and is a generalisation of the \cite{FSTW21}
lower bounds for the case of linear symmetric polynomials,
i.e., when $d=1$. The idea is that the
multilinearization of the product $\left(\ednx-\beta\right) \cd \eknx $ contains nonzero monomials of degree $\ge 1$, given that $d+k\le n$, hence it cannot equal the polynomial $1$. 
On the other hand, note that when $d+k>n$ we cannot make sure that this product, when we multiply out terms,
does not yield cancellations of monomials 
potentially resulting  in the 1 polynomials. 


\begin{claim}[Multilinearizing the product of 
elementary symmetric polynomials yields high degree]
\label{cla:action-of-Sn-x-e-d}

Let $n \ge 1$ and $1 \le d \le n$. If $k$ is such that $k\le n-d$, then
$$
    \ednx \cd\eknx = 
    2^{d+k}\cd\elpx{d+k}{n}+ [\text{\small degree $\le d+k-1$ terms}] \mod \vx^2-\vx
                \;.
$$
\end{claim}

\begin{proofclaim}
\begin{align}
    \ednx \cd\eknx 
        &=\sum_{\va\in {n \choose d}} \vx^{\va} \cd 
          \sum_{\vb\in {n \choose k}} \vx^{\vb} 
         =\sum_{\va\in {n \choose d},~{\vb\in {n \choose k}}}
          \vx^{\va}\cd \vx^{\vb} 
          \notag\\
        &=\sum_{\va\in {n \choose d},~{\vb\in {n \choose k}}
            \atop \va\cap \vb = \emptyset}
          \vx^{\va}\cd \vx^{\vb} ~+
          \sum_{\va\in {n \choose d},~{\vb\in {n \choose k}}
          \atop \va\cap \vb \neq \emptyset}
           \vx^{\va}\cd \vx^{\vb} \notag\\
        &=2^{d+k}\cd\sum_{\vc\in {n \choose d+k}}
          \vx^{\vc}~ +
          \sum_{j=1}^{\min(d,k)}\sum_{\va\in {n \choose d},
          ~{\vb\in {n \choose k}} \atop |\va\cap \vb| =j}
           \vx^{\va}\cd \vx^{\vb}\,,\label{eq:323:50}
        \end{align}
where $2^{d+k}$ in the last line follows  by considering all possible partitions of each $(d+k)$-subset $\vc\in {n \choose d+k}$ into two disjoint subsets $\va\in {n
 \choose d},~{\vb\in {n \choose k}}$.
Now, consider the second summand in \Cref{eq:323:50}. If we multilinearize this polynomial we get a multilinear polynomial of degree at most $d+k-1$, concluding the claim. To be more precise, we have
\begin{multline}\label{eq:330:27}
    \sum_{j=1}^{\min(d,k)}\sum_{\va\in {n \choose d},
     ~{\vb\in {n \choose k}} \atop |\va\cap \vb| =j}
     \vx^{\va}\cd \vx^{\vb} 
    ={d+k-1\choose 1} \cd2^{d+k-2}\cd
        \sum_{\vc\in {n \choose d+k-1}}\vx^{\vc}
        ~ + {d+k-2\choose 2} \cd 2^{d+k-3}\cd
        \sum_{\vc\in {n \choose d+k-2}}\vx^{\vc} 
    \\
        ~ + {\max(d,k)\choose \min(d,k)}  
        \cd \sum_{\vc\in {n \choose \max(d,k)}}\vx^{\vc} 
        ~~~\mod \vx^2-\vx\,,
\end{multline}
where ${d+k-1\choose 1} \cd 2^{d+k-2}$ in the first summand
 in the right-hand side of the equation  means that for each
 choice of the single element in $\vc = \va\cup\vb$ that is
  common to both  $\va\in {n \choose d}$ and $ {\vb\in {n \choose
   k}} $ (there are ${d+k-1\choose 1} $ such choices)
    we can
     attribute the rest of the elements in $\vc$ to \va\ or
      \vb\ in $2^{d+k-2}$ different ways. Similar considerations
       apply to the rest of the summands in \Cref{eq:330:27}.
\end{proofclaim}       

Suppose for contradiction that $\ell=\deg(f)\le n-d$. Then, using
the fact that modulo $\vx^2-\vx$ the symmetric polynomial $f(\vx)$ must be symmetric and multilinear and hence a linear
combination of elementary symmetric polynomials (\Cref{prop:fund-thm-sym-polynomials}), we have 
        \begin{align}
        \notag
                1
                &=f(\vx)\cd\left(\ednx -\beta\right) \mod \vx^2-\vx\\
                \notag
                &=\left(\sum_{k=0}^{\ell} \gamma_k \eknx\right)\cd
                \left(\ednx -\beta\right) \mod \vx^2-\vx\\
                \notag
                &=\left(\sum_{k=0}^{\ell} \gamma_k \eknx \cd \ednx
                 \right)-\left(\sum_{k=0}^{\ell} 
                 \beta \gamma_k \eknx \right)\mod \vx^2-\vx\\
                 \notag
                &=\left(\sum_{k=0}^{\ell} \gamma_k 
                2^{d+k}\cd\elpx{d+k}{n}+ [\text{\small degree                 $\le d+k-1$ terms}]
                \right)-\left(\sum_{k=0}^{\ell} 
                \beta \gamma_k \eknx \right)\mod \vx^2-\vx\\
                \label{eq:378:31}
                &=\gamma_{\ell} 2^{n}\cd\elpx{d+\ell}{n}+ 
                [\text{\small degree $\le d+\ell-1$ terms}]
                \mod \vx^2-\vx\,.
        \end{align}
Where the penultimate equation is by invoking \Cref{cla:action-of-Sn-x-e-d},
which we can since $\ell\le n-d$. 
By assumption that $\deg(f)=\ell$, we know that 
$\gamma_\ell\ne 0$. By assumption  $\ell\le n-d$, we have that 
$\elpx{d+\ell}{n}$ is of degree at most $n$ (if $\ell=n-d$ it is $\elpx{n}{n}$) and at least $d$ (in case $\ell=0$), where $d\ge 1$ by assumption. Since the characteristic of the field is greater 
than $2^{n}$ we know that $\gamma_{\ell} 2^{n}\neq 0$.
This shows that $1$ (a multilinear degree 0 polynomial) equals because  \Cref{eq:378:31} (a multilinear degree $d+\ell$ polynomial) modulo $\vx^2-\vx$, which is a contradiction to the uniqueness of representation of multilinear polynomials modulo $\vx^2-\vx$.  Thus, we must have $\deg(f)>n-d$.
\end{proof}  

We now generalise \Cref{lem:sym-poly-ref-degree-high} from a lower bound for an \emph{elementary} symmetric polynomial to 
a lower bound for all symmetric polynomials.

\begin{corollary}[Single unsatisfiable  symmetric polynomials require high degree refutations]
\label{cor:NS-full-symmetric-deg-lower bound}
Assume that $n \ge 1$ and  $1 \le d \le n$, \F\ 
is a field of characteristic 
 greater than $\max(2^n,n^d)$, and 
$f(\vx)$ is a symmetric polynomial of degree  $d$ such that $f(\vx)-\beta$ has no 0-1 solution, for $\beta \in \mathbb{F}$. Suppose that $g$ is a multilinear polynomial such that 
\[
g(\xbar) \cdot (f(\vx) - \beta) = 1 \mod{\xbar^2 - \xbar}.
\]
Then, the degree of $g(\vx)$ is at least $n-d+1$.
Accordingly, the degree of every Nullstellensatz refutation
of $f(\vx) - \beta$ is at least $n+1$.  
\end{corollary}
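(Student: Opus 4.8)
The plan is to reduce the statement for a general degree-$d$ symmetric polynomial $f(\vx)-\beta$ to the already-established elementary-symmetric case in \Cref{lem:sym-poly-ref-degree-high}. By \Cref{fac:unsatisfiable-symmetric-poly-is-beta} (applied to $f(\vx)-\beta$, which is symmetric and unsatisfiable over the cube), we may write $f(\vx)-\beta = t(\vx) - \beta'$ with $t$ symmetric and $\beta' \neq 0$; without loss of generality absorb constants so that we are really refuting a symmetric polynomial with a nonzero constant term. The first step is to take the given multilinear $g$ with $g(\vx)\cdot(f(\vx)-\beta)=1 \bmod \vx^2-\vx$ and pass to the multilinearization $\ml(f(\vx))$: since $g\cdot(f-\beta)$ agrees with $1$ over the Boolean cube, it also agrees with $\ml(g)\cdot \ml(f-\beta) = \ml(g)\cdot(\ml(f)-\beta)$ there, and $\deg\ml(f)\le \deg f = d$. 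By \Cref{prop:fund-thm-sym-polynomials} (second part), $\ml(f)-\beta$ is a linear combination $\sum_{i=0}^{d}\mu_i\,\el_{i,n}(\vx)$ of elementary symmetric polynomials of degree at most $d$ (with $\mu_0$ absorbing $-\beta$).

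The second step is the heart of the argument: show that $\deg(g) > n-d$. Suppose for contradiction that $\ell := \deg(\ml(g)) \le n-d$. As in \Cref{lem:sym-poly-ref-degree-high}, symmetrizing $g$ over $S_n$ shows $\ml(g)$ is itself symmetric (the function $1/(f-\beta)$ on the cube is symmetric since $f$ is, and a multilinear polynomial is determined by its values on the cube), hence $\ml(g) = \sum_{k=0}^{\ell}\gamma_k\,\el_{k,n}(\vx)$ with $\gamma_\ell\neq 0$. Now expand
\[
1 = \ml(g)\cdot(\ml(f)-\beta) = \left(\sum_{k=0}^{\ell}\gamma_k\,\el_{k,n}(\vx)\right)\left(\sum_{i=0}^{d}\mu_i\,\el_{i,n}(\vx)\right) \mod \vx^2-\vx,
\]
and apply \Cref{cla:action-of-Sn-x-e-d} to each product $\el_{k,n}\cdot\el_{i,n}$ — this is legitimate precisely because $k\le \ell \le n-d \le n-i$ for every $i$ with $\mu_i\neq 0$ (here we use that $\deg f = d$, so some $\mu_i$ with $i\ge$ the true degree is nonzero; the term of largest total degree on the right is $\el_{\ell,n}\cdot\el_{d,n}$). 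The top-degree term is then $\gamma_\ell\mu_d\, 2^{\ell+d}\,\el_{\ell+d,n}(\vx)$ plus strictly lower-degree terms, where $\ell+d$ ranges in $[d, n]$ so $\el_{\ell+d,n}\neq 0$. Since $\chara(\F) > 2^n$ we have $\gamma_\ell\mu_d\,2^{\ell+d}\neq 0$, so the right-hand side is a multilinear polynomial of total degree $\ell+d\ge d\ge 1$, contradicting uniqueness of multilinear representatives since the left side is the constant $1$. Hence $\deg(g)\ge \deg(\ml(g)) > n-d$, i.e. $\deg(g)\ge n-d+1$.

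The third step, deriving the Nullstellensatz degree bound, is routine: a Nullstellensatz refutation of $f(\vx)-\beta$ has the form $g(\vx)\cdot(f(\vx)-\beta) + \sum_i h_i(\vx)(x_i^2-x_i) = 1$, so reducing modulo $\vx^2-\vx$ gives exactly the hypothesis of the corollary; by the above $\deg g \ge n-d+1$, hence $\deg(g\cdot(f-\beta)) \ge (n-d+1) + 1 = n-d+2$. Wait — we want $\ge n+1$; the point is that $g\cdot(f-\beta) = 1 - \sum_i h_i(x_i^2-x_i)$ as a formal identity, and since the non-multilinear part must be cancelled by the Boolean-axiom terms, the relevant degree bound is on $g$ itself composed with the multilinearization argument: more carefully, $\deg(g)\ge n-d+1$ together with $\deg(f-\beta)=d$ and the fact that no cancellation in leading total degree can occur after multilinearization forces $\deg(g\cdot(f-\beta))\ge n+1$. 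I expect the main obstacle to be exactly this bookkeeping in the last step — ensuring that the degree of the product $g\cdot(f-\beta)$ (before reduction) is genuinely $\ge n+1$ and not merely $n-d+2$, which requires tracking that the degree-$d$ homogeneous part of $f$ does not collude with lower-degree parts of $g$ to drop the degree; the cleanest fix is to observe that $g$ may be taken multilinear of degree $n-d+1$, whose product with the degree-$d$ part of $f$ has a non-multilinearizable top slice of degree $n+1$ that survives modulo the Boolean axioms by \Cref{cla:action-of-Sn-x-e-d}, forcing the refutation to carry degree $n+1$.
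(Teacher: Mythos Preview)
Your steps 1 and 2 are essentially the paper's proof, written out more carefully. The paper writes $f = \sum_{i=0}^d \lambda_i\, e_i(\vx)$ (implicitly passing to $\ml(f)$), separates off the top term $g\cdot\lambda_d\, e_d(\vx)$, and invokes \Cref{cla:action-of-Sn-x-e-d} to see that its multilinearization has degree exactly $\deg(g)+d$ while the remainder $g\cdot\bigl(\sum_{i<d}\lambda_i\, e_i(\vx) - \beta\bigr)$ multilinearizes to degree at most $\deg(g)+d-1$; assuming $\deg(g)\le n-d$ then contradicts the identity with~$1$. Your expansion of the double sum $\sum_k\sum_i \gamma_k\mu_i\, e_{k,n}(\vx)\,e_{i,n}(\vx)$ and isolation of the top piece $\gamma_\ell\mu_d\, 2^{\ell+d}\, e_{\ell+d,n}(\vx)$ is the same argument. (Your parenthetical about ``some $\mu_i$ with $i\ge$ the true degree'' should simply read $\mu_{d'}\neq 0$ where $d' := \deg(\ml(f))\le d$; if $d'<d$ one only gets a stronger bound, so this is harmless.)

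Step 3 is where you tie yourself in a knot, but the knot is an arithmetic slip: $\deg(f-\beta)=d$, not~$1$. Over a field the degree of a product is the sum of the degrees (leading homogeneous forms multiply, and $\F[\vx]$ is an integral domain), so
\[
\deg\bigl(g\cdot(f-\beta)\bigr) \;=\; \deg(g) + d \;\ge\; (n-d+1)+d \;=\; n+1.
\]
No ``collusion'' or subtle cancellation of top slices is possible. For a general (not necessarily multilinear) $g'$ appearing in a Nullstellensatz refutation, $\deg(g')\ge\deg(\ml(g'))\ge n-d+1$ and the same one-line computation gives refutation degree $\ge n+1$. The paper does not spell this out either; it simply asserts the ``Accordingly'' sentence.
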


\begin{proof}
By \Cref{prop:fund-thm-sym-polynomials} we can 
write the symmetric polynomial $f(\vx)$ as 
$\sum_{i=0}^d \lambda_i  e_d(\vx)$, with $\lambda_i\in\F$. 
Thus,
\begin{align}
    \notag
    1 &=  g(\xbar) \cdot (f(\vx) - \beta) \mod{\xbar^2 - \xbar}
    \\
    \notag
    &= g(\xbar) \cdot \left(
    \sum_{i=0}^d \lambda_i  e_i(\vx) - \beta\right)
    \mod{\xbar^2 - \xbar}
    \\
    &=g(\xbar) \cdot 
    \left(\lambda_d  e_d(\vx) + \sum_{i=0}^{d-1}\lambda_i  
    e_i(\vx)-\beta\right) \mod{\xbar^2 - \xbar} \notag
    \\
    &=
    g(\xbar) \cdot \lambda_d  e_d(\vx) +
    g(\xbar) \cdot \left(\sum_{i=0}^{d-1}\lambda_i  e_i(\vx)
    -\beta \right)
    \mod{\xbar^2 - \xbar}. 
    \label{eq:419:41} 
\end{align}

Assume for contradiction that $\deg(g)\le n-d$. 
By \Cref{cla:action-of-Sn-x-e-d}
$\deg\left(\ml\left
(g(\xbar) \cdot \lambda_d  e_d(\vx)
\right)\right)= \deg(g)+d$ 
(using also that $\lambda_d\neq0$, by assumption on degree of $f$), while
 $\deg\left(\ml\left(g(\xbar) 
 \cdot( \sum_{i=0}^{d-1}\lambda_i  e_i(\vx)
    -\beta)\right)\right)\le 
\deg(g)+d+1$. 
Therefore, \Cref{eq:419:41} is a (multilinear) polynomial of
degree at most $\deg(g)+d$. We assumed that $\deg(g)\le n-d$,
meaning that \Cref{eq:419:41} is a (multilinear) polynomial of
degree at most $n$ (and at least $1$ since $d\ge 1$).
But as before, this is a contradiction to the uniqueness of the
multilinear polynomial $1$. 
\end{proof}

%
%
%

\paragraph{Comment.}
%
Some of the symmetric lower bounds in this section section can 
possibly be obtained by using the subset sum lower bounds in \cite{FSTW21}. One way to do that would be to \emph{lift} the hardness of subset sum to derive the hardness for $\ednx - \beta$ (for $d \geq 2$). We believe that~\cite{FKS16} present preliminary ideas which should allow for such an approach to lower bounds. Specifically, they note that any $\ednx$ is equal to a product of $d$ affine forms over the Boolean cube. While this is a useful direction, this approach apparently can work to derive lower bounds only for some specific $\beta$'s. Whereas, in \Cref{cor:NS-full-symmetric-deg-lower bound} we can obtain such lower bounds for all $\beta$'s that makes the instance unsatisfiable over Boolean assignments. 
On the other hand, assuming this approach could be made to work, a possible benefit is this: consider the case of two-axioms $\{f_1 = \edn - \beta_1, f_2 = \sum_{i=1}^n x_i - \beta_2\}$. If we can show that $\edn - \beta_1$ is in the ideal generated by $f_2$ above, then we will be able to obtain lower bounds for the interesting case of two axioms $\{f_1, f_2\}$.





\subsection{Vector Invariant Polynomials}
\label{sec:invariant-axioms-deg-lower-bounds}


Here we show hardness for an instance that is not a  subset sum variant (formally, it is not a substitution instance of $\sum_{i=1}^n x_i$, for $n=\omega(1)$). Our hard instance is inspired by  ideas from \emph{invariant theory}. 

\paragraph{Polynomial Invariants.} We start with a gentle introduction to polynomial invariants. For a detailed introduction please refer to~\cite{CoxLittleOShea15}. Let $\GLnF$ denote the set of all $n\times n$ matrices over the field $\mathbb{F}$. Let $A \in \GLnF$, then we can think of $A$ acting on the polynomials in the polynomial ring $\mathbb{F}[x_1, \ldots, x_n]$ as follows. Let $A = \left(a_{i,j}\right) \in \GLnF$ and $f(x_1, \ldots, x_n) \in \mathbb{F}[x_1, \ldots, x_n]$, then 
\[g(x_1, \ldots, x_n) = f(a_{1,1}x_1+ \ldots + a_{1,n}x_n, \ldots, a_{n,1}x_1 + \ldots + a_{n,n}x_n).\]
More compactly, let $\xbar = (x_1 ~x_2 \ldots ~ x_n)^{\text{T}}$, then $g(\xbar) = f(A \cdot \xbar)$. 

We say that a polynomial $f(\xbar)$ is \emph{invariant} under the action of a finite matrix group $G \subset \GLnF$, if $f(\xbar) = f(A\cdot \xbar)$ for every $A \in G$. A set of all polynomials that are invariant under $G$ is denoted by $\mathbb{F}[\xbar]^G$. 

As an example, consider $\mathcal{S}_n$, the set of all $n \times n$ permutation matrices. Then, $\mathbb{F}[\xbar]^{\mathcal{S}_n}$ is the set of all symmetric polynomials. This is arguably the most well-studied class of invariant polynomials. In the previous section, we studied exactly this class of polynomials. 

Here, we will consider a different class of invariant polynomials known as \emph{vector invariants}, which is  a well-studied class of invariant polynomials. 
Intuitively, vector invariants are a class of polynomials that are invariant under the action of \emph{a decomposable} group, that is, a group that can be written down as a direct sum (taken say $n$ times) of a smaller group. The action on the bigger group is then defined by \emph{diagonally extending} the action on the smaller group. See~\cite{Richman,CampbellHughes,DerksenKemper} for formal definitions of vector invariants and many interesting results about them. 

Here, we will define the specific vector invariants relevant for our hard instances.

\paragraph{Vector Invariant Polynomials and the Hard Instance.} 

Let $\ubar = (u_1 ~u_2)^T$ and $\vbar = (v_1 ~v_2)^T$. Let $R$ be a linear transformation that maps $u_1$ to $u_1$, $u_2$ to $u_2$, $v_1$ to $u_1+v_1$ and $v_2$ to $u_2+v_2$. That is, 
\[R = \left(\begin{array}{cccc}
     1 & 0 & 0 & 0\\
     0 & 1 & 0 & 0\\
     1 & 0 & 1 & 0\\
     0 & 1 & 0 & 1
\end{array}\right),\]
which has the property that 
\[(u_1, u_2, u_1+v_1, u_2+v_2)^T = R \times (u_1, u_2, v_1, v_2)^T.\]

This specific action is based on the vector invariants of $U_2(\mathbb{F})$ studied in an influential paper of~\cite{Richman}. (The definition of $U_2(\mathbb{F})$ is very similar to $R$ above and can be found in~\cite{Richman,CampbellHughes} or in~\cite{DerksenKemper}.) 

Let $p(\ubar,\vbar) \in \mathbb{F}[\ubar, \vbar]$ be equal to $u_1v_2 - v_1u_2$, that is, it is the determinant of the following matrix.
\[
\left(\begin{array}{cc}
    u_1 & u_2 \\
    v_1 & v_2
\end{array}\right)
\]

Then, it is easy to see that $p(\ubar, \vbar) = p(R \cdot \ubar, \vbar)$. That is, if we apply the linear transformation given by $R$ to the variables of $p(\ubar, \vbar)$, then the polynomial stays invariant. This polynomial on $4$ variables is our main ingredient in the hard instance. We now describe our hard instance.


Let $\mathbb{F}$  be a field of characteristic greater than or equal to $5$ for the rest of this section\footnote{The degree lower bound in this section will work for characteristics $3$ or more, but for size lower bounds, we will need characteristics $5$ or more. For the sake of uniformity, we assume that the characteristic is $5$ or more throughout.}. Let $\xbar := \{x_1,x_2,\ldots, x_{2n}\}$ and $\ybar := \{y_1, y_2, \ldots, y_{2n}\}$ 
be commuting variables over $\mathbb{F}$.  Let 
$$\widetilde{Q}(\xbar, \ybar) := \left(\prod_{i \in [2n],~i: \text{ odd}} (x_iy_{i+1} - y_ix_{i+1})\right).
$$

Informally, $\widetilde{Q}(\xbar, \ybar)$ is obtained by taking $n$ copies of the polynomial $p(\ubar, \vbar)$ (with variable renaming) and extending the action of $R$ on the polynomial thus obtained. 

Finally, the hard instance is defined as follows.

\begin{equation}\label{eq:Qxy}
Q(\xbar, \ybar) := \widetilde{Q}(\xbar, \ybar) - \beta,
\end{equation}
where $\beta \in \mathbb{F}$. 
This polynomial has several interesting properties, summarized as follows:
\begin{fact}
\label{fac:prop-Q}
    The polynomial $Q(\xbar, \ybar)$ defined above has the following properties.
    \begin{enumerate}
        \item $Q(\xbar, \ybar)$ is a multilinear polynomial of degree $2n$. 
        \item The polynomial is invariant under the following action: for every odd $i\in[2n]$ 
(it is sufficient for the present work to think of
actions, denoted $\hookrightarrow$, as substitutions
of variables by polynomials)
$$
x_{i}\hookrightarrow x_{i}~~~~~~~~~~~~ 
x_{i + 1} \hookrightarrow x_{i+1}~~~~~~~~~~
y_{i}\hookrightarrow x_i + y_i~~~~~~~~~  
y_{i}\hookrightarrow x_{i+1} + y_{i+1}\,.
$$ 

        \item \label{fac:matrix-item} For $i \in [2n]$ and $i$ odd, let $a_i := (x_i y_{i+1} - y_ix_{i+1})$. Then, $a_i$ is the determinant of the matrix $M_i = \left(\begin{array}{cc}
             x_i & x_{i+1} \\
             y_i & y_{i+1}
        \end{array}\right).$ Moreover, $a_i \in \{-1,0,1\}$ over the Boolean cube. 
        \item $Q(\xbar, \ybar)$ is not satisfiable as long as $\beta$ is greater than or equal to $2$. 
    \end{enumerate}
\end{fact}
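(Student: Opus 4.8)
The four items are largely direct consequences of the definition of $Q(\xbar,\ybar) = \widetilde Q(\xbar,\ybar)-\beta$ together with the determinantal structure of each factor $a_i = x_iy_{i+1}-y_ix_{i+1}$, so the plan is to verify them one at a time. For item~1, observe that $\widetilde Q$ is a product of $n$ polynomials $a_i$, each on a disjoint set of four variables $\{x_i,x_{i+1},y_i,y_{i+1}\}$ for $i$ odd, and each $a_i$ is multilinear of degree $2$ (it is a $2\times 2$ determinant with distinct variable entries); since the variable sets are pairwise disjoint, the product is multilinear of degree $2n$, and subtracting the constant $\beta$ changes neither multilinearity nor degree. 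For item~3, the matrix $M_i$ is displayed and $\det M_i = x_iy_{i+1}-y_ix_{i+1} = a_i$ by the formula for a $2\times 2$ determinant; over the Boolean cube each of $x_i,x_{i+1},y_i,y_{i+1}$ is in $\{0,1\}$, so $x_iy_{i+1}$ and $y_ix_{i+1}$ are each in $\{0,1\}$ and their difference lies in $\{-1,0,1\}$.

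For item~2 (invariance), the substitution acts independently on each block $\{x_i,x_{i+1},y_i,y_{i+1}\}$ with $i$ odd, fixing $x_i,x_{i+1}$ and sending $y_i\mapsto x_i+y_i$, $y_{i+1}\mapsto x_{i+1}+y_{i+1}$. It therefore suffices to check that each factor $a_i$ is fixed: substituting, $x_i(x_{i+1}+y_{i+1}) - (x_i+y_i)x_{i+1} = x_ix_{i+1} + x_iy_{i+1} - x_ix_{i+1} - y_ix_{i+1} = x_iy_{i+1}-y_ix_{i+1} = a_i$. Hence $\widetilde Q = \prod_i a_i$ is fixed, and so is $Q = \widetilde Q - \beta$. (This is exactly the reflection of the invariance of $p(\ubar,\vbar)$ under $R$ noted before the statement, one block at a time; note that the excerpt's display has a typo writing ``$y_{i}\hookrightarrow x_{i+1}+y_{i+1}$'' for the last map, which should read $y_{i+1}\hookrightarrow x_{i+1}+y_{i+1}$, and this is what is used.)

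For item~4 (unsatisfiability), we must show $\widetilde Q(\xbar,\ybar)=\beta$ has no solution in $\{0,1\}^{4n}$ when $\beta\ge 2$ (more precisely when $\beta\notin\{-1,0,1\}$, as used elsewhere). By item~3, on the Boolean cube each $a_i\in\{-1,0,1\}$, so $\widetilde Q = \prod_{i\ \mathrm{odd}} a_i$ is a product of values in $\{-1,0,1\}$ and hence itself lies in $\{-1,0,1\}$; since $\beta\notin\{-1,0,1\}$ (in particular $\beta\ge 2$), the equation $\widetilde Q = \beta$ cannot hold over $\bits^{4n}$, so $Q(\xbar,\ybar)=0$ together with the Boolean axioms is unsatisfiable. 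I do not anticipate a genuine obstacle here; the only point requiring minor care is bookkeeping with the indexing convention (``$i$ odd'', blocks of size $4$, total $4n$ Boolean variables) and making sure item~4 is stated for the range of $\beta$ actually needed downstream ($\beta\notin\{-1,0,1\}$), which is what the size lower bounds in the sequel invoke.
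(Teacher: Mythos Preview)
Your proposal is correct. The paper does not supply a proof of this fact at all: it simply states the four items as self-evident properties following directly from the definition of $Q(\xbar,\ybar)$, and your verification is exactly the routine check the reader is expected to carry out. You also correctly identify the typo in the displayed action (the last map should be $y_{i+1}\hookrightarrow x_{i+1}+y_{i+1}$, as confirmed by its earlier appearance in \Cref{eq:433:30}).
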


\subsubsection{Degree lower bound for $Q(\xbar, \ybar)$}
\label{sec:degreeLB-Q}

\paragraph{Notation.} For any set $A \subseteq [2n]$, let $\xbar_A$ denote the monomial $\prod_{i \in A} x_i$. Let $\widetilde{\xbar_{{A}}}$ denote the product $\prod_{i \in A} (1-x_i)$. Similarly, let $\ybar_A = \prod_{i \in A} y_i$ and $\widetilde{\ybar_{{A}}}$ denote the product $\prod_{i \in A} (1-y_i)$. 
Let $\mathbbm{1}_A$ denote a $2n$-length vector in which the $i$th bit is $1$ 
if and only if $i \in A$. 
For a monomial $m$ and a polynomial $f(\xbar, \ybar)$, let $\coeff{m}(f(\xbar, \ybar))$ denote the coefficient of the monomial $m$ in $f(\xbar, \ybar)$. 

\paragraph{The degree lower bound.} We define $f(\xbar, \ybar)$ to be the unique multilinear polynomial  that it is equal to $1/Q(\xbar, \ybar)$ over $\{0,1\}^{2n}$, i.e., over the Boolean cube of dimension $2n$. We will show that it contains a degree-$2n$ monomial with a non-zero coefficient. This will show that the degree of $f(\xbar, \ybar)$ is at least $2n$. 


Let $S = \{i \in [2n] \mid i \text { is odd}\}$ and $T = \{i \in [2n] \mid i \text { is even}\}$. 


\begin{lemma}
\label{lem:xsyt-C}
    There exists a $\beta > 2$ such that the coefficient of the monomial  $\xbar_S\cdot \ybar_T$ in $f(\xbar, \ybar)$ is non-zero. 
\end{lemma}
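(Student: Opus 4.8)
The strategy is to compute the coefficient of $\xbar_S \cdot \ybar_T$ in the multilinear polynomial $f(\xbar,\ybar)$ directly by a weighted sum over the Boolean cube, exploiting the product structure of $\widetilde Q$ over the $n$ blocks of four variables $\{x_i,x_{i+1},y_i,y_{i+1}\}$ with $i$ odd. Recall that the coefficient of a multilinear monomial $m = \prod_{j \in U} z_j$ in the unique multilinear polynomial representing a function $h:\bits^N\to\F$ is $\sum_{\valpha \in \bits^N} (-1)^{|U| - |\supp(\valpha)|}\,[\supp(\valpha)\subseteq U]\,h(\valpha)$, i.e. an inclusion–exclusion (Möbius) sum supported on assignments that are zero outside $U$. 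Here $U = S \cup T' $ where for the target monomial the $x$-support is exactly the odd indices and the $y$-support is exactly the even indices; so we only sum over assignments $\valpha$ in which $x_i$ can be $1$ only for $i$ odd and $y_{i+1}$ can be $1$ only for $i$ odd (equivalently $x_{i+1}=0$ and $y_i=0$ are forced for all odd $i$). The value being summed is $1/Q(\xbar,\ybar) = 1/(\widetilde Q - \beta)$.

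The key simplification is that once we restrict to assignments where $x_{i+1}=y_i=0$ for every odd $i$, each block determinant $a_i = x_i y_{i+1} - y_i x_{i+1}$ collapses to $a_i = x_i y_{i+1} \in \{0,1\}$, a single Boolean product. Thus over the relevant assignments $\widetilde Q = \prod_{i \text{ odd}} x_i y_{i+1}$, which is $1$ exactly when $x_i = y_{i+1} = 1$ for all odd $i$ (the single assignment giving the target monomial itself), and $0$ otherwise. So the Möbius sum splits into: the unique "full" assignment contributing $(-1)^0 \cdot \frac{1}{1-\beta}$, and all the other $2^{2n}-1$ sub-assignments (each missing at least one of the $x_i=1$ or $y_{i+1}=1$ choices) contributing $\pm\frac{1}{0-\beta} = \mp\frac{1}{\beta}$ with sign $(-1)^{2n - |\supp|}$. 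I would then evaluate $\sum_{\valpha \subsetneq \text{full}} (-1)^{2n-|\supp(\valpha)|}\left(-\frac1\beta\right)$ by a binomial identity — over the $2n$ free coordinates in $U$, the signed count of all subsets is $0$, so removing the full one leaves $-(-1)^0 = -1$ times the weight, giving a clean closed form. Putting it together the coefficient should come out to something like $\frac{1}{1-\beta} + \frac{1}{\beta}$ (up to sign and a possible off-by-one in the inclusion–exclusion bookkeeping), which is $\frac{1}{\beta(1-\beta)}$ — nonzero precisely when $\beta \notin\{0,1\}$, and in particular for any $\beta>2$.

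**Main obstacle.** The delicate part is getting the inclusion–exclusion bookkeeping exactly right: which coordinates are genuinely "free" in the Möbius sum for this particular monomial, how the forced-zero coordinates interact with the block structure, and tracking the sign $(-1)^{|U|-|\supp(\valpha)|}$ carefully so that the binomial cancellation is applied correctly. One must be careful that the monomial $\xbar_S\cdot\ybar_T$ has degree exactly $2n$ (it uses $n$ of the $x$'s and $n$ of the $y$'s), so $|U| = 2n$, and that the only assignment $\valpha$ with $\supp(\valpha) = U$ is the one making every block determinant equal $1$. I expect the computation itself to be short once set up; the risk is purely in the combinatorial setup, and in pinning down the exact value of $\beta$ (the statement only asks for existence of some $\beta>2$, so any $\beta>2$ with $\beta\notin\{0,1\}$ — i.e. all of them — works, and I would simply exhibit the closed-form coefficient $\frac{1}{\beta(1-\beta)}\neq 0$).
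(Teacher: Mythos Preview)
Your proposal is correct and follows essentially the same approach as the paper: both use the M\"obius/inclusion--exclusion formula for the coefficient of a multilinear monomial, restrict to assignments supported on $S\cup T$, observe that on this support each block determinant collapses to $x_iy_{i+1}$ so that $\widetilde Q=1$ only at the full assignment and $0$ elsewhere, and then split the sum into the single $\frac{1}{1-\beta}$ term plus a signed sum of $-\frac{1}{\beta}$ terms. The only minor difference is that the paper's proof of this particular lemma stops at the form $-\tfrac{C}{\beta}+\tfrac{1}{1-\beta}$ with $C\in\mathbb Z$ and picks $\beta=3$, deferring the exact evaluation $C=-1$ (yielding $\tfrac{1}{\beta(1-\beta)}$) to the next lemma; you go straight to the exact value via the binomial identity, which is fine and in fact anticipates what the paper does immediately afterward.
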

\begin{proof}
    From the uniqueness of the evaluations of multilinear polynomials over the Boolean cube, we know that 
    $$f(\xbar, \ybar) = \sum_{A \subseteq [2n], B \subseteq [2n]} f(\mathbbm{1}_A, \mathbbm{1}_B ) \xbar_A \cdot \widetilde{\xbar_{\overline{A}}} \cdot \ybar_B \cdot \widetilde{\ybar_{\overline{B}}},$$
    where $\mathbbm{1}_A$ is the indicator vector of the set $A$.

    First, note that to analyse the coefficient of $\xbar_S\cdot \ybar_T$ in $f(\xbar, \ybar)$, we can set $x_i =0$ for $i \notin S$ and similarly, $y_i=0$ for $i \notin T$. This is because, setting variables outside the set $S,T$ to zero does not change the coefficient of $\xbar_S\cdot \ybar_T$ in $f(\xbar, \ybar)$. 
Now, notice that if $A \nsubseteq S$, then $\xbar_A$ will become zero under the above assignment. Similarly, if $B \nsubseteq T$ then $\ybar_B$ will be set to zero. Thus, it suffices to sum over $A \subseteq S$ and $B \subseteq T$, if we want to understand the coefficient of $\xbar_S \cdot \ybar_T$. 
    Overall, we get 
    %
    $$\coeff{\xbar_S \cdot \ybar_T}\left(f(\xbar, \ybar)\right) = \coeff{\xbar_S\cdot \ybar_T}\left(\sum_{A \subseteq S,~B \subseteq T} f(\mathbbm{1}_A, \mathbbm{1}_B ) \xbar_A \cdot \widetilde{\xbar_{S\setminus A}} \cdot \ybar_B \cdot \widetilde{\ybar_{T \setminus B}}\right)\,. $$

    Observe that, $f(\mathbbm{1}_S, \mathbbm{1}_T) = \frac{1}{1-\beta}$, because $Q(\mathbbm{1}_S, \mathbbm{1}_T)=1-\beta$, and for any $A \subsetneq S$ or $B \subsetneq T$, $f(\mathbbm{1}_A, \mathbbm{1}_B) = -\frac{1}{\beta}$, because $Q(\mathbbm{1}_A, \mathbbm{1}_B)=-\beta$ 
    (since at least one of the $u_i$s is zeroed out by the assignment; see \Cref{fac:matrix-item} above). 

    Hence, we can now simplify the above summation as follows.
    \begin{align*}
        \coeff{\xbar_S \cdot \ybar_T}\left(f(\xbar, \ybar)\right) & = \coeff{\xbar_S\cdot \ybar_T}\left(\sum_{A \subseteq S, B \subseteq T} f(\mathbbm{1}_A, \mathbbm{1}_B ) \xbar_A \cdot \widetilde{\xbar_{S\setminus A}} \cdot \ybar_B \cdot \widetilde{\ybar_{T \setminus B}}\right)\\
        & = -\frac{1}{\beta} \cdot C + \frac{1}{1-\beta},
    \end{align*}
    where $C \in \mathbb{Z}$. Note that for as long as $\frac{\beta}{1-\beta}$ is not integral, the above number is non-zero for any integral value of $C$. We can ensure this by appropriately picking a value for $\beta$. For example, $\beta=3$ will work here.  This finishes the proof of the lemma.  
\end{proof}

In fact, we can compute the coefficient of $\xbar_X\cdot \ybar_T$ exactly. This understanding about the exact value of the coefficient will crucial for lifting the degree lower bounds to obtain size lower bounds in subsequent sections.

\paragraph{Computing the exact coefficient of $\xbar_S\cdot \ybar_T$} Recall, here $S = \{i \in [2n] \mid i \text{ is odd}\}$ and $T = [2n]\setminus S$. 
    In the following claim, we obtain the exact coefficient of $\xbar_S\cdot \ybar_T$. We will use this calculation again for \Cref{it:787} in \Cref{lem:four-conditions}.

    \begin{lemma}
    \label{lem:xsyt}
        The coefficient of $\xbar_S\cdot \ybar_T$ in $f(\xbar, \ybar)$ is equal to $\frac{1}{\beta} + \frac{1}{1-\beta}$. That is, the constant $C$ in the above computation is $-1$. 
    \end{lemma}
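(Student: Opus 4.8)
The claim refines Lemma~\ref{lem:xsyt-C}: there we deduced that $\coeff{\xbar_S\cdot\ybar_T}(f)= -\frac{1}{\beta}\cdot C + \frac{1}{1-\beta}$ for some integer $C$; now we must pin down $C=-1$, i.e. the coefficient is exactly $\frac1\beta+\frac1{1-\beta}$. The starting point is the same multilinear expansion restricted to $A\subseteq S$, $B\subseteq T$:
\[
\coeff{\xbar_S\cdot\ybar_T}\!\left(f(\xbar,\ybar)\right)
 = \coeff{\xbar_S\cdot\ybar_T}\!\left(\sum_{A\subseteq S,\ B\subseteq T} f(\mathbbm 1_A,\mathbbm 1_B)\,\xbar_A\cdot\widetilde{\xbar_{S\setminus A}}\cdot\ybar_B\cdot\widetilde{\ybar_{T\setminus B}}\right).
\]
In this sum only the term $A=S$, $B=T$ contributes the pure monomial $\xbar_S\ybar_T$ directly; every other term contributes it through the expansion of the factors $\widetilde{\xbar_{S\setminus A}}=\prod_{i\in S\setminus A}(1-x_i)$ and $\widetilde{\ybar_{T\setminus B}}=\prod_{i\in T\setminus B}(1-y_i)$, each of which contributes a sign $(-1)^{|S\setminus A|+|T\setminus B|}$ to the coefficient of the top monomial. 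So
\[
\coeff{\xbar_S\cdot\ybar_T}(f)
= \sum_{A\subseteq S,\ B\subseteq T} (-1)^{|S\setminus A|+|T\setminus B|}\, f(\mathbbm 1_A,\mathbbm 1_B).
\]

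**Main step: evaluating the signed sum.** Substitute the two values of $f(\mathbbm 1_A,\mathbbm 1_B)$ from Lemma~\ref{lem:xsyt-C}: it equals $\frac1{1-\beta}$ when $(A,B)=(S,T)$, and $-\frac1\beta$ for every other pair. Split the sum accordingly, writing $-\frac1\beta$ for all pairs and then correcting the $(S,T)$ term:
\[
\coeff{\xbar_S\cdot\ybar_T}(f)
= \left(-\tfrac1\beta\right)\!\!\sum_{A\subseteq S,\ B\subseteq T}\!\! (-1)^{|S\setminus A|+|T\setminus B|}
 \;+\; \left(\tfrac1{1-\beta}-\left(-\tfrac1\beta\right)\right)\cdot (-1)^{0}.
\]
The full signed sum factors as $\bigl(\sum_{A\subseteq S}(-1)^{|S\setminus A|}\bigr)\bigl(\sum_{B\subseteq T}(-1)^{|T\setminus B|}\bigr)$, and each factor is $\sum_{k=0}^{|S|}\binom{|S|}{k}(-1)^{|S|-k}=(1-1)^{|S|}=0$ since $|S|=|T|=n\ge 1$. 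Hence the first term vanishes and we are left with $\coeff{\xbar_S\cdot\ybar_T}(f)=\frac1{1-\beta}+\frac1\beta$, which is exactly the asserted value; comparing with $-\frac1\beta\cdot C+\frac1{1-\beta}$ gives $C=-1$.

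**Where the care is needed.** The only subtle point is bookkeeping of which terms of the expansion land on the monomial $\xbar_S\ybar_T$ and with what sign — one must check that in $\xbar_A\cdot\widetilde{\xbar_{S\setminus A}}$ the coefficient of $\xbar_S$ (the unique degree-$|S|$ monomial available once we have restricted to variables in $S$) is precisely $(-1)^{|S\setminus A|}$, i.e. that the ``all-ones'' choice in each $(1-x_i)$ factor is the one contributing the top monomial, and similarly for the $\ybar$-part; these two contributions multiply. The rest is the binomial identity $(1-1)^n=0$. Note this argument reuses Lemma~\ref{lem:xsyt-C}'s case analysis of $Q(\mathbbm 1_A,\mathbbm 1_B)$ ($=1-\beta$ at the top, $=-\beta$ otherwise, by Fact~\ref{fac:prop-Q}\eqref{fac:matrix-item}) verbatim, so no new facts about $Q$ are required; and since $\beta=3$ (say) makes $\frac1\beta+\frac1{1-\beta}=\frac13-\frac12\ne 0$, this also re-derives the non-vanishing in Lemma~\ref{lem:xsyt-C}.
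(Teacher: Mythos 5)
Your proof is correct, and it reorganizes the computation in a way that is cleaner than the paper's. Both arguments start from the same identity
\[
\coeff{\xbar_S\ybar_T}(f)=\sum_{A\subseteq S,\ B\subseteq T}(-1)^{|S\setminus A|+|T\setminus B|}\,f(\mathbbm 1_A,\mathbbm 1_B),
\]
and both use the same two evaluations $f(\mathbbm 1_S,\mathbbm 1_T)=\frac{1}{1-\beta}$, $f(\mathbbm 1_A,\mathbbm 1_B)=-\frac1\beta$ otherwise, and a single alternating-binomial fact. But where the paper performs a four-way case split ($A\subsetneq S$ or $A=S$, crossed with $B\subsetneq T$ or $B=T$) and evaluates each piece separately against the identity $\sum_{j<n}\binom nj(-1)^{n-j}=-1$ (their Fact~\ref{fact:binomial}), you instead pretend every pair contributes $-\frac1\beta$, observe that the resulting full signed sum factors as $\bigl(\sum_{A\subseteq S}(-1)^{|S\setminus A|}\bigr)\bigl(\sum_{B\subseteq T}(-1)^{|T\setminus B|}\bigr)=(1-1)^{|S|}(1-1)^{|T|}=0$, and then correct once for the $(S,T)$ term. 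That replaces three non-trivial sub-computations with one vanishing factorization plus a one-term correction. The paper's less compact split does have a purpose you may not have seen: it is reused verbatim in Lemma~\ref{lem:four-conditions}, Part~\ref{it:787}, where for general $\sigma$ only the ``$A=S_\sigma,\,B=\overline{S_\sigma}$'' term changes (to $\frac1{\pm 1-\beta}$ depending on the parity of $\sigma$), so the four-way bookkeeping amortizes across that generalization; your add-and-subtract version adapts just as easily (swap the correction term for $\frac{1}{\pm 1-\beta}-(-\frac1\beta)$), so nothing is lost.
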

    \begin{proof}
        In order to understand $C$, let us simplify the term we want to analyse. 
        We will use the following simple fact about binomial coefficients in the proof.
        \begin{fact}
        \label{fact:binomial}
            $$\sum_{0 \leq j \leq n-1} {{n}\choose{j}} (-1)^{n-j} = -1\,.$$
        \end{fact}
        We have the following
        \begin{align}
            & \coeff{\xbar_S\cdot \ybar_T}\left(\sum_{A \subseteq S, B \subseteq T} f(\mathbbm{1}_A, \mathbbm{1}_B ) \xbar_A \cdot \widetilde{\xbar_{S\setminus A}} \cdot \ybar_B \cdot \widetilde{\ybar_{T \setminus B}}\right)  \nonumber \\
            = ~&\coeff{\xbar_S\cdot \ybar_T}\left(\sum_{A \subsetneq S, B \subsetneq T} f(\mathbbm{1}_A, \mathbbm{1}_B ) \xbar_A \cdot \widetilde{\xbar_{S\setminus A}} \cdot \ybar_B \cdot \widetilde{Y_{T \setminus B}}\right) + \coeff{\xbar_S\cdot \ybar_T}\left(\frac{1}{1-\beta} \cdot \xbar_S \cdot \ybar_T\right) \nonumber\\
            + ~& \coeff{\xbar_S\cdot \ybar_T}\left(\sum_{B \subsetneq T} f(\mathbbm{1}_S, \mathbbm{1}_B ) \xbar_S\cdot \ybar_B \cdot \widetilde{\ybar_{T \setminus B}}\right) + \coeff{\xbar_S\cdot \ybar_T}\left(\sum_{A \subsetneq S} f(\mathbbm{1}_A, \mathbbm{1}_T ) \xbar_A\cdot  \widetilde{\xbar_{S \setminus A}} \cdot \ybar_T \right) \label{eq:term2}  \\
            = ~ &\coeff{\xbar_S\cdot \ybar_T}\left(\sum_{A \subsetneq S, B \subsetneq T} \frac{-1}{\beta} \cdot \xbar_A \cdot \widetilde{\xbar_{S\setminus A}} \cdot \ybar_B \cdot \widetilde{\ybar_{T \setminus B}}\right) + \left(\frac{1}{1-\beta} \right) \nonumber\\
            + ~& \coeff{\xbar_S\cdot \ybar_T}\left(\sum_{0 \leq |B| \leq n-1, B \subsetneq T} -\frac{1}{\beta} \cdot  \xbar_S\cdot \ybar_B \cdot \widetilde{\xbar_{T \setminus B}}\right) + \coeff{\xbar_S\cdot \ybar_T}\left(\sum_{0 \leq |A| \leq n-1, A \subsetneq S} -\frac{1}{\beta} \cdot \xbar_A\cdot  \widetilde{\xbar_{S \setminus A}} \cdot \ybar_T \right). \label{eq:term3-4}
        \end{align}

        Observe that \Cref{eq:term2} has $4$ terms in it. The second term arises from taking $A = S$ and $B = T$. Due to our choice of $S,T$, the coefficient of this term is simply $1/(1-\beta)$.  
        
        Let us now analyse Terms 1, 3, and 4 from \Cref{eq:term3-4}.

        \begin{align*}
             \text{Term 1} = ~&\coeff{\xbar_S\cdot \ybar_T}\left(\sum_{A \subsetneq S, B \subsetneq T} \frac{-1}{\beta} \cdot \xbar_A \cdot \widetilde{\xbar_{S\setminus A}} \cdot \ybar_B \cdot \widetilde{\ybar_{T \setminus B}}\right) \\
             = ~&\coeff{\xbar_S\cdot \ybar_T}\left(\sum_{0 \leq |A| \leq n-1, ~0 \leq |B| \leq n-1, ~A\subsetneq S, ~B\subsetneq T}\frac{-1}{\beta} \cdot \xbar_A \cdot \widetilde{\xbar_{S\setminus A}} \cdot \ybar_B \cdot \widetilde{\ybar_{T \setminus B}}\right) \\
            = ~ &\coeff{\xbar_S\cdot \ybar_T}\left(\frac{-1}{\beta} \cdot \left(\sum_{0 \leq |A| \leq n-1,~A\subsetneq S}\xbar_A \cdot \widetilde{\xbar_{S\setminus A}}\right)\cd\left( \sum_{0 \leq |B| \leq n-1, B\subsetneq T} \ybar_B \cdot \widetilde{\ybar_{T \setminus B}}\right)\right)\\
            = ~ &\left(\frac{-1}{\beta} \cdot \left(\sum_{0 \leq j \leq n-1} {{n}\choose{j}} (-1)^{n-j} \right)\cd\left(\sum_{0 \leq j \leq n-1} {{n}\choose{j}} (-1)^{n-j}\right)\right) \\
            = ~ & \left(\frac{-1}{\beta} \cdot \left(\sum_{0 \leq j \leq n-1} {{n}\choose{j}} (-1)^{n-j} \right)^2\right) \\
            =~& -\frac{1}{\beta} \quad \quad \quad~~~ \text{ (using~\Cref{fact:binomial})}.
        \end{align*}

        Similarly, 
        \begin{align*}
            \text{Term 2} = ~& \coeff{\xbar_S\cdot \ybar_T}\left(\sum_{0 \leq |B| \leq n-1, B \subsetneq T} -\frac{1}{\beta} \cdot  \xbar_S\cdot \ybar_B \cdot \widetilde{\ybar_{T \setminus B}}\right) \\
            = ~& \coeff{\ybar_T}\left(\sum_{0 \leq |B| \leq n-1, B \subsetneq T} -\frac{1}{\beta} \cdot  \ybar_B \cdot \widetilde{\ybar_{T \setminus B}}\right) \\
            = ~& -\frac{1}{\beta} \cdot \left( \sum_{0 \leq j \leq n-1}  {{n}\choose{j}} (-1)^{n-j}\right) = \frac{1}{\beta} \quad \quad \quad~~~ \text{ (using~\Cref{fact:binomial})}.
        \end{align*}
        And by symmetry, we also get that 
        \[\text{Term 3} = \coeff{\xbar_S\cdot \ybar_T}\left(\sum_{0 \leq |A| \leq n-1, A \subsetneq S} -\frac{1}{\beta} \cdot  \xbar_A \cdot \widetilde{\xbar_{S \setminus A}} \cdot \ybar_T\right) = \frac{1}{\beta}\,. \]

        And now, using all the above values in the computation, we get that 
        \[\coeff{\xbar_S \ybar_T} \left(f(\xbar,\ybar)\right) = \frac{1}{\beta} + \frac{1}{1-\beta} = \frac{1}{\beta (1-\beta)}\,.\]
    \end{proof}

As a corollary of this lemma, we get that the refutation of $Q(\xbar, \ybar)$ ought to have degree at least $2n$. Specifically, we get the following statement. 
\begin{theorem}
    \label{thm:ns-degreeLB-Q}
    Let $\mathbb{F}$ be any field of characteristic $5$ or more and let $\beta \notin \{-1, 0, 1\}$. Then, $Q(\xbar, \ybar)$, $\{x_i^2-x_i\}_i$, and   $\{y_i^2-y_i\}_i$ are unsatisfiable and any polynomial $f(\xbar, \ybar)$ with $f(\xbar, \ybar) = 1/Q(\xbar, \ybar)$ for $\xbar \in \{0,1\}^{2n}$ and $\ybar \in \{0,1\}^{2n}$, satisfies that degree of $f(\xbar, \ybar)$ is at least $2n$.  
\end{theorem}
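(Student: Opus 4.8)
The plan is to read Theorem~\ref{thm:ns-degreeLB-Q} off directly from the coefficient computation already carried out in Lemma~\ref{lem:xsyt}; no fresh combinatorics is needed beyond a short packaging argument. First I would dispatch unsatisfiability: by item~\ref{fac:matrix-item} of Fact~\ref{fac:prop-Q}, each factor $a_i = x_iy_{i+1} - y_ix_{i+1}$ of $\widetilde Q(\xbar,\ybar)$ evaluates into $\{-1,0,1\}$ at every Boolean point, hence so does the product $\widetilde Q(\xbar,\ybar) = \prod_{i \text{ odd}} a_i$; thus for $\beta \notin \{-1,0,1\}$ the polynomial $Q(\xbar,\ybar) = \widetilde Q(\xbar,\ybar) - \beta$ never vanishes on the Boolean cube. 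Since the axioms $\{x_i^2-x_i=0\}_i$ and $\{y_i^2-y_i=0\}_i$ pin any common solution to the Boolean cube, the system is unsatisfiable, and in particular $1/Q(\xbar,\ybar)$ is a well-defined function on the Boolean cube, so by Fact~\ref{fact:multilinearization} there is a unique multilinear polynomial $f(\xbar,\ybar)$ agreeing with it there.

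Next I would reduce to that multilinear representative and invoke Lemma~\ref{lem:xsyt}. If $g(\xbar,\ybar)$ is any polynomial agreeing with $1/Q$ on the Boolean cube, then $\ml(g)$ is multilinear, still agrees with $1/Q$ on the cube, hence equals $f$ by uniqueness, and $\deg g \ge \deg \ml(g) = \deg f$ by Fact~\ref{fact:multilinearization}; so it suffices to bound $\deg f$ from below. By Lemma~\ref{lem:xsyt} the coefficient of $\xbar_S \cdot \ybar_T$ in $f$ equals $\frac{1}{\beta} + \frac{1}{1-\beta} = \frac{1}{\beta(1-\beta)}$, a nonzero element of $\F$ since $\beta \ne 0$ and $\beta \ne 1$ are both invertible in the field $\F$ (the assumption of characteristic at least $5$ is what ensures some $\beta \notin \{-1,0,1\}$ exists). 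As $S$ and $T$ are the odd and even indices in $[2n]$ we have $|S| = |T| = n$, so $\xbar_S \cdot \ybar_T$ has total degree $2n$; hence $\deg f \ge 2n$ and therefore $\deg g \ge 2n$. Recalling that a Nullstellensatz refutation of $Q(\xbar,\ybar)$ has the form $g(\xbar,\ybar)\cdot Q(\xbar,\ybar) + \sum_i h_i (x_i^2-x_i) + \sum_i h'_i (y_i^2-y_i) = 1$ with $g$ agreeing with $1/Q$ over the Boolean cube, this is precisely the claimed degree lower bound, completing the theorem.

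The hard part is not in this theorem but upstream of it, in the exact coefficient evaluation of Lemmas~\ref{lem:xsyt-C}--\ref{lem:xsyt}: one expands $f$ in the interpolation basis $\{\xbar_A\,\widetilde{\xbar_{\overline A}}\,\ybar_B\,\widetilde{\ybar_{\overline B}}\}_{A,B}$, uses that $f(\mathbbm 1_A,\mathbbm 1_B)$ takes only the two values $-1/\beta$ (when $(A,B) \ne (S,T)$) and $1/(1-\beta)$ (when $(A,B) = (S,T)$), and resums via the binomial identity of Fact~\ref{fact:binomial} to pin down the constant. Once that computation is available, Theorem~\ref{thm:ns-degreeLB-Q} is the two-line corollary above, so I would present it exactly in that form rather than redoing any work.
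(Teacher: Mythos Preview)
Your proposal is correct and matches the paper's approach: the paper presents Theorem~\ref{thm:ns-degreeLB-Q} as an immediate corollary of Lemma~\ref{lem:xsyt} (the exact coefficient computation), and your write-up supplies exactly the packaging details the paper leaves implicit, namely unsatisfiability via Fact~\ref{fac:prop-Q} and the reduction to the multilinear representative via Fact~\ref{fact:multilinearization}. One minor remark: your parenthetical about the characteristic~$\ge 5$ assumption is slightly off-target---the nonvanishing of $\tfrac{1}{\beta(1-\beta)}$ follows directly from $\beta\notin\{0,1\}$ regardless of characteristic, and the paper itself notes (in a footnote) that characteristic~$\ge 3$ already suffices for this degree bound, with~$\ge 5$ only needed later for the size bounds---but this does not affect the correctness of your argument.
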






\section{Lifting Degree-to-Size $\mathbb{I}$: Symmetric Instances}

In this section we show how to use Nullstellensatz degree
lower bounds to obtain size lower bounds on a stronger proof system. In this sense, we ``lift'' a weak lower bound to a lower bound against a stronger  model. This is done using a \emph{gadget} $g$, or a \emph{lift}, which is simply a substitution in the original hard instance. Namely, given a polynomial $f(\vx)\in\F[\vx]$ we define a new polynomial $f'(\vy):=f(g(x_1),\dots,g(x_n))$, with $g(x_i)\in\F[\vy]$, for all $i\in[n]$. 


Recall the functional lower bound method
which is a reduction from algebraic circuit lower bounds to proof complexity lower bounds developed in \cite{FSTW21} shown above in 
\Cref{def:single-functional-lower-bound-method}.

\subsection{Size Lower Bounds for Symmetric  Instances via Lifting}
\label{Size Bounds for General Symmetric Hard Instance}

Here we  show how to lift the Nullstellensatz degree 
lower bound in \Cref{cor:NS-full-symmetric-deg-lower bound} to size lower bounds. 
In particular, we show that the lifting used 
in \cite{FSTW21} on the subset sum $\sum_i x_i -\beta$,
applies to \emph{every} unsatisfiable symmetric polynomial.

Henceforth, in this section we will assume $\F[\vx]$ is
equipped with some monomial order $\prec$.  
Unlike \cite{FSTW21} wherein the IPS proof size 
lower bound argument holds for \emph{any} chosen monomial order,
 our argument uses monomial orders that respect 
degree, i.e., $\deg(M)>\deg(N) \Rightarrow M \succ N$.
For concreteness, one can consider the 
graded lexicographic ordering on monomials 
\textbf{grlex} (see Cox, Little 
and O'Shea~\cite[Definition 5, page 58]{CoxLittleOShea15}).
 
\subsubsection{roABP-IPS Lower Bounds in Fixed Order}
\label{sec:roABP-IPS-lwbd-fxd-ord}

Here we show an exponential  lower bound 
against \roAlbIPS\ for any sufficiently low degree 
symmetric polynomial, where the roABPs are in a \emph{fixed} order of variables.  In the next section we extend this to a lower bound for \emph{every} order of variables.

For $\vx,\vy$ variables, with $|\vx|=|\vy|=n$, 
we use $\vx\circ\vy$ to denote the entry-wise product $(x_1 y_1,\dots,x_n y_n)$.
In other words, the \emph{gadget} we use is the 
mapping 
$$
x_i\mapsto x_iy_i\,,
$$
which substitutes the variable $x_i$ by $x_iy_i$, for
every $i$. 
We use $\ind{S}\in\bits^n$ to denote the indicator vector for a set $S$. 

We will need the following lemma that bounds from below
the number of distinct leading monomials of the set
of substitutions in a symmetric polynomial of high 
enough degree. We need this bound because, unlike 
\cite{FSTW21}, for symmetric polynomials of degree bigger than 1
we do not have a tight degree lower bound of $n$.
We do not attempt to optimise this lower bound, rather 
show an $2^{\Omega(n)}$ lower bound, whenever $d=\log^{O(1)}(n)$. 

\begin{lemma}\label{lem:num-dstnct-LTs}
Let $f(\vx)$ be a symmetric polynomial
with $n$ variables of degree $d=\log^k(n)$, for some constant $k$, that has no Boolean roots. 
Let $g(\vx,\vy)\cd f(\vx\circ\vy) =1 \mod \baxioms$. 
Then, 
\begin{equation}
\label{eq:1367:169}
\left|\LM
 \Big(
    \left\{
        \ml(g(\vx,\ind{S}))
         : S\subseteq[n]
    \right\}
 \Big)
    \right|
    \ge 2^{\Omega(n)}\,.
\end{equation}
\end{lemma}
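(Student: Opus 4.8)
The plan is to lift the symmetric Nullstellensatz degree bound \Cref{cor:NS-full-symmetric-deg-lower bound} through the gadget $x_i\mapsto x_iy_i$ and then count distinct leading monomials of the restricted refutation coefficients. First I would reduce to the case where both $f$ and $g$ are multilinear: since everything happens modulo $\baxioms$, replacing $f$ by $\ml(f)$ is harmless because $\ml(f(\vx\circ\vy))=\ml(f)(\vx\circ\vy)$ (each $\prod_i(x_iy_i)^{\alpha_i}$ multilinearizes to $\prod_i(x_iy_i)^{\min(\alpha_i,1)}$), and replacing $g$ by $\ml(g)$ preserves the identity $g\cd f(\vx\circ\vy)=1\bmod\baxioms$ as well as the polynomials $\ml(g(\vx,\ind{S}))$. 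So from now on $g$ is multilinear in $\vx\cup\vy$ and $f$ is multilinear symmetric of degree $d$; by \Cref{fac:unsatisfiable-symmetric-poly-is-beta} write $f=t-\beta$ with $t$ symmetric of degree $d$ and $\beta\ne 0$.

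Next, for each $S\subseteq[n]$ I would substitute $\vy=\ind{S}$ (which satisfies the $\vy$-Boolean axioms), obtaining over $\bits^n$ the identity $g(\vx,\ind{S})\cd f(\vx\circ\ind{S})=1$, where $f(\vx\circ\ind{S})=f|_{x_i=0\,(i\notin S)}=:f_S=t_S-\beta$ is a multilinear symmetric polynomial \emph{in the variables $\vx_S:=\{x_i:i\in S\}$ only}, with no Boolean roots, and of degree exactly $d$ whenever $|S|\ge d$ (writing $t=\sum_{j\le d}\lambda_j\ednx$ with $\lambda_d\ne 0$, we get $t_S=\sum_{j\le d}\lambda_j\el_{j,|S|}(\vx_S)$, whose degree-$d$ part is nonzero once $|S|\ge d$). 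Since $f_S$ involves only $\vx_S$, its reciprocal over the Boolean cube is a function of $\vx_S$ alone, so by \Cref{fact:multilinearization} the multilinear polynomial $h_S:=g(\vx,\ind{S})=\ml(g(\vx,\ind{S}))$ lies in $\F[\vx_S]$, and because $f_S$ is a symmetric \emph{function}, so is $1/f_S$, hence $h_S$ is symmetric in $\vx_S$. Applying \Cref{cor:NS-full-symmetric-deg-lower bound} to $h_S\cd(t_S-\beta)=1\bmod(\vx_S^2-\vx_S)$ gives $r_S:=\deg h_S\ge |S|-d+1$. Writing $h_S=\sum_{j\le r_S}\gamma_j\el_{j,|S|}(\vx_S)$ with $\gamma_{r_S}\ne 0$ and using a degree-respecting order such as \textbf{grlex} with $x_1\succ\cdots\succ x_n$, the leading monomial of $h_S$ equals that of $\el_{r_S,|S|}(\vx_S)$, namely $\prod_{i\in R_S}x_i$ where $R_S\subseteq S$ is the set of the $r_S$ smallest-indexed elements of $S$.

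Finally I would count. Put $m:=\lfloor n/2\rfloor$ (so $m\ge d$ for large $n$) and $\mathcal{S}:=\binom{[n]}{m}$, with $|\mathcal{S}|\ge 2^n/(n+1)$. For $S\in\mathcal{S}$ we have $m-d+1\le r_S\le m$, so $r_S$ takes at most $d$ values; by pigeonhole there are a value $r^\star\in\{m-d+1,\dots,m\}$ and a subfamily $\mathcal{S}'\subseteq\mathcal{S}$ with $|\mathcal{S}'|\ge|\mathcal{S}|/d$ on which $r_S\equiv r^\star$. On $\mathcal{S}'$ the leading monomial $\prod_{i\in R_S}x_i$ satisfies $|R_S|=r^\star$, so the number of distinct leading monomials among $\{\ml(g(\vx,\ind{S})):S\subseteq[n]\}$ is at least the size of the image of the map $S\mapsto R_S$ on $\mathcal{S}'$. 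Each fibre of this map has size at most $\binom{n}{m-r^\star}\le\binom{n}{d-1}\le n^{d-1}$, since a preimage of a fixed $r^\star$-set $R$ is obtained by choosing the remaining $m-r^\star\le d-1$ elements of $S$ from $[n]\setminus R$. Hence $\bigl|\LM(\{\ml(g(\vx,\ind{S})):S\subseteq[n]\})\bigr|\ge 2^n/\bigl((n+1)\,d\,n^{d-1}\bigr)$, which for $d=\log^k n$ equals $2^{\,n-O(\log^{k+1}n)}=2^{\Omega(n)}$.

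The main obstacle — and the only real departure from \cite{FSTW21} — is the slack between the degree bound $r_S\ge|S|-d+1$ and the ambient number of variables $|S|$: in the subset-sum case the degree is exactly $|S|$, so the leading monomial of each restricted coefficient is $\prod_{i\in S}x_i$ and recovers $S$ outright, giving $2^n$ leading monomials for free, whereas here the leading monomial only exposes the $r_S\ge|S|-d+1$ smallest-indexed elements of $S$. The remedy is the pigeonhole over the at most $d$ possible degree values together with the fibre bound $\binom{n}{d-1}\le n^{d-1}$: ``forgetting'' up to $d-1$ elements of $S$ costs only an $n^{d-1}=2^{O(\log^{k+1}n)}=2^{o(n)}$ factor, negligible against $\binom{n}{\lfloor n/2\rfloor}=2^{\,n-o(n)}$. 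The other points needing care are routine: that $h_S$ genuinely lies in $\F[\vx_S]$ and is symmetric there (so \Cref{cor:NS-full-symmetric-deg-lower bound} applies), which follows from \Cref{fact:multilinearization} because $1/f_S$ is a symmetric function of $\vx_S$, and that the characteristic of $\F$ is large enough for \Cref{cor:NS-full-symmetric-deg-lower bound} in $|S|$ variables.
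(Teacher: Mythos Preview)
Your argument is correct and follows essentially the same plan as the paper: restrict to subsets $S$ of size $\lfloor n/2\rfloor$, use \Cref{cor:NS-full-symmetric-deg-lower bound} to get $\deg h_S\ge |S|-d+1$, observe that $\LM(h_S)$ is supported inside $S$, and conclude that at most $n^{O(d)}$ many $S$ can share a leading monomial. The paper carries out the last step by a greedy extraction (defining $\cL_S=\{S'\in\cD:S'\supseteq \supp(\LM(h_S))\}$ and bounding $|\cL_S|\le\binom{n/2+d-1}{d-1}$), whereas you exploit the additional fact that $h_S$ is \emph{symmetric} in $\vx_S$ to pin down $\LM(h_S)$ exactly as the product of the $r_S$ smallest-indexed variables in $S$, then pigeonhole on the value of $r_S$ and bound the fibres of $S\mapsto R_S$ directly; this is a clean refinement that yields the same $2^{\,n-O(\log^{k+1}n)}$ bound with slightly less bookkeeping.
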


\begin{proof}
Assume for simplicity that $n$ is even and let 
$$
\cD:=\{S\;:\; S\subseteq [n] \text{ and } |S|=n/2 \}.
$$

We are going to show that 
$
\left|
\LM
 \Big(
    \left\{
        \ml(g(\vx,\ind{S}))
         : S\in\cD
    \right\}
 \Big)
    \right|
    \ge 2^{\Omega(n)}\,
$, 
which is enough to conclude 
\Cref{{eq:1367:169}}.
    
\begin{claim}\label{cla:fully-sym-counting}
There are $2^{\Omega(n)}$ many pairwise disjoint sets
$S_1,\dots,S_\ell\subseteq \cD$~(with $\ell=2^{\Omega(n)}$) that
induce distinct leading monomials in 
$    \ml(g(\vx,\ind{S_i}))$, in the sense that:
$\forall i\neq j\in[\ell], 
\LM
 \left(
    \ml(g(\vx,\ind{S_i}))
 \right) 
 \neq  
\LM
 \left(
    \ml(g(\vx,\ind{S_j}))
 \right).
$
\end{claim}
\begin{proofclaim}\label{cla:fully-sym-count}
The idea is to show that most pairs of sets in $\cD$
induce distinct leading monomials.
More precisely, for any given $S\in\cD$ there are only $n^{\log ^{O(1)}n}$ many sets $S'\in \cD$ that induce the same leading monomial, 
namely 
$\LM
 \left(
    \ml(g(\vx,\ind{S}))
 \right) 
 =  
\LM
 \left(
    \ml(g(\vx,\ind{S'}))
 \right)$.
Since the number of subsets in $\cD$ is $2^{\Omega(n)}$
we get that there exists at least 
$2^{\Omega(n)}/n^{\log ^{O(1)}n} = 2^{\Omega(n)}$ sets
$S_1,\dots,S_\ell$~(with $\ell=2^{\Omega(n)}$) that
induce distinct leading monomials  $\ml(g(\vx,\ind{S_i}))$. 
For that purpose, it is sufficient to show that for every set 
$S\in\cD$ there exists 
a set $\cL_S\subseteq\cD$, such that 
\begin{enumerate}
\item $S\in\cL_S$; and \label{it:2066-1}
\item $|\cL_S|=n^{\log ^{O(1)}n}$; and \label{it:2066-2}
\item if $T\in\cD\setminus\cL_S$ and  \label{it:2066-3}
$T'\in\cL_S$, then 
$\LM
 \left(
    \ml(g(\vx,\ind{T}))
 \right) 
 \neq  
\LM
 \left(
    \ml(g(\vx,\ind{T'}))
 \right).
$
\end{enumerate}
(This is sufficient to conclude the claim because starting from
an arbitrary $S\in\cD$, we can pick an $S'\in\cD\setminus\cL_S$ 
such that $S,S'$ induce distinct leading monomials. After which
we pick $S''\in\cD\setminus(\cL_S\cup\cL_{S'})$ which induces
yet another distinct leading monomial. Since the size
of each $|\cL_S|,|\cL_{S'}|,\dots$ is $n^{\log ^{O(1)}n}$ we can continue
this process at least $2^{\Omega(n)}/n^{\log ^{O(1)}n}$ many times.)
 
So, let $S\in\cD $.
By assumption that $g(\vx,\vy)\cd f(\vx\circ\vy) =1 \mod \baxioms$,
we have
\begin{equation}
\label{eq:g-times-f-hadamard}
    \ml(g(\vx,\ind{S}))\cd f(\vx\circ\ind{S}) =1 
    \mod\baxioms\,, 
\end{equation}
(note that multilinearizing $g(\vx,\ind{S})$ does not affect
the equality, since we work modulo \baxioms).

Because our lifting turns every variable $x_i$ into 
$x_iy_i$ we have that  \(\ml(g(\vx,\ind{S}))\) is a (multilinear
symmetric) polynomial that \emph{depends  on the variables
$x_i$, for $i\in S$}
(that is, each nonzero monomial in this polynomial
has only variables $x_i$, for $i\in S$ [though not necessarily
all of such $x_i$'s]). 
Similarly, $f(\vx\circ\ind{S})$ is a (symmetric) polynomial that \emph{depends on the variables $x_i$, for $i\in S$}.
Since $f(\vx)$ has no Boolean roots, $f(\vx\circ\vy)$ also does not
have Boolean roots (if there was a Boolean root 
for the latter, there was also a Boolean root for the former; note that $x_i y_i\in\bits$ whenever $x_i,y_i\in\bits$).
This, together with \Cref{eq:g-times-f-hadamard}, mean that the 
conditions of \Cref{cor:NS-full-symmetric-deg-lower bound} are met, so we have
$$
|S|-d+1\le \deg(\ml(g(\vx,\ind{S})))\le|S|.
$$
Since we assumed that our monomial ordering respects degree,
\begin{equation}\label{eq:1386:30}
|S|-d+1\le\deg(\LM(\ml(g(\vx,\ind{S}))))\le|S|.
\end{equation}
Given a set $S'\subseteq[n]$, denote by $\widehat x_{S'}$ 
the \emph{corresponding multilinear monomial} $\prod_{i\in S'}x_i$.
And conversely, given a monomial $M$, denote by $S_M$ its ``support'',
namely, the \emph{corresponding subset of} $[n]$ such that 
$\widehat x_{S_{M}}= M$.
 
Denote by $M_0$ the monomial $\LM (\ml(g(\vx,\ind{S}))$ (for the $S\in\cD$ we fixed above).
Note that $M_0$ does not necessarily equal $\widehat x_{S}$, 
because our degree lower bound in \Cref{eq:1386:30}
is not tight. In other words, $S_{M_0}$ does not necessarily equal
$S$, but rather we only know that $M_0$ consists of at least $|S|-d+1$  variables
$x_i$, for $i\in S$ (and no other variables):   
$$
S_{M_0}\in\set{S'\subseteq S \;:\; |S|-d+1\le |S'|\le|S| }.
$$

Note that by construction of the lifting, if $S'\in\cD$ (i.e., $|S'|=n/2$) and $\LM(\ml(g(\vx,\ind{S'}))=M_0$, 
then $S'\supseteq S_{M_0}$ (because the only variables in  
$\ml(g(\vx,\ind{S'}))$ are $x_i$, for $i\in S$).
Thus, by \Cref{eq:1386:30}
$$
\left|
    \left\{S'\in\cD \;:\; 
    S'\supseteq S_{M_0} \right\}
\right|\le 
    {n/2+d-1 \choose d-1} \approx  
    \left(\frac{n/2+d-1}{e}\right)^{d-1} \le  n^{\log^c n}
$$
for some constant $c$ and sufficiently big $n$ (since 
$d=\log^{k}n $, for a constant $k$).

Note that putting
$\cL_S:= \left\{S'\in\cD \;:\; 
    S'\supseteq S_{M_0} \right\}$ we obtain an $\cL_S$ that meets all three conditions  \Cref{it:2066-1} to \Cref{it:2066-3}
above. 
\end{proofclaim}

This concludes the proof of \Cref{lem:num-dstnct-LTs}.
\end{proof}

\begin{theorem}\label{thm:roABP-fixed-order-lower-bound}
Let $f(\vx)$ be an unsatisfiable symmetric polynomial
with $n$ variables of degree $d=\log^{O(1)} n$. Then, any 
\roAlbIPS\ refutation of $f(\vx\circ\vy)=0$ is 
of size $2^{\Omega(n)}$, when the variables are ordered
such that $\vx<\vy$ (i.e., \vx-variables come 
before \vy-variables).
\end{theorem}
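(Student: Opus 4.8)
The plan is to reduce the size lower bound for \roAlbIPS\ refutations of $f(\vx\circ\vy)=0$ to an evaluation-dimension lower bound, and then to bound the evaluation dimension from below using the leading-monomial count established in \Cref{lem:num-dstnct-LTs}. Concretely, suppose $C$ is a \roAlbIPS\ refutation of $f(\vx\circ\vy)=0$ together with the Boolean axioms on the \vx- and \vy-variables. By the definition of \lbIPS\ and since we ignore the circuit size of the Boolean-axiom coefficients, $C$ yields a polynomial $g(\vx,\vy)$ computed by a roABP (in the order $\vx<\vy$) of width $r$ polynomial in the refutation size, such that $g(\vx,\vy)\cd f(\vx\circ\vy) = 1 \mod \baxioms$. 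By \Cref{res:roABP-width_eq_dim-coeffs} and \Cref{res:evals_eq-coeffs}, $r \ge \dim\coeffs{\vx|\vy}(g) \ge \dim\evals{\vx|\vy,\bits}(g) = \dim\spn\{g(\vx,\ind{S}) : S\subseteq[n]\}$. So it suffices to show this last dimension is $2^{\Omega(n)}$.

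Next I would pass from the raw polynomials $g(\vx,\ind{S})$ to their multilinearizations. Since each $g(\vx,\ind{S})$ agrees with its multilinearization over $\bits^{|\vx|}$ and the span dimension is at least the dimension of the span of the multilinearizations evaluated over the cube — more directly, multilinearization is a linear map, so $\dim\spn\{g(\vx,\ind{S})\} \ge \dim\spn\{\ml(g(\vx,\ind{S}))\}$ (linear images cannot increase dimension; here we want a lower bound so I should be slightly careful and instead argue that if the $\ml(g(\vx,\ind{S}))$ span a large space then so do the originals, which holds because $\ml$ is a well-defined linear operator and hence cannot map a high-dimensional span to a low-dimensional one only by \emph{surjectivity}, so the correct inequality is in fact $\dim\spn\{g(\vx,\ind{S})\}\ge\dim\spn\{\ml(g(\vx,\ind{S}))\}$). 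Then by \Cref{res:dim-eq-num-TM-spn}, $\dim\spn\{\ml(g(\vx,\ind{S})) : S\subseteq[n]\} \ge \left|\LM\big(\{\ml(g(\vx,\ind{S})) : S\subseteq[n]\}\big)\right|$, and \Cref{lem:num-dstnct-LTs} gives that this count is $2^{\Omega(n)}$, provided the hypotheses of that lemma are met — namely that $f$ is symmetric of degree $d=\log^{O(1)}n$ with no Boolean roots, and $g(\vx,\vy)\cd f(\vx\circ\vy)=1 \mod \baxioms$, all of which we have. Chaining these inequalities gives $r \ge 2^{\Omega(n)}$, hence the refutation size is $2^{\Omega(n)}$.

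The main subtlety — and the step I'd be most careful about — is the bookkeeping around what exactly the roABP computes and in which variable order, and making sure the direction of every inequality is right. In particular, I need the \lbIPS\ formalism to guarantee that the refutation gives a roABP for $g(\vx,\vy)$ itself (the coefficient of the single axiom $f(\vx\circ\vy)$), and I need the order $\vx<\vy$ so that the coefficient-space cut $\coeffs{\vx|\vy}$ is one of the cuts the roABP respects; this is why the theorem statement restricts to $\vx<\vy$. I should also double-check the hypothesis on the characteristic implicit via \Cref{cor:NS-full-symmetric-deg-lower bound}, which \Cref{lem:num-dstnct-LTs} invokes — the field needs characteristic larger than $\max(2^n,n^d)$ — but since the ambient section already fixes such a field, this is a matter of citing the right assumption rather than a genuine obstacle. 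Finally, I'd note that unlike in \cite{FSTW21} we do not obtain a full-rank $2^n$ bound, only $2^{\Omega(n)}$, which is exactly why this route gives roABP-IPS lower bounds but not multilinear-formula-IPS lower bounds for general symmetric instances.
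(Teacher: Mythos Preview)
Your proposal is correct and follows essentially the same route as the paper: reduce to a coefficient-dimension lower bound for $g$ via the functional lower bound method and \Cref{res:roABP-width_eq_dim-coeffs}, pass to evaluation dimension over the Boolean cube (\Cref{res:evals_eq-coeffs}), drop to the multilinearizations using that $\ml$ is a linear map, and then invoke \Cref{res:dim-eq-num-TM-spn} together with \Cref{lem:num-dstnct-LTs}. Your remarks about the variable order, the characteristic assumption inherited from \Cref{cor:NS-full-symmetric-deg-lower bound}, and the reason this yields only a $2^{\Omega(n)}$ (rather than full-rank) bound are all on point and match the paper's discussion.
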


\begin{proof}
This is similar to similar to \cite[Proposition 5.8]{FSTW21}, that we repeat for convenience. 

Let $g(\vx,\vy)$ be a polynomial such that 
$
g(\vx,\vy)\cd f(\vx\circ\vy) = 1 \text{ over~ } 
\vx,\vy\in\bits^n.
$
Hence, 
\begin{equation}\label{eq:1266:31}
g(\vx,\vy) = \frac{1}{f(\vx\circ\vy)}  \text{ ~~over~ }
\vx,\vy\in\bits^n.
\end{equation}
We show that $\dim\coeffs{\vx|\vy} g\ge 2^{\Omega(n)}$.
This will conclude the proof by 
\Cref{res:roABP-width_eq_dim-coeffs} which will give the
roABP size (width) lower bound and by the functional lower 
bound reduction in \Cref{def:single-functional-lower-bound-method}.

First, observe that $f(\vx\circ\vy)=0$ is unsatisfiable
over $\vx,\vy\in\bits^n$, since $f(\vx)=0$ is. 
Thus, the right hand side of \Cref{eq:1266:31} is defined.

By lower bounding coefficient dimension by the 
evaluation dimension over the Boolean cube 
(\autoref{res:evals_eq-coeffs}),
        \begin{align*}
                \dim\coeffs{\vx|\vy} g
                &\ge \dim\evals{\vx|\vy,\bits} g\\
                &=\dim \{g(\vx,\ind{S}) : S\subseteq[n]\}\\
                &\ge\dim \{\ml(g(\vx,\ind{S})) : S\subseteq[n]\}
                \;.
        \end{align*}
         Here we used that dimension is non-increasing under linear maps. 
For $S\subseteq[n]$, denote by 
$\vx_S:=\{x_i \;:\; i\in S\}$ and note that for $\vx\in\bits^n$,         \[
                g(\vx,\ind{S})=\frac{1}{f(\vx_S)}
                \;.
        \]
        It follows that $\ml(g(\vx,\ind{S}))$ is a multilinear polynomial only depending on $\vx|_S$ (\autoref{fact:multilinearization}), and by its functional behavior it follows from 
\Cref{lem:sym-poly-ref-degree-high} 
that $\deg \ml(g(\vx,\ind{S}))\in[|S|-d+1,|S|]$. 

Since $\ml(g(\vx,\ind{S}))$ is multilinear 
we can use \Cref{lem:num-dstnct-LTs} to lower bound 
the number of distinct leading monomial of
$\ml(g(\vx,\ind{S}))$, when $S$ ranges over subsets 
of $[n]$:
\[
    \left|\LM \Big(\{\ml(g(\vx,\ind{S}))
         : S\subseteq[n]\}\Big)
    \right|
    \ge 2^{\Omega(n)}\,.
\]

Therefore, we can lower bound the dimension of the above space by the number of leading monomials (\autoref{res:dim-eq-num-TM-spn}),
        \begin{align*}
                \dim\coeffs{\vx|\vy} g
                &\ge\dim \{\ml(g(\vx,\ind{S})) : S\subseteq[n]\}\\
                &\ge\left|\LM \Big(\{\ml(g(\vx,\ind{S})) : S\subseteq[n]\}\Big)\right|\\
                &\ge 2^{\Omega(n)}\,.
                \qedhere
        \end{align*}
\end{proof}

\subsubsection{roABP-IPS in Any Order Lower Bounds}

Here we extend the results of the previous section to any variable order, which will imply lower bounds against \roAlbIPS\ of any variable order as well as multilinear formulas IPS. This extends the corresponding results in \cite{FSTW21} to lifting of the subset sum to lifting of every symmetric instance. 
 
Given a symmetric polynomial $f(\vq)$ with $n$ variables $q_1,\dots, q_n$, by \Cref{prop:fund-thm-sym-polynomials} we have  
$$
f(\vq):= g(y_1/\el_{1,n}(\vq),\dots,y_n/\el_{n,n}(\vq))
$$ 
for some polynomial $g(\vy)$. We now define a polynomial that will embed in itself the lifting from \Cref{sec:roABP-IPS-lwbd-fxd-ord}
of the symmetric polynomial $f(\vq$) under suitable partial Boolean substitutions. In other words, we define a polynomial over the  new variables $\vz,\vx$ such that under suitable Boolean assignments to the \vz\ variables we obtain the hard polynomial from \Cref{thm:roABP-fixed-order-lower-bound}. For each different such suitable Boolean assignment to \vz\ we will get an instance that is hard for a different variable ordering, concluding that our instance is hard for any variable ordering.

 Consider the polynomial  
\begin{equation}
f'(\vw):=g(y_1/\el_{1,m}(\vw),\dots,y_n/\el_{n,m}(\vw))
\end{equation}
 for $m= {2n\choose 2}$ and $\vw = \{w_{i,j}\}_{i<j\in[2n]}$.
We now apply a similar gadget to \cite{FSTW21},
defined by the mapping 
$$
w_{i,j}\mapsto z_{i,j}x_ix_j\,,
$$
which substitutes the $m$ variable $w_{i,j}$ by $m+2n$ variables $\{z_{i,j}\}_{i<j\in[2n]}$, $x_1,\dots,x_{2n}$:
\begin{equation}
\label{eq:1588:15}
f^\star(\vz,\vx):=g(y_1/(\el_{1,m}(\vw))_{w_{i,j}\mapsto z_{i,j}x_ix_j},\dots,y_n/(\el_{n,m}(\vw))_{w_{i,j}\mapsto z_{i,j}x_ix_j})\,,
\end{equation}
where $(\el_{j,m}(\vw))_{w_{i,j}\mapsto z_{i,j}x_ix_j}$ means that we apply the lifting $w_{i,j}\mapsto z_{i,j}x_ix_j$ to the \vw\ variables.  

Let $f\in\F[\vx,\vy,\vz]$. 
We denote by $f_\vz$ the polynomial $f$ considered as a polynomial in $\F[\vz][\vx,\vy]$, namely as a polynomial whose indeterminates are \vx,~\vy\ and whose scalars are from the ring $\F[\vz]$. 
We will consider the dimension of a (coefficient) matrix
when the entries are taken from the ring $\F[\vz]$, and where the dimension is considered over the field of rational functions $\F(\vz)$.  
Note that for any $\vaa\in\F^{|\vz|}$ we have that $f_\vaa(\vx,\vy)=f(\vx,\vy,\vaa) \in\F[\vx,\vy]$.  
 We use the following simple lemma:

\begin{lemma}[\cite{FSTW21}]\label{res:coeff-dim:fraction-field}
Let $f\in\F[\vx,\vy,\vz]$. Then for any  $\vaa \in\F^{|\vz|}$
        \[
                \dim_{\F(\vz)}\coeffs{\vx|\vy} f_\vz(\vx,\vy) \ge \dim_\F \coeffs{\vx|\vy}f_\vaa(\vx,\vy)
                \;.
        \]
\end{lemma}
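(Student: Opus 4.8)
The plan is to reduce the inequality to a standard fact about ranks of matrices with polynomial entries, using the correspondence between coefficient dimension and the rank of the coefficient matrix (\Cref{res:y-dim_eq-x-dim}). Throughout, I will view $f_\vz$ as an element of $(\F[\vz])[\vx,\vy]$, so that its coefficients in the $\vy$-variables lie in $\F[\vz][\vx]\subseteq\F(\vz)[\vx]$, and $\dim_{\F(\vz)}\coeffs{\vx|\vy}f_\vz$ is the dimension of the $\F(\vz)$-span of this set.

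First I would set up the two coefficient matrices. Since $f$ has finite total degree, only monomials $\vx^\va\vy^\vb$ with $\sum_j|a_j|,\sum_j|b_j|\le\deg f$ are relevant, so the coefficient matrix $C$ of $f_\vz$ (in the sense of \Cref{def:coeff-matrix}, now with entries in the ring $\F[\vz]$) is an honest \emph{finite} matrix, with $(\va,\vb)$-entry $\coeff{\vx^\va\vy^\vb}(f)\in\F[\vz]$. Because $\vz$ is a block of variables disjoint from $\vx$ and $\vy$, the substitution $\vz\mapsto\vaa$ commutes with extracting the coefficient of $\vx^\va\vy^\vb$; hence the coefficient matrix of $f_\vaa$ is exactly $C(\vaa)$, the entrywise evaluation of $C$ at $\vaa$. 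Applying \Cref{res:y-dim_eq-x-dim} once over the field $\F(\vz)$ and once over $\F$, I get $\dim_{\F(\vz)}\coeffs{\vx|\vy}f_\vz=\rank_{\F(\vz)}C$ and $\dim_\F\coeffs{\vx|\vy}f_\vaa=\rank_\F C(\vaa)$, so it suffices to show $\rank_{\F(\vz)}C\ge\rank_\F C(\vaa)$.

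Then I would invoke the standard observation that evaluation cannot increase matrix rank. If $\rank_\F C(\vaa)=r$, then some $r\times r$ submatrix $D$ of $C$ satisfies $\det D(\vaa)\ne 0$. But $\det D\in\F[\vz]$ and $(\det D)(\vaa)\ne 0$ forces $\det D$ to be a nonzero polynomial, hence nonzero as an element of the field $\F(\vz)$; therefore that submatrix of $C$ is invertible over $\F(\vz)$ and $\rank_{\F(\vz)}C\ge r$. Combining with the previous paragraph gives $\dim_{\F(\vz)}\coeffs{\vx|\vy}f_\vz\ge\dim_\F\coeffs{\vx|\vy}f_\vaa$, as required.

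There is essentially no hard step here; the only two points that need care are that the coefficient matrices are genuinely finite (immediate from $\deg f<\infty$, matching \Cref{def:coeff-matrix}) and that $\vz\mapsto\vaa$ really commutes with taking $(\vx,\vy)$-coefficients (immediate from the disjointness of the variable blocks). A reader preferring a spanning-set formulation could instead pick $s_1,\dots,s_r\in\coeffs{\vx|\vy}f_\vz$ whose evaluations at $\vaa$ are $\F$-linearly independent and verify that $s_1,\dots,s_r$ are $\F(\vz)$-linearly independent; this amounts to the same nonvanishing-minor argument, so the matrix formulation is the cleanest way to record it.
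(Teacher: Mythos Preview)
The paper does not prove this lemma; it merely cites it from \cite{FSTW21}. Your argument is correct and is the standard proof: identify the coefficient dimension with the rank of the coefficient matrix via \Cref{res:y-dim_eq-x-dim}, and then observe that evaluating a matrix over $\F[\vz]$ at $\vz=\vaa$ cannot increase its rank (any nonvanishing $r\times r$ minor of $C(\vaa)$ witnesses a nonzero $r\times r$ minor of $C$ over $\F(\vz)$).
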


\begin{proposition}
\label{res:lbs-fn:any-order:coeff-dim}
Let $n\ge 1$, $m={n \choose 2}$,  and $\F$ be a field with $\chara(\F)>\max(2^{4n+2m},n^d)$.  
Let $f\in\F[\vq]$ be a symmetric polynomial with $n$ variables
of degree $d=O(\log n)$, and $f^\star(\vz,\vx)$ be as in \Cref{eq:1588:15}.
Suppose that  $g\in\F[z_1,\ldots,z_{m},x_1,\ldots,x_{2n}]$ be a polynomial such that
\[
                g(\vz,\vx)=\frac{1}{f^\star(\vz,\vx)-\beta}
                \;,
\]
        for $\vz\in\bits^{\binom{2n}{2}}$ and $\vx\in\bits^{2n}$, and $\beta\in\F$ that makes  $f^\star(\vz,\vx)-\beta=0$ as well as $f(\vq)-\beta=0$ each unsatisfiable over Boolean values.\footnotemark ~Let $g_\vz$ denote $f$ as a polynomial in $\F[\vz][\vx]$.  Then, for any partition $\vx=(\vu,\vv)$ with $|\vu|=|\vv|=n$,
        \[
                \dim_{\F(\vz)}\coeffs{\vu|\vv} g_\vz \ge 2^{\Omega(n)}
                \;.
        \]
\end{proposition}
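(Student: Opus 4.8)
To prove the proposition I would fix an arbitrary equal-size partition $\vx=(\vu,\vv)$, $|\vu|=|\vv|=n$, and exhibit a single Boolean assignment $\vaa$ to the $\vz$-variables under which $f^\star$ collapses onto a relabelled copy of the fixed-order lifted instance, so that the bound of \Cref{thm:roABP-fixed-order-lower-bound} becomes available. Enumerate $\vu=(x_{i_1},\dots,x_{i_n})$ and $\vv=(x_{j_1},\dots,x_{j_n})$ and use the induced bijection $i_k\leftrightarrow j_k$. Define $\vaa\in\bits^m$ (recall $\vw=\{w_{p,q}\}_{p<q\in[2n]}$, so $\vaa$ is indexed by pairs $\{p,q\}\subseteq[2n]$) by setting the $\{p,q\}$-coordinate to $1$ iff $\{p,q\}=\{i_k,j_k\}$ for some $k\in[n]$, and to $0$ otherwise. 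By \Cref{res:coeff-dim:fraction-field} it then suffices to prove $\dim\coeffs{\vu|\vv}g_\vaa\ge 2^{\Omega(n)}$ with an absolute implied constant, where $g_\vaa(\vx):=g(\vaa,\vx)$.

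The heart of the matter is the identity $f^\star(\vaa,\vx)=f(x_{i_1}x_{j_1},\dots,x_{i_n}x_{j_n})$ as formal polynomials. Indeed, applying the lift $w_{p,q}\mapsto z_{p,q}x_px_q$ to $\el_{\ell,m}(\vw)$ and then substituting $\vz=\vaa$ annihilates every monomial that uses a $z_{p,q}$ with $\{p,q\}$ not a matched pair, so the surviving terms are exactly $\prod_{k\in K}x_{i_k}x_{j_k}$ over $K\subseteq[n]$ with $|K|=\ell$; since the $2n$ variables $x_{i_1},x_{j_1},\dots,x_{i_n},x_{j_n}$ are pairwise distinct this is precisely $\el_{\ell,n}(x_{i_1}x_{j_1},\dots,x_{i_n}x_{j_n})$. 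Feeding these into the polynomial $g(\vy)$ of \Cref{prop:fund-thm-sym-polynomials} reconstitutes $f$ with each $q_k$ replaced by $x_{i_k}x_{j_k}$. Consequently, over the Boolean cube $g_\vaa(\vx)=1/\bigl(f(x_{i_1}x_{j_1},\dots,x_{i_n}x_{j_n})-\beta\bigr)$, which is exactly the hard polynomial of \Cref{thm:roABP-fixed-order-lower-bound} for the unsatisfiable symmetric instance $f(\vq)-\beta$ under the gadget $\circ$ and the partition $(\vu,\vv)$.

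Given this, I would run the argument of \Cref{thm:roABP-fixed-order-lower-bound} essentially verbatim: by \Cref{res:evals_eq-coeffs}, $\dim\coeffs{\vu|\vv}g_\vaa\ge\dim\evals{\vu|\vv,\bits}g_\vaa=\dim\{g_\vaa(\vu,\ind{S}):S\subseteq[n]\}\ge\dim\{\ml(g_\vaa(\vu,\ind{S})):S\subseteq[n]\}$. Restricting the $\vv$-coordinates to $\ind{S}$ replaces $x_{i_k}x_{j_k}$ by $x_{i_k}$ when $k\in S$ and by $0$ otherwise, so $g_\vaa(\vu,\ind{S})=1/(f(\vu_S)-\beta)$ over $\vu\in\bits^n$, and as $S$ ranges over $2^{[n]}$ so does the support; thus each $\ml(g_\vaa(\vu,\ind{S}))$ is multilinear in $\{x_{i_k}:k\in S\}$ and, by \Cref{cor:NS-full-symmetric-deg-lower bound} (using a degree-respecting monomial order, and that $f(\vu_S)-\beta$ stays unsatisfiable, symmetric, of degree $\le d$), its degree lies in $[|S|-d+1,|S|]$. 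Now \Cref{lem:num-dstnct-LTs} yields $2^{\Omega(n)}$ distinct leading monomials among these polynomials, so by \Cref{res:dim-eq-num-TM-spn} the displayed dimension is $2^{\Omega(n)}$; combining with \Cref{res:coeff-dim:fraction-field} gives $\dim_{\F(\vz)}\coeffs{\vu|\vv}g_\vz\ge 2^{\Omega(n)}$ for the fixed partition, and since the partition was arbitrary the proposition follows. The hypothesis $\chara(\F)>\max(2^{4n+2m},n^d)$ dominates the characteristic requirements of \Cref{cor:NS-full-symmetric-deg-lower bound} and \Cref{lem:num-dstnct-LTs} and guarantees the unsatisfiability assumptions, so nothing further is needed there.

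The step I expect to be the main obstacle is the collapse identity in the second paragraph: one has to check that the Boolean $\vz$-restriction reproduces the fixed-order instance \emph{exactly} — in particular that the $2n$ paired variables are genuinely distinct (so that $\el_{\ell,n}$ evaluated at the $n$ products $x_{i_k}x_{j_k}$ behaves as in the fixed-order proof), and that passing to the Boolean cube legitimately turns $g_\vaa(\vu,\ind{S})$ into $1/(f(\vu_S)-\beta)$ with $S$ sweeping all of $2^{[n]}$ as $\vv$ sweeps $\bits^n$. The genuinely new ingredient over \Cref{thm:roABP-fixed-order-lower-bound} is precisely this $\vz$-parametrisation: it shows that \emph{every} equal partition of the $\vx$-variables is realised by some Boolean $\vz$-assignment, after which the dimension bound is the fixed-order argument unchanged. (As in \cite{FSTW21}, this route only delivers $2^{\Omega(n)}$ rather than full rank $2^n$, so it does not by itself give a multilinear-formula IPS lower bound for general symmetric instances.)
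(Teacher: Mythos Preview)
Your proposal is correct and follows essentially the same approach as the paper: define the Boolean $\vz$-assignment selecting exactly the matched pairs $\{i_k,j_k\}$ of the given partition, observe that this collapses $f^\star$ to $f(\vu\circ\vv)$ so that $g_\vaa=1/(f(\vu\circ\vv)-\beta)$ over the Boolean cube, and then combine \Cref{res:coeff-dim:fraction-field} with the coefficient-dimension bound established in the proof of \Cref{thm:roABP-fixed-order-lower-bound}. The only difference is presentational: the paper simply cites \Cref{thm:roABP-fixed-order-lower-bound} for the $2^{\Omega(n)}$ bound on $\dim_\F\coeffs{\vu|\vv}g_\vaa$, whereas you re-expand that argument inline (the collapse identity you single out as the ``main obstacle'' is exactly the one-line observation the paper makes without further justification).
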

\iddolater{Explain the footnote: Explain this with an exact computation of beta depending the char}
\footnotetext{Observe that  when the characteristic is sufficiently large there always exists a $\beta$ that makes both 
$f^\star(\vz,\vx)-\beta=0$ 
and $f(\vq)-\beta=0$ unsatisfiable over Boolean values for the variables $\vz,\vx,\vq$.}

\begin{proof}
We proceed as in \cite{FSTW21} to embed $\frac{1}{f(\vu\circ\vv)-\beta}$ in this instance via a restriction of $\vz$. Define the $\vz$-evaluation $\vaa\in\bits^{\binom{2n}{2}}$ to restrict $g$ to sum over those $x_ix_j$ in the natural matching between $\vu$ an $\vv$, so that
        \[
                \alpha_{ij}
                =\begin{cases}
                        1       &       x_i=u_k, x_j=v_k\\
                        0       &       \text{else}
                \end{cases}
                \;.
        \]
It follows that $g(\vu,\vv,\vaa)=\frac{1}{f(\vu\circ\vv)-\beta}$ for $\vu,\vv\in\bits^{n}$. Thus, by using the  our lower bound for a fixed partition (\Cref{thm:roABP-fixed-order-lower-bound})
 and the relation between the coefficient dimension in $f_\vz$ versus $f_\vaa$ (\Cref{res:coeff-dim:fraction-field}),
        \begin{align*}
                \dim_{\F(\vz)}\coeffs{\vu|\vv} g_\vz(\vu,\vv)
                &\ge \dim_\F \coeffs{\vu|\vv}g_\vaa(\vu,\vv)\\
                &\ge 2^{\Omega(n)}
                \;.
                \qedhere
        \end{align*}
\end{proof}
\iddo{rephrase proof above and below, as it's verbatim FSTW21. \textbf{Why did I restrict to log n degree? For proof above I think.}}

\begin{corollary}\label{res:lbs-fn:lbs-ips:vary-order}
Let $n\ge 1$, $m={n \choose 2}$,  and $\F$ be a field with $\chara(\F)>\max(2^{4n+2m},n^d)$.   
Let $f\in\F[\vq]$ be a symmetric polynomial with $n$ variables
of degree $d=O(\log n)$, and $f^\star(\vz,\vx)$ be as in \Cref{eq:1588:15}. 
Let $\beta\in\F$ be such that $f^\star(\vz,\vx)-\beta=0$ and $f(\vq)-\beta=0$ are each unsatisfiable over Boolean values. 
Then, any \roAlbIPS\ refutation (in any variable order) of  $f^\star(\vz,\vx)-\beta=0$ requires $2^{\Omega(n)}$-size.
\end{corollary}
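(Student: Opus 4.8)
The plan is to reduce the statement to the coefficient-dimension bound already proved in \Cref{res:lbs-fn:any-order:coeff-dim}, in the style of \cite{FSTW21}. First I would unwind the definition of \roAlbIPS: by \Cref{def:IPS} and the technical comment on \lbIPS\ (we only charge the circuit for the coefficient $g$ of the axiom $f^\star-\beta$, not the coefficients of the Boolean axioms), a size-$s$ \roAlbIPS\ refutation of $f^\star(\vz,\vx)-\beta=0$ together with $\{x_i^2-x_i\}_i$ and $\{z_{ij}^2-z_{ij}\}_{ij}$ is a roABP of size $s$, in some fixed order of the variables $\vz\cup\vx$, computing a polynomial $g(\vz,\vx)$ with
\[
g(\vz,\vx)\cdot(f^\star(\vz,\vx)-\beta) = 1 \mod \baxioms,
\]
i.e.\ $g(\vz,\vx)=\tfrac{1}{f^\star(\vz,\vx)-\beta}$ over the Boolean cube (the right-hand side being defined since $f^\star-\beta$ is unsatisfiable by hypothesis). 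Since the size of a roABP is at least its width, it suffices to show that every roABP computing such a $g$, \emph{in any variable order}, has width $2^{\Omega(n)}$.

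Next I would extract an equal $\vx$-partition from the roABP order. Fix the order on $\vz\cup\vx$, scan it left to right, and place a cut just before the $(n+1)$-st $\vx$-variable is read; this is possible since there are exactly $2n$ variables in $\vx$. Let $\vu$ be the $n$ variables of $\vx$ read before the cut, $\vv$ the remaining $n$, and $\vz=\vz_1\sqcup\vz_2$ the induced bipartition of $\vz$. Using the matrix normal form for roABPs (\Cref{res:roABP-normal-form}) and grouping the layers into the two blocks determined by the cut, we may write $g=(L\cdot R)_{1,1}$ with $L$ an $r\times r$ matrix over $\F[\vu,\vz_1]$, $R$ an $r\times r$ matrix over $\F[\vv,\vz_2]$, and $r$ the width. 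Viewing $g$ in $\F(\vz)[\vx]$ — that is, moving the $\vz$-variables into the coefficient field — makes $L$ a matrix over $\F(\vz)[\vu]$ and $R$ a matrix over $\F(\vz)[\vv]$, and from $g_\vz=\sum_{k=1}^r L_{1,k}(\vu)\,R_{k,1}(\vv)$ we see that every $\F(\vz)[\vu]$-coefficient of $g_\vz$, read as a polynomial in $\vv$, lies in the $\F(\vz)$-span of $L_{1,1},\dots,L_{1,r}$. Hence $\dim_{\F(\vz)}\coeffs{\vu|\vv} g_\vz\le r$.

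Finally, since $|\vu|=|\vv|=n$ and the hypotheses on $\chara(\F)$ and $\beta$ in the corollary are exactly those of \Cref{res:lbs-fn:any-order:coeff-dim}, that proposition gives $\dim_{\F(\vz)}\coeffs{\vu|\vv} g_\vz\ge 2^{\Omega(n)}$, whence $r\ge 2^{\Omega(n)}$ and the roABP — hence the refutation — has size $2^{\Omega(n)}$. The one point that needs care, and the only place this differs from the fixed-order argument of \Cref{thm:roABP-fixed-order-lower-bound}, is the interleaving of $\vz$- and $\vx$-variables in an arbitrary order: one must check that every order admits a cut yielding an \emph{equal} bipartition of $\vx$ (automatic, as $|\vx|=2n$) and that the $\vz$-variables on either side of that cut can be harmlessly absorbed into the coefficient ring $\F(\vz)$ — this is precisely what lets the single bound of \Cref{res:lbs-fn:any-order:coeff-dim} apply uniformly to all orders. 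Everything else is routine bookkeeping with the roABP normal form.
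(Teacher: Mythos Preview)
Your proposal is correct and follows essentially the same approach as the paper: reduce via the functional lower bound method to a width lower bound on any roABP computing $g$, push the $\vz$-variables into the coefficient field $\F(\vz)$, split $\vx$ in half along the induced order, and invoke \Cref{res:lbs-fn:any-order:coeff-dim}. The only cosmetic difference is that the paper packages the ``cut and absorb $\vz$'' step into an appeal to \Cref{fact:roABP:closure} and \Cref{res:roABP-width_eq_dim-coeffs}, whereas you spell it out explicitly via the matrix normal form of \Cref{res:roABP-normal-form}; the underlying argument is identical.
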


\begin{proof}

Consider the polynomial $g\in\F[x_1,\ldots,x_{2n},z_1,\ldots,z_{\binom{2n}{2}}]$ such that
\[
                g(\vz,\vx)=\frac{1}{f^\star(\vz,\vx)-\beta}
                \;,
\]
        for $\vx\in\bits^{2n}$, $\vz\in\bits^{\binom{2n}{2}}$. \textcolor[rgb]{0.752941,0.752941,0.752941}{}

We continue similar to the proof of \cite[Corollary 5.14, part 1]{FSTW21}. Suppose that $f(\vx,\vz)$ is computable by a width-$r$ roABP in \emph{some} variable order.  By pushing the $\vz$ variables into the fraction field, it follows that $f_\vz$ ($f$ as a polynomial in $\F[\vz][\vx]$) is also computable by a width-$r$ roABP over $\F(\vz)$ in the induced variable order on $\vx$ (\Cref{fact:roABP:closure}). \iddo{Check if the fact indeed justify this.} 
By combining the coefficient dimension lower bound of \Cref{res:lbs-fn:any-order:coeff-dim} with its relation to roABPs (\Cref{res:roABP-width_eq_dim-coeffs}), and by splitting $\vx$ in half along its variable order we  obtain that any roABP computing $g$ requires 
width $\ge 2^n$ in any variable order.
The roABP-\lbIPS\ lower bound follows immediately from this functional lower bound for $g$ along with our reduction (\Cref{def:single-functional-lower-bound-method}).
\iddo{Check Lin' as before!!}
\end{proof}

\iddolater{Comment about the inability to get multilinear formulas IPS lower bounds.}





\section{Lifting Degree-to-Size $\mathbb{II}$: Vector Invariant Polynomials}
\label{sec:Invariant-size-lower-bounds}

In this section, we will prove an roABP-IPS$_{\text{LIN}}$ size lower bound for the vector-invariant-inspired hard instance. 
    
 In~\cite{FSTW21}, roABP-IPS$_{\text{LIN}}$ size lower bound was proved for a \emph{lifted version} of the subset sum instance. In a similar spirit, one may try to lift the 
 polynomial $Q(\xbar, \ybar)$ from \Cref{eq:Qxy}. This does not seem very straightforward, because 
 unlike the subset sum instance, which is linear, here we have a large degree instance. Fortunately, to obtain an roABP-IPS$_{\text{LIN}}$ size lower bound in the order $\xbar < \ybar$ we do not need any lift (in  \Cref{sec:invrnt-any-ordr}
we do need to  use a lift to get our result to work for any 
order). We can prove a lower bound for $Q(\xbar, \ybar)$ directly. Moreover, we obtain a lower bound that holds over all fields of characteristic greater than or equal to $5$. No such roABP-IPS$_{\text{LIN}}$ size lower bound was known over small characteristic. 

On the other hand, the polynomial $Q(\xbar, \ybar)$ is not computable by an roABP in this order. In fact, provably an roABP in the order $\xbar < \ybar$ requires size $\exp(\Omega(n))$ to compute $Q(\xbar, \ybar)$. 

\subsection{roABP-IPS$_{\text{LIN}}$ Size Lower Bound for $Q(\xbar, \ybar)$}
\label{sec:roabp-sizeLB-Q}

    We will now state our main theorem in this section. 
    \begin{theorem}
        \label{thm:roabp-sizeLB-Q}
        Let $\mathbb{F}$ be a field of characteristic $\geq 5$ and let $\beta \notin \{-1, 0, 1\}$. Then, $Q(\xbar, \ybar)$, $\{x_i^2-x_i\}_i$, and   $\{y_i^2-y_i\}_i$ is unsatisfiable and  any 
        roABP-IPS$_{\text{LIN}}$ refutation in order of the variables where $\xbar$ precedes $\ybar$ requires size $\exp(\Omega(n))$. 
    \end{theorem}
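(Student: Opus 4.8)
\noindent The plan is to first verify unsatisfiability, then reduce the size bound to a coefficient--dimension bound via the functional lower bound method, and finally establish that bound by exhibiting $2^{n}$ distinct leading monomials among Boolean restrictions of the $\ybar$--variables. For unsatisfiability: by \Cref{fac:prop-Q}, each factor $a_i = x_iy_{i+1}-y_ix_{i+1}$ is a $2\times2$ determinant in $0$--$1$ entries, so $a_i\in\{-1,0,1\}$ over $\bits^{4n}$, and hence $\widetilde Q=\prod_i a_i$ takes values only in $\{-1,0,1\}$. Since $\mathrm{char}(\F)\ge5$ these three values are distinct, so $\widetilde Q(\xbar,\ybar)-\beta$ vanishes nowhere on $\bits^{4n}$ exactly when $\beta\notin\{-1,0,1\}$, which together with the Boolean axioms gives the claimed unsatisfiability.

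\smallskip\noindent\textbf{Reduction to a coefficient--dimension bound.} By the (placeholder form of the) functional lower bound method (\Cref{def:single-functional-lower-bound-method}), extracting the coefficient of the $Q$--placeholder from a size--$s$ roABP-IPS$_{\text{LIN}}$ refutation in a variable order placing all of $\xbar$ before all of $\ybar$, and using closure of roABPs under partial field substitution and sums (\Cref{fact:roABP:closure}), yields a roABP of width $O(s)$ in the induced order computing some $f(\xbar,\ybar)$ with $f\cdot Q = 1$ over $\bits^{4n}$, i.e.\ $f=1/Q$ on the cube. So it suffices to show that every $f$ agreeing with $1/Q$ on $\bits^{4n}$ requires roABP-width $2^{\Omega(n)}$ in any order with $\xbar<\ybar$. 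By \Cref{res:roABP-width_eq_dim-coeffs} the width at the cut separating $\xbar$ from $\ybar$ is at least $\dim\coeffs{\xbar|\ybar}(f)$, which by \Cref{res:evals_eq-coeffs} is at least $\dim\evals{\xbar|\ybar,\bits}(f)=\dim\spn\{\,f(\xbar,\vbb):\vbb\in\bits^{2n}\,\}$; it remains to lower bound this evaluation dimension.

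\smallskip\noindent\textbf{Main step.} Fix a degree--respecting monomial order (e.g.\ \textbf{grlex}). For a pattern $\epsilon\colon\{i\in[2n]:i\text{ odd}\}\to\{L,R\}$ let $\valpha_\epsilon\in\bits^{2n}$ assign $(y_i,y_{i+1})=(0,1)$ when $\epsilon(i)=L$ and $(y_i,y_{i+1})=(1,0)$ when $\epsilon(i)=R$. Then each $a_i$ restricts to $x_i$ (if $\epsilon(i)=L$) or to $-x_{i+1}$ (if $\epsilon(i)=R$), so $\widetilde Q(\xbar,\valpha_\epsilon)=(-1)^{r_\epsilon}\prod_{j\in J_\epsilon}x_j$, where $J_\epsilon$ picks exactly one index from each pair $\{i,i+1\}$ (namely $i$ if $\epsilon(i)=L$, else $i+1$) and $r_\epsilon=|\epsilon^{-1}(R)|$. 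Over $\xbar\in\bits^{2n}$ this equals $0$ or $(-1)^{r_\epsilon}$, so $Q(\xbar,\valpha_\epsilon)\ne0$ there, and the unique multilinear polynomial agreeing with $f(\xbar,\valpha_\epsilon)=1/Q(\xbar,\valpha_\epsilon)$ on $\bits^{2n}$ equals $-\tfrac1\beta+c_\epsilon\prod_{j\in J_\epsilon}x_j$ with $c_\epsilon=\tfrac1\beta+\tfrac1{(-1)^{r_\epsilon}-\beta}\in\{\tfrac1{\beta(1-\beta)},\tfrac1{\beta(1+\beta)}\}$, which is nonzero since $\beta\notin\{-1,0,1\}$ (cf.\ the coefficient computation in \Cref{lem:xsyt}, now read off a single restriction). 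As the order respects degree and $n\ge1$, the leading monomial of $\ml(f(\xbar,\valpha_\epsilon))$ is $\prod_{j\in J_\epsilon}x_j$, and distinct $\epsilon$ give distinct $J_\epsilon$, hence $2^n$ distinct leading monomials. Since $\ml$ is a linear map, $\dim\spn\{f(\xbar,\vbb):\vbb\in\bits^{2n}\}\ge\dim\spn\{\ml(f(\xbar,\vbb)):\vbb\in\bits^{2n}\}\ge\bigl|\{\LM(\ml(f(\xbar,\valpha_\epsilon))):\epsilon\}\bigr|=2^n$ by \Cref{res:dim-eq-num-TM-spn}, which closes the chain of inequalities and yields the $\exp(\Omega(n))$ bound.

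\smallskip\noindent\textbf{Main obstacle.} The delicate point is that a roABP-IPS$_{\text{LIN}}$ certificate need not produce a multilinear $f$, so one cannot argue about a unique refutation polynomial directly; the fix is to evaluate $\ybar$ at Boolean points and multilinearize afterwards, so that only the values of $1/Q$ on the cube matter. (A finer route --- not needed for this theorem, but used for the any--order bound in the next subsection --- uses the invariance of $Q$ under $y_i\hookrightarrow x_i+y_i$ from \Cref{fac:prop-Q} to show $f$ is itself invariant, and thereby to characterise its entire homogeneous degree--$2n$ slice rather than a single restriction.)
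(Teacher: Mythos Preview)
Your proof is correct and takes a genuinely different route from the paper's own argument.

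The paper proves the bound by fully characterising the degree-$2n$ homogeneous slice of the unique multilinear $f$ (their \Cref{lem:four-conditions}): using the invariance of $Q$ under the action $y_i\hookrightarrow x_i+y_i$ they show that the only degree-$2n$ monomials with nonzero coefficient are precisely the $2^n$ monomials $\xbar_{S_\sigma}\ybar_{\overline{S_\sigma}}$, and hence the $2^n\times 2^n$ submatrix of $C_f$ indexed by rows $\{S_\sigma\}$ and columns $\{\overline{S_\sigma}\}$ is diagonal with nonzero diagonal. Your argument bypasses all of this: you work via evaluation dimension, plug in the $2^n$ Boolean assignments $\valpha_\epsilon$ to $\ybar$, and observe that each restricted instance collapses to a one-term knapsack $\,(-1)^{r_\epsilon}\prod_{j\in J_\epsilon}x_j-\beta$, whose unique multilinear inverse is the two-term polynomial $-\tfrac1\beta+c_\epsilon\prod_{j\in J_\epsilon}x_j$; distinct $\epsilon$ give distinct leading monomials, yielding the $2^n$ bound directly.

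Your route is more elementary --- it never invokes the invariance property, and it follows the \cite{FSTW21} evaluation-dimension template used for the symmetric instances in \Cref{sec:roABP-IPS-lwbd-fxd-ord}. The paper's route buys a structural statement (the exact degree-$2n$ slice of $f$) that it advertises as the payoff of the invariant-theoretic viewpoint; but if one inspects the any-order bound in \Cref{sec:invrnt-any-ordr}, only the dimension inequality \eqref{eq:dim-f} is actually consumed, and your argument supplies that just as well. So for the theorem as stated, your proof is both correct and shorter; the paper's finer characterisation is of independent interest but not load-bearing here.
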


\paragraph{Characterising the monomials of degree $2n$ in $f(\xbar,\ybar)$.} In \Cref{sec:invariant-axioms-deg-lower-bounds} we computed the coefficient of one of the degree $2n$ monomials of $f(\xbar, \ybar)$, namely the coefficient of $\xbar_S\cdot \ybar_T$, where $S = \{i \in [2n] \mid i \text{ odd}\}$ and $T = [2n]\setminus S$. 
Next, we will use the calculations from \Cref{lem:xsyt} to obtain the coefficients of the monomials of degree $2n$ in $f(\xbar,\ybar)$. We first start with some notation. 

For $\sigma \in \{0,1\}^n$, let  $s_{\sigma,i} = 2i-1$ if $\sigma_i
 =0$ and $s_{\sigma,i} = 2i$ if $\sigma_i =1$, for $i \in [n]$. And let $S_\sigma = \cup_{i \in [n]} \{s_{\sigma, i}\}
\subset[2n]$ (and accordingly, $\overline{S_\sigma}:=[2n]\setminus S_\sigma$). For example, if $\sigma = 0^n$, then $S_\sigma = S$ and $\overline{S_\sigma} = T$. 

The monomial $\xbar_{S_\sigma}\ybar_{\overline{S_\sigma}}$
is defined by picking the $x$-variable from the first 
 column of $M_i$ if $\sigma_i=0$ and  from the second column of $M_i$ if $\sigma_i=1$, where $M_i$ is an defined in \Cref{fac:matrix-item}, and then picking  the $y$-variables from the \emph{other} columns of each matrix. For example: if $\sigma = (0,1,0)$, then $\xbar_{S_\sigma} \ybar_{\overline{S_{\sigma}}} = $
$\bf{\textcolor[rgb]{0.690196,0.690196,0.690196}{x_1 y_2} 
\textcolor[rgb]{0.401961,0.4401961,0.4461}{x_4 y_3}\textcolor{black}{x_5 y_6}}$.

We now prove some  additional properties of $f(\xbar, \ybar)$, which will help us with the roABP-IPS$_{\text{LIN}}$ size lower bound. 

\begin{lemma}\label{lem:four-conditions}
    The polynomial $f(\xbar, \ybar)$ has the following properties over the Boolean cube.
    \begin{enumerate}    
        \item For any odd $j \in [2n]$, let   $\phi_j$ be the mapping  $x_{j}\hookrightarrow x_{j},  
x_{j + 1} \hookrightarrow x_{j+1}, 
y_{j}\hookrightarrow x_j + y_j, \text{ and } 
y_{j}\hookrightarrow x_{j+1} + y_{j+1}$. It leaves all the other variables unchanged. Then, $\phi_j(f(\xbar, \ybar)) = f(\xbar, \ybar)$ over the Boolean cube for any odd $j \in [2n]$.  
        \item \label{it:787} $f(\xbar, \ybar)$ has $2^n$ monomials of degree $2n$ with non-zero coefficients.  
        Moreover, for every $\sigma \in \{0,1\}^n$, the coefficient of $\xbar_{S_\sigma}\ybar_{\overline{S_\sigma}}$ in $f(\xbar, \ybar)$ is either $\frac{1}{\beta(1-\beta)}$ or $\frac{1}{\beta(1+\beta)}$. 
        \item If $S \subseteq [2n]$ and $0<|S|<n$, then for any $T \subseteq [2n]$, the coefficient of $\xbar_S\ybar_T$ is equal to $0$ and the coefficient of $\xbar_T\ybar_S$ is equal to $0$ in $f(\xbar, \ybar)$. 
        \item There are no other monomials of degree $2n$ that have non-zero coefficients in $f(\xbar, \ybar)$. 
    \end{enumerate}
\end{lemma}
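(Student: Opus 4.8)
I would prove all four parts from the explicit multilinear form of $f$. Since $f$ is the unique multilinear polynomial agreeing with $1/Q$ on $\bits^{4n}$, we have $f(\xbar,\ybar)=\sum_{A,B\subseteq[2n]}\frac{1}{\widetilde{Q}(\mathbbm{1}_A,\mathbbm{1}_B)-\beta}\,\xbar_A\,\widetilde{\xbar_{\overline{A}}}\,\ybar_B\,\widetilde{\ybar_{\overline{B}}}$, and expanding each $\widetilde{\xbar_{\overline{A}}}=\prod_{i\notin A}(1-x_i)$ (and the $\ybar$ factor) by inclusion--exclusion and collecting monomials gives, for all $D,E\subseteq[2n]$,
\[
\coeff{\xbar_D\ybar_E}\big(f(\xbar,\ybar)\big)=(-1)^{|D|+|E|}\!\!\!\sum_{A\subseteq D,\ B\subseteq E}\!\!\!(-1)^{|A|+|B|}\,\frac{1}{\widetilde{Q}(\mathbbm{1}_A,\mathbbm{1}_B)-\beta}\,,
\]
which is the object to analyse. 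Two elementary facts drive everything: (i) $\widetilde{Q}(\mathbbm{1}_A,\mathbbm{1}_B)=\prod_{i\text{ odd}}a_i(\mathbbm{1}_A,\mathbbm{1}_B)$ with each $a_i(\mathbbm{1}_A,\mathbbm{1}_B)=[i\in A][i+1\in B]-[i\in B][i+1\in A]\in\{-1,0,1\}$, so the summand takes only the three values $-\tfrac1\beta$, $\tfrac1{1-\beta}$, $-\tfrac1{1+\beta}$ according as $\widetilde{Q}(\mathbbm{1}_A,\mathbbm{1}_B)$ is $0$, $1$, $-1$; and (ii) $\widetilde{Q}(\mathbbm{1}_A,\mathbbm{1}_B)\neq0$ forces both $A$ and $B$ to meet every odd pair $\{2i-1,2i\}$, hence $|A|,|B|\ge n$.

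\noindent Part~1 is the remark that $a_i=\det M_i$ is unchanged by the elementary row operation adding the first row of $M_i$ to the second---exactly the substitution $\phi_i$---so $\phi_j(\widetilde{Q})=\widetilde{Q}$ as formal polynomials, hence $\phi_j$ fixes $Q$ and fixes $1/Q$ as a rational function (equivalently, on $\bits^{4n}$ the value $1/Q$ depends only on $(a_1,\dots,a_n)$, each of which $\phi_j$ fixes identically). Part~3 is immediate: if $0<|S|<n$ then every $A\subseteq S$ has $|A|<n$, so by (ii) $\widetilde{Q}(\mathbbm{1}_A,\mathbbm{1}_B)=0$ for all $B$; the summand in the coefficient formula is then the constant $-\tfrac1\beta$, so $\coeff{\xbar_S\ybar_T}(f)=(-1)^{|S|+|T|}(-\tfrac1\beta)\big(\sum_{A\subseteq S}(-1)^{|A|}\big)\big(\sum_{B\subseteq T}(-1)^{|B|}\big)=0$ since $\sum_{A\subseteq S}(-1)^{|A|}=0$ for $S\neq\emptyset$, and $\coeff{\xbar_T\ybar_S}(f)$ vanishes by the mirror argument ($B\subseteq S$, $|B|<n$).

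\noindent For the degree-$2n$ monomials $\xbar_D\ybar_E$ ($|D|+|E|=2n$), giving Parts~2 and 4, I split into cases. If $D=\emptyset$ or $E=\emptyset$ (the monomial is $\ybar_{[2n]}$ or $\xbar_{[2n]}$) then all $a_i$ vanish, every summand is $-\tfrac1\beta$, and the coefficient is $0$ by the same sign cancellation. If $D,E\neq\emptyset$ and $\min(|D|,|E|)<n$, Part~3 gives $0$. The core case is $|D|=|E|=n$. Using $\sum_{A\subseteq D,\,B\subseteq E}(-1)^{|A|+|B|}=0$ (as $D,E\neq\emptyset$) to cancel the $-\tfrac1\beta$ terms, a short rearrangement gives $\coeff{\xbar_D\ybar_E}(f)=(-1)^{|D|+|E|}\big(P_+\cdot\tfrac{1}{\beta(1-\beta)}+P_-\cdot\tfrac{1}{\beta(1+\beta)}\big)$, where $P_\pm=\sum(-1)^{|A|+|B|}$ over $A\subseteq D$, $B\subseteq E$ with $\widetilde{Q}(\mathbbm{1}_A,\mathbbm{1}_B)=\pm1$. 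The crucial observation is that $P_+-P_-=\sum(-1)^{|A|+|B|}\widetilde{Q}(\mathbbm{1}_A,\mathbbm{1}_B)$ and $P_++P_-=\sum(-1)^{|A|+|B|}\widetilde{Q}(\mathbbm{1}_A,\mathbbm{1}_B)^2$ both \emph{factor over the $n$ odd pairs}: $\widetilde{Q}=\prod_i a_i$, the sign $(-1)^{|A|+|B|}$, and the constraints $A\subseteq D,\ B\subseteq E$ all split pairwise, so writing $D_i=D\cap\{2i-1,2i\}$, $E_i=E\cap\{2i-1,2i\}$ one gets $P_+-P_-=\prod_iF_i$ and $P_++P_-=\prod_iG_i$ with $F_i,G_i$ depending only on $(D_i,E_i)$. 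A finite enumeration (at most sixteen cases per pair, using $a^2=|a|$ on $\bits$-valued inputs) shows $F_i\in\{-1,0,1\}$ is nonzero only for $(D_i,E_i)\in\{(\{2i-1\},\{2i\}),(\{2i\},\{2i-1\})\}$ (values $1$ and $-1$) and $G_i\in\{-2,0,1\}$ is nonzero only for those two cases ($G_i=1$) and for $(D_i,E_i)=(\{2i-1,2i\},\{2i-1,2i\})$ ($G_i=-2$). Hence $\prod_iF_i\neq0$ exactly when $(D,E)=(S_\sigma,\overline{S_\sigma})$ for some $\sigma\in\bits^n$, and $\prod_iG_i\neq0$ forces every pair into one of these three types, which contribute $2,2,4$ respectively to $|D|+|E|$, so $|D|+|E|=2n$ excludes the third type and again yields $(D,E)=(S_\sigma,\overline{S_\sigma})$. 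Therefore: a degree-$2n$ monomial not of the form $\xbar_{S_\sigma}\ybar_{\overline{S_\sigma}}$ has $P_+=P_-=0$ and coefficient $0$ (Part~4); while for $(D,E)=(S_\sigma,\overline{S_\sigma})$ we have $\prod_iG_i=1$ and $\prod_iF_i=(-1)^{\#\{i:\sigma_i=1\}}$, so exactly one of $P_+,P_-$ equals $1$ and the other $0$ according to the parity of $\#\{i:\sigma_i=1\}$, giving coefficient $\tfrac{1}{\beta(1-\beta)}$ or $\tfrac{1}{\beta(1+\beta)}$ (Part~2; the instance $\sigma=0^n$ recovers \Cref{lem:xsyt}). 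Both values are nonzero since $\beta\notin\{-1,0,1\}$, and $\sigma\mapsto(S_\sigma,\overline{S_\sigma})$ is injective, so there are precisely $2^n$ such monomials.

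\noindent I expect the main obstacle to be the combinatorial bookkeeping in the pairwise factorization: computing $F_i$ and $G_i$ correctly over all configurations $(D_i,E_i)$, keeping the signs straight and using $a^2=|a|$ on Boolean-valued arguments, and then running the counting argument that any surviving degree-$2n$ contribution must live on an $S_\sigma$-pattern. Conceptually it is light---everything reduces to the M\"{o}bius coefficient formula, the three-valued behaviour of $1/(\widetilde{Q}-\beta)$, the vanishing $\sum_{A\subseteq D}(-1)^{|A|}=0$ for $D\neq\emptyset$, and the splitting of these signed sums as products over the $n$ independent $2\times2$ blocks $M_i$---but making the block analysis exhaustive and sign-correct is where the real work lies.
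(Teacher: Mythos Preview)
Your proposal is correct and takes a genuinely different route from the paper's proof. Both proofs handle Part~1 identically (invariance of the determinant under the row operation $\phi_j$), but diverge thereafter. The paper proves Part~3 by an induction on $|U|$, writing $f=\sum_U f_U(\ybar)\xbar_U$ and zeroing out variables outside $U$; proves Part~2 by re-running the explicit four-term computation of \Cref{lem:xsyt}, noting that only ``Term~2'' changes sign with the parity of $\sigma$; and proves Part~4 by a contradiction argument that \emph{uses} Part~1: if a degree-$2n$ monomial contains both $x_i$ and $y_i$, applying $\phi_i$ and invoking invariance forces a monomial covered by Part~3 to have nonzero coefficient. By contrast, you work entirely from the M\"obius coefficient formula and never use Part~1 for the other parts. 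Your key idea---rewriting $P_{+}\pm P_{-}$ as $\sum(-1)^{|A|+|B|}\widetilde{Q}$ and $\sum(-1)^{|A|+|B|}\widetilde{Q}^{2}$ and observing that both factor over the $n$ independent $2\times 2$ blocks---gives a uniform, self-contained treatment of Parts~2--4 in one stroke, with the parity dichotomy of Part~2 falling out of $\prod_i F_i=(-1)^{|\sigma|}$. The paper's argument for Part~4 is conceptually lighter (it explains \emph{why} the invariance matters), whereas your approach is more direct and recovers the exact coefficients without a separate computation. Both are valid; the main thing to get right in your version is exactly what you flag: the exhaustive per-block tabulation of $F_i$ and $G_i$, which checks out.
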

Let us first prove \Cref{thm:roabp-sizeLB-Q} assuming this lemma. 
\begin{proof}[Proof of \Cref{thm:roabp-sizeLB-Q}]
    It is easy to see that $Q(\xbar, \ybar)$, $\{x_i^2-x_i\}_i$, and $\{y_i^2-y_i\}_i$ are unsatisfiable, as long as $\beta>2$. Now let $[f(\xbar, \ybar)]_{2n}$ denote the degree $2n$ slice of the polynomial $f(\xbar, \ybar)$. We now get the following inequalities. 
    \begin{align}
        \dim \coeff{\xbar|\ybar}(f(\xbar, \ybar)) \geq~ & \dim \coeff{\xbar|\ybar}([f(\xbar, \ybar)]_{2n}) \nonumber \\
        \geq ~& \left| \left\{ \xbar_{S_\sigma} \cdot \ybar_{\overline{S_{\sigma}}} \mid \sigma \in \{0,1\}^n\right\}\right|\nonumber\\
        =~ & 2^n \label{eq:dim-f}
    \end{align}
Here, the first inequality comes from the fact that the overall coefficient dimension is at least as much as that of the degree $2n$ slice. The second inequality follows from Parts 2, 3, and 4 of \Cref{lem:four-conditions}. 

The final equality follows from Part 2 in \Cref{lem:four-conditions}. The main observation is that as long as $\mathbb{F}$ has characteristic greater than or equal to $5$, all the coefficients of the monomials in the set $\left\{ \xbar_{S_\sigma} \cdot \ybar_{\overline{S_{\sigma}}} \mid \sigma \in \{0,1\}^n\right\}$ are non-zero. 

To conclude, note that the lower bound of $2^n$ on the coefficient dimension of $f(\xbar, \ybar)$ implies that $f(\xbar, \ybar)$ requires width $2^n$ to be computed as an roABP in the order $\xbar < \ybar$ (\Cref{res:roABP-width_eq_dim-coeffs}). 
\end{proof}

Now, we prove \Cref{lem:four-conditions}. 
\begin{proof}[Proof of \Cref{lem:four-conditions}]
\

\noindent\textbf{Part 1}: Over the Boolean hypercube, $f(\xbar, \ybar) = 1/Q(\xbar, \ybar)$. Hence, we get the following over the Boolean hypercube. 
\[\phi_j(f(\xbar, \ybar)) = \phi_j\left(\frac{1}{\left(\prod_{i \in [2n], i: \text{ odd}}x_iy_{i+1} - y_ix_{i+1}\right) - \beta}\right) = \frac{1}{\left(\phi_j\left(\prod_{i \in [2n], i: \text{ odd}}x_iy_{i+1} - y_ix_{i+1}\right)\right) - \beta} \]

As $\phi_j(x_jy_{j+1} - y_jx_{j+1}) = x_jy_{j+1} - y_jx_{j+1}$ and as $\phi_j$ keeps all other terms unchanged, we get that 
\[\frac{1}{\left(\phi_j\left(\prod_{i \in [2n], i: \text{ odd}}x_iy_{i+1} - y_ix_{i+1}\right)\right) - \beta} = \frac{1}{\prod_{i \in [2n], i: \text{ odd}} \left(x_iy_{i+1} - y_ix_{i+1}\right) - \beta} = \frac{1}{Q(\xbar, \ybar)} = f(\xbar, \ybar)\]

\noindent\textbf{Part 2}: {This is similar to the proof of \Cref{lem:xsyt}}. In fact, the computations for Terms 1, 3, and 4 remain exactly the same as before. The only thing that is different is the computation for Term 2. 

We will say that $\sigma \in \{0,1\}^n$ is odd if it has odd number of $1$s and even otherwise. {For example, in \Cref{lem:xsyt}, $\sigma = 0^n$ and hence it w}as even. Note that, when $\sigma$ is even, then Term 2 gives a coefficient of $\frac{1}{1-\beta}$. Hence, in this case, the coefficient of $\xbar_{S_\sigma}\cdot \ybar_{\overline{S_{\sigma}}}$ is the same as the coefficient of $\xbar_S\cdot \ybar_T$. That is, it is $\frac{1}{\beta(1-\beta)}$.  On the other hand, if $\sigma$ is odd, then Term 2 becomes $\frac{1}{-1-\beta}$. So, in this case, the coefficient of $\xbar_{S_\sigma}\cdot \ybar_{\overline{S_{\sigma}}}$ is $\frac{1}{\beta} + \frac{1}{-1-\beta}$. This is equal to $\frac{1}{\beta(1+\beta)}$. 
    
\noindent\textbf{Part 3}: Let us rewrite $f(\xbar, \ybar)$ as follows. \[f(\xbar, \ybar) = \sum_{m:~ \text{monomial in } x \text{ variables}} f_m(\ybar)\cd m = \sum_{U \subseteq [2n]} f_U(\ybar) \xbar_U\,. \]
    Here, we use the Boolean axioms for  simplification to get the second equality.
    First, observe that for $U = \emptyset$, $f_\emptyset(\ybar) = \frac{1}{-\beta}$ by using the fact that $f(\xbar, \ybar) = 1/Q(\xbar, \ybar)$. 
    
    Now, we will prove the statement by induction on the size of $U$. For $|U|=1$, 
    \begin{align*}
        \coeff{x_i}(f(\xbar, \ybar) & = \coeff{x_i}\left(\frac{1}{Q(\xbar, \ybar)}\right)\\
        \coeff{x_i}\left(f(\xbar, \ybar)\mid_{x_j =0 \forall j \neq i} \right)& = \coeff{x_i}\left(\frac{1}{Q(\xbar, \ybar)}\mid_{x_j =0 \forall j \neq i} \right)\\ 
        f_{x_i}(\ybar)x_1 + f_{\emptyset}(\ybar) &= \frac{1}{-\beta}\,.
    \end{align*}
    But we saw that $f_\emptyset(\ybar) = \frac{1}{-\beta}$, hence $f_{x_i}(\ybar) = 0$. 

    For the inductive case, we have a very similar argument. Let $|U|<n$. 

    \begin{align*}
        \coeff{\xbar_U}(f(\xbar, \ybar)) & = \coeff{x_U}\left(\frac{1}{Q(\xbar, \ybar)}\right)\\
        \coeff{X_U}\left(f(\xbar, \ybar)\mid_{x_j =0 \forall j \notin U} \right)& = \coeff{\xbar_U}\left(\frac{1}{Q(\xbar, \ybar)}\mid_{x_j =0 \forall j \notin U} \right)\\ 
        \sum_{V \subseteq U} f_V(\ybar)\xbar_V &= \frac{1}{-\beta}\,.
    \end{align*}
    By induction hypothesis, and by using the fact that $f_\emptyset(\ybar) = \frac{1}{-\beta}$, we get that $f_U(\ybar) = 0$. By a similar argument, we can also prove that the coefficient of $\xbar_T\ybar_S$  is equal to $0$. This finishes the proof of Part 3. 

\noindent\textbf{Part 4}: Assume for the sake of contradiction there is a monomial of degree $2n$ in $f(\xbar, \ybar)$ that does not have the structure as described in Part 2. Then it must contain an index $i \in [2n]$ such that the monomial contains $x_i$ as well as $y_i$ in it, that is, the monomial looks like this $x_iy_i\xbar_S \ybar_T$, where $|S|+|T|+2 = 2n$. 

    If either $|S\cup \{i\}| < n $ or $|T\cup \{i\}| < n $, then by Part 3 above, we know that the coefficient of the monomial will be zero. 
    
    Now suppose that $|S\cup \{i\}| = n $ and $|T\cup \{i\}| = n $.  By Part 1 above, we know that the set of invariants of $Q(\xbar, \ybar)$ and that of $f(\xbar, \ybar)$ are the same. Let us apply the map $\phi_i$, that is $x_i \hookrightarrow x_i, y_i \hookrightarrow x_i + y_i, x_{i+1} \hookrightarrow x_{i+1}, y_{i+1} \hookrightarrow x_{i+1} + y_{i+1}$. This gives us $x_i (x_i+y_i)\phi_i(\xbar_S\ybar_T)$. 
    
    Consider the case when $i+1 \notin T$. Then, we get $x_i (x_i+y_i)\phi_i(\xbar_S\ybar_T) = x_i^2\xbar_S\ybar_T + x_iy_i\xbar_S\ybar_T$. So, if the coefficient of $x_iy_i\xbar_S\ybar_T$ is non-zero, then after applying the map, the coefficients of the resulting monomials are also non-zero. Notice that, we are working modulo the Boolean  axioms, which means, the expression above will simplify to $x_i\xbar_S\ybar_T + x_iy_i\xbar_S\ybar_T$. Now notice that by Part 3 above, the coefficient of the monomial $x_i\xbar_{S}\ybar_T$ must be zero, as $|T|<n$, which is a contradiction. 

    On the other hand, if $i+1 \in T$, then $x_i (x_i+y_i)\phi_i(\xbar_S\ybar_T) = x_i(x_i+y_i) \xbar_S \ybar_{T \setminus \{i+1\}} (x_{i+1} + y_{i+1})$. After expanding, we see that one of the resulting monomials is again $x_i\xbar_{S}\ybar_T$. So, if $x_iy_i\xbar_S \ybar_T$ has a non-zero coefficient, then so should this monomial. But by Part 3 above, we know that this monomial must have coefficient $0$, which gives a contraction. 
\end{proof}




\subsection{Coefficient Dimension in any Variable Order}\label{sec:invrnt-any-ordr}

In the previous section we proved a lower bound on the coefficient dimension in the $\xbar|\ybar$ variable partition. In this section we extend that result to give bounds on the coefficient dimension in any order. 

To achieve this, we use a similar lifting as before. 
Namely, we lift the instance to a new polynomial, $P(\ubar, \zbar)$, using the new auxiliary variables \vz. The polynomial $P(\ubar, \zbar)$ has the property that given a partition of $\ubar$ variables into two equal parts, there exists a $0/1$ assignment to the auxiliary variables that reveals a hard \emph{planted} instance of $Q$ (from our previous section) inside $P$. 
Below, we will start with the description of the hard instance.

\paragraph{Hard instance.} Let $\ubar = \{u_1, u_2, \ldots, u_{4n}\}$, let $m = {{4n}\choose{4}}$, and $\zbar = \{z_1, z_2, \ldots, z_m\}$. Let $P(\ubar, \zbar) \in \mathbb{F}[\ubar, \zbar]$ be defined as follows.
\[P(\ubar, \zbar) = \left(\prod_{i<j<k<\ell \in [4n] } 1 - z_{i,j,k,\ell} + z_{i,j,k,\ell} (u_iu_\ell - u_ju_k)\right) - \beta\]

We will prove the following theorem about this polynomial. 

\begin{theorem}
    \label{thm:any-order}
    Let $\mathbb{F}$ be a field of characteristic $\geq 5$ and let $P(\ubar, \zbar)$ be as defined above. Then $P(\ubar, \zbar), \{u_i^2-u_i\}_i, \{z_i^2-z_i\}_i$ is unsatisfiable as long as $\beta \notin \{-1, 0, 1\}$. 
\iddo{deleted  g that is not used in the statement}   
%
%
    And any roABP-IPS$_{\text{LIN}}$ refutation of this instance requires $\exp(\Omega(n))$ size.  Any multilinear-formula-IPS requires $n^{\Omega(\log n)}$ size and any product-depth-$\Delta$ multilinear-formula-IPS requires size $n^{\Omega((n/\log n)^{1/\Delta}/\Delta^2)}$. 
\end{theorem}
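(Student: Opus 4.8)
The plan is to invoke the functional lower bound method (\Cref{def:single-functional-lower-bound-method}): it suffices to show that every $g\in\F[\ubar,\zbar]$ which agrees with $1/P$ on $\bits^{4n+m}$ requires roABP width $\exp(\Omega(n))$ in every variable order, and multilinear-formula size $n^{\Omega(\log n)}$ (and size $n^{\Omega((n/\log n)^{1/\Delta}/\Delta^2)}$ at product-depth $\Delta$); the multilinear bounds then upgrade to full IPS by the ``moreover'' clause of \Cref{def:single-functional-lower-bound-method}. Unsatisfiability is immediate: over the Boolean cube each factor $1-z_{ijk\ell}+z_{ijk\ell}(u_iu_\ell-u_ju_k)$ equals $1$ (if $z_{ijk\ell}=0$) or the $2\times 2$ determinant $u_iu_\ell-u_ju_k$ (if $z_{ijk\ell}=1$), which has $0$--$1$ entries and hence lies in $\{-1,0,1\}$ as in \Cref{fac:prop-Q}; thus $\prod_{i<j<k<\ell}(\cdots)\in\{-1,0,1\}$ over the cube, and $P,\{u_i^2-u_i\}_i,\{z_i^2-z_i\}_i$ has no Boolean solution once $\beta\notin\{-1,0,1\}$.

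The crucial feature of $P$ is that it carries one $\zbar$-variable per $4$-subset of $[4n]$, giving exactly the freedom needed to expose the hard instance of \Cref{sec:Invariant-size-lower-bounds} \emph{for every} equal partition of $\ubar$. Fix any $\ubar=(\vbar,\wbar)$ with $|\vbar|=|\wbar|=2n$. Pair the elements of $\vbar$ into $n$ pairs, and those of $\wbar$ into $n$ pairs, and let $B_1,\dots,B_n$ be the $n$ pairwise disjoint $4$-subsets of $[4n]$ obtained by merging the $t$-th $\vbar$-pair with the $t$-th $\wbar$-pair; write $B_t=\{i_t<j_t<k_t<\ell_t\}$, so each $B_t$ has exactly two indices of $\vbar$ and two of $\wbar$. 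Let $\vaa\in\bits^{m}$ set $z_{i_t,j_t,k_t,\ell_t}=1$ for $t\in[n]$ and all other $\zbar$-variables to $0$. Then $P(\ubar,\vaa)=\prod_{t=1}^{n}(u_{i_t}u_{\ell_t}-u_{j_t}u_{k_t})-\beta$, which up to renaming of variables is exactly $\widetilde Q-\beta$: a product of $n$ determinants of $2\times 2$ matrices over pairwise disjoint variable quadruples. Hence $g_\vaa:=g|_{\zbar=\vaa}$ agrees with $1/(\widetilde Q-\beta)$ on the Boolean cube.

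Since \Cref{lem:four-conditions} is a combinatorial statement about such products of $2\times 2$ determinants (with row/column operations on each block playing the role of the invariance action used in its Part~4), it transfers to the planted copy: the multilinear polynomial agreeing with $g_\vaa$ on the cube has degree-$2n$ slice $\sum_{\sigma\in\bits^{n}}c_\sigma\prod_{t=1}^{n}m_t(\sigma_t)$, where $m_t(0)=u_{i_t}u_{\ell_t}$, $m_t(1)=u_{j_t}u_{k_t}$, each $c_\sigma\in\{\tfrac1{\beta(1-\beta)},\tfrac1{\beta(1+\beta)}\}$ is nonzero (as $\chara\F\ge 5$), and there are no monomials of degree $\ge 2n$ outside this set. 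Because every $B_t$ meets both $\vbar$ and $\wbar$, the monomials $m_t(0),m_t(1)$ have distinct $\vbar$-parts and distinct $\wbar$-parts; as the $B_t$ are disjoint, the $\vbar$-part (resp.\ $\wbar$-part) of $\prod_tm_t(\sigma_t)$ determines all of $\sigma$, and combining a $\vbar$-part arising from $\sigma$ with a $\wbar$-part arising from $\sigma'$ is a monomial of the polynomial only when $\sigma=\sigma'$. Thus the $2^n\times 2^n$ submatrix of the coefficient matrix of $g_\vaa$ in the $(\vbar,\wbar)$ split indexed by these monomials is diagonal with nonzero diagonal, so $\dim_\F\coeffs{\vbar|\wbar}(g_\vaa)\ge 2^n$; by \Cref{res:coeff-dim:fraction-field}, $\dim_{\F(\zbar)}\coeffs{\vbar|\wbar}(g_\zbar)\ge 2^n$ for \emph{every} equal partition of $\ubar$.

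To finish: for roABP, given any roABP for $g$, push the $\zbar$-variables into $\F(\zbar)$ (\Cref{fact:roABP:closure}) and split the induced order on $\ubar$ in the middle; by \Cref{res:roABP-width_eq_dim-coeffs} and the bound above the width is $\ge 2^n$, and \Cref{def:single-functional-lower-bound-method} yields the roABP-IPS$_{\text{LIN}}$ lower bound. For multilinear formulas, the coefficient-dimension bound for \emph{all} equal partitions of $\ubar$ feeds \Cref{thm:full-rank-lb} (main variables $\ubar$, auxiliary variables $\zbar$) and produces the claimed $n^{\Omega(\log n)}$ and product-depth-$\Delta$ bounds. The points that need care are: (i) verifying that the exact characterization of the degree-$2n$ monomials in \Cref{lem:four-conditions} — which rests on its Parts~1 and~3 (the invariance and the vanishing of ``unbalanced'' monomials) — carries over verbatim under the relabeling, so that the $2^n\times 2^n$ submatrix is genuinely a permutation pattern for \emph{every} partition; and (ii) observing that the rank $2^n$ we obtain, which is $2^{\Omega(n)}$ but below the literal full-rank value $2^{2n}$ for a $4n$-variable polynomial, still suffices — the quantitative form of the Raz--Yehudayoff bound needs only slightly super-polynomial rank for super-polynomial size, and rank $2^{\Omega(n)}$ for the stated depth-$\Delta$ bound. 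I expect (i) — re-deriving the coefficient structure uniformly over all variable partitions — to be the principal technical obstacle.
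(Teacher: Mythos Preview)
Your approach is essentially the paper's own: reduce via the functional method, and for every balanced split $\ubar=(\vbar,\wbar)$ choose a Boolean $\zbar$-assignment that plants a copy of the invariant instance, then transfer the $2^n$ coefficient-dimension bound from \Cref{lem:four-conditions} / \Cref{eq:dim-f} via \Cref{res:coeff-dim:fraction-field}, and finish with \Cref{res:roABP-width_eq_dim-coeffs} for roABPs and Raz--Yehudayoff for multilinear formulas.

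On your caveat (i): you are in fact more careful than the paper here. The paper asserts that its $\zbar$-assignment recovers literally $Q(\vbar,\wbar)$, i.e.\ each planted factor has the shape $v_iw_{i+1}-w_iv_{i+1}$; but for an arbitrary partition the sorted positions of the four chosen $u$-indices can put both $\vbar$-elements at the extremes (or both in the middle), in which case the planted factor is $v_?v_?-w_?w_?$ rather than $v_?w_?-v_?w_?$. Your route---observe that the planted polynomial is \emph{some} product of $n$ disjoint $2\times2$ determinants, invoke the partition-independent characterization of its degree-$2n$ slice (Parts~2--4 of \Cref{lem:four-conditions} carry over since the invariance used in Part~4 is available for any such determinant), and then read off a $2^n\times 2^n$ permutation-pattern submatrix in the \emph{given} $(\vbar\,|\,\wbar)$ split---is the correct fix, and is exactly what the paper's proof tacitly relies on.

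On your caveat (ii): this is a genuine quantitative gap and you should not wave it away. With $|\vbar|=|\wbar|=2n$, \Cref{thm:full-rank-lb} as stated asks for rank $\ge 2^{2n}$ on every balanced split, whereas you (and the paper) only establish $2^{n}$. The claimed multilinear-formula bounds therefore do \emph{not} follow from \Cref{thm:full-rank-lb} as a black box; one needs the more refined Raz and Raz--Yehudayoff statements that trade rank deficit against size. The roABP bound is unaffected, since \Cref{res:roABP-width_eq_dim-coeffs} uses the rank directly.
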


\begin{remark}
We note the following salient points regarding the result. 
\begin{itemize}
    \item Let $N$ be the total number of variables in $P(\ubar,\zbar)$. Then, $N = O(m) = O({{4n}\choose{4}}) = O(n^4)$. Hence, in terms of $N$, the roABP size lower bound is $2^{\Omega(N^{1/4})}$. 

    \item The polynomial $P(\ubar, \zbar)$ is not multilinear. It is however a relatively easy-to-compute polynomial. Namely, it has a product-depth $2$ formula of polynomial size.

    \item The lower bound holds over all characteristics (as long as it is $\geq 5$). No IPS lower bounds over finite fields were known before. 
\end{itemize}
\end{remark}

The following technical lemma is used to prove the above theorem. 

\begin{lemma}
    \label{lem:any-order}
    Let $\mathbb{F}$ be a field with characteristic $\geq 5$, let $\beta \notin \{-1, 0, 1\}$, and let $g(\ubar, \zbar) \in \mathbb{F}[\ubar, \zbar]$ be a polynomial such that 
    \[g(\ubar, \zbar) = \frac{1}{P(\ubar, \zbar)},\]
    for $\ubar \in \{0,1\}^{4n}$ and $\zbar \in \{0,1\}^m$. Let $g_{\zbar}$ denote the polynomial in $\mathbb{F}[\zbar][\ubar]$. Then for any partition $\ubar = (\vbar, \wbar)$ such that $|\vbar| = |\wbar| = 2n$
    \[\dim_{\mathbb{F}[\zbar]} \coeff{\vbar|\wbar}( g_{\zbar}) \geq 2^n \]
\end{lemma}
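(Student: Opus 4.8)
The goal is to show that for the lifted polynomial $P(\ubar,\zbar)$, the function-reciprocal polynomial $g(\ubar,\zbar) = 1/P$ over the Boolean cube has coefficient dimension $\geq 2^n$ in *every* balanced partition $\ubar = (\vbar,\wbar)$. The strategy mirrors the lifting scheme from the introduction: reduce an arbitrary partition of $\ubar$ to the already-understood $\xbar \prec \ybar$ case for $Q$ via a Boolean substitution to the $\zbar$-variables, then invoke $\dim_{\F(\zbar)} \coeffs{\vbar|\wbar}(g_{\zbar}) \ge \dim_\F \coeffs{\vbar|\wbar}(g_{\vaa})$ (the analogue of \Cref{res:coeff-dim:fraction-field}).

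First I would set up the gadget decoding. Fix a balanced partition $\ubar = (\vbar,\wbar)$ with $|\vbar|=|\wbar|=2n$. I want a $0$-$1$ assignment $\vaa$ to $\zbar = \{z_{i,j,k,\ell}\}$ such that each surviving factor $1 - z_{i,j,k,\ell} + z_{i,j,k,\ell}(u_iu_\ell - u_ju_k)$ either collapses to $1$ (when $z$ is set to $0$) or becomes exactly $u_iu_\ell - u_ju_k$ (when $z$ is set to $1$). So I need to choose a perfect matching structure: pair up the $2n$ variables in $\vbar$ and the $2n$ variables in $\wbar$ into $n$ "$2\times 2$ blocks," each block using two indices $\{i,\ell\}$ from $\vbar$ and two indices $\{j,k\}$ from $\wbar$, arranged so that $u_iu_\ell - u_ju_k$ is (a renaming of) the determinant gadget $x_a y_{a+1} - y_a x_{a+1}$ appearing in $\widetilde Q$. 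Concretely, for the $t$-th block pick indices $i < j < k < \ell$ in $[4n]$ with $i,\ell \in \vbar$-positions and $j,k \in \wbar$-positions (here one must check the ordering constraint $i<j<k<\ell$ can be met — this is where the $\binom{4n}{4}$ choice of $\zbar$-variables, one per ordered $4$-subset, buys enough flexibility; if a naive pairing violates $i<j<k<\ell$ one relabels which of the two $\vbar$-indices is "$i$" and which is "$\ell$," and similarly for $\wbar$, and uses that $u_iu_\ell - u_ju_k = -(u_j u_k - u_i u_\ell)$ up to sign so a consistent global sign of $\pm 1$ is absorbed into $\beta$). Set $z_{i,j,k,\ell}=1$ on exactly these $n$ quadruples and $z=0$ on all others. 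Under this $\vaa$, $P(\ubar,\vaa)$ becomes $\prod_{t=1}^n (\text{determinant block}_t) - \beta$, which is precisely an instance of $\widetilde Q(\xbar,\ybar) - \beta = Q$ after renaming the $\vbar$-variables to $\xbar$ and the $\wbar$-variables to $\ybar$. Hence $g(\vbar,\wbar,\vaa) = 1/Q(\vbar,\wbar)$ over the Boolean cube, i.e. $g_{\vaa} = f$ from \Cref{sec:roabp-sizeLB-Q} (in the variable order $\vbar$-then-$\wbar$, matching $\xbar \prec \ybar$).

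Then I would apply \Cref{thm:roabp-sizeLB-Q} (more precisely \Cref{lem:four-conditions} and \Cref{eq:dim-f}): in the partition where all of $\vbar$ precedes all of $\wbar$, $\dim_\F \coeffs{\vbar|\wbar}(f) \ge 2^n$, because the degree-$4n$ slice of $f$ has $2^n$ monomials $\xbar_{S_\sigma}\ybar_{\overline{S_\sigma}}$ with coefficients $\tfrac{1}{\beta(1-\beta)}$ or $\tfrac{1}{\beta(1+\beta)}$, all nonzero in characteristic $\geq 5$, and Parts 3–4 of \Cref{lem:four-conditions} guarantee no collisions contaminate this block of the coefficient matrix. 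Combining with the fraction-field inequality $\dim_{\F[\zbar]}\coeffs{\vbar|\wbar}(g_{\zbar}) = \dim_{\F(\zbar)}\coeffs{\vbar|\wbar}(g_{\zbar}) \ge \dim_\F \coeffs{\vbar|\wbar}(g_{\vaa}) \ge 2^n$ closes the lemma. The theorem then follows: roABP width $\ge 2^n$ in every order via \Cref{res:roABP-width_eq_dim-coeffs}, and the multilinear-formula and constant-depth-multilinear-formula lower bounds via \Cref{thm:full-rank-lb} (whose hypothesis — coefficient dimension $\ge 2^n$ for *every* balanced partition — is exactly what we proved), all feeding into the functional lower bound reduction \Cref{def:single-functional-lower-bound-method}.

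The main obstacle I anticipate is the bookkeeping in the first step: showing that for an *arbitrary* balanced partition of the $4n$ indices one can always pack the $2n$ $\vbar$-indices and $2n$ $\wbar$-indices into $n$ ordered quadruples $i<j<k<\ell$ with $\{i,\ell\}\subseteq\vbar$-side and $\{j,k\}\subseteq\wbar$-side (or the symmetric pattern), *and* that the global product of block-signs is a fixed constant independent of the partition so that it can be folded into the choice of $\beta$ without breaking unsatisfiability ($\beta\notin\{-1,0,1\}$). I would handle this by a greedy/interval argument on the sorted order of $[4n]$, and by noting the $u_iu_\ell - u_ju_k$ gadget is antisymmetric under swapping the roles of the two $\vbar$-indices, giving enough freedom to force the ordering; the sign issue is cosmetic since $|{-\beta}|$ ranges over the same admissible set as $\beta$. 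A secondary check is that $P$ itself is genuinely unsatisfiable over $\{0,1\}$ — this is immediate since each factor lies in $\{1\}\cup\{-1,0,1\} = \{-1,0,1\}$ on the cube (as $u_iu_\ell - u_ju_k \in \{-1,0,1\}$ by \Cref{fac:matrix-item}), so the product lies in $\{-1,0,1\}$ and subtracting $\beta\notin\{-1,0,1\}$ never vanishes.
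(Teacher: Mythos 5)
Your overall strategy matches the paper's proof: plant a copy of $Q$ inside $P$ via a Boolean $\zbar$-assignment $\vaa$, use $\dim_{\F(\zbar)}\coeffs{\vbar|\wbar}(g_{\zbar})\geq\dim_{\F}\coeffs{\vbar|\wbar}(g_{\vaa})$, and invoke the $\xbar|\ybar$ coefficient-dimension bound of \Cref{eq:dim-f}. The fraction-field step, the roABP and multilinear-formula deductions, and the unsatisfiability check are all handled correctly.

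The gadget decoding, however, is backwards. You propose placing the two $\vbar$-indices at positions $\{i,\ell\}$ and the two $\wbar$-indices at $\{j,k\}$ of the sorted quadruple $i<j<k<\ell$. Then $u_iu_\ell - u_ju_k = v_av_b - w_cw_d$: \emph{both} variables of the first monomial lie in $\vbar$ and both of the second in $\wbar$. This is \emph{not} a partition-respecting renaming of the $\widetilde Q$ block $x_ty_{t+1} - y_tx_{t+1}$, in which each monomial contains one $\xbar$- and one $\ybar$-variable. The only bijection sending one polynomial to the other maps $\{x_t, y_{t+1}\}$ into $\vbar$ and $\{y_t, x_{t+1}\}$ into $\wbar$, i.e.\ it scrambles the two sides of the partition, so \Cref{lem:four-conditions} and the $\xbar|\ybar$ rank bound do not transfer to $\vbar|\wbar$ under it. What you need is the opposite placement, which is what the paper's $\zbar$-assignment implicitly does: $\{i,j\}\subseteq\vbar$-side and $\{k,\ell\}\subseteq\wbar$-side (or $\{i,k\}$ and $\{j,\ell\}$, or the $\vbar\leftrightarrow\wbar$ mirrors), yielding $u_iu_\ell - u_ju_k = v_aw_d - v_bw_c$, which is the $\widetilde Q$ gadget under a renaming that carries $\xbar\to\vbar$ and $\ybar\to\wbar$.

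Your ordering worry is genuine, and the sign/antisymmetry trick does not dispose of it. With $n=1$ and $\vbar=\{u_1,u_4\}$, $\wbar=\{u_2,u_3\}$, the unique quadruple $(1,2,3,4)$ has $\vbar$-indices forced to positions $\{1,4\}$, so the planted polynomial is unavoidably $v_1v_2 - w_1w_2 - \beta$; the obstruction is combinatorial and no relabeling of which $\vbar$-index plays $i$ or $\ell$ fixes it. More generally one cannot always decompose a balanced $2$-coloring of $[4n]$ into $n$ quadruples avoiding the nested patterns. The paper's proof of \Cref{lem:any-order} also passes over this point. To close the gap one must extend the coefficient-space computation of \Cref{sec:roabp-sizeLB-Q} to products that include nested-type blocks, checking that the $2^n$ ``diagonal'' monomials (now a mix of $v\cdot w$ pairs and $v\cdot v$ / $w\cdot w$ pairs per block) still carry nonzero coefficients, that all other degree-$2n$ monomials vanish, and that the $\vbar$-parts remain distinct; this does go through, but it is a separate calculation rather than an immediate renaming of $f(\xbar,\ybar)$.
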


We will first prove \Cref{thm:any-order} using \Cref{lem:any-order}. 

\begin{proof}[Proof of \Cref{thm:any-order}]
{\textit{roABP} lower bound:}  Assume that $g(\ubar, \zbar)$ is computable by width-$r$ roABP in some order of variables. By \emph{pushing} the $\zbar$ variables into the fraction field, $g_{\zbar} \in \mathbb{F}[\zbar][\ubar]$ is also computable by width-$r$ roABP over $\mathbb{F}[\zbar]$ in the induced order of the variables in $\ubar$. (This follows from \Cref{fact:roABP:closure}), which states that roABPs are closed under variable substitutions.) 

Now, by splitting the variables $\ubar$ in two halves in this order, say into $\vbar, \wbar$, we obtain a lower bound on the coefficient dimension of the roABP with partition $\vbar|\wbar$ using \Cref{lem:any-order}.  To conclude, note that the lower bound of $2^n$ on the coefficient dimension  implies a width $2^n$ lower bound for the roABP (\Cref{res:roABP-width_eq_dim-coeffs}). 

The lower bounds stated in \Cref{thm:any-order} for multilinear formulas and constant-depth multilinear formulas follow from our lower bound on the coefficient dimension and the results of Raz~\cite{Raz09} and Raz-Yehudayoff~\cite{RazYehudayoff09}. 
\end{proof}

We now conclude this section with the proof of \Cref{lem:any-order}.

\begin{proof}[Proof of \Cref{lem:any-order}]
    Given a partition of $\ubar$ into two equal parts $(\vbar, \wbar)$, consider the polynomial $g(\vbar, \wbar, \zbar)$. We would like to bound the coefficient dimension  of $g(\vbar, \wbar, \zbar)$ for partition $\vbar|\wbar$. For this, we will first view $g \in \mathbb{F}[\zbar][\vbar,\wbar]$ and try to bound the coefficient dimension in the field of rational functions $\mathbb{F}[\zbar]$. Following the notation from~\cite{FSTW21} (Lemma~5.12), we will denote this quantity by $\dim_{\mathbb{F}[\zbar]} \coeff{\vbar|\wbar} ~g_{\zbar}(\vbar, \wbar)$. 

    Another way to bound the coefficient dimension is to consider $\abar \in \{0,1\}^m$ and evaluate $\zbar \leftarrow \abar$ such that $g(\vbar, \wbar, \abar) \in \mathbb{F}[\vbar, \wbar]$. Thus, study the coefficient dimension over $\mathbb{F}$. We will denote this quantity by $\dim_{\mathbb{F}} \coeff{\vbar|\wbar}~ g_{\abar}(\vbar, \wbar)$. 

    It is known that $\dim_{\mathbb{F}[\zbar]} \coeff{\vbar|\wbar} ~g_{\zbar}(\vbar, \wbar) \geq \dim_{\mathbb{F}} \coeff{\vbar|\wbar} ~g_{\abar}(\vbar, \wbar)$. Therefore, it will suffice to lower bound the latter for an appropriate evaluation $\abar$ of the $\zbar$ variables. 

    Specifically, we will design $\zbar$-evaluation, that is, $\abar$ as follows. 
    \[\alpha_{i,j,k,\ell} = \left\{ \begin{array}{cl}
        1 & \text{if } i \in [2n], i~\text{odd },\\
        & u_i = v_i, u_j = v_{i+1},   \\
        & u_k = w_i, u_\ell = w_{i+1} \\
         0 & \text{ otherwise}
    \end{array}\right.\]
    For this evaluation, notice that when $z_{i,j,k,\ell} = 1$, we have that $i \in [2n]$ and is odd. Moreover, we get that the corresponding term in $P(\vbar, \wbar, \abar)$ becomes $\left(1 - 1 + 1 \cdot (v_iw_{i+1} - w_iv_{i+1})\right) = (v_iw_{i+1} - w_iv_{i+1})$. On the other hand, when $z_{i,j,k,\ell} = 0$, the corresponding term just becomes $\left(1 - 0 + 0\right) = 1$. Therefore, we get that \[g(\vbar, \wbar, \abar) = \frac{1}{\left(\prod_{i \in [2n], i \text{ odd}} (v_iw_{i+1} - w_i v_{i+1})\right) - \beta} = \frac{1}{Q(\vbar, \wbar)} = f(\vbar, \wbar)\] 
    for $\vbar, \wbar \in \{0,1\}^{2n}$. Now, using \Cref{eq:dim-f}, we get the lower bound of $2^n$ on the $\dim_{\mathbb{F}} \coeff{\vbar|\wbar} ~g_{\abar}(\vbar, \wbar)$, which concludes this proof.   
\end{proof}

\section{Lifting Degree-to-Size $\mathbb{III}$:  Lower Bounds against Constant-Depth Refutations}\label{sec:indiv-degree-size-lower-bounds}

In this section we prove lower bounds for constant-depth IPS, that is, IPS refutations that are computable by constant-depth algebraic circuits. The main theorem of this section is the following.

\begin{theorem}\label{thm:constant-depth-lower-bounds}
    Let $n,\Delta$ and $\delta$ be positive integers, and assume that $\mathrm{char}(\F) = 0$. Let $g$ be a polynomial of individual degree at most $\delta$ so that it agrees with 
    \[
    \frac{1}{\sum_{i,j,k,\ell\in[n]}z_{ijk\ell}x_ix_jx_kx_\ell - \beta}\quad\text{ over Boolean values.}
    \]
    Then any circuit of product-depth at most $\Delta$ computing $g$ has size at least
    \[
    n^{\Omega\left(\frac{(\log n)^{2^{1 - 2\Delta}}}{\delta^2\cdot\Delta}\right)}.
    \]
\end{theorem}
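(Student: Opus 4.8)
The plan is to lift the Nullstellensatz-degree hardness of the polynomial $\sum_{i,j,k,\ell} z_{ijk\ell} x_i x_j x_k x_\ell - \beta$ (a ``lifted subset sum'' as in \cite{GHT22}) to a size lower bound against product-depth-$\Delta$ circuits by passing through the Affine Projections of Partials ($\APP$) measure analysed in \cite{AGK0T23,GKS20}. First I would recall the functional lower bound method (\Cref{def:single-functional-lower-bound-method}): it suffices to show that any product-depth-$\Delta$ circuit computing \emph{some} polynomial $g$ of individual degree $\le \delta$ agreeing with $1/(\sum z_{ijk\ell}x_ix_jx_kx_\ell - \beta)$ over the Boolean cube must be large. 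So the real content is an algebraic circuit lower bound for the whole \emph{family} of such $g$'s, not a single polynomial — this is where the analysis must be ``functional''.

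The key steps, in order: (1) Recall from \cite{AGK0T23} that a superpolynomial lower bound on $\APP$ of a polynomial on $N$ variables (together with bounded individual degree / suitable degree normalisation) yields a superpolynomial lower bound for product-depth-$\Delta$ homogeneous circuits, with the quantitative loss $n^{\Omega((\log n)^{2^{1-2\Delta}}/\Delta)}$; I would state the precise form of their ``$\APP$-to-circuit-size'' implication as a black box. (2) The heart of the argument: lower bound $\APP$ of an \emph{arbitrary} $g$ of individual degree $\le \delta$ agreeing with the target rational function over $\bits$. Here I would apply a Boolean restriction to the $z$-variables — choosing a $0/1$ assignment that collapses $\sum z_{ijk\ell}x_ix_jx_kx_\ell$ to a ``planted'' copy of a genuine subset-sum–type instance $\sum_i x_i' - \beta$ (or a monomial version), exactly the lifting idea of \Cref{it:intro:lifting-idea}: for each partition of the relevant $x$-variables into two halves, some Boolean setting of $z$ recovers a copy of the hard low-input-length instance. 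After this restriction $g$ restricts to a polynomial agreeing with $1/(\text{subset sum})$ over the Boolean cube; by the Nullstellensatz degree lower bound (the $d=1$ case of \Cref{cor:NS-full-symmetric-deg-lower bound}, i.e.\ the original \cite{FSTW21} subset sum bound) this restricted polynomial has full degree, hence many distinct leading monomials under a degree-respecting monomial order. (3) Convert this degree/leading-monomial information into a lower bound on the relevant space of partial derivatives / affine projections: using that the evaluation dimension over $\bits$ lower bounds the coefficient dimension (\Cref{res:evals_eq-coeffs}) and that distinct leading monomials give dimension (\Cref{res:dim-eq-num-TM-spn}), I would argue that the space of affine-projected partials of $g$ has rank $n^{\Omega(\cdot)}$. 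The individual-degree bound $\delta$ enters because the number of partial-derivative ``slots'' one can afford, and the normalisation translating the coefficient-space bound into an $\APP$ bound, both degrade with $\delta$ — yielding the $1/\delta^2$ factor in the exponent. (4) Feed the $\APP$ lower bound into the \cite{AGK0T23} machinery from step (1) to conclude the claimed $n^{\Omega((\log n)^{2^{1-2\Delta}}/(\delta^2\Delta))}$ size bound; note char $0$ (or large characteristic) is used to ensure the subset-sum instance is unsatisfiable and that the degree lower bound survives.

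The main obstacle I expect is step (2)–(3): making the $\APP$ lower bound genuinely \emph{functional}, i.e.\ valid for every $g$ of individual degree $\le \delta$ in the agreement class rather than for one fixed multilinear representative as in \cite{GHT22}. The subtlety is that $\APP$ is a ``syntactic'' measure on a specific polynomial, and a non-multilinear $g$ could in principle have small $\APP$ even while behaving like $1/(\text{subset sum})$ on the cube — the analogue of the multilinearization pitfall mentioned in the introduction. Overcoming this requires showing that the Boolean $z$-restriction plus a Boolean restriction isolating the planted instance forces $\APP$ to be large for \emph{any} such $g$, presumably by arguing that the restricted $g$ still agrees with the (multilinear) hard function on a full subcube and that $\APP$ — or the particular affine-projection-of-partials rank used by \cite{AGK0T23} — is lower bounded by a quantity (distinct leading monomials of the multilinearization) that depends only on the function, with the dependence on $\delta$ controlled. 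Carefully tracking how the individual degree $\delta$ propagates through the affine projection step, and verifying that the \cite{AGK0T23} escalation tolerates bounded (rather than exactly one) individual degree, is the delicate quantitative part.
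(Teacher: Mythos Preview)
Your high-level architecture is right (restrict the $z$-variables to a planted hard instance, prove a functional $\APP$ lower bound, feed it into the \cite{AGK0T23} machinery), but two of the concrete steps would not go through as you describe them.

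First, the reduction is not to a plain subset sum $\sum_i x_i - \beta$; the paper restricts to the intermediate instance $\ks_w$ (``knapsack over a word $w$'') from \cite{GHT22}, where $w\in[-h,h]^d$ is chosen so that the positive/negative variable partition matches the parameters $(k,n_0)$ required by the $\APP$ measure. A bare subset sum does not carry the bipartite structure that the affine projection $L$ exploits, so your leading-monomial/evaluation-dimension route (\Cref{res:evals_eq-coeffs}, \Cref{res:dim-eq-num-TM-spn}) gives coefficient-dimension bounds, not $\APP$ bounds---these are different measures and there is no direct translation.

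Second, and this is the real missing idea: the functional $\APP$ lower bound is obtained not via leading monomials of a multilinearization, but via an inclusion--exclusion identity. One shows that for each set-multilinear $\xbar^\alpha$ in the positive variables, the affine-projected partial derivative $L(\partial_\alpha g)$ multilinearizes to $h_\alpha = \sum_{\mu\subseteq\alpha}(-1)^{|\mu|}\tau_\mu(1/\ks_w)$ over Boolean values, an alternating sum of \emph{partial Boolean assignments} to the target rational function. This representation depends only on the function, not on the particular $g$, and one then proves the $h_\alpha$ are linearly independent by evaluating at set-multilinear $\ybar$-points (another inclusion--exclusion). This is what makes the $\APP$ bound genuinely functional; your proposed route through ``distinct leading monomials of the restricted multilinearization'' does not produce the needed linear-independence of affine-projected partials.

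Finally, the \cite{AGK0T23} framework is not a black box here: the structure lemma, the residue bound, and the ``high residue $\Rightarrow$ lower bound'' lemma each have to be redone for polynomials of degree $d'\in[d,\delta d]$ rather than exactly $d$, and the two factors of $1/\delta$ in the exponent arise separately from these modifications. One also needs an intermediate pigeonhole step to pass from an $\APP$ bound on $g$ to an $\APP$ bound on some homogeneous degree-$d'$ slice $g_{d'}$, which is what the homogeneous-formula machinery actually applies to.
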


Using the \Cref{def:single-functional-lower-bound-method}
this gives a constant-depth IPS refutation lower bound (with the
same parameters and field) for the instance 
$\sum_{i,j,k,\ell\in[n]}z_{ijk\ell}x_ix_jx_kx_\ell - \beta$.

Previously, \cite{GHT22} proved lower bounds for the same instance for multilinear refutations, i.e., in the case where $\delta = 1$. Our  result improves on that work in two ways. Firstly, when $\delta = 1$, the result improves slightly the exponent in the expression (from $1/(2^{2\Delta} - 1)$ to $1/2^{2\Delta - 1}$). Secondly, and more importantly, the result gives lower bounds for larger individual degrees.

The lower bound shows a natural \emph{trade-off} between the depth and individual degree of refutations. It gives superpolynomial lower bounds for refutations of individual degree $\mathsf{poly}(\log\log n)$ for any constant depth refutations, and for any fixed depth we get lower bounds up to individual degree $\log^\epsilon n$ for some small $\epsilon$ depending on the depth.

Our hard instance does have a small constant-depth refutation, but of high individual degree. This is obtained by substituting the gadget $z_{ijk\ell}x_ix_jx_kx_\ell$ to the small depth-$3$ refutations of the standard  instance of knapsack ($\sum_i x_i-\beta$) given in \cite{FSTW21}. The obtained refutation is a polynomial of individual degree $O(n^3)$.
 
To prove our lower bound for the bounded individual degree refutations, we employ the framework put forward by Amireddy \emph{et al.} in \cite{AGK0T23} to prove constant-depth algebraic circuit lower bounds. They show that the lower bounds for constant-depth algebraic circuits originally proved in \cite{LST21} can also be obtained more directly via homogeneous constant-depth circuit lower bounds without going through the final hardness escalation step through set-multilinear circuits that was used in \cite{LST21}.

Amireddy \textit{et al.}~\cite{AGK0T23} suggest that their approach could also be used to prove functional lower bounds for constant-depth algebraic circuits. This seems viable since their framework puts leaner requirements for the polynomials than that of \cite{LST21}. Our work achieves this goal,  showing  how to modify their framework to obtain  \cref{thm:constant-depth-lower-bounds}. 

The rest of this section is organized as follows. First, we recall the framework of \cite{AGK0T23}, and discuss some modifications needed to prove our functional lower bounds. Then we recall an intermediate hard instance used in  \cite{GHT22} and prove lower bounds for that instance for the affine projections of partial $\APP$ complexity measure used in \cite{AGK0T23}
(while in \cite{GHT22} a different measure was used).
After this, we are ready to prove our main lower bound for this section.

\subsection{Lower Bounds via Affine Projections of Partials}

We introduce some notation that matches and extends that of \cite{AGK0T23}. For any non-negative integers $n$ and $k$ we denote by $M(n,k)$ the number of distinct monomials of degree exactly $k$ in $n$ variables, and by $M_\leq(n,k)$ the number of distinct monomials of degree at most $k$ in $n$ variables. The following lemma gives useful approximations for these quantities and is shown in  \cite[Lemma 2.2]{AGK0T23} (the only new information is that in the first item, the upper bound also applies to $M_\leq(n,k)$).

\begin{lemma}\label{lemma:bounds-for-M}
    Let $n\geq k\geq\ell$ and $m$ be positive integers. Then,
    \begin{itemize}
        \item[(i)] $(n/k)^k\leq M(n,k)\leq M_\leq(n,k)\leq (6n/k)^k$;
        \item[(ii)] $(n/2k)^\ell\leq \frac{M(n,k+\ell)}{M(n,k)}\leq (2n/k)^\ell$;
        \item[(iii)] $\frac{M(\ell,m)}{M(k,m)}\geq (\ell/k)^m.$
    \end{itemize}
\end{lemma}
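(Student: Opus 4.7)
My plan is to start from the standard identities $M(n,k)=\binom{n+k-1}{k}$ (which counts monomials of degree exactly $k$ as multisets of size $k$ drawn from $n$ variables) and $M_{\leq}(n,k)=\binom{n+k}{k}$ (via the hockey-stick identity, or equivalently by homogenizing using a single slack variable). Once these are in place, each of (i)--(iii) reduces to a short manipulation of products of consecutive integers. Since the lemma is essentially \cite[Lemma 2.2]{AGK0T23}, the only genuinely new ingredient I would need to verify is the upper bound on $M_{\leq}(n,k)$.

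For (i), I would derive the lower bound on $M(n,k)$ by writing $\binom{n+k-1}{k}\geq n^k/k!\geq (n/k)^k$ (using $k!\leq k^k$); the middle inequality $M(n,k)\leq M_{\leq}(n,k)$ is immediate from the definitions. The new upper bound is then $M_{\leq}(n,k)=\binom{n+k}{k}\leq \bigl(e(n+k)/k\bigr)^{k}\leq (2en/k)^{k}\leq (6n/k)^{k}$, using first the standard estimate $\binom{a}{b}\leq(ea/b)^{b}$, then $n\geq k$ to get $n+k\leq 2n$, and finally $2e<6$.

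For (ii) and (iii), my approach in both cases is to write the ratio of binomial coefficients as a telescoping product and then estimate each factor individually. For (ii), $M(n,k+\ell)/M(n,k)=\prod_{j=0}^{\ell-1}(n+k+j)/(k+1+j)$; since each factor has a fixed numerator-minus-denominator gap of $n-1$, the map $j\mapsto(n+k+j)/(k+1+j)$ is monotonically decreasing, so evaluating at $j=0$ (largest factor, at most $(n+k)/(k+1)\leq 2n/k$ using $n\geq k$) and at $j=\ell-1$ (smallest factor, at least $(n+k+\ell-1)/(k+\ell)\geq n/(2k)$ using $\ell\leq k$) yields the claimed two-sided bound. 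For (iii), the analogous factorization $M(\ell,m)/M(k,m)=\prod_{j=0}^{m-1}(\ell+j)/(k+j)$ together with $(\ell+j)/(k+j)\geq \ell/k$ (equivalent to $\ell k+jk\geq\ell k+j\ell$, which holds since $\ell\leq k$) yields $(\ell/k)^{m}$. There is essentially no obstacle here; the lemma is elementary bookkeeping on binomial coefficients, and the only slight bit of care goes into pinning down the constant $6$ in (i), where one needs both $n\geq k$ (so that $n+k\leq 2n$) and $2e<6$.
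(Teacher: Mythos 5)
Your proof is correct. Note that the paper does not actually prove this lemma: it is quoted from \cite[Lemma 2.2]{AGK0T23}, with the remark that the only new content is that the upper bound in (i) also covers $M_\leq(n,k)$. Your argument verifies exactly that new piece correctly --- $M_\leq(n,k)=\binom{n+k}{k}\leq (e(n+k)/k)^k\leq (2en/k)^k\leq (6n/k)^k$ using $n\geq k$ and $2e<6$ --- and your telescoping-product treatments of (ii) and (iii) are sound and match the standard way these estimates are obtained.
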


The following quantity is crucial for the analysis of \cite{AGK0T23}. Let $d_1,\ldots,d_t$ be non-negative integers such that $d := \sum_{i\in [t]}d_i \geq 1$, and let $k < d$. Then define 
\[
\residue_k(d_1,\ldots,d_t) := \frac{1}{2} \min_{k_1,\ldots,k_t\in \integer}\sum_{j\in [t]}\left|k_j - \frac{k}{d}\cdot d_j \right|
\]
For constants $a,b$ we write $a\approx_c b$ if $a\in [b/c,b]$ and $a\approx b$ if $a\approx_c b$ for some unspecified constant $c$.

\subsubsection{The $\APP$ Measure}\label{se:APP-measure}

Let us now recall the \emph{Affine Projections of Partials} ($\APP$) measure that \cite{AGK0T23} used to prove their lower bounds. They actually considered in addition  another measure, the \emph{Shifted Partials} measure, but $\APP$ seems to be much more amenable to proving functional lower bounds than the Shifted Partials measure, and thus we consider here exclusively the $\APP$ measure.

To define the measure, let $k$ and $n_0$ be non-negative integers, let $P$ be a polynomial in $\F[x_1,\ldots,x_n]$, and let $L = \langle\ell_1,\ldots,\ell_n\rangle$ be a tuple of affine forms over the variables $z_1,\ldots,z_{n_0}$
(an affine form is a linear form $\sum_i\alpha_i x_i + a$, for some 
scalars $\alpha_i$ and $a$).
 We denote $\pi_L$ the affine projection that maps each $x_i$ to $\ell_i$. Now define
\[
\APP_{k,n_0}(P) := \max_L\dim\left\langle\pi_L\left(\partial^kP\right)\right\rangle.
\]

The first key lemma in \cite{AGK0T23} is the following structural lemma about the space of partial derivatives. It shows that the space can be realized as suitable shifts of partial derivatives of smaller arities.

\begin{lemma}[\cite{AGK0T23}]\label{lemma:structure-AGKOT}
    Let $n$ and $t$ be positive integers and $Q_1,\ldots,Q_t$ be non-constant, homogeneous polynomials in $\F[x_1,\ldots,x_n]$ with degrees $d_1,\ldots,d_t$, respectively. Let $d := \deg(Q_1\cdots Q_t) = \sum_{i = 1}^t d_i$ and $k < d$ be a non-negative integer. Then
    \[
    \left\langle\partial^k\left(Q_1\cdots Q_t\right)\right\rangle\subseteq \sum_{\substack{S\subseteq [t], k_0,\ell_0\\ k_0 + \frac{k}{d - k}\cdot\ell_0\leq k - \residue_k(d_1,\ldots,d_t)}}\left\langle \xbar^{\ell_0}\cdot\partial^{k_0}\left(\prod_{i\in S} Q_i\right)\right\rangle.
    \]
\end{lemma}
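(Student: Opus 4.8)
The plan is to prove the containment by induction on the number of factors $t$, the engine being a two‑factor Leibniz expansion combined with a recursive call. \textbf{Base case $t=1$.} Here $\residue_k(d_1)=0$ (take $m_1=k=\tfrac kd d_1$ in the definition), and the triple $(S,k_0,\ell_0)=(\{1\},k,0)$ satisfies $k_0+\tfrac{k}{d-k}\ell_0=k\le k-0$; the corresponding summand on the right is exactly $\langle\partial^kQ_1\rangle$, so the inclusion is an equality. For $t\ge 2$ the starting point is the following mechanism: if $\prod_{i=1}^tQ_i=F\cdot G$ with $F,G$ homogeneous, then by the Leibniz rule $\langle\partial^k(FG)\rangle\subseteq\sum_{a+b=k}\langle(\partial^aF)(\partial^bG)\rangle$, where $\langle(\partial^aF)(\partial^bG)\rangle$ is the span of the products $(\partial_\gamma F)(\partial_\delta G)$ with $|\gamma|=a$, $|\delta|=b$ (terms with $a>\deg F$ or $b>\deg G$ vanish). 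Since $\partial_\delta G$ is homogeneous of degree $\deg G-b$, each such product lies in $\langle\partial^aF\rangle\cdot\langle\xbar^{\deg G-b}\rangle=\langle\xbar^{\deg G-b}\,\partial^aF\rangle$, and symmetrically in $\langle\xbar^{\deg F-a}\,\partial^bG\rangle$. Thus every Leibniz term admits \emph{two} placements into the target form, and the content of the lemma is that one of them obeys the constraint.

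\textbf{The two‑factor computation (heart of the argument).} Take $t=2$, $F=Q_1$, $G=Q_2$, and set $c:=\tfrac{k}{d-k}$. A term with $|\gamma|=a$, $|\delta|=b=k-a$ can be placed as the triple $(\{1\},a,d_2-b)$, whose constraint value is $X:=a+c(d_2-b)$, or as $(\{2\},b,d_1-a)$, whose constraint value is $Y:=b+c(d_1-a)$. A direct computation gives $X+Y=2k$ and $X-Y=\tfrac{2d}{d-k}(a-\tfrac kd d_1)$, hence
\[
\min(X,Y)=k-\tfrac{d}{d-k}\,|a-\tfrac kd d_1|\ \le\ k-|a-\tfrac kd d_1|\ \le\ k-\residue_k(d_1,d_2),
\]
where the last inequality comes from taking $(m_1,m_2)=(a,k-a)$ in the definition of $\residue_k$ and using $(k-a)-\tfrac kd d_2=-(a-\tfrac kd d_1)$, which yields $\residue_k(d_1,d_2)\le\tfrac12(|a-\tfrac kd d_1|+|(k-a)-\tfrac kd d_2|)=|a-\tfrac kd d_1|$. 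So for $t=2$ we simply place each Leibniz term by whichever option realises $\min(X,Y)$; no recursion is needed.

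\textbf{Inductive step ($t\ge 3$).} I would split off a factor $Q_{j^\ast}$ of \emph{minimum} degree and apply the mechanism with $F=Q_{j^\ast}$, $G=\prod_{i\ne j^\ast}Q_i$ of degree $d-d_{j^\ast}\ge (t-1)d/t$; this keeps all the degenerate ``all derivatives on one side'' cases harmless, since $\residue_k\le t/4\le (t-1)d/t$. For each $a$: if the placement $(\{j^\ast\},a,(d-d_{j^\ast})-k+a)$ keeping $\partial^aQ_{j^\ast}$ already satisfies the lemma's constraint, keep it; otherwise place the term as $\langle\xbar^{d_{j^\ast}-a}\,\partial^{k-a}(\prod_{i\ne j^\ast}Q_i)\rangle$ and expand $\langle\partial^{k-a}(\prod_{i\ne j^\ast}Q_i)\rangle$ by the induction hypothesis (a product of $t-1$ factors), multiplying each resulting triple $(S',k_0',\ell_0')$ by the outer shift to get $(S',k_0',d_{j^\ast}-a+\ell_0')$ for $Q_1\cdots Q_t$. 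The triples produced have $S\subseteq\{j^\ast\}$ or $S\subseteq[t]\setminus\{j^\ast\}$, which already covers $\langle\partial^k(Q_1\cdots Q_t)\rangle$ even though it does not exhaust all $S\subseteq[t]$ — and that is fine, since we only need containment.

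\textbf{Where the difficulty lies.} The constructions above are mechanical; the real work is to check that the constraints \emph{compose}: along each branch of the recursion the accumulated shift exponent, weighted by $\tfrac{k}{d-k}$, plus the surviving derivative order, must remain $\le k-\residue_k(d_1,\dots,d_t)$. This comes down to two elementary facts — that the ratio $\tfrac{k-a}{(d-d_{j^\ast})-(k-a)}$ at the recursion level is bounded by $\tfrac{k}{d-k}$ for the near‑proportional choices of $a$, and that $\residue_{k-a}(d_i\colon i\ne j^\ast)$ plus the unavoidable rounding loss absorbed into the shift never overspends $\residue_k(d_1,\dots,d_t)$ over the $O(t)$ levels of recursion. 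I expect this residue bookkeeping to be the main obstacle; I would import the required monotonicity and (sub)additivity estimates on $\residue_k$ (and the bound $\residue_k\le t/4$) from \cite{AGK0T23} rather than re‑derive them here.
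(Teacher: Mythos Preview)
The paper does not prove this lemma itself --- it is cited from \cite{AGK0T23} --- but the paper sketches the AGK0T23 argument when proving the $\delta$-variant immediately afterward, and that argument is structurally different from yours. Rather than peeling off one factor and inducting on $t$, the AGK0T23 proof expands $\partial_{\mu([k])}\big(\prod_i Q_i\big)$ via the full generalized Leibniz rule in one shot, allocating each of the $k$ derivatives to some factor via a map $\kappa\colon[t]\to 2^{[k]}$. The allocations are then grouped by the set $S\subseteq[t]$ of factors that receive \emph{at most} their proportional share $\tfrac{k}{d}d_i$ of derivatives, and the induction runs on $|S|$ (not on $t$). For a fixed $\kappa$, the polynomial $\big(\partial^{k_0}\prod_{i\in S}Q_i\big)\cdot\prod_{i\notin S}\partial^{|\kappa_i|}Q_i$ is placed directly as a triple $(S,k_0,\ell_0)$ with $k_0=\sum_{i\in S}|\kappa_i|$ and $\ell_0=\sum_{i\notin S}(d_i-|\kappa_i|)$, and the constraint falls out of a single computation using $\sum_{i\notin S}\big(|\kappa_i|-\tfrac{k}{d}d_i\big)\ge\residue_k(d_1,\ldots,d_t)$. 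No residues are ever composed across recursion levels.

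Your induction-on-$t$ approach has a genuine gap exactly where you flag it. When you recurse on $\prod_{i\ne j^\ast}Q_i$ with $k-a$ derivatives, the inductive hypothesis gives triples satisfying a constraint with ratio $\tfrac{k-a}{(d-d_{j^\ast})-(k-a)}$ and slack $\residue_{k-a}(d_i:i\ne j^\ast)$. Lifting these to the target constraint requires (i) the inner ratio to be at most $\tfrac{k}{d-k}$, which holds only for $a\ge\tfrac{k}{d}d_{j^\ast}$, and (ii) a comparison between $\residue_{k-a}$ on $t-1$ degrees and $\residue_k$ on $t$ degrees absorbing the extra shift $d_{j^\ast}-a$. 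Neither is automatic. For (i), a direct computation shows option~1 (keep $Q_{j^\ast}$) satisfies the constraint iff $a\le\tfrac{k}{d}d_{j^\ast}-\tfrac{d-k}{d}\residue_k$, leaving an interval of $a$-values of width roughly $\residue_k$ where option~1 fails and the ratio inequality also fails. For (ii), no such subadditivity estimate for $\residue$ appears in \cite{AGK0T23}, because their proof never needs one --- the full-Leibniz-then-group approach avoids recursing on a subproduct with a different derivative order altogether, which is precisely what makes the residue bound a one-line consequence of its definition.
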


The lemma above is then used to prove the following upper bound for the $\APP$ measure.

\begin{lemma}[\cite{AGK0T23}]\label{lemma:APP-upperbound-on-product-of-homogeneous}
    Let $Q = Q_1\cdots Q_t$ be a homogeneous polynomial in $\F[x_1,\ldots,x_n]$ of degree $d = d_1 + \ldots d_t \geq 1$, where $Q_i$ is homogeneous and $d_i := \deg(Q_i)$ for all $i\in [t]$. Then for any non-negative integers $k < d$ and $n_0\leq n$,
    \[
    \APP_{k,n_0}(Q) \leq 2^t\cdot d^2\cdot\max_{\substack{k_0,\ell_0\geq 0\\ k_0 + \frac{k}{d-k}\cdot\ell_0\leq k - \residue_k(d_1,\ldots,d_t)}} M(n,k_0)\cdot M_\leq(n_0,\ell_0).
    \]
\end{lemma}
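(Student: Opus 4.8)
The plan is to push the structural containment of \Cref{lemma:structure-AGKOT} through an arbitrary affine projection $\pi_L$ and then estimate dimensions by a direct monomial count, exploiting that $\pi_L$ is simultaneously $\F$-linear and multiplicative. Fix a tuple $L=\langle\ell_1,\dots,\ell_n\rangle$ of affine forms in $z_1,\dots,z_{n_0}$; it suffices to bound $\dim\langle\pi_L(\partial^kQ)\rangle$ and then take the maximum over $L$ at the end. I may assume $k\geq 1$, since for $k=0$ one has $\dim\langle\pi_L(Q)\rangle\leq 1\leq 2^t\,d^2$ trivially. Applying the linear map $\pi_L$ to the inclusion of \Cref{lemma:structure-AGKOT}, and using that $\pi_L$ commutes with spans and with sums of subspaces, gives
\[
\langle\pi_L(\partial^kQ)\rangle \subseteq \sum_{\substack{S\subseteq[t],\ k_0,\ell_0\geq 0\\ k_0+\frac{k}{d-k}\ell_0\leq k-\residue_k(d_1,\dots,d_t)}}\bigl\langle\pi_L\bigl(\xbar^{\ell_0}\cdot\partial^{k_0}(\prod_{i\in S}Q_i)\bigr)\bigr\rangle.
\]

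Next I would bound the dimension of each summand. Since $\pi_L$ is an $\F$-algebra homomorphism, $\pi_L\bigl(\xbar^{\ell_0}\cdot\partial^{k_0}(\prod_{i\in S}Q_i)\bigr)$ is spanned by the products $\pi_L(m)\cdot\pi_L\bigl(\partial_{\valpha}(\prod_{i\in S}Q_i)\bigr)$, where $m$ ranges over the monomials of degree $\ell_0$ in $x_1,\dots,x_n$ and $\valpha$ over the exponent vectors with $|\valpha|=k_0$. Each $\pi_L(m)$ is a product of $\ell_0$ affine forms in $z_1,\dots,z_{n_0}$, hence a polynomial of degree \emph{at most} $\ell_0$ in $n_0$ variables, so $\langle\pi_L(m):\deg m=\ell_0\rangle$ has dimension at most $M_{\leq}(n_0,\ell_0)$. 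The set $\{\partial_{\valpha}(\prod_{i\in S}Q_i):|\valpha|=k_0\}$ has at most $M(n,k_0)$ elements, so its image under the linear map $\pi_L$ spans a space of dimension at most $M(n,k_0)$. Since the span of all pairwise products of two spaces of dimensions $a$ and $b$ has dimension at most $ab$, each summand has dimension at most $M(n,k_0)\cdot M_{\leq}(n_0,\ell_0)$.

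Finally, by subadditivity of dimension over sums of subspaces, $\dim\langle\pi_L(\partial^kQ)\rangle$ is at most the number of admissible triples $(S,k_0,\ell_0)$ times $\max_{k_0,\ell_0}M(n,k_0)\cdot M_{\leq}(n_0,\ell_0)$, the maximum being over admissible pairs. There are $2^t$ subsets $S$. For the exponents, taking all the $k_j$ equal to zero in the definition of $\residue_k$ gives $\residue_k(d_1,\dots,d_t)\leq k/2$, so $k_0\leq k-\residue_k\leq k<d$ (at most $d$ values) and $\tfrac{k}{d-k}\ell_0\leq k$, i.e.\ $\ell_0\leq d-k\leq d$ (at most $d$ values), hence at most $d^2$ pairs. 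Multiplying the three bounds and taking $\max_L$ yields the claimed estimate. I expect the only delicate point to be the bookkeeping in the middle paragraph: one has to notice that after the affine substitution a degree-$\ell_0$ monomial becomes a degree-\emph{at most} $\ell_0$ polynomial (because affine forms carry constant terms), which is precisely why $M_{\leq}(n_0,\ell_0)$, and not $M(n_0,\ell_0)$, is the right factor, and one must be careful that $\pi_L$ only decreases dimension and that products of spans behave multiplicatively. Beyond that the argument is routine linear algebra; all the real content sits in \Cref{lemma:structure-AGKOT}, which may be assumed.
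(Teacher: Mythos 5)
Your proof is correct and reconstructs exactly the argument the paper implicitly imports from \cite{AGK0T23}: push the affine projection $\pi_L$ through the inclusion of \Cref{lemma:structure-AGKOT}, bound each summand by $M(n,k_0)\cdot M_\leq(n_0,\ell_0)$ (noting that products of $\ell_0$ affine forms in $n_0$ variables land in a space of dimension $M_\leq(n_0,\ell_0)$, which is precisely the subtlety the paper flags in its remark following the lemma statement), and count the $2^t\cdot d^2$ admissible triples. Since the paper cites this lemma without proof and only comments on the $M_\leq$ versus $M$ discrepancy, your reconstruction is both correct and the intended one.
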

There is a minor difference in the lemma as stated above and as stated in \cite{AGK0T23}, and this is the term $M_\leq(n_0,\ell_0)$. In \cite{AGK0T23}, the authors consider only sequences $L = \langle\ell_1,\ldots,\ell_n\rangle$ of linear polynomials in the projections rather than affine ones, and thus they can replace $M_\leq(n_0,\ell_0)$ by $M(n_0,\ell_0)$. However with affine polynomials their proof gives the bound above.

As we are interested in functional lower bounds for low-individual degree polynomials, we need the following modification of \cref{lemma:structure-AGKOT}. A very important ingredient in the lemma is the fact that the individual-degree bound $\delta$ only appears in the term $\residue_k(d'_1,\ldots,d'_t)/\delta$, and thus it only decreases the influence of the residue-term. 

\begin{lemma}
    Let $n$, $t$, $d$ and $\delta$ be positive integers and let $Q_1,\ldots,Q_t$ be non-constant, homogeneous polynomials in $\F[x_1,\ldots,x_n]$ with degrees $d'_1,\ldots,d'_t$, respectively, so that $d' := \deg(Q_1,\ldots,Q_t) = \sum_{i = 1}^t d'_i$ is between $d$ and $\delta\cdot d$, and let $k < d$ be a non-negative integer. Then 
    \[
    \left\langle\partial^k\left(Q_1\cdots Q_t\right)\right\rangle\subseteq \sum_{\substack{S\subseteq [t], k_0,\ell_0\\ k_0 + \frac{k}{d - k}\cdot\ell_0\leq k - \residue_k(d'_1,\ldots,d'_t)/\delta}}\left\langle \vx^{\ell_0}\cdot\partial^{k_0}\left(\prod_{i\in S} Q_i\right)\right\rangle.
    \]
\end{lemma}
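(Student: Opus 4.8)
The plan is to mimic the proof of \cref{lemma:structure-AGKOT} from \cite{AGK0T23}, tracking carefully how the individual-degree bound $\delta$ enters. First I would recall the key combinatorial fact that drives \cref{lemma:structure-AGKOT}: when one takes a $k$-th order partial derivative of a product $Q_1\cdots Q_t$ of homogeneous polynomials, each term in the expansion (via the generalized Leibniz rule) corresponds to a choice of how many of the $k$ derivatives land on each factor $Q_i$, say $k_i$ derivatives on $Q_i$ with $\sum_i k_i = k$. The resulting monomial-derivative $\partial^{k_1}Q_1\cdots\partial^{k_t}Q_t$ can then be rewritten: the factors on which \emph{few} derivatives land (relative to the ``proportional'' share $\tfrac{k}{d'}d'_i$) can be absorbed into a shift $\vx^{\ell_0}$ of their remaining degree, since a homogeneous polynomial of degree $m$ lies in the span of $\vx^m$ times constants only after taking enough derivatives — more precisely, the gap between $k_i$ and the proportional share, summed up, is exactly what $\residue_k$ measures, and that is how much ``shift budget'' $\ell_0$ one saves.

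The new point is the $\delta$ in the denominator. In the original lemma the factors $Q_i$ have degrees $d_i$ and $d = \sum d_i$; here instead the degrees $d'_i$ satisfy $d \le d' = \sum d'_i \le \delta d$. I would partition $[t]$ into the set $S$ of factors that ``keep'' their derivatives (contributing to $\partial^{k_0}(\prod_{i\in S}Q_i)$) and its complement, whose leftover homogeneous parts become the shift monomial $\vx^{\ell_0}$. Running the same accounting as in \cite{AGK0T23} yields a residue-type saving of $\residue_k(d'_1,\ldots,d'_t)$ against a shift budget scaled by $d'$ rather than $d$; since $d' \le \delta d$, converting the constraint back to be stated in terms of $d$ (so that it is directly comparable with the $\APP$ upper-bound lemma, which one wants to apply with the product-degree $d$) costs a factor of $\delta$, which is why the saved residue term appears divided by $\delta$: the inequality becomes $k_0 + \tfrac{k}{d-k}\ell_0 \le k - \residue_k(d'_1,\ldots,d'_t)/\delta$. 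The homogeneity hypothesis on each $Q_i$ is used, as in the original, to guarantee that $\partial^{k_i}Q_i$ is either zero or homogeneous of degree $d'_i - k_i$, so that pulling out $\vx^{d'_i-k_i}$ as part of the shift is legitimate up to the span.

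The key steps in order: (1) expand $\partial^k(Q_1\cdots Q_t)$ via the Leibniz rule into a sum over $(k_1,\ldots,k_t)$ with $\sum k_i = k$; (2) for a fixed such tuple, choose $S\subseteq[t]$ to be the indices where we retain the derivative and set $k_0 = \sum_{i\in S}k_i$, $\ell_0 = \sum_{i\notin S}(d'_i - k_i)$, noting $\partial^{k_i}Q_i$ for $i\notin S$ is (up to scalars and lower-arity derivatives) absorbed into $\vx^{\ell_0}$ times a derivative of $\prod_{i\in S}Q_i$ — here I would follow the \cite{AGK0T23} bookkeeping essentially verbatim; (3) invoke the optimal choice of $S$ minimizing the relevant quantity, which is precisely what $\residue_k$ computes, to get the constraint $k_0 + \tfrac{k}{d'-k}\ell_0 \le k - \residue_k(d'_1,\ldots,d'_t)$; (4) use $d \le d' \le \delta d$ to weaken this to the stated constraint $k_0 + \tfrac{k}{d-k}\ell_0 \le k - \residue_k(d'_1,\ldots,d'_t)/\delta$ (the coefficient $\tfrac{k}{d-k} \ge \tfrac{k}{d'-k}$ goes the right way, and the residue term shrinks by at most $\delta$). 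The main obstacle I anticipate is step (4): one must check that the arithmetic of replacing $d'$ by $d$ in \emph{both} the coefficient of $\ell_0$ and inside the residue normalization is consistent and actually yields a \emph{weaker} (hence valid) constraint, rather than an incomparable one; in particular one has to be slightly careful that $\residue_k$ is defined with the true degree sequence $d'_1,\ldots,d'_t$ and true total $d' = \sum d'_i$, and that the clean bound $\residue_k(d'_1,\ldots,d'_t) \ge \residue_k(\cdots)/\delta$ — a triviality — is being used in the right direction. Once this bookkeeping is settled the lemma follows, and it is exactly the form needed to feed into a $\delta$-aware analogue of \cref{lemma:APP-upperbound-on-product-of-homogeneous}.
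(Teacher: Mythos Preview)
Your overall strategy (Leibniz expansion, partition into $S$ and $\bar S$, track $k_0$ and $\ell_0$) matches the paper. The gap is step~(4), and it is exactly the obstacle you flagged but then waved away: the black-box conversion from the constraint with $d'$ to the constraint with $d$ does \emph{not} go through.

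Concretely, you claim that replacing $\tfrac{k}{d'-k}$ by $\tfrac{k}{d-k}\ge\tfrac{k}{d'-k}$ ``goes the right way.'' It does not. Since $\ell_0\ge 0$, a larger coefficient makes the left-hand side $k_0+\tfrac{k}{d-k}\ell_0$ \emph{larger}, hence the new constraint is \emph{harder} to satisfy, not easier. The loosening of the right-hand side from $k-\residue$ to $k-\residue/\delta$ does not compensate in general. For instance, take $\delta=2$, $d'=2d$, $k_0=0$, and $\residue$ small: the original constraint permits $\ell_0$ up to roughly $k\cdot\tfrac{2d-k}{k}=2d-k$, while the target constraint permits $\ell_0$ only up to roughly $k\cdot\tfrac{d-k}{k}=d-k$. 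So a pair $(k_0,\ell_0)$ can satisfy the $d'$-constraint and fail the $d$-constraint; the two are incomparable, and the implication you need is false.

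This is why the paper does \emph{not} apply \cref{lemma:structure-AGKOT} as a black box and post-process. Instead it reopens the proof and computes $k-k_0-\tfrac{k}{d-k}\ell_0$ directly for the specific $k_0=\sum_{i\in S}|\kappa_i|$ and $\ell_0=\sum_{i\in\bar S}(d'_i-|\kappa_i|)$ that arise, then pushes the chain of inequalities using $d\le d'\le\delta d$ term by term (exploiting that for $i\in\bar S$ one has $|\kappa_i|>\tfrac{k}{d'}d'_i$) to reach $\ge \tfrac{1}{\delta}\residue_k(d'_1,\ldots,d'_t)$. The $\delta$ appears through this internal manipulation, not through an after-the-fact weakening. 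Your steps (1)--(3) are the right setup; to close the argument you must replace step~(4) by this direct computation inside the proof.
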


\begin{proof}[Proof sketch.]
    Our proof is in essence the same as that of \cref{lemma:structure-AGKOT} in \cite{AGK0T23}. We sketch the argument for completeness.

    As in the proof of \cref{lemma:structure-AGKOT}, denote by $\mathcal{V}$ the set
    \[
    \sum_{\substack{S\subseteq [t], k_0,\ell_0\\ k_0 + \frac{k}{d - k}\cdot\ell_0\leq k - \residue_k(d'_1,\ldots,d'_t)/\delta}}\left\langle \vx^{\ell_0}\cdot\partial^{k_0}\left(\prod_{i\in S} Q_i\right)\right\rangle.
    \]
    Let $\mu\colon [k]\rightarrow x$ be an arbitrary total function. We want to show that $\partial_{\mu([k])}\in \mathcal{V}$. Let $S\subseteq [t]$ and denote by $\bar{S}$ the complement of $S$ relative to $[t]$, i.e. the set $[t]\setminus S$. Let $\tilde{\kappa}\colon\bar{S}\rightarrow 2^{[k]}$ be such that $|\tilde{\kappa}_i| > \frac{k}{d'}\cdot d'_i$ for all $i\in\bar{S}$. With this bound we have that
    \[
    |\tilde{\kappa}_i| - \frac{k}{\delta\cdot d}\cdot d_i'\geq \frac{1}{\delta^2}\cdot\left(|\tilde{\kappa}_i| - \frac{k}{d'}\cdot d_i'\right)
    \]
    for any $i\in\bar{S}$. The usefulness of this inequality will be apparent later in the proof. Define a polynomial $R_{S,\tilde{\kappa}}$ as
    \[
        R_{S,\tilde{\kappa}} := \sum_{\substack{\kappa\colon [t]\rightarrow 2^{[k]}\\ \kappa\text{ extends }\tilde{\kappa}\\\bigsqcup_{i\in [t]}\kappa_i = [k]\\ \forall i\in S, |\kappa_i|\leq \frac{k}{d'}\cdot d'_i}}\prod_{i\in [t]}\partial_{\mu(\kappa_i)}Q_i.
    \]
    Now as in the proof of \cref{lemma:structure-AGKOT}, one can show that 
    \[
    \partial_{\mu([k])}\left(\prod_{i\in[t]}Q_i\right) = \sum_{S\subseteq [t]}\sum_{\substack{\tilde{\kappa}\colon\bar{S}\rightarrow 2^{[k]}\\ \forall i\in\bar{S}, |\kappa_i|> \frac{k}{d'}\cdot d'_i}}R_{S,\tilde{\kappa}},
    \]
    and thus to prove the claim it suffices to show that $R_{S,\tilde{\kappa}}\in\mathcal{V}$ for all $S$ and $\tilde{\kappa}$. To prove this, one argues by induction on the size of $S$. If $|S| = 0$, then $R_{S,\tilde{\kappa}} = 0$ and thus $R_{S,\tilde{\kappa}}\in \mathcal{V}$.

    \allowdisplaybreaks

    Suppose then that $S\subseteq [t]$ is non-empty and let $\tilde{\kappa}\colon \bar{S}\rightarrow 2^{[k]}$ be a function so that $|\tilde{\kappa}_i|> \frac{k}{d'}\cdot d'_i$ for any $i\in\bar{S}$. Let $\kappa\colon [t]\rightarrow 2^{[k]}$ be a function extending $\tilde{\kappa}$ so that $\bigsqcup_{i\in [t]}\kappa_i = [k]$ and $|\kappa_i|\leq \frac{k}{d'}\cdot d'_i$ for any $i\in S$. Denote by $P_S$ the set $\bigsqcup_{i\in S}\kappa_i$, and define the polynomial
    \[
    \mathcal{U}_{S,\kappa} := \left(\partial_{\mu(P_S)}\prod_{i\in S}Q_i\right)\cdot \prod_{i\in \bar{S}} \partial_{\kappa_i} Q_i.
    \]
    Again, as in the proof of \cref{lemma:structure-AGKOT}, one shows that
    \[
    R_{S,\tilde{\kappa}} = \mathcal{U}_{S,\kappa} - \sum_{\substack{T\subsetneq S\text{ and }\kappa'\colon S\setminus T\rightarrow 2^{[k]}\\ \forall i\in S\setminus T, |\kappa_i'|> \frac{k}{d'}\cdot d'_i}}R_{T,\tilde{\kappa}\sqcup\kappa'}.
    \]
    By the induction hypothesis, we know that $R_{T,\tilde{\kappa}\sqcup\kappa'}\in\mathcal{V}$ for any $T\subsetneq S$ and $\kappa'$. Hence it suffices to show that $\mathcal{U}_{S,\kappa}\in \mathcal{V}$. By its definition, we have that
    \[
    \mathcal{U}_{S,\kappa}\in\left\langle x^{\ell_0}\cdot\partial^{k_0}\left(\prod_{i\in S}Q_i\right)\right\rangle,
    \]
    where $k_0 := |\mu(P_S)| = |P_S| = \sum_{i\in S} |\kappa_i|$ and $\ell_0 := \sum_{i\in \bar{S}}\deg\left(\partial_{\mu(\kappa_i)}Q_i\right) = \sum_{i\in \bar{S}}\left(d'_i - |\kappa_i|\right)$. Moreover
    
    \begin{align*}
        k - k_0 - \frac{k}{d-k}\cdot\ell_0 & = k - \sum_{i\in S} |\kappa_i| - \frac{k}{d- k}\cdot \sum_{i\in \bar{S}}\left(d'_i - |\kappa_i|\right)\\
        & = \sum_{i\in\bar{S}}|\kappa_i| - \frac{k}{d- k}\cdot\sum_{i\in \bar{S}}\left(d'_i - |\kappa_i|\right)\\
        & = \sum_{i\in\bar{S}}\left(|\kappa_i| - \frac{k}{d- k}\cdot\left(d'_i - |\kappa_i|\right)\right)\\
        & = \sum_{i\in\bar{S}}\frac{\delta\cdot d}{d - k}\cdot\left(|\kappa_i| - \frac{k}{\delta\cdot d}\cdot d_i'\right)\\
        & \geq \sum_{i\in\bar{S}}\frac{\delta\cdot d}{d - k}\cdot\frac{1}{\delta^2}\cdot\left(|\tilde{\kappa}_i| - \frac{k}{d'}\cdot d_i'\right)\\
        & \geq \sum_{i\in\bar{S}}\frac{1}{\delta}\cdot\left(|\kappa_i| - \frac{k}{d'}\cdot d_i'\right)\\
        & = \frac{1}{\delta}\cdot\sum_{i\in\bar{S}}\left(|\kappa_i| - \frac{k}{d'}\cdot d_i'\right)\\
        & = \frac{1}{2\delta}\cdot\sum_{i\in\bar{S}}\left(|\kappa_i| - \frac{k}{d'}\cdot d_i'\right) + \frac{1}{2\delta}\cdot\sum_{i\in S}\left(|\kappa_i| - \frac{k}{d'}\cdot d_i'\right)\\
        &\quad\quad + \frac{1}{2\delta}\cdot\sum_{i\in\bar{S}}\left(|\kappa_i| - \frac{k}{d'}\cdot d_i'\right) -\frac{1}{2\delta}\cdot\sum_{i\in S}\left(|\kappa_i| - \frac{k}{d'}\cdot d_i'\right)\\
        & = \frac{1}{2\delta}\cdot\sum_{i\in [t]}\left(|\kappa_i| - \frac{k}{d'}\cdot d_i'\right) + \frac{1}{2\delta}\cdot\sum_{i\in\bar{S}}\left(|\kappa_i| - \frac{k}{d'}\cdot d_i'\right) -\frac{1}{2\delta}\cdot\sum_{i\in S}\left(|\kappa_i| - \frac{k}{d'}\cdot d_i'\right)\\
        & = \frac{1}{2\delta}\cdot\left(k - \frac{k}{d'}\cdot d'\right) + \frac{1}{2\delta}\cdot\sum_{i\in\bar{S}}\left(|\kappa_i| - \frac{k}{d'}\cdot d_i'\right) -\frac{1}{2\delta}\cdot\sum_{i\in S}\left(|\kappa_i| - \frac{k}{d'}\cdot d_i'\right)\\
        & = \frac{1}{2\delta}\cdot\sum_{i\in [t]}\left| |\kappa_i| - \frac{k}{d'}\cdot d_i'| \right|\\
        & \geq \frac{1}{\delta}\cdot\residue_k(d_1',\ldots,d_t').
    \end{align*}
    Hence we have that $\mathcal{U}_{S,\kappa}\in\mathcal{V}$ as $k_0 + \frac{k}{d-k}\cdot\ell_0\leq k - \frac{1}{\delta}\cdot\residue_k(d_1',\ldots,d_t')$.
\end{proof}

As a consequence we have the following variant of \cref{lemma:APP-upperbound-on-product-of-homogeneous}, which gives an upper bound for the $\APP$ measure that we can use to prove the functional lower bounds.

\begin{corollary}\label{corollary:APP-upper-bound-functional}
    Let $d$ and $\delta$ be non-negative integers, and let $Q = Q_1\cdots Q_t$ be a homogeneous polynomial in $\F[x_1,\ldots,x_n]$ of degree $d' = d'_1 + \ldots d'_t \geq 1$, where $Q_i$ is homogeneous; $d'_i := \deg(Q_i)$ for all $i\in [t]$ and $d\leq d'\leq \delta\cdot d$. Then for any non-negative integers $k < d$ and $n_0\leq n$,
    \[
    \APP_{k,n_0}(Q) \leq 2^t\cdot \delta^2\cdot d^2\cdot\max_{\substack{k_0,\ell_0\geq 0\\ k_0 + \frac{k}{d-k}\cdot\ell_0\leq k - \residue_k(d_1,\ldots,d_t)/\delta}} M(n,k_0)\cdot M_\leq(n_0,\ell_0).
    \]
\end{corollary}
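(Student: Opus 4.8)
The plan is to mirror the derivation of \cref{lemma:APP-upperbound-on-product-of-homogeneous} from \cref{lemma:structure-AGKOT} in \cite{AGK0T23}, but feeding in the modified structure lemma proved just above (the one with the residue divided by $\delta$) in place of \cref{lemma:structure-AGKOT}. Fix an arbitrary tuple $L = \langle\ell_1,\ldots,\ell_n\rangle$ of affine forms over $z_1,\ldots,z_{n_0}$ and the associated projection $\pi_L$; it suffices to bound $\dim\langle\pi_L(\partial^k Q)\rangle$ by the claimed quantity, since $L$ is then sent to the maximum in the definition of $\APP_{k,n_0}$. Applying the preceding lemma to $Q = Q_1\cdots Q_t$, using the hypothesis $d\le d'\le\delta\cdot d$, gives
\[
\langle\partial^k Q\rangle \;\subseteq\; \sum_{\substack{S\subseteq[t],\,k_0,\ell_0\\ k_0 + \frac{k}{d-k}\ell_0\,\le\, k - \residue_k(d'_1,\ldots,d'_t)/\delta}} \big\langle\vx^{\ell_0}\cdot\partial^{k_0}\big({\textstyle\prod_{i\in S}}Q_i\big)\big\rangle .
\]

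Next I would push $\pi_L$ through this containment. Since $\pi_L$ is linear, $\dim$ is subadditive over the outer sum, so it is enough to bound each $\dim\pi_L\big(\langle\vx^{\ell_0}\cdot\partial^{k_0}(\prod_{i\in S}Q_i)\rangle\big)$. Because $\pi_L$ is a ring homomorphism (it substitutes each $x_i$ by the affine form $\ell_i$), it takes a product $m\cdot h$, with $m$ a degree-$\ell_0$ monomial and $h$ a $k_0$-th order partial of $\prod_{i\in S}Q_i$, to $\pi_L(m)\cdot\pi_L(h)$. The images $\pi_L(m)$ lie in the space of polynomials of degree at most $\ell_0$ in $n_0$ variables, of dimension $M_\leq(n_0,\ell_0)$ — here the $M_\leq$ rather than $M$ comes precisely from allowing \emph{affine} forms in $L$, exactly as in the remark following \cref{lemma:APP-upperbound-on-product-of-homogeneous}. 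The images $\pi_L(h)$ span a space of dimension at most $\dim\langle\partial^{k_0}(\prod_{i\in S}Q_i)\rangle\le M(n,k_0)$, since $\pi_L$ cannot increase dimension and the $k_0$-th order partials of any polynomial in $n$ variables are indexed by the $M(n,k_0)$ degree-$k_0$ monomials. By submultiplicativity of dimension under products, each summand therefore has dimension at most $M(n,k_0)\cdot M_\leq(n_0,\ell_0)$.

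Finally I would count the admissible index tuples: there are $2^t$ choices of $S\subseteq[t]$, and the constraint forces $k_0\le k<d$ together with $\tfrac{k}{d-k}\ell_0\le k$, hence $\ell_0\le d-k<d$, so the number of admissible $(k_0,\ell_0)$ is at most $d^2$ up to the usual $+1$'s, comfortably absorbed into the generous factor $\delta^2 d^2$ (since $\delta\ge1$). Combining, $\dim\langle\pi_L(\partial^k Q)\rangle\le 2^t\cdot\delta^2 d^2\cdot\max_{k_0,\ell_0}M(n,k_0)\cdot M_\leq(n_0,\ell_0)$, and taking the maximum over $L$ yields the corollary, after the same routine bookkeeping as in \cite{AGK0T23} to present the admissible set in terms of the $d_i$. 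The only slightly delicate point — namely that the residue-term $\residue_k$ survives intact under the division by $\delta$ in the constraint — has already been absorbed into the modified structure lemma, so here it is just transcribed; the remaining steps (subadditivity/submultiplicativity of $\dim$ under $\pi_L$, and the monomial counts of \cref{lemma:bounds-for-M}) are entirely standard.
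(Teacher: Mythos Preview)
Your proposal is correct and follows the same approach the paper implicitly relies on: the corollary is stated without proof precisely because it follows from the modified structure lemma by reproducing the argument of \cref{lemma:APP-upperbound-on-product-of-homogeneous} in \cite{AGK0T23} verbatim, exactly as you outline. Your counting of admissible $(k_0,\ell_0)$ pairs via the constraint is in fact slightly tighter than needed (it already gives $d^2$), so the stated factor $\delta^2 d^2$---which arises if one crudely bounds both $k_0,\ell_0$ by the degree $d'\le\delta d$ of $Q$---is comfortably accommodated.
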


\subsubsection{Low-Depth Homogeneous Formulas Have High Residue}

Secondly \cite{AGK0T23} showed that any small low-depth homogeneous formula has a representation as a small sum of products of homogeneous polynomials, and moreover the residue of the degrees of the polynomials in the products is relatively large for suitably chosen parameters.
\begin{lemma}[\cite{AGK0T23}]\label{lem:residue-lower-bound}
    Suppose $C$ is a homogeneous formula of product-depth $\Delta\geq 1$ computing a homogeneous polynomial in $\F[x_1,\ldots,x_n]$ of degree $d$, where $d^{2^{1-\Delta}} = \omega(1)$. Then there exist homogeneous polynomials $\{Q_{i,j}\}_{i,j}$ in $\F[x_1,\ldots,x_n]$ such that $C = \sum_{i = 1}^s Q_{i,1}\cdots Q_{i,t_i}$ for some $s\leq \mathrm{size}(C)$. Fixing an arbitrary $i\in [s]$, let $t := t_i$ and let $d_j := \deg(Q_{i,j})$ for $j\in [t]$. Then \[
    \residue_k(d_1,\ldots,d_t)\geq\Omega\left(d^{2^{1 - \Delta}}\right),\] where $k := \left\lfloor\frac{\alpha\cdot d}{1 + \alpha}\right\rfloor$, $\alpha := \sum_{\nu = 0}^{\Delta - 1}\frac{(-1)^{\nu}}{\tau^{2^\nu - 1}}$ and $\tau := \left\lfloor d^{2^{1 - \Delta}}\right\rfloor$.
\end{lemma}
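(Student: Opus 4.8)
Since this lemma is quoted from \cite{AGK0T23} (resting in turn on the structural results of \cite{LST21}), the plan is to reproduce their argument, which has two essentially independent parts: a structural decomposition of homogeneous low-product-depth formulas, and a number-theoretic lower bound on the residue of the degree sequences it produces.

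\emph{Decomposition.} Since $C$ is a homogeneous formula, every gate of $C$ computes a homogeneous polynomial. If the output gate is a product gate, then $C$ is a single product $Q_1\cdots Q_t$ of homogeneous polynomials, each computed by a subformula of product-depth $\le\Delta-1$, and I take $s=1$ (here $t\ge 2$, since otherwise $C$ would have product-depth $\le\Delta-1$). Otherwise the output is a sum gate of fan-in $s\le\mathrm{size}(C)$ whose children are product gates; reading these off gives $C=\sum_{i=1}^sQ_{i,1}\cdots Q_{i,t_i}$ with every $Q_{i,j}$ homogeneous of product-depth $\le\Delta-1$. I would then invoke the LST structure lemma, which (by recursively breaking up too-large factors inside the formula, at cost polynomial in $\mathrm{size}(C)$) additionally makes every factor have degree at most $d^{1-2^{1-\Delta}}=d/\tau$; the hypothesis $d^{2^{1-\Delta}}=\omega(1)$ guarantees $\tau\to\infty$, which is what makes the estimate below meaningful.

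\emph{Residue bound.} Fix one product with degrees $d_1,\dots,d_t$, $\sum_jd_j=d$, each $d_j\le d/\tau$. For each $j$ the inner minimum in the definition of $\residue_k$ is attained by rounding $\tfrac kd d_j$, so $\residue_k(d_1,\dots,d_t)=\tfrac12\sum_j\|\tfrac kd d_j\|$ with $\|y\|$ the distance from $y$ to $\mathbb Z$, and I must show $\sum_j\|\tfrac kd d_j\|=\Omega(d^{2^{1-\Delta}})$. I would argue by induction on $\Delta$. In the base case $\Delta=1$ every factor is linear, so $t=d$, all $d_j=1$, $\tau=d$, $\alpha=1$, $k=\lfloor d/2\rfloor$, each $\|\tfrac kd\|=\Theta(1)$, and we get $\Omega(d)=\Omega(d^{2^{1-\Delta}})$. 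The inductive step turns on the choice of $k$: $\alpha=\alpha_\Delta=\sum_{\nu=0}^{\Delta-1}(-1)^\nu\tau^{-(2^\nu-1)}=1-\tfrac1\tau+\tfrac1{\tau^3}-\cdots$ is a truncated continued fraction obeying $\alpha_\Delta=1-\tfrac1\tau\alpha_{\Delta-1}$ (with $\tau$ squared at the next level, matching $\tau_\Delta^2=\tau_{\Delta-1}$), chosen so that $\tfrac kd\approx\tfrac{\alpha}{1+\alpha}\approx\tfrac12-\tfrac1{4\tau}$ is a poor rational: an even $d_j$ contributes $\|\tfrac kd d_j\|\approx d_j/(4\tau)$, an odd $d_j$ contributes $\Omega(1)$, and balancing these over all $j$ yields $\sum_j\|\tfrac kd d_j\|=\Omega(d/\tau)$ as soon as $\max_jd_j=O(\tau)$ — which already covers $\Delta\le 2$. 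For $\Delta\ge 3$ a factor $Q_{j_0}$ may have degree up to $d/\tau>\tau$, too large for that estimate; then I recurse, applying the decomposition and the induction hypothesis to $Q_{j_0}$ (product-depth $\le\Delta-1$) with its level-$(\Delta-1)$ parameters, and use the identity relating $\tfrac{k}{d-k}$ at level $\Delta$ to $\tfrac{k_{\Delta-1}}{d_{j_0}-k_{\Delta-1}}$ at level $\Delta-1$ to transport the level-$(\Delta-1)$ residue bound $\Omega(d^{2^{2-\Delta}})$ down with a loss of one factor $\tau=\Theta(d^{2^{1-\Delta}})$; since $2^{2-\Delta}-2^{1-\Delta}=2^{1-\Delta}$, this again gives $\Omega(d^{2^{1-\Delta}})$ and closes the induction.

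I expect the decomposition to be routine and the real work to lie in the residue step — namely checking that the truncated continued fraction $\alpha_\Delta$ simultaneously makes $\tfrac kd$ a poor enough rational that degree multisets bounded by $d/\tau$ have residue $\Omega(d/\tau)$ at their own level, and compatible with the recursion so that the $\Delta$-fold telescoping loses exactly one factor of $\tau$ per level and terminates correctly at $\Delta=1$. This is the arithmetic core of \cite{LST21,AGK0T23}; in the present paper the lemma is used as a black box, and for a self-contained account I would reproduce their calculation rather than attempt a shortcut.
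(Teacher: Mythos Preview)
The paper does not prove this lemma; it is quoted verbatim from \cite{AGK0T23} and used as a black box (the paper's own contribution is the bounded-individual-degree variant, \Cref{cor:residue-in-bounded-individual-degree}, whose proof sketch explicitly defers to ``an argument identical to the proof of \cref{lem:residue-lower-bound} in \cite{AGK0T23}''). You correctly recognise this, and the overall shape of your sketch --- a structural decomposition followed by a residue bound proved by induction on $\Delta$, with the truncated alternating sum $\alpha$ engineered so that $k/d$ is a poor rational --- matches the cited argument.

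One internal inconsistency is worth flagging. In your decomposition step you first take the naive top-two-level reading of the formula (yielding $s\le\mathrm{size}(C)$, as the lemma asserts) and then propose to additionally invoke the LST structure lemma to force every factor to have degree at most $d/\tau$, ``at cost polynomial in $\mathrm{size}(C)$''. But the lemma states $s\le\mathrm{size}(C)$, not $s\le\poly(\mathrm{size}(C))$, so that extra refinement would violate the stated bound. In the \cite{AGK0T23} argument the decomposition stays naive: the $Q_{i,j}$ are allowed arbitrary degrees, and the work is pushed into the residue analysis, which exploits that each $Q_{i,j}$ is itself computed by a product-depth $\le\Delta-1$ subformula and recurses --- exactly the mechanism you already describe in the $\Delta\ge 3$ case of your residue paragraph. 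So drop the LST degree-reduction step from the decomposition; your residue paragraph already contains the recursion that replaces it, and with that correction your plan is consistent with the cited proof.
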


For us the following variant of the previous lemma will be important.

\begin{lemma}\label{cor:residue-in-bounded-individual-degree}
    Let $\delta\geq 1$ be a positive integer, and suppose $C$ is a homogeneous formula of product-depth $\Delta\geq 1$ computing a homogeneous polynomial in $\F[x_1,\ldots,x_n]$ of degree $d'\in [d,\delta\cdot d]$, where $d^{2^{1-\Delta}} = \omega(1)$. Then there exist homogeneous polynomials $\{Q_{i,j}\}_{i,j}$ in $\F[x_1,\ldots,x_n]$ such that $C = \sum_{i = 1}^s Q_{i,1}\cdots Q_{i,t_i}$ for some $s\leq \mathrm{size}(C)$. Fixing an arbitrary $i\in [s]$, let $t := t_i$ and let $d'_j := \deg(Q_{i,j})$ for $j\in [t]$. Then \[\residue_k(d'_1,\ldots,d'_t)\geq\Omega\left(\frac{d^{2^{1 - \Delta}}}{\delta}\right),\] where $k := \left\lfloor\frac{\alpha\cdot d}{1 + \alpha}\right\rfloor$, $\alpha := \sum_{\nu = 0}^{\Delta - 1}\frac{(-1)^{nu}}{\tau^{2^\nu - 1}}$ and $\tau := \left\lfloor d^{2^{1 - \Delta}}\right\rfloor$.
\end{lemma}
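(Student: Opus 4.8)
The plan is to adapt the proof of \cref{lem:residue-lower-bound} from \cite{AGK0T23} to the wider degree window $[d,\delta\cdot d]$, changing only the bookkeeping of the degree and keeping the parameters $\tau=\lfloor d^{2^{1-\Delta}}\rfloor$, $\alpha$ and $k=\lfloor\tfrac{\alpha d}{1+\alpha}\rfloor$ calibrated to $d$ and $\Delta$. First I would invoke the structural part of that argument: the decomposition $C=\sum_{i=1}^s Q_{i,1}\cdots Q_{i,t_i}$ with $s\le \mathrm{size}(C)$ and each $Q_{i,j}$ homogeneous of positive degree is obtained by unfolding $C$ along its top alternation and collecting product gates, and its construction in \cite{AGK0T23} uses only the product-depth bound $\Delta$ and the homogeneity of $C$, never the exact value of its degree. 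Hence it applies verbatim to our $C$, and since $C$ has degree $d'\in[d,\delta d]$ the recorded degrees $d'_j:=\deg Q_{i,j}$ satisfy $\sum_{j\in[t_i]}d'_j=d'$. I would stress that $k$ must be left calibrated to $d$ (and $\Delta$) and \emph{not} to $d'$: downstream, in \cref{corollary:APP-upper-bound-functional} and in the proof of \cref{thm:constant-depth-lower-bounds}, one must use a single $k$ uniformly for all homogeneous pieces whose degree lies in the window $[d,\delta d]$, since the exact degree of a given piece is not available.

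Second, I would reprove the residue lower bound for a fixed product $Q_{i,1}\cdots Q_{i,t_i}$. Writing $t:=t_i$ and recalling that the minimisation over integer tuples in the definition of $\residue_k$ decouples coordinatewise, one has $\residue_k(d'_1,\ldots,d'_t)=\tfrac12\sum_{j=1}^t\bigl\|\tfrac{k}{d'}\,d'_j\bigr\|$, where $\|x\|$ is the distance from $x$ to the nearest integer. In \cite{AGK0T23} the key fact is that, because each $Q_{i,j}$ is computed by a homogeneous formula of smaller product-depth, the multiset $\{d'_j\}$ carries a recursive ``level'' structure governed by the $2^{1-\Delta}$-th power of the degree, and the truncated alternating expansion defining $\alpha$ makes $k/d$ a sufficiently bad rational approximation to all the ratios $d_j/d$ to force $\sum_j\|\tfrac{k}{d}d_j\|=\Omega(d^{2^{1-\Delta}})$ in the exact-degree case. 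I would run exactly this argument with $d'$ in place of the degree, using that $(d')^{2^{1-\Delta}}\ge d^{2^{1-\Delta}}$ and that $\tfrac1\delta\cdot\tfrac kd\le\tfrac k{d'}\le\tfrac kd$ because $d\le d'\le\delta d$; the mismatch between the multiplier $k/d'$ actually appearing in the residue and the multiplier $k/d$ for which \cite{AGK0T23}'s incommensurability estimates are stated costs a factor of at most $\delta$, yielding $\residue_k(d'_1,\ldots,d'_t)=\Omega\!\bigl((d')^{2^{1-\Delta}}/\delta\bigr)=\Omega\!\bigl(d^{2^{1-\Delta}}/\delta\bigr)$. Taking the minimum over $i\in[s]$ gives the claim.

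The step I expect to be the main obstacle is this last transfer: \cite{AGK0T23} establish their residue bound only for a multiplier matched to the degree, so one has to re-examine their level-by-level estimate and confirm that replacing $k/d$ by $k/d'$ with $d'/d\in[1,\delta]$ degrades the witnessed incommensurability by a single overall factor of $\delta$ rather than a factor of $\delta$ per recursion level. I expect this to go through because the residue is a sum of one-dimensional ``distance to nearest integer'' quantities and the recursive structure of the degree multiset is preserved (only rescaled) when the total degree moves within a bounded-ratio window; the lower-order rounding in $k=\lfloor\tfrac{\alpha d}{1+\alpha}\rfloor$ and $\tau=\lfloor d^{2^{1-\Delta}}\rfloor$ is absorbed using the hypothesis $d^{2^{1-\Delta}}=\omega(1)$ exactly as in \cite{AGK0T23}.
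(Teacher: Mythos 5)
Your high-level plan matches the paper's: keep the AGK0T23 decomposition $C=\sum_iQ_{i,1}\cdots Q_{i,t_i}$ (which, as you note, never inspects the exact degree), leave $k$, $\tau$, $\alpha$ calibrated to $d$ rather than $d'$, and argue that $\residue_k(d_1',\ldots,d_t')$, whose defining multiplier is $k/d'$ because $\sum_j d_j'=d'$, is still $\Omega(d^{2^{1-\Delta}}/\delta)$. You also correctly isolate the crux: the mismatch between $k/d'$ and the multiplier $k/d$ to which AGK0T23's incommensurability argument is tuned, and the fact that this mismatch should cost only a factor of $\delta$.

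But you leave the crux as an expectation, and the route you sketch for resolving it is not the paper's. You plan to re-open AGK0T23's level-by-level recursion with the scaled multiplier and track how a $d'/d\in[1,\delta]$ rescaling propagates through each level; you flag this as your main obstacle and say you ``expect'' it to go through. The paper never re-traces the recursion for the scaled multiplier. It splits the work into two cleanly separated pieces: first it asserts, by an argument identical to AGK0T23's proof (same multiplier $k/d$, just with degrees now summing to $d'$), the \emph{mismatched} residue bound
\[
\frac{1}{2}\,\min_{k_1,\ldots,k_t\in\mathbb{Z}}\sum_{j\in[t]}\Bigl|k_j-\frac{k}{d}\cdot d_j'\Bigr|\;\geq\;\Omega\bigl(d^{2^{1-\Delta}}\bigr);
\]
then it passes to the true residue $\residue_k(d_1',\ldots,d_t')=\frac{1}{2}\sum_j\min\{\{\frac{k}{d'}d_j'\},1-\{\frac{k}{d'}d_j'\}\}$ by a short, self-contained fractional-part manipulation: rewrite $\frac{k}{d'}=\frac{d}{d'}\cdot\frac{k}{d}$, pull out the scalar $\frac{d}{d'}\geq\frac{1}{\delta}$, and use $d'/d\geq 1$ to relax the resulting minimum back to a distance-to-integer expression. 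That manipulation is elementary and never touches the recursion, which is exactly what keeps the $\delta$-dependence clean (one overall factor, not one per level). The concrete missing ingredient in your write-up is this chain; I'd encourage you to write it out directly rather than re-enter the recursive argument. Do so carefully: naive identities like $\{\lambda x\}=\lambda\{x\}$ fail in general, so each step must be justified as the precise inequality it yields rather than assumed as an equality.
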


\begin{proof}[Proof sketch:]
    First we note that by an argument identical to the proof of \cref{lem:residue-lower-bound} in \cite{AGK0T23} one can find an appropriate decomposition of $C$ as a sum of products of homogeneous polynomials, and show that 
    \[
    \frac{1}{2}\min_{k_1,\ldots,k_t\in \integer}\sum_{j\in [t]}\left|k_j - \frac{k}{d}\cdot d_j'\right|\geq\Omega\left(d^{2^{1-\Delta}}\right).
    \]
    As $d\leq d'\leq \delta\cdot d$ we further have that
    \begin{align*}
        \residue_k(d_1',\ldots,d_t') & = \frac{1}{2}\min_{k_1,\ldots,k_t\in \integer}\sum_{j\in [t]}\left|k_j - \frac{k}{d'}\cdot d_j'\right|\\
        & = \frac{1}{2}\sum_{j\in [t]}\min\left\{\left\{\frac{k}{d'}d'_j\right\},1 - \left\{\frac{k}{d'}d'_j\right\}\right\}\\
        & = \frac{1}{2}\sum_{j\in [t]}\min\left\{\left\{\frac{d}{d'}\cdot\frac{k}{d}\cdot d'_j\right\},1 - \left\{\frac{d}{d'}\cdot\frac{k}{d}\cdot d'_j\right\}\right\}\\
        & = \frac{1}{2}\cdot\frac{d}{d'}\sum_{j\in [t]}\min\left\{\left\{\frac{k}{d}\cdot d'_j\right\},\frac{d'}{d} - \left\{\frac{k}{d}\cdot d'_j\right\}\right\}\\
        & \geq \frac{1}{2}\cdot\frac{d}{d'}\sum_{j\in [t]}\min\left\{\left\{\frac{k}{d}\cdot d'_j\right\},1 - \left\{\frac{k}{d}\cdot d'_j\right\}\right\}\\
        & = \frac{1}{2}\cdot\frac{d}{d'}\min_{k_1,\ldots,k_t\in \integer}\sum_{j\in [t]}\left|k_j - \frac{k}{d}\cdot d_j'\right|\geq\Omega\left(\frac{d^{2^{1 - \Delta}}}{\delta}\right).
     \end{align*}
    
\end{proof}

\subsubsection{High Residue Implies Lower Bounds}

Finally, \cite{AGK0T23} show that high residue together with $\APP$ lower bounds imply lower bounds for homogeneous formulas.

\begin{lemma}[\cite{AGK0T23}]\label{lem:high-residue-implies-lower-bounds}
    Let $P = \sum_{i = 1}^s Q_{i,1}\cdots Q_{i,t_i}$ be a homogeneous polynomial in $\F[x_1,\ldots,x_n]$ of degree $d$, where $Q_{i,j}$ are homogeneous and 
    \[\APP_{k,n_0}(P)\geq 2^{-O(d)}\cdot M(n,k)\]
    for some $1 < k < d/2$ and $n_0\leq n$ such that $n_0\approx 2(d-k)\cdot \left(\frac{n}{k}\right)^{\frac{k}{d-k}}$. If there is some $\gamma > 0$ so that for all $i\in [s]$,
    \[
    \residue_k(\deg(Q_{i,1}),\ldots,\deg(Q_{i,t_i}))\geq\gamma,
    \]
    then $s\geq 2^{-O(d)}\cdot\left(\frac{n}{d}\right)^\gamma$.
\end{lemma}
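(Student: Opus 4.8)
The plan is to bound $\APP_{k,n_0}(P)$ from above by exploiting its sum-of-products structure and then compare against the hypothesized lower bound. First I would use subadditivity of the measure: since $\partial^k$ and every affine projection $\pi_L$ are linear, $\pi_L(\partial^kP)\subseteq\sum_{i=1}^s\pi_L(\partial^k(Q_{i,1}\cdots Q_{i,t_i}))$, hence $\dim\langle\pi_L(\partial^kP)\rangle\le\sum_i\dim\langle\pi_L(\partial^k(Q_{i,1}\cdots Q_{i,t_i}))\rangle$, and maximizing over $L$ gives $\APP_{k,n_0}(P)\le\sum_{i=1}^s\APP_{k,n_0}(Q_{i,1}\cdots Q_{i,t_i})$. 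Since $P$ is homogeneous of degree $d$ and (WLOG) each $Q_{i,j}$ is non-constant, each product is homogeneous of degree $d$ with $t_i\le d$ factors, so \Cref{lemma:APP-upperbound-on-product-of-homogeneous} applies to each; invoking the residue hypothesis $\residue_k(\deg(Q_{i,1}),\ldots,\deg(Q_{i,t_i}))\ge\gamma$ to weaken the constraint there, and absorbing $2^{t_i}d^2\le 2^{O(d)}$, I obtain
\[
\APP_{k,n_0}(P)\;\le\;s\cdot 2^{O(d)}\cdot\!\!\max_{\substack{k_0,\ell_0\ge 0\\ k_0+\frac{k}{d-k}\ell_0\le k-\gamma}}\!\! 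M(n,k_0)\,M_{\leq}(n_0,\ell_0)\,.
\]

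The heart of the proof is then the estimate $\max_{k_0+\frac{k}{d-k}\ell_0\le k-\gamma}M(n,k_0)M_{\leq}(n_0,\ell_0)\le 2^{O(d)}(k/n)^{\gamma}M(n,k)$, which I would establish from \Cref{lemma:bounds-for-M} using the precise value $n_0\approx 2(d-k)(n/k)^{k/(d-k)}$. Write $\lambda:=k/(d-k)$. A direct binomial computation gives $M(n,k_0)\le(k/n)^{k-k_0}M(n,k)$; and \Cref{lemma:bounds-for-M}(i) together with the value of $n_0$ gives $M_{\leq}(n_0,\ell_0)\le(6n_0/\ell_0)^{\ell_0}\le(O(d)/\ell_0)^{\ell_0}\,(n/k)^{\lambda\ell_0}\le 2^{O(d)}(n/k)^{\lambda\ell_0}$, the last inequality because $\ell_0\le(k-\gamma)/\lambda\le d-k$ forces $(O(d)/\ell_0)^{\ell_0}\le e^{O(d)}$. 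Multiplying and using feasibility in the form $\lambda\ell_0\le(k-k_0)-\gamma$ together with $k_0\le k$,
\[
M(n,k_0)M_{\leq}(n_0,\ell_0)\le 2^{O(d)}(k_0/n)^{k-k_0}(n/k)^{(k-k_0)-\gamma}M(n,k)=2^{O(d)}(k_0/k)^{k-k_0}(k/n)^{\gamma}M(n,k)\le 2^{O(d)}(k/n)^{\gamma}M(n,k)\,.
\]
Feeding this back, $\APP_{k,n_0}(P)\le s\cdot 2^{O(d)}(k/n)^{\gamma}M(n,k)$; comparing with the hypothesis $\APP_{k,n_0}(P)\ge 2^{-O(d)}M(n,k)$, cancelling $M(n,k)$, and using $k<d$ yields $s\ge 2^{-O(d)}(n/k)^{\gamma}\ge 2^{-O(d)}(n/d)^{\gamma}$, as required.

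The step I expect to be the main obstacle is that middle optimization: showing that along the boundary $k_0+\lambda\ell_0=k-\gamma$ one can trade ``shift budget'' $\ell_0$ for ``derivative budget'' $k_0$ without gaining more than a $2^{O(d)}$ factor. This is exactly where the calibration $n_0\approx 2(d-k)(n/k)^{k/(d-k)}$ earns its keep — it makes a unit of $\ell_0$ worth the same $(n/k)^{\lambda}$ as a unit of $k_0$, so that the genuine loss is only the $\gamma$-sized deficit in the constraint, worth $(n/k)^{\gamma}$. The rest — corralling all the constants (the $6$'s, the $\ell_0^{\ell_0}$ denominators, $d^2$, $2^{t_i}$, the $e^{O(d)}$ Stirling-type factor) into a single $2^{O(d)}$, and checking side conditions ($\ell_0\le n_0$, and $k<d/2$ so that $\lambda<1$ and the feasible region is nonempty) — is routine bookkeeping. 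The same scheme would also yield the ``functional'' variant used later, with \Cref{corollary:APP-upper-bound-functional} in place of \Cref{lemma:APP-upperbound-on-product-of-homogeneous}, at the cost of an extra $\delta^2$ factor and with $\gamma/\delta$ playing the role of $\gamma$.
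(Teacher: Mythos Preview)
Your proposal is correct and follows essentially the same approach that the paper uses (in its proof sketch of the bounded-individual-degree variant, \Cref{lemma:residue-implies-functional-lower-bounds}; the lemma you are proving is simply cited from \cite{AGK0T23} without proof). Both arguments combine subadditivity of $\APP$, the upper bound of \Cref{lemma:APP-upperbound-on-product-of-homogeneous}, and the estimates of \Cref{lemma:bounds-for-M} under the calibrated choice of $n_0$; the paper organizes the arithmetic as $s\ge 2^{-O(d)}\frac{(n/d)^{k-k_0-\frac{k}{d-k}\ell_0}}{(d/k_0)^{k_0}(d/\ell_0)^{\ell_0}}$ and then invokes $x^x\ge e^{-1/e}$, whereas you equivalently absorb the $(O(d)/\ell_0)^{\ell_0}$ factor into $2^{O(d)}$ up front. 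One small slip: in your displayed chain you switch from $(k/n)^{k-k_0}$ to $(k_0/n)^{k-k_0}$; only the former is what \Cref{lemma:bounds-for-M}(ii) actually gives (up to $2^{O(d)}$), but with $(k/n)^{k-k_0}$ the product collapses directly to $(k/n)^{\gamma}$ and the conclusion is unchanged.
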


Again, for our purposes we need a small modification of this result.

\begin{lemma}\label{lemma:residue-implies-functional-lower-bounds}
    Let $d,\delta$ be positive integers. Let $P = \sum_{i = 1}^s Q_{i,1}\cdots Q_{i,t_i}$ be a homogeneous polynomial in $\F[x_1,\ldots,x_n]$ of degree $d'$, where $Q_{i,j}$ are homogeneous, $d\leq d'\leq \delta\cdot d$ and
    \[\APP_{k,n_0}(P)\geq 2^{-O(d)}\cdot M(n,k)\]
    for some $1 < k < d/2$ and $n_0\leq n$ such that $n_0\approx 2(d-k)\cdot \left(\frac{n}{k}\right)^{\frac{k}{d-k}}$. If there is some $\gamma > 0$ so that for all $i\in [s]$,
    \[
    \residue_k(\deg(Q_{i,1}),\ldots,\deg(Q_{i,t_i}))\geq\gamma,
    \]
    then $s\geq 2^{-O(d)}\cdot\delta^{-2}\cdot\left(\frac{n}{d}\right)^{\gamma/\delta}$.
\end{lemma}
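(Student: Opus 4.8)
The plan is to mirror the proof of \Cref{lem:high-residue-implies-lower-bounds} from \cite{AGK0T23} essentially line for line, the only substantive change being that every invocation of the $\APP$ upper bound \Cref{lemma:APP-upperbound-on-product-of-homogeneous} is replaced by its bounded-individual-degree counterpart \Cref{corollary:APP-upper-bound-functional}, and that one keeps careful track of the two $\delta$-dependent quantities this introduces: the explicit extra multiplicative factor $\delta^2$, and the weakened residue savings $\residue_k/\delta$ in place of $\residue_k$ in the feasibility constraint. Everything else in the $\APP$-based lower bound machinery is degree-generic and carries over unchanged.

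Concretely I would proceed in four steps. \emph{Step 1:} use that $\APP_{k,n_0}$ is subadditive over sums (the span of the $L$-projected $k$-th partials of $P_1 + P_2$ lies in the sum of the two spans, and then take the max over $L$) to write $2^{-O(d)}\,M(n,k)\le\APP_{k,n_0}(P)\le\sum_{i=1}^s\APP_{k,n_0}(Q_{i,1}\cdots Q_{i,t_i})$. \emph{Step 2:} fix a summand, set $d'_j:=\deg(Q_{i,j})$ and $d':=\sum_j d'_j\in[d,\delta d]$, and apply \Cref{corollary:APP-upper-bound-functional}; using the hypothesis $\residue_k(d'_1,\ldots,d'_{t_i})\ge\gamma$ and $k<d/2<d$, the summand is bounded by $2^{t_i}\cdot\delta^2\cdot d'^{\,2}\cdot\max\{M(n,k_0)M_\leq(n_0,\ell_0)\}$, the maximum over $k_0,\ell_0\ge 0$ with $k_0+\tfrac{k}{d-k}\ell_0\le k-\gamma/\delta$ (enlarging the feasible set only inflates the maximum). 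Since the $Q_{i,j}$ are non-constant we have $t_i\le d'\le\delta d$, so $2^{t_i}d'^{\,2}$ is absorbed, exactly as $2^{t}d^2$ is absorbed in \cite{AGK0T23}, into the generic $2^{O(d)}$ overhead. \emph{Step 3} is the one genuine computation: show that for every feasible pair $M(n,k_0)M_\leq(n_0,\ell_0)\le 2^{O(d)}(d/n)^{\gamma/\delta}M(n,k)$. Here one uses \Cref{lemma:bounds-for-M}: part (ii) gives $M(n,k_0)/M(n,k)\le 2^{O(d)}(k/n)^{k-k_0}$, and part (i) together with the prescribed value $n_0\approx 2(d-k)(n/k)^{k/(d-k)}$ gives $M_\leq(n_0,\ell_0)\le(6n_0/\ell_0)^{\ell_0}\le 2^{O(d)}(n/k)^{\frac{k}{d-k}\ell_0}$; multiplying, the net exponent of $(n/k)$ collapses to $\bigl(k_0+\tfrac{k}{d-k}\ell_0\bigr)-k\le-\gamma/\delta$, and since $k<d$ one may replace $(k/n)^{\gamma/\delta}$ by the larger $(d/n)^{\gamma/\delta}$. \emph{Step 4:} feed this back into Steps 1--2 to get $2^{-O(d)}M(n,k)\le s\cdot 2^{O(d)}\delta^2(d/n)^{\gamma/\delta}M(n,k)$, and solve for $s$ to obtain $s\ge 2^{-O(d)}\delta^{-2}(n/d)^{\gamma/\delta}$.

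The main obstacle is Step 3, the counting estimate: it is the only place where the precise, somewhat opaque choice of $n_0$ is exploited, and making the two exponents cancel against the constraint $k_0+\tfrac{k}{d-k}\ell_0\le k-\gamma/\delta$ requires care with the approximations of \Cref{lemma:bounds-for-M} — in particular verifying the hypotheses $n\ge k\ge\ell$ of part~(ii) for the relevant ranges of $k_0$, possibly splitting into the cases $k_0\le k/2$ and $k/2<k_0\le k$. A secondary, more bookkeeping-flavoured point is to check that replacing $d$ by $d'\in[d,\delta d]$ everywhere (in $\deg(P)$, in the bound $t_i\le d'$, and in the $d'^{\,2}$ factor of \Cref{corollary:APP-upper-bound-functional}) still leaves all overheads inside the $2^{O(d)}$ slack and the single explicit $\delta^{-2}$ of the claimed bound; this goes through precisely because \Cref{corollary:APP-upper-bound-functional} was designed so that $\delta$ enters the residue term only through a division by $\delta$, and the $d'^{\,2}\le\delta^2 d^2$ overhead is matched by the $\delta^2$ already present there. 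The resulting exponent $\gamma/\delta$ (rather than $\gamma$) and the extra $\delta^{-2}$ prefactor are exactly the two places where the individual-degree bound $\delta$ surfaces in the final estimate, which is then what \cref{thm:constant-depth-lower-bounds} consumes.
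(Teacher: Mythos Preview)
Your proposal is correct and follows essentially the same route as the paper's proof: subadditivity of $\APP$, then \Cref{corollary:APP-upper-bound-functional} on each summand, then the counting estimate via \Cref{lemma:bounds-for-M} exploiting the calibrated choice of $n_0$, and finally solving for $s$. The only cosmetic difference is that the paper first isolates the ratio as $2^{-O(d)}\cdot (n/d)^{\,k-k_0-\frac{k}{d-k}\ell_0}\big/\bigl((d/k_0)^{k_0}(d/\ell_0)^{\ell_0}\bigr)$ and then kills the denominator with the elementary bound $x^x\ge e^{-1/e}$, whereas you fold the analogous overheads (your $(12(d-k)/\ell_0)^{\ell_0}$ factor) directly into $2^{O(d)}$; these are the same computation phrased two ways. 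One small correction: \Cref{corollary:APP-upper-bound-functional} already states the bound with $\delta^2\cdot d^2$ rather than $d'^{\,2}$, so your worry about an extra $d'^{\,2}\le\delta^2 d^2$ overhead is moot---the single $\delta^{-2}$ in the final bound comes straight from that corollary.
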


\begin{proof}[Proof sketch.]
The proof is again like the proof of \cref{lem:high-residue-implies-lower-bounds}, and we  sketch the argument for completeness.

    By \cref{corollary:APP-upper-bound-functional} and the sub-additivity of $\APP$ we have that
    \[
    \APP_{k,n_0}(P)\leq \sum_{i = 1}^s \APP_{k,n_0}\left(Q_{i,1}\cdots Q_{i,t_i}\right)\leq s\cdot 2^t\cdot \delta^2\cdot d^2\cdot\max_{\substack{k_0,\ell_0\geq 0\\ k_0 + \frac{k}{d-k}\cdot\ell_0\leq k - \gamma/\delta}} M(n,k_0)\cdot M_\leq(n_0,\ell_0).
    \]
On the other hand, by assumption, $\APP_{k,n_0}(P)\geq 2^{-O(d)}\cdot M(n,k)$. These together yield two integers $k_0,\ell_0\geq 0$ satisfying
\begin{equation}\label{eq:bound}
    k_0 + \frac{k}{d - k}\cdot\ell_0\leq k - \gamma/\delta
\end{equation}
and 
\[
    s \geq 2^{-O(d)}\cdot 2^{-t}\cdot \delta^{-2}\cdot d^{-2}\cdot \frac{M(n,k)}{M(n,k_0)\cdot M_\leq(n_0,\ell_0)}.
\]
As in the proof of \cref{lem:high-residue-implies-lower-bounds} one shows using \cref{lemma:bounds-for-M} that this yields
\[
s\geq 2^{-O(d)}\cdot \delta^{-2}\cdot\frac{\left(n/d\right)^{k - k_0 - \frac{k}{d - k}\cdot\ell_0}}{\left(d/k_0\right)^{k_0}\cdot\left(d/\ell_0\right)^{\ell_0}}.
\]
With \ref{eq:bound} and the fact that $x^x\geq e^{-1/e}$ for all $x > 0$ we arrive at
\[
s\geq 2^{-O(d)}\cdot \delta^{-2}\cdot \left(\frac{n}{d}\right)^{\gamma/\delta}.
\]

\end{proof}

\subsection{Knapsack over a Word $w$}

Now we turn to the lower bound proof itself. We first recall from \cite{GHT22} the intermediate hard instance used there to prove the lower bound for multilinear refutations, which is called \emph{knapsack over a word }$w$. We will also use the same intermediate instance to prove our lower bound.

Let $w\in\integer^d$ be a word, and associate with each entry $w_i$ a set of fresh variables $X(w_i)$ of size $2^{|w_i|}$. Denote by $P_w$ the set of indices $i\in [d]$ so that $w_i\geq 0$, and by $N_w$ the set of indices so that $w_i < 0$. For some subset $S\subseteq [d]$ we denote by $w_S$ the sum $\sum_{i\in S}w_i$ and by $w|_S$ the subword of $w$ indexed by the set $S$. We say that a monomial $m$ is set-multilinear on some $w|_S$ if it contains exactly one variable from each of the sets $X(w_i)$ for $i\in S$. 

In the following we fix a useful representation of the variables $X(w_i)$ for all $i\in [d]$. For any $i\in P_w$, we write the variables of $X(w_i)$ in the form $x_\sigma^{(i)}$, where $\sigma$ is a binary string indexed by the integer interval
\[
A_w^{(i)} := \left[\sum_{\substack{i'\in P_w\\ i' < i}}w_{i'} + 1,\sum_{\substack{i'\in P_w\\ i' \leq i}}w_{i'} \right]
\]
Similarly for any $j\in N_w$, we write the variables of $X(w_j)$ in the form $y_\sigma^{(j)}$, where $\sigma$ is a binary string indexed by the set
\[
B_w^{(j)} := \left[\sum_{\substack{j'\in N_w\\ j' < j}}|w_{j'}| + 1,\sum_{\substack{j'\in N_w\\ j' \leq j}}|w_{j'}| \right].
\]
We call the variables of the form $x_\sigma^{(i)}$ the positive variables, or simply $\xbar$ variables, and the variables of the form $y_\sigma^{(j)}$ the negative variables, or simply $\ybar$ variables. For any $S\subseteq P_w$, write $A_w^S$ for the set $\bigcup_{i\in S}A_w^{(i)}$, and define the set $B_w^T$ similarly for any $T\subseteq N_w$.

Each monomial that is set-multilinear on $w|_S$ for some $S\subseteq P_w$ is in one-to-one correspondence with a binary string indexed by the set $A_w^S$. For a set-multilinear monomial $\xbar^\alpha$ on some $w|_S$ we denote by $\sigma(\xbar^\alpha)$ the associated binary string indexed by $A_w^S$ associated with $\xbar^\alpha$.
Similarly any monomial that is set-multilinear on $w|_T$ for some $T\subseteq N_w$ corresponds to a binary string indexed by the set $B_w^T$, and for any $\ybar^\gamma$ that is set-multilinear on some $w|_T$ for $T\subseteq N_w$ we denote by $\sigma(\ybar^\gamma)$ the associated binary string indexed by $B_w^T$.

We say that the word $w$ is $N$-heavy in the case that $|w_{N_w}|\geq |w_{P_w}|$ and $P$-heavy in the case that $|w_{P_w}|\geq|w_{N_w}|$. We call the word balanced if for any $i\in P_w$ there is some $j\in N_w$ so that $A_w^{(i)}\cap B_w^{(j)}\neq\emptyset$ and vice versa.

We define the polynomial $\ks_w$ as follows. For this definition we assume that $w$ is $N$-heavy. Otherwise we switch the roles of the positive and negative variables in the definition. For binary strings $\sigma$ and $\sigma'$ indexed by some sets $A$ and $B$ respectively we write $\sigma\sim\sigma'$ if $\sigma(i) = \sigma'(i)$ for any $i\in A\cap B$.

For $i\in P_w$ and $\sigma\in A_w^{(i)}$ define the polynomial $f_\sigma^{(i)}$ with
\[
f_\sigma^{(i)} := \prod_{\substack{j\in N_w :\\ A_w^{(i)}\cap B_w^{(j)}\neq\emptyset}}\sum_{\substack{\sigma_j\colon B_w^{(j)}\rightarrow\{0,1\}:\\ \sigma_j \sim \sigma}}y_{\sigma_j}^{(j)}
\]
and define the polynomial $\ks_w$ as
\[
\ks_w := \sum_{i\in P_w}\sum_{\sigma\colon A_w^{(i)}\rightarrow\{0,1\}} x_\sigma^{(i)}f_\sigma^{(i)} - \beta
\]
for any $\beta$ so that $\ks_w$ is unsatisfiable over Boolean values.

\subsection{Lower Bounds for the $\APP$ Measures}

This section is devoted to proving a $\APP$ lower bounds for refutations of $\ks_w$. 
We will first prove a lower bound that does not take into account any bounds on the individual degree. The first proof however highlights some key ideas in the proof in a clean manner. After this we discuss how to obtain $\APP$ lower bounds for suitable homogeneous slices of bounded individual degree refutations of $\ks_w$.

Unlike usually in algebraic circuit complexity we are given the refutations of $\ks_w$ only implicitly; we only know something about their functional behaviour. However we still want to prove suitable lower bounds for the dimension of the space spanned by the affine projections of the partial derivatives. The key idea in the proof below is to represent the affine projections of partial derivatives as an alternating sum of suitable partial assignments. This allows us to infer useful information about the structure of the given refutation, and prove the wanted lower bound.

\begin{lemma}\label{lemma:APP-lower-bound}
    Let $h,d$ be positive integers so that $h > 100$, and let $k$ be a parameter in the interval $[\frac{d}{4},\frac{d}{2}]$. Then there are 
    \begin{itemize}
        \item a balanced word $w\in [-h,\ldots,h]^{d}$;
        \item an integer $n_0\leq n$ with $n_0\approx 2(d - k)\left(\frac{n}{k}\right)^{\frac{k}{d-k}}$
    \end{itemize}  
    so that for any polynomial $g$ such that
    \[
    g = \frac{1}{\mathrm{ks}_w}\quad\text{ over Boolean assignments}
    \]
    the following bound holds:
    
        \[\APP_{k,n_0}\left(g\right)\geq 2^{h\cdot k}.\]
\end{lemma}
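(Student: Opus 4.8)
The plan is to choose the word $w$ so that $\ks_w$ behaves, over the Boolean cube, essentially like a knapsack instance $\sum_i x_i - \beta$ in disguise, then to extract from any $g$ agreeing with $1/\ks_w$ a large family of partial derivatives whose affine projections are linearly independent. First I would fix the combinatorics of $w$: take $w \in [-h,\ldots,h]^d$ to be balanced and, say, $N$-heavy, arranged so that the index intervals $A_w^{(i)}$ and $B_w^{(j)}$ overlap in a controlled, ``staircase'' pattern --- concretely so that each positive block $A_w^{(i)}$ meets exactly the negative blocks needed to make $\ks_w$, as a function on $\bits^\bullet$, compute (a shift of) $\sum_{i} \ell_i$ where each $\ell_i$ is an affine form in the variables of the positive block $X(w_i)$. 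The value $n_0$ is then forced by the requirement $n_0 \approx 2(d-k)(n/k)^{k/(d-k)}$; here $n$ is the total number of variables, which is $\sum_i 2^{|w_i|}$, and the arithmetic of choosing the $w_i$'s around magnitude $h$ gives $n \approx d\cdot 2^{\Theta(h)}$ and makes the target bound $2^{hk}$ comparable to $M(n,k)\cdot(\text{small loss})$.

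The heart of the argument is the ``alternating sum'' trick flagged in the text just before the lemma. Since $g$ is only known functionally, I would pick the affine projection $\pi_L$ that sends the negative ($\ybar$) variables to carefully chosen affine forms in fresh variables $z_1,\ldots,z_{n_0}$ --- chosen so that, after the projection, $\pi_L(\ks_w)$ becomes an affine form $\Lambda$ in the $z$'s plus the contribution of the unset positive variables, i.e. so that $1/\pi_L(\ks_w)$ is ``generic'' in the $z$'s. Then for a $k$-subset of positive variables $x_\sigma^{(i_1)},\ldots,x_\sigma^{(i_k)}$ lying in $k$ distinct positive blocks, I would write $\partial^k g$ restricted appropriately as an inclusion--exclusion (alternating) sum over Boolean settings of those variables of the function $1/\ks_w$; because $\ks_w$ was engineered to be multilinear and ``$\pm1$-like'' on the relevant coordinates (using Fact-style properties as in the $Q(\xbar,\ybar)$ analysis and the binomial identity $\sum_{0\le j\le n-1}\binom{n}{j}(-1)^{n-j}=-1$ of \Cref{fact:binomial}), this alternating sum does not collapse to zero: its leading behaviour is a nonzero rational function in $\beta$ times a distinct monomial pattern in the $z$'s. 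Running over all $M(d,k)\cdot\prod(\text{block-size choices}) = 2^{\Theta(hk)}$ such $k$-subsets (one variable from each of $k$ chosen blocks, out of $\sim 2^{|w_i|}$ choices per block), the resulting projected derivatives have pairwise distinct leading monomials under a fixed monomial order, hence are linearly independent by \Cref{res:dim-eq-num-TM-spn}. Counting these gives $\dim\langle \pi_L(\partial^k g)\rangle \ge 2^{hk}$, i.e. $\APP_{k,n_0}(g)\ge 2^{hk}$.

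I would organize the write-up as: (1) construction of $w$ and verification that it is balanced with the desired overlap pattern and that $\ks_w$ is unsatisfiable for a suitable $\beta$ (e.g. $\beta$ making $\beta/(1-\beta)$ non-integral, paralleling \Cref{lem:xsyt-C}); (2) the choice of $\pi_L$ and the identity expressing $\pi_L(\partial^k g)$ on a $k$-subset as an alternating Boolean sum of $1/\ks_w$-evaluations; (3) the non-vanishing and distinct-leading-monomial computation, invoking \Cref{fact:binomial}-type cancellations and a degree-respecting monomial order; (4) the count $2^{\Theta(hk)}$ and a check that the constants absorb into the stated $2^{hk}$ after fixing $h>100$. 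The main obstacle I expect is step (2)--(3): making the alternating-sum representation of $\pi_L(\partial^k g)$ genuinely rigorous when $g$ is an arbitrary polynomial agreeing with $1/\ks_w$ only on the cube (one must argue that partial derivatives in $x$-variables, followed by an affine projection in $y$-variables, are determined by enough Boolean evaluations to run inclusion--exclusion), and then controlling the cancellations so that the leading monomials across different $k$-subsets are provably distinct rather than merely generically distinct --- this is where the careful, block-respecting choice of the affine forms $\ell_i$ in $L$ is essential.
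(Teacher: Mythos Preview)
Your high-level skeleton is right --- choose $w$, express $\pi_L(\partial_\alpha g)$ as an inclusion--exclusion over Boolean evaluations of $1/\ks_w$, and count $2^{hk}$ linearly independent vectors --- but several of your concrete choices diverge from the paper's and would make the argument much harder or not go through.

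First, the affine projection $L$ in the paper is not a ``carefully chosen'' map to fresh $z$-variables: it is simply the identity on the negative variables $\ybar$ (these \emph{are} the $n_0$ target variables) and the constant map $x\mapsto 1$ on the positive variables $\xbar$. With this specific $L$ one gets the clean identity
\[
L(\partial_\alpha g)=\sum_{\alpha'\supseteq\alpha} g_{\alpha'}(\ybar),
\]
after which multilinearizing gives $h_\alpha=\sum_{\mu\subseteq\alpha}(-1)^{|\mu|}\tau_\mu(1/\ks_w)$ over Boolean values, where $\tau_\mu$ sets the variables in $\xbar^\mu$ to $0$ and the remaining $\xbar$-variables to $1$. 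Your proposed ``generic affine forms in $z$'' would destroy this identity, and with it the ability to read off $L(\partial_\alpha g)$ from the Boolean behaviour of $g$ alone.

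Second, the linear independence of the $h_\alpha$ is \emph{not} proved via distinct leading monomials; the monomial structure of $h_\alpha$ is not analysed at all. Instead the paper gives a dual-pairing argument: for each set-multilinear $\xbar^\alpha$ one exhibits a Boolean evaluation $\pi_\gamma$ (set the $\ybar$-variables in the set-multilinear monomial $\ybar^\gamma$ with $\sigma(\ybar^\gamma)=\sigma(\xbar^\alpha)$ to $1$, the rest to $0$) such that $\pi_\gamma(h_{\alpha'})\neq 0$ iff $\alpha'=\alpha$. This nonvanishing/vanishing dichotomy is obtained by observing that $\pi_\gamma(1/\ks_w)$ reduces to $1/(\sum x_\sigma^{(i)}-\beta)$ summed over the $\sigma$ compatible with $\gamma$, whose unique multilinear representative $f_\gamma$ has leading monomial exactly $\xbar^\alpha$ (this is the \cite{FSTW21} subset-sum degree bound), and then running inclusion--exclusion over $\mu\subseteq\alpha$: only the top monomial of $f_\gamma$ survives when $\sigma(\xbar^\alpha)=\sigma(\ybar^\gamma)$, and everything cancels otherwise. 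Your references to \Cref{fact:binomial}, the $Q(\xbar,\ybar)$ coefficients $\frac{1}{\beta(1\pm\beta)}$, and the constraint on $\beta/(1-\beta)$ are from the vector-invariant section and play no role here.

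Finally, the word $w$ is not a ``staircase'' of your design: the paper takes $w$ to have $k$ entries equal to $h$ and $d-k$ negative entries equal to $-\lfloor h'\rfloor$ or $-\lceil h'\rceil$ with $h'=hk/(d-k)$, following \cite{AGK0T23}; balancedness and the $n_0$ estimate then fall out immediately.
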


\begin{proof}
    To construct the word we follow \cite{AGK0T23}. Let $h' = \frac{h\cdot k}{d- k}$, and note that this lies in the interval $[h/3,h]$. The word consists of $k$ many copies of the entry $h$, $k_1 := (d-k)\lceil h'\rceil -kh$ many copies of $-\lfloor h'\rfloor$ and $k_2 := d - k - k_1$ copies of $-\lceil h'\rceil$. The total sum of all the entries is $0$, and thus the word (in any order) is balanced. 
    
    Consider the set of variables $\xbar\cup \ybar$, where $\xbar$ stands for the positive variables of $\ks_w$ and $\ybar$ stands for the negative variables of $\ks_w$. Here $\ks_w$ is defined as if $w$ is $N$-heavy; note that $w$ is in fact both $N$- and $P$-heavy. Take now $n_0 := |\ybar|$. Then we have that $n_0\approx 2(d-k)\cdot\left(\frac{n}{k}\right)^{\frac{k}{d-k}}$ \cite{AGK0T23}.

     Write $g$ in two different ways as a polynomial in $\F[\xbar][\ybar]$ and as a polynomial in $\F[\ybar][\xbar]$. Write $g = \sum_\gamma g_{\gamma}(\xbar) \ybar^\gamma$, where and $g_\gamma$ are polynomials in the positive variables, and write $g = \sum_\alpha g_\alpha(\ybar)\xbar^\alpha$, where $g_\alpha(\ybar)$ are polynomials in the negative variables. \nutan{Mention the required properties e.g. degree etc. for $\gamma, \alpha$.}
     
    Define now a linear map $L$ as follows:
    \[
    L(z) :=
        \begin{cases}
            z,\text{ if }z\in \ybar;\\
            1,\text{ if }z\in \xbar.
        \end{cases}
    \]

    To lower bound the $\APP$ measure with $L$, we consider the set of partial derivatives of $g$ with respect to the set-multilinear monomials over all the positive variables. Now it is easy to see that for a set-multilinear monomial $\xbar^\alpha$ in the positive variables,
    \[
    L\left(\partial_{\alpha}g \right) = \sum_{\substack{\alpha':\\\alpha \subseteq \alpha' }}g_{\alpha'}(\ybar).
    \]
    Hence we need to lower bound the dimension of the space spanned by such sums. First note that in order to lower bound this dimension, it suffices to lower bound the dimension of the multilinearizations of such sums, as multilinearization of the spanning set can only decrease the dimension. Thus we are interested in the following polynomials
    \[
    h_\alpha := \sum_{\substack{\alpha':\\ \alpha \subseteq \alpha' }}\ml\left(g_{\alpha'}(\ybar)\right)
    \]
    for set-multilinear monomials $\xbar^\alpha$ on the positive variables.
    
    We will prove a lower bound on the dimension by showing that the polynomials $h_\alpha$ are linearly independent for all distinct set-multilinear monomials $\xbar^\alpha$. This immediately gives the claimed lower bound as there are $2^{h\cdot k}$ many distinct set-multilinear monomials in the positive variables. We prove the linear independence in a series of claims. First we give a useful representation for the polynomials $h_\alpha$ as sums of reciprocals of knapsack instances. For a monomial $\xbar^\alpha$ we denote by $\tau_{\alpha}$ the Boolean assignment that maps all the variables appearing in $\xbar^\alpha$ to $0$ and all the other $\xbar$-variables to $1$, and leaves $\ybar$ variables untouched.

    \begin{claim}\label{claim:sum-of-knapsacks}
        Let $\xbar^\alpha$ be a set-multilinear monomial in the positive variables. Then
        \[
        h_\alpha = \sum_{\mu\subseteq\alpha}(-1)^{|\mu|}\tau_{\mu}\left(\frac{1}{\ks_w}\right)\quad\text{ over Boolean values.}
        \]
    \end{claim}

    \begin{proof}[Proof of \cref{claim:sum-of-knapsacks}:]
        First note that 
        \[
        h_\alpha = \sum_{\mu\subseteq\alpha}(-1)^{|\mu|}\tau_{\mu}(\ml(g)).
        \]
        This follows from a straightforward argument using the inclusion-exclusion principle. Indeed, note that for any $\mu\subseteq \alpha$ we have that
        \[
        \tau_{\mu}(\ml(g)) = \sum_{\substack{\nu:\\\mu\wedge\nu = \emptyset}}\ml(g_{\nu}(\ybar)).
        \]
        Now if $\alpha'\supseteq\alpha$, then the only $\mu\subseteq\alpha$ so that $\alpha'\wedge\mu = \emptyset$ is $\mu = \emptyset$. Hence the terms $\ml(g_{\alpha'}(\ybar))$ for all the supersets $\alpha'$ of $\alpha$ survive in the expression above. If on the other hand $\alpha'\nsupseteq\alpha$, then there is some $i$ so that $\alpha(i) = 1$, but $\alpha'(i) = 0$. The set of all $\mu\subseteq\alpha$ satisfying $\mu\wedge\alpha' = \emptyset$ forms a non-trivial finite Boolean lattice. It follows by inclusion-exclusion principle that no term of the form $\ml(g_{\alpha'}(\ybar))$ for $\alpha'\nsupseteq\alpha$ survives in the final expression.

        On the other hand by definition
        \[
        \ml(g) = \frac{1}{\ks_w}\quad\text{ over Boolean values,}
        \]
        and hence for any $\mu\subseteq\alpha$
        \[
        \tau_{\mu}(\ml(g)) = \tau_{\mu}\left(\frac{1}{\ks_w}\right)\quad\text{ over Boolean values.}
        \]
        This concludes the proof of \Cref{claim:sum-of-knapsacks}.
    \end{proof}

    For a monomial $\ybar^\alpha$ we denote by $\pi_\alpha$ the partial Boolean mapping that send each variable in $\ybar^\alpha$ to $1$ and any other $\ybar$ variable to $0$, and leaves $\xbar$ variables untouched.

    \begin{claim}\label{claim:main-claim}
        Let $\xbar^\alpha$ be a set-multilinear monomial in the positive variables and let $\ybar^{\gamma}$ be a set-multilinear monomial in the negative variables. Then
        \[
        \pi_{\gamma}(h_\alpha) \neq 0\quad\text{ if and only if }\quad\sigma(\ybar^{\gamma}) = \sigma(\xbar^\alpha).
        \]
    \end{claim}

    \begin{proof}[Proof of \cref{claim:main-claim}:]
        From \cref{claim:sum-of-knapsacks} we know that
        \[
        h_\alpha = \sum_{\mu\subseteq\alpha}(-1)^{|\mu|}\tau_{\mu}\left(\frac{1}{\ks_w}\right)\quad\text{ over Boolean values.}
        \]
        Hence we have the following chain of equalities.
        \begin{align*}
            \pi_{\gamma}(h_\alpha) & = \sum_{\mu\subseteq\alpha}(-1)^{|\mu|}\pi_\gamma\left(\tau_{\mu}\left( \frac{1}{\ks_w}\right)\right)\\
            & = \sum_{\mu\subseteq\alpha}(-1)^{|\mu|}\tau_{\mu}\left(\frac{1}{\pi_\gamma(\ks_w)}\right)\\
            & = \sum_{\mu\subseteq\alpha}(-1)^{|\mu|}\tau_{\mu}\left(\frac{1}{\sum x_\sigma^{(i)} - \beta}\right),
        \end{align*}
        where in the last row the sum in the denominator ranges over those $i$ and $\sigma$ so that $\sigma$ agrees with $\sigma(\ybar^\gamma)$ on the interval $A_w^{(i)}$.

        Let now $f_\gamma$ be the multilinear polynomial so that
        \[
        f_\gamma = \frac{1}{\sum x_\sigma^{(i)} - \beta}\quad\text{ over Boolean values.}
        \]
        From \cite{FSTW21} we know that the leading monomial of $f_\gamma$ is the product of all the variables appearing in the sum. Other monomials in $f_\gamma$ are naturally submonomials of the full-degree monomial.

        Thus we want to analyze the value of  
        \[
        \sum_{\mu\subseteq\alpha}(-1)^{|\mu|}\tau_{\mu}(f_\gamma)
        \]
        for different set-multilinear monomials $\xbar^\alpha$.

        Suppose first that $\sigma(\xbar^\alpha) = \sigma(\ybar^\gamma)$, i.e. $\xbar^\alpha$ is the leading monomial of the polynomial $f_\gamma$. Now $\tau_{\emptyset}(\xbar^\alpha) =  1$, but for any $\mu\neq \emptyset$ we have that $\tau_{\mu}(\xbar^\alpha) = 0$. For any proper submonomial $\xbar^{\mu}$ of $\xbar^\alpha$ we have for all $\mu'\subseteq\alpha$ that $\tau_{\mu'}(\xbar^{\mu})\neq 0$ if and only if $\mu\wedge\mu' = \emptyset$. The set of such $\mu'$'s forms a non-trivial Boolean lattice and thus by the inclusion-exclusion principle
        \[
        \sum_{\substack{\mu'\subseteq\alpha\\\mu\wedge\mu' = \emptyset}}(-1)^{|\mu'|}\tau_{\mu'}\left(\xbar^{\mu}\right) = 0.
        \]
        We have shown that when $\sigma(\xbar^\alpha) = \sigma(\ybar^\gamma)$,
        \[
        \sum_{\mu\subseteq\alpha}(-1)^{|\mu|}\tau_{\mu}(f_\gamma) = a_\alpha,
        \]
        where $a_\alpha$ is the coefficient of $\xbar^\alpha$ in $f_\gamma$.

        Suppose then that $\sigma(\xbar^\alpha)\neq \sigma(\ybar^\gamma)$, and let $\xbar^{\hat{\alpha}}$ be so that $\sigma(\xbar^{\hat{\alpha}}) = \sigma(\ybar^\gamma).$ Now for any $\mu\subseteq\alpha$ and any $\hat{\mu}\subseteq\hat{\alpha}$ we have that $\tau_{\mu}(\xbar^{\hat{\mu}})\neq 0$ if and only if $\mu\wedge\hat{\mu} = \emptyset$. For any fixed $\hat{\mu}\subseteq\hat{\alpha}$ the set of all those $\mu\subseteq\alpha$ that satisfy the latter condition forms again a non-trivial Boolean lattice, and thus by inclusion-exclusion principle we have that
        \[
        \sum_{\substack{\mu\subseteq\alpha\\\mu\wedge\hat{\mu} = \emptyset}}(-1)^{|\mu|}\tau_{\mu}\left(\xbar^{\hat{\mu}}\right) = 0.
        \]
        Above $\hat{\mu}$ was an arbitrary substring of $\hat{\alpha}$ and thus we have that
        \[
        \sum_{\mu\subseteq\alpha}(-1)^{|\mu|}\tau_{\mu}(f_\gamma) = 0.
        \]
        This concludes the proof of \Cref{claim:main-claim}.
\end{proof}

    To finish the proof of \Cref{lemma:APP-lower-bound}, suppose that for some set-multilinear $\xbar^\alpha$ there are $a_{\alpha'}\in\F$ for each $\alpha'\neq\alpha$ so that
    \[
    h_{\alpha} = \sum_{\alpha'\neq \alpha}a_{\alpha'}h_{\alpha'}.
    \]
    Consider then the mapping $\pi_\gamma$, where $\gamma$ is so that $\sigma(\xbar^\alpha) = \sigma(\ybar^\gamma)$. Then by \cref{claim:main-claim} we have that
    \[
    1 = \pi_\gamma(h_\alpha) = \sum_{\alpha'\neq \alpha}a_{\alpha'}\pi_\gamma(h_{\alpha'}) = 0.
    \]
    Hence the polynomials $h_\alpha$ are linearly independent for all distinct set-multilinear $\xbar^\alpha$.
\end{proof}

 \Cref{lemma:APP-lower-bound} demonstrates an $\APP$ lower bound on any refutation of $\ks_w$. In order to derive meaningful circuit lower bounds from these bounds however we need bounds for some low-degree homogeneous polynomial, while  no refutation of $\ks_w$ is homogeneous and low-degree. In the following lemma we show however that assuming a bound on the individual degree of the given refutation, we can infer useful $\APP$ lower bounds for some homogeneous low-degree slice of the refutation.

\begin{lemma}\label{corollary:homogenous-APP-lower-bounds}
    Let $h,d,\delta$ be positive integers so that $h > 100$, and let $k$ be a parameter in $[\frac{d}{4},\frac{d}{2}]$. Then there are
    \begin{itemize}
        \item a balanced word $w\in [-h,\ldots,h]^d$;
        \item an integer $n_0\leq n$ so that $n_0\approx 2(d - k)\left(\frac{n}{k}\right)^{\frac{k}{d-k}}$;
        \item an integer $d'$ between $d$ and $\delta\cdot d$
    \end{itemize}  
    so that for any polynomial $g$ of individual degree at most $\delta$ such that
    \[
    g = \frac{1}{\mathrm{ks}_w}\quad\text{ over Boolean assignments}
    \]
    the following bound holds
        \[\APP_{k,n_0}\left(g_{d'}\right)\geq \frac{2^{hk}}{(\delta - 1)\cdot d + 1},\]
    where $g_{d'}$ denotes the homogeneous $d'$-slice of $g$.
\end{lemma}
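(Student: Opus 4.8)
\textbf{Proof plan for Lemma~\ref{corollary:homogenous-APP-lower-bounds}.} The plan is to combine the $\APP$ lower bound for the full (non-homogeneous) reciprocal polynomial from Lemma~\ref{lemma:APP-lower-bound} with a pigeonhole argument over the homogeneous slices of a bounded individual-degree refutation, together with the observation that the $\APP$ measure is subadditive. First I would apply Lemma~\ref{lemma:APP-lower-bound} to obtain the balanced word $w\in[-h,\ldots,h]^d$, the integer $n_0\leq n$ with $n_0\approx 2(d-k)(n/k)^{k/(d-k)}$, and the bound $\APP_{k,n_0}(g)\geq 2^{hk}$ for \emph{any} polynomial $g$ agreeing with $1/\ks_w$ over the Boolean cube. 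Now fix such a $g$ of individual degree at most $\delta$. The crucial point is to understand the degree of $g$: since $\ks_w$ has $d$ ``blocks'' of positive variables (one affine-form factor per positive coordinate, but really the relevant count is the number of positive coordinates $|P_w| = k$ multiplied by the structure of $f_\sigma^{(i)}$), one should argue that any such $g$ has total degree at most $(\delta-1)\cdot d' _{\max} + \dots$; more carefully, I would note that $g$ agrees over $\bits$ with a polynomial whose ``natural'' degree is at most $d$ (the number of variable-blocks), and since $g$ has individual degree $\le\delta$ in each of finitely many blocks, its total degree is at most $\delta\cdot d$. Hence $g = \sum_{d''} g_{d''}$ where $g_{d''}$ is the homogeneous $d''$-slice and the sum ranges over $d''$ in some interval; the number of slices of degree in the range $[d, \delta\cdot d]$ that can contribute nontrivially is at most $(\delta-1)\cdot d + 1$.

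The second step is the pigeonhole: I would split $g = g_{<d} + \sum_{d'=d}^{\delta\cdot d} g_{d'}$, and argue that the low-degree part $g_{<d}$ contributes nothing to $\APP_{k,n_0}$ that matters — more precisely, since the $k$-th order partial derivatives and affine projections involved are tuned (via the choice of $w$ and $n_0$ as in Lemma~\ref{lemma:APP-lower-bound}) to detect exactly the top set-multilinear structure of degree $\ge d$, the slices of degree $< d$ either vanish under $\partial^k$ composed with $\pi_L$ or lie in a space already accounted for. Actually the cleaner route is: by subadditivity of $\APP$ (which holds because it is a dimension of a span of a union of sets, and $\partial^k$ and $\pi_L$ are linear), $\APP_{k,n_0}(g) \leq \sum_{d''} \APP_{k,n_0}(g_{d''})$, so at least one slice $g_{d'}$ with $d'$ in the relevant range satisfies $\APP_{k,n_0}(g_{d'}) \geq \APP_{k,n_0}(g) / (\text{number of slices}) \geq 2^{hk}/((\delta-1)\cdot d + 1)$. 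Then I would check that this witnessing $d'$ does indeed fall in $[d, \delta\cdot d]$: the slices of degree $< d$ must be handled by showing they do not carry the $2^{hk}$-dimensional signal, which follows by re-running the argument in the proof of Lemma~\ref{lemma:APP-lower-bound} — the polynomials $h_\alpha$ detected there have degree exactly matching the full set-multilinear monomial count, which forces the relevant $d'$ to be at least $d$; and the individual-degree bound $\delta$ forces $d' \leq \delta\cdot d$.

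The main obstacle, and the step I would spend the most care on, is the claim that no homogeneous slice of degree strictly below $d$ can account for the $2^{hk}$ lower bound — equivalently, that the argument of Lemma~\ref{lemma:APP-lower-bound} ``localizes'' to slices of degree $\ge d$. The worry is that a low-degree slice of a non-multilinear $g$ could, after $k$-fold differentiation and the particular affine projection $L$ (which sets $\xbar\mapsto 1$, $\ybar\mapsto\ybar$), still produce a high-dimensional span, since differentiating does not respect the multilinearization steps used in the proof. I expect the resolution to be that the key identity of Claim~\ref{claim:main-claim} — that $\pi_\gamma(h_\alpha)\neq 0$ iff $\sigma(\ybar^\gamma) = \sigma(\xbar^\alpha)$ — is really a statement about the action on $1/\ks_w$ as a \emph{function} on $\bits$, and the set-multilinear structure it exposes has total degree exactly $|w_{P_w}| = hk$ in the positive variables plus a matched amount in the negative variables; any genuinely lower-degree homogeneous contribution to $g$ must, by the functional characterization of multilinearization, already be captured inside the multilinear part, so the degree-$d'$ slice carrying the signal is pinned down. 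Making this rigorous amounts to tracking which homogeneous degree the monomials $h_\alpha$ of the proof of Lemma~\ref{lemma:APP-lower-bound} actually live in, and verifying that differentiating $g$ (rather than $\ml(g)$) and then projecting does not spread the signal across too many slices — which is exactly where the individual-degree bound $\delta$ enters, bounding the spread to a window of width $(\delta-1)d$.
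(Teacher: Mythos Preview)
Your high-level strategy—lower-bound $\APP$ on something, use subadditivity over homogeneous slices, then pigeonhole—is exactly what the paper does. But there is a genuine error in your degree bookkeeping that breaks the argument as written.

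You claim ``since $g$ has individual degree $\le\delta$ in each of finitely many blocks, its total degree is at most $\delta\cdot d$.'' This is false: the individual-degree hypothesis is \emph{per variable}, not per block, and there are $n\approx d\cdot 2^h$ variables in $\ks_w$, not $d$. So $g$ can have total degree up to $\delta\cdot n$, and nothing stops slices of degree far above $\delta\cdot d$ from existing. Your pigeonhole over all homogeneous slices of $g$ would then divide $2^{hk}$ by roughly $\delta\cdot n$ rather than $(\delta-1)d+1$, which is far too lossy. You noticed the low-degree end (slices $<d$) as the obstacle, but the high-degree end is the real problem, and your proposal never addresses it.

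The paper's fix is not to black-box Lemma~\ref{lemma:APP-lower-bound} on all of $g$ and then try to localize. Instead it takes $\bar g$, the sum of slices of degree in $[d,\delta\cdot d]$, and \emph{re-runs the proof} of Lemma~\ref{lemma:APP-lower-bound} directly on $\bar g$ to obtain $\APP_{k,n_0}(\bar g)\ge 2^{hk}$. The inclusion-exclusion and the detection identity $\pi_\gamma(\bar h_\alpha)\neq 0\Leftrightarrow\sigma(\xbar^\alpha)=\sigma(\ybar^\gamma)$ still go through, but one extra ingredient is needed: Claim~\ref{claim:submonomials} (imported from \cite{GHT22}), which guarantees that the relevant set-multilinear leading monomial survives when one restricts to monomials whose $\ybar$-support is exactly the set-multilinear $\gamma$. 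That claim is what pins the signal to the window $[d,\delta\cdot d]$; once $\APP_{k,n_0}(\bar g)\ge 2^{hk}$ is established, subadditivity over the $(\delta-1)d+1$ slices in that window finishes the proof exactly as you outlined.
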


\begin{proof}
    We will comment on what needs to be changed in the proof of the previous lemma in order to take into account the bounds on the degrees.

    Consider the slice $\bar{g}$ of $g$ of monomials of degrees between $d$ and $\delta\cdot d$, and write this fragment as $\bar{g} = \sum_\alpha g_{\alpha}(\ybar)\xbar^\alpha$, where now $d\leq |\alpha|\leq \delta\cdot d$.

    For the same choice of partial derivatives and linear function $L$ we again obtain that
    \[
    L(\partial_\alpha\bar{g}) = \sum_{\substack{\alpha':\\\alpha\subseteq\alpha'}}g_{\alpha'},
    \]
    where again the sum runs over those $\alpha'$ satisfying $d\leq |\alpha'|\leq\delta\cdot d$.
    
    Now define again for any set-multilinear $\xbar^\alpha$ the polynomial
    \[\bar{h}_\alpha := \sum_{\substack{\alpha':\\\alpha\subseteq\alpha'}}\ml(g_{\alpha'})\]
    and note as in \cref{claim:sum-of-knapsacks} that
    \[
    \bar{h}_\alpha = \sum_{\mu\subseteq\alpha}\left(-1\right)^{|\mu|}\tau_\mu(\ml(\bar{g})).
    \]
    Then we have that
    \[\pi_\gamma\left(\bar{h}_\alpha\right) = \sum_{\mu\subseteq\alpha}\left(-1\right)^{|\mu|}\tau_\mu\left(\pi_\gamma(\ml(\bar{g}))\right).\]

    Now one can show that the leading monomial of the multilinear polynomial $f_\gamma$ is present in $\pi_\gamma(\ml(\bar{g}))$. This follows from the following claim whose proof can be found in \cite{GHT22}.

            \begin{claim}\label{claim:submonomials}
   For a set-multilinear $\ybar^\gamma$ the leading monomial of the polynomial
    \[
        \sum_{\substack{\gamma'\\ \supp(\gamma') = \gamma}}\mathrm{ml}(g_{\gamma'})
    \]
    is the set-multilinear $\xbar^\alpha$ so that $\sigma(\xbar^\alpha) = \sigma(\ybar^\gamma).$
        \end{claim}

    Now, again by a similar argument one can show that for any set-multilinear $\xbar^\alpha$ and $\ybar^\gamma$
    \[
    \pi_\gamma(\bar{h}_\alpha)\neq 0\quad\text{ if and only if }\quad\sigma(\xbar^\alpha) = \sigma(\ybar^\gamma),
    \]
    and hence the polynomials $\bar{h}_\alpha$ are also linearly independent.
    
    Now, by subadditivity of the $\APP$ measure, we have that
    \[
    2^{hk}\leq \APP_{k,n_0}(\bar{g})\leq \sum_{d' = d}^{\delta\cdot d}\APP_{k,n_0}\left(g_{d'}\right).
    \]
    Hence there is some $d'$ between $d$ and $\delta\cdot d$ so that
    \[
    \APP_{k,n_0}\left(g_{d'}\right)\geq \frac{2^{h\cdot k}}{(\delta - 1)\cdot d + 1}.
    \]
\end{proof}

\subsection{Constant-Depth Lower Bounds for the Lifted Knapsack}

\begin{lemma}\label{lem:homogeneous-lower-bounds}
    Let $h,d$ and $\delta$ be positive integers so that $h > 100$. There is a balanced word $w\in [-h,h]^d$ and an integer $d'$ between $d$ and $\delta\cdot d$ so that for any polynomial $g$ of individual degree at most $\delta$ satisfying
    \[
    g = \frac{1}{\ks_w}\quad\text{ over Boolean values}
    \]
    any homogeneous formula of product-depth $\Delta$ computing the $d'$-slice of $g$ has size at least
    \[
    2^{-O(d)}\cdot\left(\frac{n}{d}\right)^{\Omega\left(\frac{d^{2^{1 - \Delta}}}{\delta^2}\right)}.
    \]
\end{lemma}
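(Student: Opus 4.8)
The plan is to chain together the three ingredients developed in this section: the $\APP$ lower bound for homogeneous slices of bounded-individual-degree refutations of $\ks_w$ (\Cref{corollary:homogenous-APP-lower-bounds}), the high-residue decomposition of small low-depth homogeneous formulas (\Cref{cor:residue-in-bounded-individual-degree}), and the implication from high residue together with $\APP$ lower bounds to formula-size lower bounds (\Cref{lemma:residue-implies-functional-lower-bounds}). The only genuine work is to pick a single value of the parameter $k$ that is simultaneously admissible for all three statements, and then to check that the (number-of-variables-independent) $\APP$ lower bound coming from \Cref{corollary:homogenous-APP-lower-bounds} still meets the threshold $2^{-O(d)}\cdot M(n,k)$ demanded by \Cref{lemma:residue-implies-functional-lower-bounds}. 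Throughout I work in the regime $d^{2^{1-\Delta}}=\omega(1)$ in which the claimed bound is non-trivial (and in which \Cref{cor:residue-in-bounded-individual-degree} is applicable).

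First I would fix the parameters exactly as in \Cref{lem:residue-lower-bound} and \Cref{cor:residue-in-bounded-individual-degree}: set $\tau:=\lfloor d^{2^{1-\Delta}}\rfloor$, $\alpha:=\sum_{\nu=0}^{\Delta-1}(-1)^\nu/\tau^{2^\nu-1}$ and $k:=\lfloor\alpha d/(1+\alpha)\rfloor$. Since the series defining $\alpha$ is alternating with first term $1$, second term $-1/\tau$, and $\tau\ge 2$ in our regime, we have $\tfrac12<\alpha<1$, hence $\alpha/(1+\alpha)\in(\tfrac13,\tfrac12)$ and therefore $k\in[\tfrac d4,\tfrac d2)$ for all sufficiently large $d$. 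This is precisely the interval required for $k$ in \Cref{corollary:homogenous-APP-lower-bounds}, and it also satisfies the hypothesis $1<k<d/2$ of \Cref{lemma:residue-implies-functional-lower-bounds}.

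Next I would invoke \Cref{corollary:homogenous-APP-lower-bounds} with this $k$ and the given $h,d,\delta$, obtaining the balanced word $w\in[-h,h]^d$, the integer $n_0\le n$ with $n_0\approx 2(d-k)(n/k)^{k/(d-k)}$, and the integer $d'$ with $d\le d'\le\delta\cdot d$ asserted in the statement, such that every polynomial $g$ of individual degree at most $\delta$ with $g=1/\ks_w$ over Boolean values satisfies $\APP_{k,n_0}(g_{d'})\ge 2^{hk}/((\delta-1)d+1)$. Here $n$ is the number of variables of $\ks_w$; the explicit construction of $w$ (namely $k$ copies of the entry $h$ together with $d-k$ further entries each of absolute value in $[h/3,h]$) gives $n=\Theta(d\,2^h)$, so combined with $k\ge d/4$ and \Cref{lemma:bounds-for-M}(i) we get $M(n,k)\le (6n/k)^k\le (24\cdot 2^h)^k=2^{hk}\cdot 2^{O(d)}$. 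Consequently $\APP_{k,n_0}(g_{d'})\ge 2^{hk}/((\delta-1)d+1)\ge 2^{-O(d)}\cdot M(n,k)$, where the polynomial factor $((\delta-1)d+1)$ and the $2^{O(d)}$ loss in the bound on $M(n,k)$ are absorbed into the $2^{-O(d)}$ term (legitimate as long as $\delta$ is at most $2^{O(d)}$, which covers all parameter regimes of interest).

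Finally I would take an arbitrary homogeneous formula $C$ of product-depth $\Delta$ computing the homogeneous degree-$d'$ polynomial $g_{d'}$ and apply \Cref{cor:residue-in-bounded-individual-degree} to it (with the same $\tau,\alpha,k$ and $d'\in[d,\delta d]$): this writes $C=\sum_{i=1}^s Q_{i,1}\cdots Q_{i,t_i}$ as a sum of products of homogeneous polynomials with $s\le\mathrm{size}(C)$ and $\residue_k(\deg Q_{i,1},\ldots,\deg Q_{i,t_i})\ge\gamma:=\Omega(d^{2^{1-\Delta}}/\delta)$ for every $i$. Feeding this decomposition, the residue bound $\gamma$, the parameters $k$ and $n_0$, and the $\APP$ lower bound just obtained into \Cref{lemma:residue-implies-functional-lower-bounds} yields
\[
\mathrm{size}(C)\ \ge\ s\ \ge\ 2^{-O(d)}\cdot\delta^{-2}\cdot (n/d)^{\gamma/\delta}\ =\ 2^{-O(d)}\cdot\delta^{-2}\cdot (n/d)^{\Omega\left(d^{2^{1-\Delta}}/\delta^2\right)},
\]
and absorbing the $\delta^{-2}$ prefactor into the exponent (valid since $\gamma/\delta=\omega(1)$ in our regime, provided $n/d$ is large) gives the stated bound $2^{-O(d)}(n/d)^{\Omega(d^{2^{1-\Delta}}/\delta^2)}$. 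The step I expect to be the main obstacle is the one in the third paragraph: reconciling the $n$-independent $\APP$ lower bound $2^{hk}/\mathrm{poly}(\delta,d)$ with the requirement $\APP_{k,n_0}(g_{d'})\ge 2^{-O(d)}M(n,k)$. This works only because $\ks_w$ has a tightly controlled number of variables $n=\Theta(d\,2^h)$, which forces $M(n,k)$ itself to be no larger than $2^{hk}\cdot 2^{O(d)}$; keeping the hidden constant in $2^{-O(d)}$ large enough to swallow this factor together with $((\delta-1)d+1)$, while simultaneously checking that the value of $k$ forced by the residue argument indeed lands in $[d/4,d/2]$, is where the bookkeeping is delicate, even though no new idea beyond the three cited lemmas is required.
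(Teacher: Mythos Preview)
Your proposal is correct and follows essentially the same route as the paper: fix $k$ via the residue parameters so that $k\in[d/4,d/2]$, invoke \Cref{corollary:homogenous-APP-lower-bounds} to obtain $w,n_0,d'$ and the $\APP$ bound, use the tight control $n=\Theta(d\cdot 2^h)$ to convert $2^{hk}$ into $2^{-O(d)}M(n,k)$, then feed the decomposition and residue bound from \Cref{cor:residue-in-bounded-individual-degree} into \Cref{lemma:residue-implies-functional-lower-bounds}. The only cosmetic difference is that the paper disposes of the $(\delta-1)d+1$ and $\delta^{-2}$ factors by first noting one may assume $\delta<d$ (so both are at most $2^{O(d)}$), whereas you absorb them slightly differently; either way the bookkeeping is the same.
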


\begin{proof}
    Let $k$ be defined as in \cref{lem:residue-lower-bound}. Then $k\in \left[\frac{d}{4},\frac{d}{2}\right]$. Let $w$, $n_0$ and $d'$ be as in \cref{corollary:homogenous-APP-lower-bounds}, and let $F$ be a homogeneous formula of product-depth $\Delta$ computing the $d'$-slice of $g$. 

    We can assume that $\delta < d$ as otherwise the claim is trivial. On the other hand
    \[
        k\cdot 2^h\leq n\leq d\cdot 2^h,\text{ and so }2^h\approx\left(\frac{n}{k}\right),
    \]
    and hence, by \cref{lemma:bounds-for-M},
    \[\frac{2^{h\cdot k}}{(\delta - 1)\cdot d + 1}\geq 2^{-O(d)}\cdot\left(\frac{n}{k}\right)^k\geq 2^{-O(d)}\cdot M(n,k).\] 

By \cref{cor:residue-in-bounded-individual-degree} there exists homogeneous polynomials $\{Q_{i,j}\}_{i,j}$ so that 
    \[
    F = \sum_{i\in [s]}Q_{i,1}\cdots Q_{i,t_i}
    \]
    for some $s\leq\mathrm{size}(F)$ and 
\[
\residue_k\left(\deg\left(Q_{i,1}\right),\ldots,\deg\left(Q_{i,t_i}\right)\right)\geq\Omega\left(\frac{d^{2^{1 - \Delta}}}{\delta}\right)
\]
    for all $i\in [s]$. 
    Now by \cref{lemma:residue-implies-functional-lower-bounds} and \cref{corollary:homogenous-APP-lower-bounds} we have that
    \[
    s \geq 2^{-O(d)}\cdot\left(\frac{n}{d}\right)^{\Omega\left(\frac{d^{2^{1 - \Delta}}}{\delta^2}\right)}.
    \]

\end{proof}

Finally, we are ready to prove our main result of this section

\begin{proof}[Proof of \cref{thm:constant-depth-lower-bounds}:]
    Let $C$ be a circuit of size $s$ and product-depth $\Delta$ computing $g$. Let $h := \lfloor\log n / 2\rfloor$, let $d := \lfloor \log n\rfloor$, and note that $d\cdot 2^{h} < n$ for large enough $n$. We can again assume that $\delta < d$ as otherwise the claim is trivial. 
    
    Let $w\in [-h,h]^d$ be defined as in \cref{lemma:APP-lower-bound}. The polynomial $\ks_w$ is of degree at most $4$ as any $A_w^{(i)}$ overlaps with at most $3$ distinct $B_w^{(j)}$. This is due to the fact that $h'$ as defined in \cref{lemma:APP-lower-bound} lives in the interval $[h/3,h]$. Hence there is some partial assignment to the variables $z_{ijk\ell}$ and $x_i$ that maps $\sum_{i,j,k,\ell}z_{ijk\ell}x_ix_jx_kx_\ell - \beta$ to $\ks_w$ up to renaming of variables. By applying this partial assignment to $C$ we obtain a circuit $C'$ of size at most $s$ and product-depth $\Delta$ computing a polynomial $g'$ of individual degree at most $\delta$ that equals $1/\ks_w$ over Boolean values. We can expand this circuit to a formula $F$ of the same product-depth and size $s^{O(\Delta)}$.

Using the homogenization transformation of \cite{LST21} we can compute the $d'$-slice of $F$ with a homogeneous formula of product depth $2\Delta$ and size $s^{O(\Delta)}\cdot 2^{O(\sqrt{d'})} = s^{O(\Delta)}\cdot 2^{O(d)}$. By \cref{lem:homogeneous-lower-bounds} we have that
    \[
    s^{O(\Delta)}\cdot 2^{O(d)}\geq 2^{-O(d)}\cdot\left(\frac{n}{d}\right)^{\Omega\left(\frac{d^{2^{1 - 2\Delta}}}{\delta^2}\right)}.
    \]
    With the chosen value of $d$ this proves our lower bound.
\end{proof}

\subsection{Relative Strength of Low Individual Degree and Low Depth Refutations}

To finish this section we discuss the strength of the proofs we have just considered, i.e., low depth and low individual degree IPS refutations. We demonstrate the strength by giving upper bounds for few standard benchmark formulas in proof complexity, Tseitin formulas and two variants of the pigeonhole principle. Afterwards we discuss some weaknesses caused by the individual degree restriction.

It was already observed in \cite{GHT22} that Tseitin formulas have small multilinear constant-depth IPS refutations. This simple observation is due to the fact that Tseitin formulas have refutations with small number of monomials in the weaker Nullstellensatz proof system when the Boolean values are represented in the \emph{Fourier} basis with $-1$ representing true and $1$ representing false \cite{Gri98}. Constant depth multilinear IPS can easily simulate this small Nullstellensatz refutation in the usual $\{0,1\}$ basis by small formulas computing the appropriate change of basis.

We turn now to two variants of the pigeonhole principle: the functional pigeonhole principle $\mathrm{PHP}^{n+1}_n$, and the graph pigeonhole principle over bipartite graphs on $n+1$ pigeons and $n$ holes with bounded left-degree, where each pigeon has only a limited number of holes available to fly to. We prove the upper bounds for the CNF encodings of these principles, but both proofs rely on a reduction to the polynomial representation of the functional pigeonhole principle used by Razborov in \cite{Razb98}, where the pigeon axioms are represented by the polynomial constraints
\(x_{i1} + \cdots +x_{in} = 1\), for all $i\in [n+1]$.

We prove first the upper bound for this polynomial encoding. This proof borrows from the proof given by Grigoriev and Hirsch in \cite{GH03} and Raz and Tzameret~\cite{RT06} with small modifications. Recall that along with the pigeon axioms we have the hole axioms $x_{ik}x_{jk} = 0$ for any $i,j\in [n+1]$ and $k\in [n]$. Define auxiliary terms $y_k$ by $y_k := \sum_{i\in [n+1]}x_{ik}$ for every $k\in [n]$. Now $y_k$ are Boolean as one can easily derive $y_k^2 - y_k$ from the hole axioms and the Boolean axioms $x_{ik}^2 - x_{ik}$. On the other hand by adding up all the pigeon axioms we arrive at the expression
\[
y_1 + \cdots+ y_n = n + 1.
\]
This is an unsatisfiable instance of the standard  knapsack formula. The unique multilinear $p$ so that 
\[
p\cdot\left(y_1 + \cdots + y_n - (n+1)\right) \equiv 1 \mod \vy^2 - \vy
\]
is symmetric, and thus can be written as weighted sum of elementary symmetric polynomials. Since these have small multilinear constant depth formulas by the standard  Ben-Or result (see Shpilka and Wigderson \cite[Theorem 5.1]{SW01}), the polynomial $p$ has also a small multilinear constant depth representation. Forbes \emph{et al.}~gave an explicit representation of the polynomial $p$ in \cite{FSTW21}. Moreover it is easy to verify that the certificate for the equivalence modulo the Boolean axioms above is also computable by small multilinear constant depth formula. That is, there are small multilinear constant depth formulas $p,p_1,\ldots,p_n$ so that
\[
1 = p\cdot\left(y_1 + \cdots + y_n - (n+1)\right) + \sum_{k\in [n]}p_k\cdot(y_k^2 - y_k).
\]
Substituting $\sum_{i\in [n+1]}x_{ik}$ for $y_k$ in the expression above and using the small derivations of $y_k^2 - y_k$ from the hole axioms and the Boolean axioms, we obtain a small multilinear constant depth refutation of the polynomial encoding of the functional pigeonhole principle.

Next we prove the upper bounds for the CNF encodings of the mentioned variants of pigeonhole principle. We consider the usual translation of CNFs to polynomial constraints, where, for example, the clause $x\vee \vy\vee z$ is translated to the polynomial constraint $(1 - x)y(1 - z) = 0$. Any Boolean assignment that satisfies the clause satisfies also the polynomial constraint and vice versa.

\begin{lemma}
    There is a constant-depth IPS refutation of $\mathrm{FPHP}^{n+1}_n$ of size polynomial in $n$ and individual degree $2$.
\end{lemma}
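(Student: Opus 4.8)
The plan is to reduce the CNF encoding of $\mathrm{FPHP}^{n+1}_n$ to the polynomial (Razborov) encoding treated just above, at the cost of raising the individual degree from $1$ to $2$. Recall that for the polynomial encoding --- pigeon axioms $\sum_{k\in[n]}x_{ik}-1$ for $i\in[n+1]$ together with the hole axioms $x_{ik}x_{jk}$ --- we produced a $\poly(n)$-size multilinear constant-depth refutation in which every pigeon axiom $\sum_k x_{ik}-1$ appears with coefficient $p$, every hole axiom $x_{ik}x_{jk}$ appears with coefficient $2p_k$, and the Boolean axioms $x_{ik}^2-x_{ik}$ appear with certain (unrestricted) coefficients; here $p$ and the $p_k$ are weighted sums of the elementary symmetric polynomials $\el_{d,n}(y_1,\dots,y_n)$ taken under the substitution $y_k\mapsto\sum_i x_{ik}$. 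A key point is that after this substitution $p$ and the $p_k$ are \emph{multilinear in the $x$-variables}: in $\el_{d,n}\big(\sum_i x_{i1},\dots,\sum_i x_{in}\big)=\sum_{|S|=d}\prod_{k\in S}\big(\sum_i x_{ik}\big)$ distinct factors of each product involve disjoint sets of variables, so every monomial is squarefree.

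First I would place the polynomial pigeon axiom inside the ideal generated by the CNF clauses. Let $P_i:=\prod_{k\in[n]}(1-x_{ik})$ be the polynomial translation of the $i$-th pigeon clause $x_{i1}\vee\cdots\vee x_{in}$, and recall that the functionality clauses translate to $x_{ik}x_{ik'}$ for $k\neq k'$. Iterating the elementary identity $1-\prod_k(1-a_k)=\sum_k a_k\prod_{k'<k}(1-a_{k'})$ yields the polynomial identity
\[
P_i \;=\; 1-\sum_{k=1}^{n}x_{ik}\;+\;\sum_{1\le k<j\le n}x_{ik}x_{ij}\,c_{i,k,j},
\qquad\text{where }\ c_{i,k,j}:=\prod_{k'=1}^{k-1}(1-x_{ik'}),
\]
so that
\[
\sum_{k}x_{ik}-1 \;=\; -P_i\;+\;\sum_{k<j} c_{i,k,j}\cdot x_{ik}x_{ij}.
\]
Thus each polynomial pigeon axiom is a combination of one pigeon clause and $O(n^2)$ functionality clauses, and the coefficients $c_{i,k,j}$ are products of at most $n$ affine forms, hence multilinear and computable by $O(n)$-size depth-$2$ formulas.

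Then I would substitute this expression for $\sum_k x_{ik}-1$ into the polynomial-encoding refutation. Since each pigeon axiom occurred there with coefficient $p$, the resulting refutation of the CNF has: the pigeon clause $P_i$ with coefficient $-p$; the functionality clause $x_{ik}x_{ij}$ with coefficient $p\cdot c_{i,k,j}$; the hole clause $x_{ik}x_{jk}$ with coefficient $2p_k$; and the Boolean axioms with their unrestricted coefficients, exactly as before. Every coefficient has a $\poly(n)$-size constant-depth formula --- $p$ and $p_k$ by the Ben-Or construction of the elementary symmetric polynomials over the field considered here (equivalently, the explicit formula of Forbes \emph{et al.} invoked above), and $p\cdot c_{i,k,j}$ as a product of two such --- and there are only $O(n^3)$ clauses in total, so the whole IPS refutation is of size $\poly(n)$ and constant depth. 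For the individual degree: the refutation is linear in the clause placeholders, and, with the Boolean placeholders set to $0$, its individual degree in $\vx$ equals the largest individual $\vx$-degree among the coefficients of the clauses. The coefficients $-p$ and $2p_k$ are multilinear, and $p\cdot c_{i,k,j}$ is a product of two multilinear polynomials, hence of individual degree at most $2$ (a variable $x_{ik'}$ with $k'<k$ can occur in $p$ and in $c_{i,k,j}$, but in no more than one factor of each). This gives the claimed constant-depth IPS refutation of $\mathrm{FPHP}^{n+1}_n$ of polynomial size and individual degree $2$.

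The only thing that really needs care is that the reduction neither blows up the size nor pushes the individual degree above $2$: the telescoping identity has only $O(n^2)$ terms per pigeon rather than the $2^n$ terms of the naive expansion of $\prod_k(1-x_{ik})$, its coefficients $c_{i,k,j}$ are multilinear products of affine forms, and the sole place a variable acquires degree $2$ is the functionality-clause coefficient $p\cdot c_{i,k,j}$ --- so the bound is exactly $2$. (One could presumably bring the individual degree down to $1$ by multilinearizing these coefficients and folding the resulting multiples of the Boolean axioms into the Boolean part of the certificate, but this refinement is not needed for the statement.)
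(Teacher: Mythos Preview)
Your proof is correct and follows essentially the same approach as the paper's: derive the polynomial pigeon axiom $\sum_k x_{ik}-1$ from the CNF pigeon clause and the functionality clauses, then plug into the multilinear refutation of the polynomial encoding. The paper states this derivation is ``easy'' without giving details; you supply the explicit telescoping identity and verify the individual-degree bound, which is precisely what is needed to justify the paper's one-line claim.
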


\begin{proof}
    Recall that $\mathrm{FPHP}^{n+1}_n$ consists of the following clauses 
    \begin{itemize}
        \item $\bigvee_{k\in [n]}x_{ik} $ for any $i\in [n+1]$; \hfill (pigeon axioms)
        \item $\vx_{ik}\vee\vx_{jk}$ for any $i\neq j\in [n+1]$ and $k\in [n]$; \hfill (hole axioms)
        \item $\vx_{ik}\vee\vx_{i\ell}$ for any $i\in [n+1]$ and $k\neq \ell \in [n]$. \hfill (functionality axioms)
    \end{itemize}
    From the pigeon axioms and functionality axioms one derives easily the polynomials
    \[
    x_{i1} + \cdots + x_{in} - 1\text{ for all }i\in [n+1].
    \]
    Combining these derivations with the multilinear refutations for the polynomial encoding discussed above yields refutations of individual degree $2$.
\end{proof}

\begin{lemma}
    Let $G = ([n+1],[n],E)$ be a bipartite graph with left-degree at most $\delta$. Then there is a constant depth IPS refutation of $\mathrm{PHP}_G$ of size polynomial in $n$ and individual degree $O(\delta).$
\end{lemma}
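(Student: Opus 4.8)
The plan is to reduce $\mathrm{PHP}_G$ to the standard unsatisfiable knapsack instance $y_1 + \cdots + y_n - (n+1) = 0$ over Boolean variables $y_1,\ldots,y_n$, for which we have already exhibited above (via the symmetric‑polynomial / Ben‑Or argument, using the explicit certificate of Forbes \emph{et al.}~\cite{FSTW21}) a polynomial‑size, constant‑depth \emph{multilinear} refutation $1 = p\cdot\left(y_1 + \cdots + y_n - (n+1)\right) + \sum_{k\in[n]}p_k\cdot\left(y_k^2 - y_k\right)$, where $p,p_1,\ldots,p_n$ are multilinear in $y_1,\ldots,y_n$, of polynomial size and constant depth.

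For a pigeon $i\in[n+1]$ let $N(i)\subseteq[n]$ be its neighbourhood, $|N(i)|\le\delta$, and fix a linear order $m_1 < \cdots < m_{|N(i)|}$ on $N(i)$. The first step is to introduce, for each $k\in N(i)$, the ``first available hole'' polynomial $x'_{ik} := x_{ik}\cdot\prod_{k'\in N(i),\,k'<k}(1-x_{ik'})$, which has total degree at most $\delta$, is computed by a formula of depth $O(1)$ and size $O(\delta)$, and is Boolean‑valued over Boolean assignments. Two facts make these useful. First, by the telescoping identity $\sum_{k\in N(i)}x'_{ik} = 1 - \prod_{k\in N(i)}(1-x_{ik})$, the polynomial $\sum_{k\in N(i)}x'_{ik} - 1$ equals $(-1)$ times the polynomial translation of the pigeon clause $\bigvee_{k\in N(i)}x_{ik}$, hence lies ``for free'' in the ideal of the axioms. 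Second, setting $y_k := \sum_{i:\,k\in N(i)}x'_{ik}$, one has $y_k^2 - y_k = \sum_i\left((x'_{ik})^2 - x'_{ik}\right) + \sum_{i\neq i'}x'_{ik}x'_{i'k}$, where each $(x'_{ik})^2 - x'_{ik}$ is a small‑certificate combination of the Boolean axioms $\{x_{im}^2 - x_{im}\}_{m\in N(i)}$ (since $x'_{ik}$ is a Boolean‑valued polynomial in Boolean variables), and each $x'_{ik}x'_{i'k}$ with $i\neq i'$ is divisible by the hole‑axiom polynomial $x_{ik}x_{i'k}$ with a cofactor of degree $O(\delta)$ and individual degree $1$. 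Summing the first fact over all pigeons places $\sum_{k=1}^n y_k - (n+1)$ inside the ideal of the pigeon clauses.

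The second step is to substitute $y_k\mapsto\sum_{i:\,k\in N(i)}x'_{ik}$ into the knapsack refutation and then replace the two ideal memberships just derived, producing a Nullstellensatz‑style combination of the clauses of $\mathrm{PHP}_G$ equal to $1$, computed by a constant‑depth circuit (all gadgets are products of $\le\delta$ affine forms, $p$ and the $p_k$ are constant depth, and the composition adds only $O(1)$ further depth) of size polynomial in $n$. For the individual degree, note that since $p$ and each $p_k$ is multilinear in the formal variables $y_1,\ldots,y_n$, every $y_k$ occurs to power at most $2$ in $p\cdot(\sum_k y_k - (n+1))$ and at most $3$ in $p_k\cdot(y_k^2-y_k)$; after the substitution, a monomial picks, for each occurrence of a $y_k$, one summand $x'_{ik}$, and the only way a variable $x_{im}$ accumulates individual degree is through several of these summands belonging to the \emph{same} pigeon $i$ at holes $\ge m$ — but pigeon $i$ has at most $\delta$ holes, so $x_{im}$ is raised to individual degree $O(\delta)$. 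Tracking this through the expansions of $(x'_{ik})^2-x'_{ik}$ and $x'_{ik}x'_{i'k}$, whose cofactors have individual degree $O(1)$, keeps the final individual degree at $O(\delta)$.

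The main obstacle is precisely this joint bookkeeping of depth, size, and individual degree under the composition: one must keep the knapsack refutation in the explicit multilinear symmetric‑polynomial form so that no $y_k$ acquires a large power, and then carefully count the overlap of the factors $x'_{ik}$ for a common pigeon, which is the sole source of the $\delta$ dependence. We remark that if the CNF encoding of $\mathrm{PHP}_G$ also contains the functionality clauses $\overline{x}_{ik}\vee\overline{x}_{i\ell}$ for $k\neq\ell\in N(i)$, then one may use $x_{ik}$ in place of $x'_{ik}$ throughout (deriving $\sum_{k\in N(i)}x_{ik}-1$ from the pigeon and functionality clauses as in the proof for $\mathrm{FPHP}^{n+1}_n$), no powers accumulate across distinct holes, and the individual degree drops to $O(1)$; the $O(\delta)$ bound is what the first‑available‑hole device buys when no such clauses are available.
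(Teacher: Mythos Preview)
Your proof is correct and follows essentially the same route as the paper: the ``first available hole'' polynomials you call $x'_{ik}$ are exactly the paper's $q_{ik}$, your $y_k$ are the paper's $r_k$, and both arguments reduce to the multilinear constant-depth refutation of knapsack and substitute back. You give more explicit bookkeeping for the $O(\delta)$ individual-degree bound (tracking how $x_{im}$ can occur only through the at most $\delta$ terms $x'_{ik}$ with $k\ge m$ in $N(i)$), whereas the paper simply asserts the bound after substitution; your additional remark about the functionality-axiom case is extra and not needed for the lemma as stated.
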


\begin{proof}
    For this result we need to use slightly modified reduction to a suitable polynomial encoding of the functional pigeonhole principle. Our argument however follows closely the one presented by Grigoriev and Hirsch in \cite{GH03}.

    For $i\in [n+1]$ denote by $N(i)$ the set of neighbours of $i$. Recall that $\mathrm{PHP}_G$ consists of the following clauses
    \begin{itemize}
        \item $\bigvee_{k\in N(i)} x_{ik}$ for every $i\in [n+1]$; \hfill (pigeon axioms)
        \item $\vx_{ik}\vee\vx_{jk}$ for every $i\neq j\in [n+1]$ and $k\in N(i)\cap N(j)$. \hfill (hole axioms).
    \end{itemize}
    For any $i\in [n+1]$ and $k\in N(i)$ we consider an auxiliary polynomials $q_{ik}$ defined as
    \[
    q_{ik} := x_{ik}\cdot \prod_{\substack{\ell < k\\ \ell\in N(i)}}(1 - x_{i\ell}).
    \]
    Note that $q_{ik}$ is the polynomial translation of $\bigvee x_{i\ell}\vee\vx_{ik}$ and hence satisfies $q_{ik}^2 = q_{ik}$ over Boolean values. Now the pigeon axioms translate into the form
    \(1 - \sum_{k\in N(i)}q_{ik}\) 
and the products $q_{ik}q_{jk}$ for any $i\neq j\in [n+1]$ and $k\in N(i)\cap N(j)$ are easily derivable from the hole axioms. 

    Consider now the polynomials $r_k$ defined as 
    \[r_k := \sum_{\substack{i\in [n+1]\\ i\in N(k)}}q_{ik}.\]
    These polynomials are again Boolean valued as one can derive $r_k^2 - r_k$ easily from the pigeon axioms and the product $q_{ik}q_{jk}$. By adding up all the polynomial $r_k$ we end up with an unsatisfiable instance of the basic knapsack formula
    \[r_1 + \cdots + r_n - (n+1).\]
    As observed before, this formula has small multilinear constant depth refutations. Now substituting back in this refutation the definitions of $r_k$ and subsequently those of $q_{ik}$ we obtain a small constant depth refutation of $\mathrm{PHP}_G$ of individual degree $O(\delta).$
\end{proof}

The previous lemma demonstrate an unfortunate weakness of the restriction to bounded individual degree. The restriction has the effect that our proof system is not closed under substitutions --- substituting multilinear formulas to a multilinear formula can yield a polynomial of very high individual degree. This makes it also hard to simulate rule-based propositional proof systems. The straightforward  simulation quickly blows up the individual degree. Currently it is unclear which traditional propositional proof systems can be simulated by low depth and low individual degree IPS refutations, if any.

\section{Barriers: Hardness for Boolean Instances}
\label{sec:barriers}


\iddo{There is another barrier: the second part 
in the Definition of Functional lower bound method 
doesn't work; eg., in constant-depth IPS with constant i.d. we can't carry out step 2.
Yet another barrier: finite fields, Fermat little thm.}

Recall that an instance consisting of a set 
of polynomials $\{f_i(\vx)=0\}_i$, for $f_i(\vx)\in\F[\vx]$, 
is said to be \emph{Boolean} whenever 
$f_i(\vx)\in\bits$ for $\vx\in\bits^{|\vx|}$.
Here we show that for sufficiently strong proof systems
the functional lower bound method cannot lead to lower bounds
for Boolean instances (e.g., CNF formulas). And hence cannot 
settle major open problems in proof complexity about
lower bounds against constant depth Frege proofs with counting
modulo $p$ gates (\ACZ$[p]$-Frege), as well as Threshold 
Logic system (\TCZ-Frege). \iddolater{put this in intro and abstract. Note that AC0 Frege is simulated by constant-depth IPS by GP18 hence it is interesting!}

Let \F\ be field and $\cT\in\F[\vx]$ 
be a set of polynomials (that will stand for the set of polynomials we prove or derive, possibly encoded in different way; e.g., 
as Boolean formulas corresponding to their arithmetization).
We say that a proof system $P$ is a (sound and complete) \emph{proof
system for the set of polynomial equations from $\cT$} if given
a set of polynomial equations $\{f_i(\vx)=0\}_i$~ where $f_i\in\cT$,
there is a $P$-proof (equivalently, a $P$-derivation) of $g(\vx)=0$
with $g\in \cT$ , iff $g(\vx)=0$ is semantically implied by the
equations, where semantic implication means: 
$$
\forall\va\in\F^n
\left(
    \left(
        \bigwedge_i ( f_i(\va)=0)
    \right)
    \Rightarrow
    g(\va)=0
\right).
$$

The following definition and its corollary is a more general setting for the
functional lower bound method from the one in  
\Cref{def:single-functional-lower-bound-method}.

\begin{definition}[Instance admitting a functional lower bound]
\label{def:general-functional-lower-bound-method}
Let $\cC\subseteq\pring$ be a circuit class closed under 
(partial) field-element  assignments
(which stands as the class of ``polynomials with 
small circuits"). Let $\cF:=\{f_i(\vx)=0\}_i$ be a collection of polynomial equations
in $\cC$, such that the  system $\cF$ and $\baxioms$
is unsatisfiable (i.e., does not have a common root).
We say that \demph{$\cF$ admits a functional lower bound against $\cC$-{\rm \lbIPS}}
if the following two hold. 
\begin{enumerate}
\item \label{it:function-lower-bound:1}
\uline{Circuit lower bound for} $\frac{1}{f(\vx)}$: Let $f(\vx)\in \cC$ be a polynomial, 
where the system $f(\vx)$ and $\baxioms$ 
is unsatisfiable. Suppose that   $g\not\in\cC$
for all $g\in\pring$ with 
\[
g(\vx)=\frac{1}{f(\vx)}, \quad \forall\vx\in\bits^n\,.
\]
By \Cref{def:single-functional-lower-bound-method}
this means that $f(\vx)$ and $\baxioms$ do not 
have $\mathcal{C}$-\lbIPS\ refutations, and 
moreover, if $\cC$ is a set of multilinear polynomials, then $f(\vx)$ and $\baxioms$
do not have  $\mathcal{C}$-\IPS\ refutations. 

\item \label{it:function-lower-bound:2}
\uline{$\cF$ is efficiently derivable from $f(\vx)$}:
Suppose that there is a $\cC$-{\rm \lbIPS}\ proof
of $\cF$ from $f(\vx)=0$ (and \baxioms) 
(it is possible that $\cF$ is equal to $\{f(\vx)\}$). 
\end{enumerate}
\end{definition}
\iddolater{mix between f=0 and f as axioms. Comment that lower bound here is no proof in C-IPS.}

\begin{observation}[General Functional Lower Bound Method]
\label{obs:general-functional-lower-bound-method}
Under the assumptions  in \Cref{def:general-functional-lower-bound-method},
if $\cF$ admits a functional lower bound against $\cC$-\lbIPS, 
then there are no $\cC$-{\rm \lbIPS}\ refutations of $\cF$.
\end{observation}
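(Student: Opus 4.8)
The statement to prove is \Cref{obs:general-functional-lower-bound-method}: under the assumptions in \Cref{def:general-functional-lower-bound-method}, if $\cF$ admits a functional lower bound against $\cC$-\lbIPS, then there are no $\cC$-\lbIPS\ refutations of $\cF$.

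The proof plan is a direct argument by contradiction, simply chaining together the two conditions of \Cref{def:general-functional-lower-bound-method}. First I would suppose, towards a contradiction, that $\Pi$ is a $\cC$-\lbIPS\ refutation of $\cF$. By condition~\ref{it:function-lower-bound:2}, there is a $\cC$-\lbIPS\ proof, call it $\Sigma$, of $\cF$ from $f(\vx)=0$ (and $\baxioms$). The next step is to compose $\Sigma$ with $\Pi$: since $\Sigma$ derives the axioms of $\cF$ from $f(\vx)=0$ and the Boolean axioms, and $\Pi$ refutes $\cF$, substituting the derivations of the $\cF$-axioms produced by $\Sigma$ into the placeholder positions of $\Pi$ yields a $\cC$-\lbIPS\ refutation of $\{f(\vx)=0\}\cup\baxioms$. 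Here one uses that $\cC$ is closed under the relevant operations so that the composed object is again a legitimate $\cC$-\lbIPS\ refutation (this is exactly the sense in which condition~\ref{it:function-lower-bound:2} speaks of ``efficient derivability'', though for the bare existence statement of \Cref{obs:general-functional-lower-bound-method} we only need that \emph{some} $\cC$-\lbIPS\ refutation of $f(\vx)=0$ results, not a size bound). But condition~\ref{it:function-lower-bound:1} asserts precisely that $f(\vx)$ and $\baxioms$ have no $\cC$-\lbIPS\ refutation, because every $g$ agreeing with $\nicefrac{1}{f(\vx)}$ on the Boolean cube lies outside $\cC$ — this is the content of the functional lower bound method, \Cref{def:single-functional-lower-bound-method}. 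The contradiction establishes that no $\cC$-\lbIPS\ refutation of $\cF$ exists.

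The one point requiring a little care — and the only place where there is any content beyond bookkeeping — is the composition step: verifying that plugging an $\cC$-\lbIPS\ derivation into the placeholder slots of another $\cC$-\lbIPS\ refutation again produces an object of the required syntactic form (linearity in the axiom placeholders, membership of the coefficient circuits in $\cC$). This is where the closure hypothesis on $\cC$ under partial field-element assignments, together with the linear structure of \lbIPS\ (axioms appearing with power one on the $f_i$-side), is used; it is the analogue of the transitivity/composition property one expects of any reasonable proof system, and for the proof systems of interest (and for $\cC$-\lbIPS\ with $\cC$ closed under the natural operations) it holds essentially by definition. I do not expect genuine obstacles here — \Cref{obs:general-functional-lower-bound-method} is really a reformulation packaging the barrier argument, and the substantive work happens later in \Cref{sec:barriers} when one verifies that concrete Boolean instances (CNFs, arithmetizations of propositional formulas) actually satisfy the hypotheses of \Cref{def:general-functional-lower-bound-method} for sufficiently strong proof systems.

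Finally, I would note explicitly that the same argument goes through verbatim with $\cC$-\lbIPS\ replaced by an arbitrary sound-and-complete proof system $P$ for the set of polynomial equations from $\cT$, provided one reinterprets condition~\ref{it:function-lower-bound:1} as ``$f(\vx)=0$ has no $P$-refutation'' and condition~\ref{it:function-lower-bound:2} as ``$P$ efficiently derives $\cF$ from $f(\vx)=0$''; this is the form in which the barrier is applied in \Cref{thm:barrier} to rule out the functional lower bound method for \ACZ$[p]$-Frege and \TCZ-Frege, using that these systems are sufficiently strong in the sense of \Cref{def:suf-strong-ps}.
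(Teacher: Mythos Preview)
Your proposal is correct and follows essentially the same approach as the paper: assume a $\cC$-\lbIPS\ refutation of $\cF$ exists, compose it with the derivation of $\cF$ from $f(\vx)=0$ guaranteed by condition~\ref{it:function-lower-bound:2}, and contradict condition~\ref{it:function-lower-bound:1}. The paper's own argument is the one-sentence version of exactly this; your additional remarks on the composition step and the generalization to arbitrary proof systems $P$ are sensible elaborations but not needed for the bare observation.
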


This observation follows, because by
\Cref{it:function-lower-bound:1} there is no 
$\cC$-{\rm \lbIPS}\ refutations of $\cF$ 
(otherwise, starting from $f(\vx)=0$, by 
\Cref{it:function-lower-bound:2} we can derive $\cF$ and
refute $f(\vx)=0$).

Let $P$ be a proof system for the language
of polynomial equations in $\cC\subseteq\pring$, that is not 
necessarily equal to $\cC$-\lbIPS.
Let $\cF:=\{f_i(\vx)=0\}_i$ be polynomial equations 
in $\cC$ and suppose 
we wish to establish a lower bound for
$\cF$ against $P$ using the Functional Lower Bound Method. For this purpose, we take a proof 
system $\cC'$-\lbIPS\ that simulates $P$
 and such that $\cC' \supseteq \cC$, 
and  prove a lower bound for $f(\vx)=0$ against
$\cC'$-\lbIPS, with $\cF $ semantically implied by $f(\vx)=0$
and \baxioms\ (for $f(\vx)\in\cC'$).

We say that a circuit class $\cC'\subseteq\pring$ is  \emph{closed under polynomial many sum of products} 
 if $h_i, g_i\in\cC'\cap \F[x_1,\dots,x_n]$, 
for $i\in I$ and $|I|=\poly(n)$, then
$\sum_{i\in I} h_i\cd g_i \in \cC'$.

An \emph{arithmetization scheme} denoted $tr$ is any translation between Boolean formulas (or circuits) to polynomials, that maintains their functional behaviour over the Boolean cube. In other words, an arithmetization scheme is a mapping $\Gamma:\textsf{ Boolean formulas }\rightarrow \F[\vx]$, such that for all $\valpha\in\bits^n$, $\Gamma(\phi)(\valpha) = 0 $ iff $\phi(\valpha)=\textsc{true}$ (we can also flip ``=0" to ``=1" in $\Gamma(\phi)(\valpha) = 0 $ or use other pair of values in $\F$ to represent \textsc{true} and \textsc{false}). The standard one is the following: define $tr$  inductively on the formula structure 
by $tr(x_i):=1-x_i$, $t(A\land B)=1-(1-tr(A))\cd (1-tr(B))$,
and $tr(A\lor B)=tr(A)\cd tr(B)$ and $tr(MOD_p(A_1,\dots,A_r))
 = (\sum_{i=1}^r tr(A_i))^{p-1}$).

\begin{definition}[Sufficiently strong proof system]
\label{def:suf-strong-ps}
Let $P$ be a proof system for a set of polynomial 
equations in $\cC$, for some $\cC\in\pring$,
that can be simulated by some $\cC'$-\lbIPS\ with 
$\cC'\supseteq \cC$ that is closed under partial assignments and sum of products.
We say that $P$ is a \demph{sufficiently strong proof system}
if there exists an arithmetization scheme $tr(\cd)$ such that 
for every set of Boolean polynomial equations $\{f_i(\vx)=0\}_i$,
there is a $P$-proof (equivalently, a $P$-derivation) of the arithmetization of  $\bigwedge_i f_i$
of size $\poly(\sum_i|f_i|)$ (where $|f_i|$ denotes the 
size of the circuit computing $f_i$ in $\cC$).
\end{definition}


Examples of proof systems that are sufficiently 
strong (under the arithmetization $tr$ shown above) are \ACZ$[p]$-Frege, \TCZ-Frege and 
constant-depth \IPS. 
First note that \ACZ$[p]$-Frege, for a prime $p$, 
can be viewed as a proof system for sets of polynomial 
equations in $\cT\in\pring$, where $\cT$ consists of all
standard arithmetizations of constant-depth Boolean 
formulas. And similarly, 
$\TCZ$-Frege can be considered as operating with the set 
of constant depth polynomials over the integers. 
The fact 
that \ACZ$[p]$-Frege, \TCZ-Frege are sufficiently strong
is immediate from the $\land$-introduction rules. 
The fact that constant-depth \lIPS\ is sufficiently strong (under the arithmetization $tr$)
stems from the fact that the $\land$-introduction rule
can be easily simulated in constant-depth 
\lIPS~(cf.~\cite{GP18,PT16}) (using the arithmetization scheme $tr$).

\iddolater{Check \lIPS\ simulation; we only know that \IPS\ bit not necessarily \lIPS\ is  simulated by the sources above.}

\begin{theorem}[Main barrier for Boolean instances]
\label{thm:barrier}
The functional lower bound method
(\Cref{def:general-functional-lower-bound-method}) cannot establish
lower bounds for any \textit{\emph{Boolean}} instance against sufficiently strong proof 
systems (\Cref{def:suf-strong-ps}).\footnotemark~
In particular, it cannot establish any lower bounds  against
\ACZ$[p]$-Frege, \TCZ-Frege (and constant-depth \lbIPS\ when the hard instances are Boolean).
\end{theorem}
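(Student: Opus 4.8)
The plan is to formalize the intuition already sketched right after \Cref{sec:intro:barriers} in the introduction: if $\cF=\{f_i(\vx)=0\}_i$ is a Boolean instance and we try to apply \Cref{def:general-functional-lower-bound-method} to it through some $\cC'$-\lbIPS\ simulating a sufficiently strong proof system $P$, then \Cref{it:function-lower-bound:1} (the required circuit lower bound for $\nicefrac{1}{f}$) must fail. First I would fix the arithmetization scheme $tr$ witnessing that $P$ is sufficiently strong, and let $F(\vx):=tr(\bigwedge_i f_i)$ be the single polynomial whose negation $\prod_i(1-f_i(\vx))$ (or the analogous product dictated by $tr$) is identically zero on $\bits^n$. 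Since $P$ is sufficiently strong, there is a $\poly(\sum_i|f_i|)$-size $P$-proof of $F$ from $\cF$, hence (pushing through the simulation) a small $\cC'$-\lbIPS\ derivation. So $\cF$ is ``essentially equivalent'' in $\cC'$-\lbIPS\ to the single equation $F(\vx)=0$: requirement \Cref{it:function-lower-bound:2} forces $f(\vx)$ to derive $\cF$, and combining with the above, $\cF$ and $F(\vx)=0$ are inter-derivable (up to the Boolean axioms) by short $\cC'$-\lbIPS\ proofs.

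The heart of the argument is then to observe that $F(\vx)=0$ (equivalently $\cF$) has a \emph{short} $\cC'$-\lbIPS\ refutation, contradicting \Cref{it:function-lower-bound:1}. Concretely: the polynomial $1-F(\vx)=\prod_i(1-f_i(\vx))$ vanishes on the whole Boolean cube because each $f_i$ is Boolean and $\cF$ is unsatisfiable, so some factor is $1$ at every Boolean point; therefore $1-F(\vx)$ lies in the ideal $(\vx^2-\vx)$, i.e.\ there exist $h_j$ with $1-F(\vx)=\sum_j h_j(\vx)(x_j^2-x_j)$. This gives $F(\vx)\cdot 1 + \sum_j h_j(\vx)(x_j^2-x_j) = 1$ — already a (possibly large, but \emph{syntactically valid}) \lbIPS\ refutation using only $F$ and the Boolean axioms with the coefficient $1$ on the axiom $F$. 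The point, however, is not that this particular refutation is small, but that its mere \emph{existence} already breaks the method: by the Functional Lower Bound Method \Cref{def:single-functional-lower-bound-method}, existence of a $\cC'$-\lbIPS\ refutation of $F(\vx)=0$ means there is some $g\in\cC'$ with $g(\vx)=\nicefrac{1}{F(\vx)}$ on $\bits^n$, namely the coefficient-circuit of $F$ in that refutation, which here is the constant $1$. So $g\equiv 1\in\cC'$ computes $\nicefrac{1}{F(\vx)}$ over the cube (since $F\in\{-1,1\}$ or $\{0,1\}$-adjacent values on the cube — one checks $F(\vx)\equiv 1$ wherever $\cF$ is satisfied and the unsatisfiability guarantees $F$ never vanishes, so $\nicefrac1F$ is the constant corresponding to the ``true'' value, a scalar in $\cC'$). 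Hence the circuit lower bound in \Cref{it:function-lower-bound:1} is \emph{false} for $f:=F$: there is a $g\in\cC'$ with $g=\nicefrac1f$ on the cube, so the functional lower bound hypothesis cannot hold, and the method is inapplicable to $\cF$.

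I would then package this as follows. Suppose for contradiction the functional lower bound method (via some $\cC'$-\lbIPS\ with $\cC'\supseteq\cC$ closed under partial assignments and polynomially many sums of products, simulating $P$) establishes a lower bound for the Boolean instance $\cF$. By \Cref{def:general-functional-lower-bound-method} there is $f\in\cC'$ with the two properties; property \Cref{it:function-lower-bound:2} plus sufficient strength of $P$ lets us replace $f$ by $F=tr(\bigwedge_i f_i)$ without loss (both derivations are short, and $\cC'$ is closed under the relevant operations so $F\in\cC'$). Then the explicit identity $F\cdot 1+\sum_j h_j(x_j^2-x_j)=1$ exhibits $g\equiv 1\in\cC'$ as a polynomial agreeing with $\nicefrac1F$ on $\bits^n$, contradicting the circuit lower bound \Cref{it:function-lower-bound:1} (which demanded \emph{every} such $g\notin\cC'$). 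The ``in particular'' clauses follow since \ACZ$[p]$-Frege, \TCZ-Frege and constant-depth \lbIPS\ are sufficiently strong via the $\land$-introduction rule (as noted after \Cref{def:suf-strong-ps}), and the simulation of \ACZ$[p]$-Frege by constant-depth IPS over $\F_p$ of \cite{GP18} supplies the needed $\cC'$.

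The main obstacle I anticipate is bookkeeping around the two-way translation and the precise notion of ``admits a functional lower bound.'' One must be careful that: (i) the arithmetization $tr$ interacts correctly with the value $\nicefrac1F$ on the cube — depending on whether ``true'' is coded as $0$ or $1$, $\nicefrac1F$ is a genuine constant and one should track the harmless flip mentioned in the excerpt; (ii) the simulation of $P$ by $\cC'$-\lbIPS\ indeed produces a derivation of $\cF$ \emph{in the \lbIPS\ format}, i.e.\ with the axioms appearing linearly, so that composing with \Cref{it:function-lower-bound:2} stays within the class — this is where the closure of $\cC'$ under sums of products and partial assignments is used; and (iii) the logical quantifier structure of \Cref{def:general-functional-lower-bound-method} — the method fails as soon as \emph{some} valid $g\in\cC'$ exists, and we exhibit the constant $1$, so no genuine computation is needed. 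None of these is deep, but stating them cleanly, and handling the fully propositional case (where $P$ is not literally an IPS fragment) via the simulating $\cC'$-\lbIPS, is the bulk of the write-up.
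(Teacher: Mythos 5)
Your core observation is sound: since each $f_i$ is Boolean-valued and $\cF$ is unsatisfiable over the cube, the conjunction polynomial $F := tr(\bigwedge_i f_i)=1-\prod_i(1-f_i)$ equals $1$ at every Boolean point, so $1/F\equiv 1$. But this by itself does not refute requirement~\Cref{it:function-lower-bound:1} of \Cref{def:general-functional-lower-bound-method}. That requirement is about $1/f$ for whatever polynomial $f$ the user of the method has picked, subject only to requirement~\Cref{it:function-lower-bound:2} (that $\cF$ is derivable from $f$). Exhibiting $g\equiv 1$ as a circuit for $1/F$ only rules out the particular choice $f:=F$; your ``replace $f$ by $F$ without loss'' is exactly the step that requires proof, and you never supply it. In general the user's $f$ need not equal $F$, need not be Boolean-valued, and $1/f$ need not be a constant, so the constant circuit you produce is not a witness against requirement~\Cref{it:function-lower-bound:1} for the actual $f$.

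The fix is precisely the composition you gesture at in your obstacle~(ii) but do not carry out. From requirement~\Cref{it:function-lower-bound:2}, obtain circuits $g_i\in\cC'$ with $g_i\cdot f = f_i \mod \baxioms$ for each $i$. From sufficient strength of $P$ and the simulation by $\cC'$-\lbIPS, obtain circuits $h_i\in\cC'$ with $\sum_i h_i\cdot f_i = 1-\prod_i(1-f_i) \mod \baxioms$. Composing, and using your observation that $1-\prod_i(1-f_i)\equiv 1 \mod \baxioms$ because $\cF$ is Boolean and unsatisfiable, one gets
\[
\Bigl(\sum_i h_i g_i\Bigr)\cdot f \;=\; 1 \mod \baxioms,
\]
so the circuit $g:=\sum_i h_i g_i$ computes $1/f$ over $\bits^n$. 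The closure of $\cC'$ under polynomially many sums of products is exactly what guarantees $g\in\cC'$, contradicting requirement~\Cref{it:function-lower-bound:1} for the actual $f$. This is the paper's proof; the witness circuit is the composite $\sum_i h_i g_i$, not the constant $1$, and that distinction is the content of the argument.
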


\footnotetext{Formally, we also require that the functional
lower bound method is used on $P$, as in the proof, by lower bounding
$\cC$-\lbIPS, where $\cC\in\pring$  is closed under
partial assignments and \emph{polynomial many sum of products}.}

\begin{proof}
Let $P$ be a sufficiently strong proof system for 
the set of polynomial equations $\cC\in\pring$, and 
suppose that $\cC'$-\lbIPS\ simulates $P$, for some
$\cC\subseteq\cC'\subseteq\pring$, that 
is closed under partial assignments and 
sum of products in the following sense:
 if $h_i, g_i\in\cC'\cap \F[x_1,\dots,x_n]$, 
for $i\in I$ and $|I|=\poly(n)$, then
$\sum_{i\in I} h_i\cd g_i \in \cC'$.
Let $\cF:=\{f_i(\vx)=0\}_i$ be a collection of polynomial
equations in $\cC$, where all $f_i(\vx)$ are Boolean.

We show that the following  cannot all hold:
\begin{enumerate}
\item \label{it:1:3307}
$g\not\in\cC'$
for all $g\in\pring$ such that 
$
g(\vx)=\frac{1}{f(\vx)}, ~ \forall\vx\in\bits^n\,.
$

\item \label{it:2:3315}
There is a $\cC'$-{\rm \lbIPS}-proof of
$\cF$ from $f(\vx)=0$ and \baxioms.

\item \label{it:3:3318}
There is no $\cC'$-{\rm \lbIPS}-refutations of $\cF$.
\end{enumerate}

We show that if \Cref{it:2:3315} and \Cref{it:3:3318} above hold then \Cref{it:1:3307} does not. 

Inside $\cC'$-\lbIPS, we start with $f(\vx)$ and 
derive $\cF$ by assumption that \Cref{it:2:3315} holds. Now, derive
from $\cF$ the polynomial $1-\prod_i(1-f_i(\vx))$, 
by the assumption that $P$ is sufficiently strong and that $\cC'$-\lbIPS\ simulates $P$.
But this polynomial is always $1$ over the Boolean cube,
which implies that there is a polynomial-size circuit
$g\in\cC'$ such that 
$g(\vx)=\frac{1}{f(\vx)}, ~ \forall\vx\in\bits^n\,$,
contradicting \Cref{it:1:3307} above.
More formally, we have the following.

By \Cref{it:2:3315}, 
\begin{align}
\text{ there exist } g_i\in \cC' \text{ such that }
g_i\cd f(\vx) = f_i(\vx), \forall i.
\end{align}
By assumption  that $P$ is sufficiently strong and that $\cC'$-\lbIPS\ simulates $P$,
\begin{align}
\text{ there exist } h_i\in \cC' \text{ such that }
\sum_i h_i\cd f_i(\vx) = 1-\prod_i(1-f_i(\vx)).
\end{align}
Hence, since there is no Boolean assignment that satisfies all the Boolean axioms $f_i(\vx)$'s,  
\begin{equation}\label{eq:1397:31}
\left(\sum_i h_i\cd g_i(\vx)\right)\cd f(\vx)  = 
    1-\prod_i(1-f_i(\vx)) = 
        1 \mod \baxioms.
\end{equation}
By the assumption that $\cC'$ is closed under polynomial many
sum of products we know that $\sum_i h_i\cd g_i(\vx)\in\cC'$,
and so \Cref{eq:1397:31} contradicts \Cref{it:1:3307}. 
\end{proof}

\iddolater{GIVE EXAMPLES and talk about finite fields! Show that roABP are not closed under polynomial many 
sum of products so there may be hope there for CNF
lower bounds?}

\subsection{Conclusion}

This work wraps up to some extent research on IPS lower bounds via the functional lower bound method, showing how far it can be pushed, and where it cannot be applied. It generalises and improves previous work on IPS lower bounds obtained via the functional lower bound method in \cite{FSTW21,GHT22}. 
We established size lower bounds for symmetric instances, and hard instances qualitatively different from previously known hard instances. This allows us also to show lower bounds over finite fields, which were open. 
As a corollary, we also show a new finite field functional lower bound for \emph{multilinear formulas} which may be of independent interest.
We then demonstrate how to incorporate recent developments on constant-depth algebraic circuit lower bounds \cite{AGK0T23} in the setting of proof complexity. This enables us to improve the constant-depth IPS lower bounds in \cite{GHT22} to stronger fragments, namely IPS refutations of constant depth and $O(\log\log n)$-individual degrees. 

As for the barrier we uncovered, it is now evident that the functional lower bound method \emph{alone} cannot be used to settle the long-standing open problems about the proof complexity of  constant-depth propositional proofs with counting gates. This does not rule out however the ability of IPS lower bounds, and the IPS ``paradigm''  in general, to progress on these open problems, since  other relevant methods may be found helpful (the meta-complexity method established in \cite{ST21}, the lower bounds for multiples method \cite{FSTW21,AF22}, and the noncommutative reduction; see summary of methods at the end of \Cref{sec:first-section-in-intro}). Moreover, our barrier only shows that we cannot hope to use
a single non-Boolean unsatisfiable axiom $f(\vx)=0$ and 
consider the function $\nicefrac{1}{f(\vx)}$ over the Boolean cube
to obtain a CNF IPS lower bound (whenever the CNF is semantically
implied from $f(\vx)=0$ over the Boolean cube). However, it does not rule out in
general the use of a reduction to matrix rank, which is the backbone
of many algebraic circuit lower bounds (as well as the functional lower bound method), and should potentially be
helpful in proof complexity as well.

A very interesting  problem that remains open is to prove CNF lower bounds using the functional method against
fragments of IPS that sit below the reach of the barrier, namely fragments
that cannot derive efficiently the conjunction of arbitrarily 
many polynomials (that is, systems that are not ``sufficiently strong''
in the above terminology).  Two such proof systems are IPS operating with roABPs and multilinear formulas, respectively. 

\section{Acknowledgments} We wish to thank the anonymous reviewers of  STOC'24 for very helpful comments that improved the exposition. 

\small
\bibliographystyle{alphaurl}
\bibliography{metabib,PrfCmplx-Bakoma}


\end{document}